\pgfplotsset{compat=1.5}
\newtheorem{theorem}{Theorem}[section]
\newtheorem{lemma}[theorem]{Lemma}
\newtheorem{definition}[theorem]{Definition}
\newtheorem{fact}[theorem]{Fact}
\newenvironment{proofof}[1]{\begin{trivlist} \item {\bf Proof
#1:~~}}
  {\qed\end{trivlist}}
\newcommand{\namedref}[2]{\hyperref[#2]{#1~\ref*{#2}}}
\newcommand{\thmlab}[1]{\label{thm:#1}}
\newcommand{\thmref}[1]{\namedref{Theorem}{thm:#1}}
\newcommand{\lemlab}[1]{\label{lem:#1}}
\newcommand{\lemref}[1]{\namedref{Lemma}{lem:#1}}
\newcommand{\seclab}[1]{\label{sec:#1}}
\newcommand{\secref}[1]{\namedref{Section}{sec:#1}}
\newcommand{\factlab}[1]{\label{fact:#1}}
\newcommand{\factref}[1]{\namedref{Fact}{fact:#1}}
\newcommand{\figlab}[1]{\label{fig:#1}}
\newcommand{\figref}[1]{\namedref{Figure}{fig:#1}}
\newcommand{\alglab}[1]{\label{alg:#1}}
\renewcommand{\algref}[1]{\namedref{Algorithm}{alg:#1}}
\newcommand{\deflab}[1]{\label{def:#1}}
\newcommand{\defref}[1]{\namedref{Definition}{def:#1}}
\newcommand{\PPr}[1]{\ensuremath{\mathbf{Pr}\left[#1\right]}}
\newcommand{\Ex}[1]{\ensuremath{\mathbb{E}\left[#1\right]}}
\renewcommand{\O}[1]{\ensuremath{\mathcal{O}\left(#1\right)}}
\newcommand{\tO}[1]{\ensuremath{\tilde{\mathcal{O}}\left(#1\right)}}
\newcommand{\eps}{\varepsilon}
\def \RoughSens {\mdef{\textsc{RoughSens}}}
\def \BatchSens {\mdef{\textsc{BatchSens}}}
\def \dist    {\mdef{\text{dist}}}
\def \ba    {\mdef{\mathbf{a}}}
\def \calC    {\mdef{\mathcal{C}}}
\def \calE    {\mdef{\mathcal{E}}}
\def \calG    {\mdef{\mathcal{G}}}
\def \calN    {\mdef{\mathcal{N}}}
\def \calS    {\mdef{\mathcal{S}}}
\def \bA    {\mdef{\mathbf{A}}}
\def \bB    {\mdef{\mathbf{B}}}
\def \bF    {\mdef{\mathbf{F}}}
\def \bG    {\mdef{\mathbf{G}}}
\def \bM    {\mdef{\mathbf{M}}}
\def \bP    {\mdef{\mathbf{P}}}
\def \bR    {\mdef{\mathbf{R}}}
\def \bQ    {\mdef{\mathbf{Q}}}
\def \bU    {\mdef{\mathbf{U}}}
\def \bZ    {\mdef{\mathbf{Z}}}
\def \bW    {\mdef{\mathbf{W}}}
\def \bb    {\mdef{\mathbf{b}}}
\def \br    {\mdef{\mathbf{r}}}
\def \bv    {\mdef{\mathbf{v}}}
\def \bx    {\mdef{\mathbf{x}}}
\def \by    {\mdef{\mathbf{y}}}
\def \bz    {\mdef{\mathbf{z}}}
\def \bg    {\mdef{\mathbf{g}}}
\newcommand{\mdef}[1]{{\ensuremath{#1}}\xspace}  
\DeclareMathOperator*{\polylog}{polylog}
\DeclareMathOperator*{\poly}{poly}
\DeclareMathOperator*{\TreeDist}{TreeDist}
\DeclareMathOperator*{\Cost}{cost}
\DeclareMathOperator*{\nnz}{nnz}
\newcommand{\ignore}[1]{}
\newif\ifnotes\notestrue 
\newcommand{\david}[1]{\textcolor{blue}{{\bf (David:} {#1}{\bf ) }} \marginpar{\tiny\bf
             \begin{minipage}[t]{0.5in}
               \raggedright S:
            \end{minipage}}}
\newcommand{\samson}[1]{\textcolor{red}{{\bf (Samson:} {#1}{\bf ) }} \marginpar{\tiny\bf
             \begin{minipage}[t]{0.5in}
               \raggedright S:
            \end{minipage}}}
\newcommand{\vincent}[1]{\textcolor{purple}{{\bf (Vincent:} {#1}{\bf ) }} \marginpar{\tiny\bf
             \begin{minipage}[t]{0.5in}
               \raggedright V:
            \end{minipage}}} 
\newcommand{\liudeng}[1]{\textcolor{magenta}{{\bf (Liudeng:} {#1}{\bf ) }} \marginpar{\tiny\bf
             \begin{minipage}[t]{0.5in}
               \raggedright L:
            \end{minipage}}}
\newcommand{\samson}[1]{}
\newcommand{\liudeng}[1]{}
\newcommand{\david}[1]{}
\newcommand{\vincent}[1]{}
\renewcommand*{\@fnsymbol}[1]{\textcolor{mahogany}{\ensuremath{\ifcase#1\or *\or \dagger\or \ddagger\or
 \mathsection\or \triangledown\or \mathparagraph\or \|\or **\or \dagger\dagger
   \or \ddagger\ddagger \else\@ctrerr\fi}}}
\providecommand{\email}[1]{\href{mailto:#1}{\nolinkurl{#1}\xspace}}
\definecolor{mahogany}{rgb}{0.75, 0.25, 0.0}
\definecolor{darkblue}{rgb}{0.0, 0.0, 0.55}
\definecolor{darkpastelgreen}{rgb}{0.01, 0.75, 0.24}
\definecolor{darkgreen}{rgb}{0.0, 0.2, 0.13}
\definecolor{darkgoldenrod}{rgb}{0.72, 0.53, 0.04}
\definecolor{darkred}{rgb}{0.55, 0.0, 0.0}
\definecolor{forestgreenweb}{rgb}{0.13, 0.55, 0.13}
\definecolor{greencss}{rgb}{0.0, 0.5, 0.0}
\definecolor{bleudefrance}{rgb}{0.19, 0.55, 0.91}
\definecolor{darkpastelpurple}{rgb}{0.59, 0.44, 0.84}
\definecolor{darkcerulean}{rgb}{0.03, 0.27, 0.49}
  \DeclareFontShape{T1}{lmr}{m}{scit}{<->ssub*lmr/m/scsl}{}%
\title{Fast, Space-Optimal Streaming Algorithms for Clustering and Subspace Embeddings}
\author{
Vincent Cohen-Addad\thanks{Google Research. E-mail: \email{cohenaddad@google.com}.} 
\and
Liudeng Wang\thanks{Texas A\&M University. E-mail: \email{eureka@tamu.edu}}
\and
David P. Woodruff\thanks{Carnegie Mellon University. E-mail: \email{dwoodruf@andrew.cmu.edu}. 
}
\and
Samson Zhou\thanks{Texas A\&M University. E-mail: \email{samsonzhou@gmail.com}. 
}
}
\date{}
\begin{document}
\pagestyle{pg}

\maketitle
\thispagestyle{empty}

\begin{abstract}
We show that both clustering and subspace embeddings can be performed in the streaming model with the same asymptotic efficiency as in the central/offline setting.

For $(k, z)$-clustering in the streaming model, we achieve a number of words of memory which is independent of the number $n$ of input points and the aspect ratio $\Delta$, yielding an optimal bound of $\tilde{\mathcal{O}}\left(\frac{dk}{\min(\varepsilon^4,\varepsilon^{z+2})}\right)$ words for accuracy parameter $\varepsilon$ on $d$-dimensional points. Additionally, we obtain amortized update time of $d\,\log(k)\cdot\text{polylog}(\log(n\Delta))$, which is an exponential improvement over the previous $d\,\text{poly}(k,\log(n\Delta))$. Our method also gives the fastest runtime for $(k,z)$-clustering even in the offline setting. 

For subspace embeddings in the streaming model, we achieve $\mathcal{O}(d)$ update time and space-optimal constructions,  using $\tilde{\mathcal{O}}\left(\frac{d^2}{\varepsilon^2}\right)$ words for $p\le 2$ and $\tilde{\mathcal{O}}\left(\frac{d^{p/2+1}}{\varepsilon^2}\right)$ words for $p>2$, showing that streaming algorithms can match offline algorithms in both space and time complexity. 
\end{abstract}

\clearpage
\setcounter{page}{1}

\allowdisplaybreaks

\section{Introduction}
The \emph{streaming model} of computation is a powerful framework for processing large-scale datasets that are too vast to be stored in memory. 
In the streaming setting, algorithms must quickly make decisions based on a sequence of updates that are irrevocably discarded after processing, without the possibility of revisiting past data. 
The primary challenge is to design algorithms that can efficiently approximate, compute, or detect properties of the underlying dataset, while ensuring both fast update time and using memory that is sublinear in both the dataset size and the length of the data stream.
This model is well-suited for applications in data summarization, where the goal is to extract meaningful insights from large, evolving datasets. 
Such applications include clustering, which is widely applied in domains such as customer segmentation, anomaly detection in network traffic, and pattern recognition in sensor networks, as well as subspace embeddings, which reduce dimensionality in areas including image processing, recommendation systems, and natural language processing. Despite significant progress, it remained open whether such fundamental tasks inherently require more space or time in the streaming model compared to their offline counterparts. 

\vspace{-0.1in}
\paragraph{Clustering and data streams.}
Clustering is a fundamental problem that seeks to partition a dataset so that similarly structured points are grouped together and differently structured points are separated. 
Variations of clustering are used across a wide range of applications, including bioinformatics, combinatorial optimization, computational geometry, computer graphics, data science, and machine learning. 
In the Euclidean $(k,z)$-clustering problem, the input is a set $X$ of $n$ points $x_1,\ldots,x_n\in\mathbb{R}^d$, a cluster parameter $k>0$, and a positive integer exponent $z>0$, and the goal is to minimize the clustering cost across all sets $\calC$ of at most $k$ centers:
\vspace{-0.1in}
\[
\min_{\calC\subset\mathbb{R}^d, |\calC|\le k}\Cost(X,\calC):=\min_{\calC\subset\mathbb{R}^d, |\calC|\le k}\sum_{i=1}^n\min_{c\in\calC}\|x_i-c\|_2^z.
\vspace{-0.05in}
\]
For $z=1$ and $z=2$, respectively, the $(k,z)$-clustering problem corresponds to the well-known $k$-median and $k$-means clustering problems. 

For clustering in the standard insertion-only streaming model, the points $x_1,\ldots,x_n$ of $X$ arrive sequentially along with an input accuracy parameter $\eps>0$. 
The goal is for each time $t\in[n]$ to find a clustering $\calC_t$ on the prefix $X_t$ of the first $t$ points of $X$ with cost that is a $(1+\eps)$-multiplicative approximation of the optimal clustering of $X_t$. 
To that end, note that a set $\calC$ of $k$ centers implicitly defines the overall clustering, since each point is assigned to the closest cluster center. 
In fact, it is not feasible to directly output a label for every point in $X$ using space sublinear in $n=|X|$, and so in some sense, outputting $\calC$ is the most reasonable objective we can hope for. 
We also remark that due to bit representation, input points cannot be represented in arbitrary precision and thus we generally rescale the points in $X$ to lie within the grid $\{1,\ldots,\Delta\}^d$. 

\vspace{-0.1in}
\paragraph{Subspace embeddings and data streams.} 
Subspace embeddings are a fundamental tool in dimensionality reduction, enabling the preservation of algebraic and geometric structure, while significantly reducing data dimensionality. 
The problem has been extensively studied, with applications spanning theoretical computer science, machine learning, randomized linear algebra, scientific computing, and data science. 
In the subspace embedding problem, the input is once again a set $A$ of $n$ points $x_1,\ldots,x_n\in\mathbb{R}^d$, which induces a matrix $\bA\in\mathbb{R}^{n\times d}$, where typically $n\gg d$. 
Given an accuracy parameter $\eps>0$, the goal is to construct a matrix $\bM$ such that for all vectors $\bx\in\mathbb{R}^d$,  
\[(1-\eps)\|\bM\bx\|_p\le\|\bA\bx\|_p\le(1+\eps)\|\bA\bx\|_p,\]
where $\|\cdot\|_p$ denotes the $L_p$ norm of a vector. 
In the standard insertion-only streaming model, the rows of $\bA$ arrive sequentially and at each time $t\in[n]$, the goal is to find a subspace embedding for the matrix $\bA_t$ consisting of the first $t$ rows of $\bA$. 

\vspace{-0.1in}
\paragraph{Streaming algorithms and coresets.}
The most common approach for both fast and space-efficient streaming algorithms for data summarization problems such as $(k,z)$-clustering and subspace embeddings is to collect a small weighted subset of the points from the input dataset. 
For clustering, this forms a (strong) coreset $Z_t$ at time $t$, which is a data structure that preserves the clustering cost for any choice $\calC$ of $k$ centers, i.e., $\Cost(X_t,\calC)\approx\Cost(Z_t,\calC)$. 
We can then perform an efficient $(k,z)$-clustering algorithm on each coreset $Z_t$ to identify an optimal or near-optimal set of cluster centers. 
For subspace embeddings, this forms a coreset $\bM_t$ at time $t$, from which we can approximately extract $\|\bA_t\bx\|_2$ for any $\bx\in\mathbb{R}^d$, i.e., $\|\bM_t\bx\|_2\approx\|\bA_t\bx\|_2$. 
Naturally, smaller coreset constructions correspond to both faster and more space-efficient algorithms. 

In the offline setting where the input dataset $X$ is given upfront and there are no space restrictions, there exist coreset constructions for both clustering and subspace embeddings that sample a number of weighted points that is independent of the size of the input. 
For $(k,z)$-clustering, there exist coreset constructions~\cite{Cohen-AddadSS21,Cohen-AddadLSS22,huang2023optimal,nips/Cohen-AddadLSSS22} that sample $\tO{\min\left(\frac{1}{\eps^2}\cdot k^{2-\frac{z}{z+2}},\frac{1}{\min(\eps^4,\eps^{2+z})}\cdot k\right)}$ weighted points of $X$\footnote{Note that the bound is actually $\tO{\min\left(\frac{k^{3/2}}{\eps^2},\frac{k}{\min(\eps^4,\eps^{2+z})}\right)}$ for $k$-means and $\tO{\min\left(\frac{k^{4/3}}{\eps^2},\frac{k}{\min(\eps^4,\eps^{2+z})}\right)}$ for $k$-median.}, where we use $\tO{f}$ for a function $f$ to denote $f\cdot\polylog(f)$.  
Similarly for subspace embeddings, there exist coreset constructions that sample $\tO{\frac{d}{\eps^2}}$ weighted rows of $\bA$~\cite{DrineasMM06a,DrineasMM06b,MagdonIsmail10,Woodruff14}. 
These coreset constructions have size that is \emph{independent} of the size $n$ of the input dataset, which is especially important as modern datasets often consist of hundreds of millions of points. 

On the other hand, despite a long line of work on clustering in the streaming model~\cite{Har-PeledM04,Har-PeledK07,Chen09,FeldmanL11,BravermanFLR19,Cohen-AddadWZ23,WoodruffZZ23}, all known results achieved space overheads with dependencies in $n$, compared to the offline setting. 
A natural question is thus: 
\begin{quote}
\vspace{-0.05in}
Is there an inherent \emph{space complexity} cost for data summarization problems, such as $(k,z)$-clustering and subspace embeddings, in the streaming model?
\vspace{-0.15in}
\end{quote}

\paragraph{Input size and runtime.} 
We remark that even in the offline setting, where the entire input is given, $\Omega(nd)$ runtime is necessary simply to process a dataset of size $n$ in $d$-dimensional space. 
Thus, $\Omega(nd)$ runtime is necessary for both $(k,z)$-clustering and subspace embeddings without additional assumptions on the input. 
The lower bound is matched by an offline algorithm that uses $\O{nd}$ runtime for subspace embeddings by \cite{ClarksonW13} and nearly matched, up to a logarithmic factor, by an offline algorithm that uses $\O{nd\log(n\Delta)}$ runtime for $(k,z)$-clustering by \cite{DraganovSS24}. 
We ask:
\begin{quote}
\vspace{-0.05in}
Is there an inherent \emph{time complexity} cost for data summarization problems, such as $(k,z)$-clustering and subspace embeddings, in the streaming model?
\vspace{-0.1in}
\end{quote}

\subsection{Our Contributions}
We present a unified technique that resolves the above questions for both $(k,z)$-clustering and subspace embeddings, showing there is no overhead for working in the streaming model for either space complexity or time complexity. 
At a high level, the technique is summarized in \figref{fig:outline}. 
However, efficiently implementing each of the steps is challenging and requires structural properties to each specific problem. 
We provide a technical overview of these steps in \secref{sec:overview}. 

\begin{figure*}[!htb]
\begin{mdframed}
\begin{enumerate}
\item
Given a stream $\calS$ of length $n$, use a crude sampling scheme to generate a stream $\calS'$ of length $n^{1-\Omega(1)}$. 
\item
Use $\calS'$ and a refined sampling scheme that takes longer to compute but generates a stream $\calS''$ of length $\polylog(n)$, omitting dependencies in other parameters. 
\item
Finally, run merge-and-reduce on $\calS''$ to produce a coreset $\calC$. 
\item
Produce an efficient encoding of $\calC$ using a constant-factor global encoding.
\end{enumerate}
\end{mdframed}
\vspace{-0.2in}
\caption{High-level summary of our approach}
\vspace{-0.2in}
\figlab{fig:outline}
\end{figure*}

\paragraph{Clustering.}
We describe the applications of our technique in \figref{fig:outline} in the context of $(k,z)$-clustering. 
Our first main result is the following: 
\vspace{-0.11in}
\begin{mdframed}[backgroundcolor=lightgray!40,topline=false,rightline=false,leftline=false,bottomline=false,innertopmargin=-4pt]
\vspace{-0.05in}
\begin{theorem}[Fast and space-optimal clustering]
\thmlab{thm:main}
Given a set $X$ of $n$ points on $[\Delta]^d$ and an accuracy parameter $\eps\in(0,1)$, there is a one-pass insertion-only streaming algorithm that uses $\tO{\frac{dk}{\min(\eps^4,\eps^{2+z})}}$ words of space and $d\log(k)\cdot\polylog(\log(n\Delta))$ amortized update time and outputs a $(1+\eps)$-strong coreset of $X$. 
\end{theorem}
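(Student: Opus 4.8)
The plan is to implement the four-step template of \figref{fig:outline} for $(k,z)$-clustering, with the first two steps built on a common infrastructure (for $n$ below $\poly(k,d,\log\Delta,1/\eps)$ one simply runs the offline construction, so assume $n$ is large). In parallel with the stream I would maintain a streaming \emph{bicriteria} $O(1)$-approximation to $(k,z)$-clustering on the current prefix $X_t$: a set $B_t$ of $\poly(k,\log(n\Delta))$ centers whose clustering cost is $O(\OPT(X_t))$, maintainable in $d\cdot\polylog(k,\log(n\Delta))$ amortized time and $\poly(k,d,\log(n\Delta))$ words (online-facility-location style, e.g.\ \cite{BravermanFLR19} combined with approximate nearest neighbor). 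Relative to such a $B_t$, the online sensitivity of a point $x$ is, up to factors depending only on $z$, $\frac{\dist(x,B_t)^z}{\Cost(X_t,B_t)}$ plus an $O(1/|\text{cluster of }x|)$ term, so a valid per-point sampling probability can be read off from one nearest-center distance and a few running sums and counts. To evaluate these in $d\log(k)\cdot\polylog(\log(n\Delta))$ time I would (i) equip $B_t$ with an approximate-nearest-neighbor structure answering constant-factor distance queries in $O(d\log k)$ time, and (ii) replace every exact running count and cost sum by a Morris-type approximate counter, which costs $\polylog(\log(n\Delta))$ in place of the $\log(n\Delta)$ of exact counting; a \GreaterThan-style primitive then lets us commit to each sampling decision without materializing full-precision probabilities.

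\textbf{Steps 1--2 (fast then refined sampling).} In Step~1 each arriving point is kept, with an appropriate reweighting, with probability equal to its (crude, deliberately inflated) sensitivity upper bound. Because inflating sampling probabilities only improves coreset quality, the surviving substream $\calS'$ is already a $(1+\eps/3)$-strong coreset of $X$; the crudeness of the estimates and the slack needed to absorb failure events cost us only in the number of survivors, which I would bound by $n^{1-\Omega(1)}$. Step~1 spends one approximate-nearest-neighbor query and $O(1)$ approximate-counter operations per point, i.e.\ $d\log(k)\cdot\polylog(\log(n\Delta))$ time. In Step~2 the same primitive is applied to $\calS'$ (and, if needed, iterated a further $O(\log\log n)$ times), each time recomputing the reference solution and the sensitivity estimates more accurately, until the stream length falls to $\polylog(n)$ (times $\poly(k,d,1/\eps)$). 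Since $|\calS'|\le n^{1-\Omega(1)}$ and each later stream is polynomially shorter still, the extra work amortizes to $o(1)$ per original arrival, and the extra $(1+\eps/3)$ error composes with Step~1 to keep the running coreset at accuracy $1+\eps/2$.

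\textbf{Steps 3--4 (merge-and-reduce and re-encoding).} On the length-$\polylog(n)$ stream $\calS''$ I would run classical merge-and-reduce with the optimal offline $(k,z)$-coreset construction of \cite{Cohen-AddadLSS22,huang2023optimal} as the reduce step. The merge tree has only $O(\log\log n)$ levels, so running it at internal accuracy $\eps'=\Theta(\eps/\log\log n)$ yields a $(1+\eps/2)$-strong coreset while only $O(\log\log n)$ coresets are ever stored, and the near-linear cost of each reduce amortizes to $\polylog(\log(n\Delta))$ per update of $\calS''$. A final invocation of the offline construction on this coreset, at accuracy $\eps/2$, collapses it to $\tO{k/\min(\eps^4,\eps^{2+z})}$ weighted points with total error $(1+\eps/2)^2=1+O(\eps)$ (rescale $\eps$); applying the constant-factor global encoding of Step~4 to represent the retained points then brings the working space to $\tO{dk/\min(\eps^4,\eps^{2+z})}$ words and removes the residual $\polylog(\log(n\Delta))$ overheads. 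Chaining the coreset relations of Steps~1--3 with the encoding gives a single $(1+\eps)$-strong coreset of $X$, as required.

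\textbf{Main obstacle.} The crux is Step~1: producing a sensitivity estimate for every arriving point in $d\log(k)\cdot\polylog(\log(n\Delta))$ time --- which forces approximate nearest neighbor in place of exact nearest-center queries and Morris counters in place of exact sums --- while still proving that these crude, randomized over-estimates yield a genuine $(1+\eps)$-coreset (a union bound over a net of candidate center configurations, together with the monotonicity of coreset quality in the sampling probabilities), and while simultaneously holding the survivor count at $n^{1-\Omega(1)}$ so that the later stages contribute only $\polylog(\log(n\Delta))$ factors. All three requirements lean on the geometry of $(k,z)$-clustering --- the sensitivity upper bound derived from a bicriteria solution, and VC-type bounds for the class of clustering cost functions --- rather than on generic streaming tools. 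A secondary but delicate point is the error and space accounting: the error budget must be split geometrically across the three stages, and the final re-encoding must be engineered so that neither $n$ nor $\log(n\Delta)$ survives in the space bound even though both appear throughout the construction.
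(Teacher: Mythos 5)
Your four-step architecture matches the paper's \figref{fig:outline}, and the error accounting across the merge-and-reduce tree (internal accuracy $\Theta(\eps/\log\log n)$, $O(\log\log n)$ levels) is in the right spirit. However, the critical piece of the proof is the content of Step~1, and there is a genuine gap there.

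You propose to estimate online sensitivities via the classical bicriteria upper bound $\dist(x,B_t)^z/\Cost(X_t,B_t)+O(1/|\text{cluster of }x|)$ and to evaluate it with ``an approximate-nearest-neighbor structure answering constant-factor distance queries in $O(d\log k)$ time'' over a bicriteria set $B_t$ of $\poly(k,\log(n\Delta))$ centers. No such ANN structure is available: in high dimensions, constant-factor (or even $\poly(d)$-factor) nearest-neighbor queries against a set of size $m$ require $dm^{\Omega(1)}$ time (LSH-style), and even after a Johnson--Lindenstrauss reduction to $O(\log n)$ dimensions, grid- or tree-based structures cost quasi-polynomial in $m$ or exponential in the projected dimension. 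Since $B_t$ has $\poly(k,\log(n\Delta))$ points, this blows the $d\log(k)\cdot\polylog(\log(n\Delta))$ budget. Your backstop --- that inflated estimates preserve correctness --- is true, but if the only way to stay under the time budget is to use estimates with $n^{\Omega(1)}$ multiplicative slack, you are implicitly reinventing the paper's crude step, and you haven't said how to get even an $n^{\alpha}$-factor nearest-center distance in $O(d)$ time.

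The paper's way around this is structurally different and is the core technical contribution: it replaces the nearest-neighbor subroutine entirely. First it proves the $(k,z)$-clustering sensitivity is within $2^{O(z)}$ of the \emph{$(k,z)$-medoids} sensitivity over the current coreset (Theorem~\ref{thm:sens:medoids}), which shrinks the search space to $\poly(k,\log\log(n\Delta))$ candidate centers. It then computes sensitivity estimates by approximating the cost of the \emph{constrained} medoids clustering ``with a mandated center at $p$ and no center in the interior of a ball around $x$'' (Lemmas~\ref{lem:constrain:swap:constant}--\ref{lem:annulus:approx}, and their quadtree analogues \ref{lem:filter:under}--\ref{lem:filter:over}), using a quadtree with side-length ratio $n^{\Omega(1)}$ and hence only $O(1)$ levels, so every cell lookup costs $O(d)$. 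This is what delivers $n^{\alpha}$-factor sensitivity estimates in $O(d)$ time per point without any ANN data structure. Your route does not survive without an analogous mechanism.

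Two smaller remarks. The Morris counter / \textsc{GreaterThan} machinery you invoke is unnecessary: the paper works in a word-RAM model with $\Theta(\log(nd\Delta))$-bit words, so running cost sums and counts are $O(1)$-time arithmetic; the $\polylog(\log(n\Delta))$ in the update time comes from the size of the maintained coreset, not from the bit-length of counters. Finally, your Step~4 (the ``constant-factor global encoding'') is exactly where the paper extracts the final space bound --- re-expressing every coreset node of the merge-and-reduce tree relative to a \emph{single} global constant-factor set of centers, at the cost of turning per-node multiplicative error into additive $\eps'\cdot\Cost(X_t,C)$ error that must be summed over the tree (Lemma~\ref{lem:each:coreset:cluster}, Theorem~\ref{thm:cluster:stream}). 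You name this step but don't develop it; as written, your accounting still charges $O(kd\log(n\Delta))$ bits per tree node, which would reintroduce $\polylog(\log n)$ overhead in the word count.
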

\vspace{-0.05in}
\end{mdframed}
\vspace{-0.11in}
\thmref{thm:main} has a number of implications. 
First, the amortized runtime of \thmref{thm:main} is an \emph{exponential} improvement for dependencies in $k$ and $\log n$ over existing work~\cite{HenzingerK20,BhattacharyaCLP23,BhattacharyaCGL24}. 
Indeed, the fastest existing algorithm due to \cite{BhattacharyaCLP23,BhattacharyaCGL24} uses amortized update time $k\cdot\polylog(n)$ for constant $d$, whereas we use $\log(k)\cdot\polylog(\log(n\Delta))$ amortized update time. 
Thus, \thmref{thm:main} is the first result to show that $(k,z)$-clustering can be performed in the insertion-only streaming model with amortized update time sublinear in the number $k$ of clusters. 

In general, it is known that $\Omega(nk)$ runtime is necessary for even a constant-factor approximation to the metric $(k,z)$-clustering problem~\cite{BhattacharyaCLP23}, which would imply an $\Omega(k)$ lower bound for update time amortized over the stream. 
However, inputs to data streams require bounded bit precision and thus are intrinsically more appropriate for Euclidean $(k,z)$-clustering, where it was previously unclear whether the $\Omega(nk)$ total runtime lower bounds held. 
Our result in \thmref{thm:main} shows that they do not. 

\begin{figure}[!htb]
\centering
\small
{
\tabulinesep=1.1mm
\begin{tabu}{|c|c|}\hline
Streaming algorithm & Amortized Update Time \\\hline\hline
\cite{HenzingerK20} & $k^2\cdot\polylog(n\Delta)$ \\\hline\hline
\cite{BhattacharyaCLP23,BhattacharyaCF24} & $k\cdot\polylog(n\Delta)$ \\\hline\hline
\thmref{thm:main} (this work) & $\log(k)\cdot\polylog(\log(n\Delta))$ \\\hline
\end{tabu}
}
\vspace{-0.05in}
\caption{Table of $(k,z)$-clustering algorithms on data streams, omitting linear dependencies in the dimension $d$. We remark that \cite{HenzingerK20,BhattacharyaCLP23,BhattacharyaCF24} can handle the fully-dynamic setting, whereas ours cannot. 
However, our algorithm uses sublinear space while theirs does not.}
\vspace{-0.15in}
\figlab{fig:time:summary}
\end{figure}

Additionally, \thmref{thm:main} is the first result to achieve $(k,z)$-clustering on insertion-only streams using $\mathcal{O}_{k,d,\eps}(1)$ words of space, i.e., space usage independent of $n$. 
Despite a long line of work on clustering in the streaming model~\cite{Har-PeledM04,Har-PeledK07,Chen09,FeldmanL11,BravermanFLR19,stoc/BecchettiBC0S19,WoodruffZZ23}, the majority of previous results achieved space overheads that were polylogarithmic in $n$, compared to the offline setting. 
In a recent breakthrough work, \cite{Cohen-AddadWZ23} gave a streaming algorithm for $(k,z)$-clustering that uses $o_{k,d,\eps}(\log(n\Delta))$ words of space. 
However, their algorithm requires exponential time to process each stream item. 
Hence, the question remains whether we can achieve space complexity independent of $n$ altogether or even if we can achieve efficient update time with space overhead that is even polylogarithmic in $n$. 
\thmref{thm:main} shows that we can simultaneously achieve both, i.e., space complexity (in words) entirely independent of $n$, while also achieving $\log(k)\cdot\polylog(\log(n\Delta))$ amortized update time. 
In particular, our results match the best known \emph{offline} coreset constructions~\cite{Cohen-AddadSS21,Cohen-AddadLSS22} in terms of dependencies on $d$, $k$, and $\frac{1}{\eps}$ in the space, and thus match or improve upon the space usage of all previous streaming algorithms for insertion-only streams across all factors. 
See \thmref{thm:coreset:main} for a more formal statement and \figref{fig:space:summary} and \figref{fig:time:summary} for a comprehensive summary for the time and space complexity of existing results in the streaming model. 

\begin{figure}[!htb]
\centering
\small
{
\tabulinesep=1.1mm
\begin{tabu}{|c|c|}\hline
Streaming algorithm & Words of Memory \\\hline\hline
\cite{Har-PeledK07}, $z\in\{1,2\}$ & $\tO{\frac{dk^{1+z}}{\eps^{\O{d}}}\log^{d+z} n}$ \\\hline
\cite{Har-PeledM04}, $z\in\{1,2\}$ &
$\tO{\frac{dk}{\eps^d}\log^{2d+2} n}$ \\\hline
\cite{Chen09}, $z\in\{1,2\}$ &
$\tO{\frac{d^2k^2}{\eps^2}\log^8 n}$ \\\hline
\cite{FeldmanL11}, $z\in\{1,2\}$ &
$\tO{\frac{d^2k}{\eps^{2z}}\log^{1+2z} n}$ \\\hline
Sensitivity and rejection sampling~\cite{BravermanFLR19} & $\tO{\frac{d^2k^2}{\eps^2}\log n}$ \\\hline
Online sensitivity sampling & $\tO{\frac{d^2k^2}{\eps^2}\log^2 n}$ \\\hline
Merge-and-reduce with coreset of~\cite{Cohen-AddadLSS22} & $\tO{\frac{dk}{\min(\eps^4,\eps^{2+z})}\log^4 n}$ \\\hline
\cite{Cohen-AddadWZ23} & $\tO{\frac{dk}{\min(\eps^4,\eps^{2+z})}}\cdot\polylog(\log n)$ \\\hline\hline
\thmref{thm:main} (this work) & $\tO{\frac{dk}{\min(\eps^4,\eps^{2+z})}}$ \\\hline
\end{tabu}
}
\vspace{-0.05in}
\caption{Table of $(k,z)$-clustering algorithms on insertion-only streams. We summarize existing results with $z=\O{1}$, $\Delta=\poly(n)$, and the assumption that $k>\frac{1}{\eps^z}$ for the purpose of presentation.}
\vspace{-0.1in}
\figlab{fig:space:summary}
\end{figure}

Moreover, we note that \thmref{thm:main} also implies that we can efficiently compute the approximately optimal centers for $(k,z)$-clustering in the incremental setting as well. 
In particular, given an insertion-only stream of $n$ points that defines a dataset $X$ on $[\Delta]^d$, our one-pass streaming algorithm uses $d\log(k)\cdot\polylog(\log(n\Delta))$ amortized update time and $\tO{dk\log(n\Delta)}$ bits of space, and outputs an $\O{z}$-approximation to $(k,z)$-clustering at all times in the stream. 
We show this in \thmref{thm:cluster:main}. 
Additionally, \thmref{thm:main} surprisingly even gives the best known runtime for $(k,z)$-clustering in the \emph{offline} setting. 
\vspace{-0.05in}
\begin{mdframed}[backgroundcolor=lightgray!40,topline=false,rightline=false,leftline=false,bottomline=false,innertopmargin=-4pt]
\vspace{-0.05in}
\begin{restatable}{theorem}{thmofflineruntime}
\thmlab{thm:offline:runtime}
Given an dataset $X$ of $n$ in $[\Delta]^d$, there is an algorithm that uses $nd\log(k)\cdot\polylog(\log(n\Delta))$ runtime and outputs an $\O{z}$-approximation to the $(k,z)$-clustering problem. 
\end{restatable}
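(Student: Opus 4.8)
The plan is to obtain \thmref{thm:offline:runtime} as a direct corollary of the incremental algorithm described just before its statement: the one-pass streaming algorithm (\thmref{thm:cluster:main}, built on \thmref{thm:main}) that uses $d\log(k)\cdot\polylog(\log(n\Delta))$ amortized update time and $\tO{dk\log(n\Delta)}$ bits of space and maintains an $\O{z}$-approximate $(k,z)$-clustering of the prefix at every point in the stream. Given the offline dataset $X$, I would feed its $n$ points to this algorithm in an arbitrary order. Since the bound is on the \emph{amortized} update time, processing all $n$ updates --- including all internal bookkeeping and whatever work is needed to keep the output current --- takes $nd\log(k)\cdot\polylog(\log(n\Delta))$ time in total; after the final update the algorithm holds an $\O{z}$-approximate set of $k$ centers for the prefix consisting of all of $X$, which we output. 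This yields exactly the claimed runtime and approximation guarantee, so the only thing to verify is that the streaming time bound is genuinely end-to-end, i.e.\ that turning the maintained summary into $k$ centers is already charged to the amortized update time.

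For completeness I would spell out why \thmref{thm:cluster:main} follows from \thmref{thm:main}: fix $\eps$ to a small absolute constant so that the summary produced by \thmref{thm:main} is a strong coreset $Z$ of the current prefix, of size $s=\tO{dk}$ independent of $n$ and $\Delta$ and preserving clustering cost to within a constant factor. Then any $\O{z}$-approximate $(k,z)$-clustering of $Z$ is an $\O{z}$-approximate clustering of the prefix, and such a clustering is obtained by running an offline polynomial-time $\O{z}$-approximation (for instance local search) on the $s$ weighted points of $Z$. Recomputing this only at a geometric sequence of $\O{\log n}$ checkpoint times adds $\poly(dk)\cdot\O{\log n}$ time overall, which is dominated by the $nd\log(k)\cdot\polylog(\log(n\Delta))$ cost of the pass once $n$ exceeds a fixed polynomial in $dk$; in the complementary small-$n$ regime the whole input already fits in $\poly(dk)$ words and can be reduced (e.g.\ by a fast Johnson--Lindenstrauss map to $\O{\log k}$ dimensions) and clustered directly within the same budget. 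Either way the dominant term is the single streaming pass.

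The reduction itself is routine bookkeeping; the real obstacle --- and where essentially all the work lies --- is establishing \thmref{thm:main} with a $\polylog(\log(n\Delta))$ rather than $\polylog(n\Delta)$ dependence in the update time. That doubly-logarithmic dependence on the aspect ratio is exactly what the four-step pipeline of \figref{fig:outline} is built to produce, and it is precisely what lets the offline algorithm here shave the $\log(n\Delta)$ factor from the $nd\log(n\Delta)$ running time of \cite{DraganovSS24} (and avoid the $\poly(k)$ blowup of a naive merge-and-reduce). The only additional point of care in the reduction is charging the ``cluster the summary'' step correctly in the small-$n$ regime, where $\poly(dk)$ can nominally exceed the time budget; this is dispatched by the checkpointing/case analysis above rather than by any new idea.
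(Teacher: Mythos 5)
Your proof is correct and matches the paper's intended derivation: \thmref{thm:offline:runtime} has no standalone proof in the paper and is read off directly from the streaming guarantee of \thmref{thm:cluster:main} by replaying $X$ as a one-pass stream and outputting the centers maintained after the last update. One minor caution on your ``for completeness'' sketch: recomputing only at a geometric sequence of checkpoints would not by itself deliver the ``at all times'' guarantee asserted in \thmref{thm:cluster:main} (between checkpoints the maintained centers are stale relative to the grown prefix), but this is irrelevant for \thmref{thm:offline:runtime}, which needs only a single post-stream invocation of local search on the final $\poly\left(k,d,\frac{1}{\eps},\log\log(n\Delta)\right)$-size coreset; the paper's own proof of \thmref{thm:cluster:main} instead reruns local search at every merge-and-reduce update and invokes the implicit regime $n\gg k^4$ to keep that charge amortized.
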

\vspace{-0.05in}
\end{mdframed}
\vspace{-0.05in}
Previously, the best existing $(k,z)$-clustering algorithms use $(nd+nk)\cdot\polylog(n\Delta)$ runtime, by using bicriteria approximations~\cite{ArthurV07} in conjunction with dimensionality reduction~\cite{CohenEMMP15,stoc/BecchettiBC0S19,MakarychevMR19,IzzoSZ21} and importance sampling~\cite{HuangV20,Cohen-AddadSS21,Cohen-AddadLSS22} to construct a coreset, which is then given as input to any polynomial-time approximation algorithm such as local search~\cite{GuptaT08}. 
By comparison, \thmref{thm:offline:runtime} has exponentially better dependencies on both $k$ and $\log(n\Delta)$. 

In concurrent and independent works, techniques by \cite{DraganovSS24,TourHS24} can be used to achieve offline algorithms for $(k,z)$-clustering that use $\O{nd\log(n\Delta k)}$ runtime, while \cite{DupreS24} achieved an algorithm that uses $\O{n^{1+\Omega(1)}d\log(n\Delta k)}$ runtime, for the setting of $\Delta=\poly(n)$. 
In the same spirit as \thmref{thm:offline:runtime}, these works observed that the $\Omega(nk)$ lower bound of \cite{BhattacharyaCLP23} is not necessarily for Euclidean clustering. 
However, our result is faster across all settings. 

Finally, we remark on the implications of our techniques to communication complexity. 
Our approach in \thmref{thm:main} uses a black-box reduction that utilizes an efficient encoding for any coreset construction. 
For instance, if $k<\frac{1}{\eps^2}$, it may be desirable to use existing coreset constructions of size $\tO{\frac{k^2}{\eps^2}}$ rather than $\tO{\frac{k}{\eps^{z+2}}}$; our algorithm transitions to this case smoothly. 
In fact, our algorithm actually uses $\tO{dk\log(n\Delta)}+\frac{dk}{\min(\eps^4,\eps^{2+z})}\cdot\polylog\left(k,\frac{1}{\eps},\log(n\Delta)\right)$ bits of space, i.e., the $\frac{1}{\eps}$ factors do not multiply the $\log(n\Delta)$ factors. 
Therefore, our results are near-optimal with a recent lower bound on the communication complexity of clustering~\cite{zhu2024space}. 

\vspace{-0.1in}
\paragraph{Subspace embeddings.}
We next show the applications of our technique in \figref{fig:outline} to the problem of $L_p$ subspace embeddings. 
In addition to structural results about efficient encodings for $L_p$ subspace embeddings, our main result is the following:
\vspace{-0.1in}
\begin{mdframed}[backgroundcolor=lightgray!40,topline=false,rightline=false,leftline=false,bottomline=false,innertopmargin=-4pt]
\vspace{-0.05in}
\begin{theorem}
\thmlab{thm:lp:inf}
Given a matrix $\bA$ of $n$ rows on $[-M,\ldots,-1,0,1,\ldots,M]^d$, with $M=\poly(n)$ and online condition number $\kappa$, there exists a one-pass streaming algorithm in the row arrival model that outputs a $(1+\eps)$-strong coreset of $\bA$ and amortized runtime is $\O{d}$ per update. 
For $p\in[1,2]$, the algorithm uses $\tO{\frac{d^2}{\eps^2}}$ words of space, while for $p>2$, the algorithm uses $\tO{\frac{d^{p/2+1}}{\eps^2}}$ words of space.
\end{theorem}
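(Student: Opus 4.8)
\ The plan is to instantiate the four-step template of \figref{fig:outline} with \emph{online Lewis weight sampling} as the importance-sampling primitive (online leverage score sampling when $p=2$), since offline Lewis weight sampling is exactly what produces the $\tO{d/\eps^2}$-row ($p\le 2$) and $\tO{d^{p/2}/\eps^2}$-row ($p>2$) coresets referenced in the introduction. Throughout the stream I would maintain a \emph{coarse} oblivious $L_p$ subspace embedding $\bB_t=\bS\bA_t$ of $\bA_t$ using a sparse sketch $\bS$ --- an OSNAP-type sketch for $p=2$, a sparse $p$-stable / sparse Lewis-weight sketch for $p\in[1,2)$, and the appropriate $\tO{d^{p/2}}$-row construction for $p>2$ --- so a newly arriving row folds into $\bB_t$ in $\O{d}$ time up to the sparsity of $\bS$. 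From $\bB_t$ I would score each incoming row by an estimate of its online Lewis weight: lazily recompute a whitening map for the current $\bB_t$ together with a Johnson--Lindenstrauss compression of it, and with appropriate batching estimate the weight of an incoming row in $\O{d}$ amortized time, never forming a $d\times d$ matrix on a per-row basis. The structural fact driving the size bounds is that the online Lewis weights along the stream sum to $\tO{d\log\kappa}$ for $p\le 2$ and $\tO{d^{p/2}\log\kappa}$ for $p>2$ --- with $\log\kappa=\poly(d,\log n)$ for integer matrices with $M=\poly(n)$ --- and that any spectral overestimate of these weights inherits the same sum up to the overestimation factor.

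\emph{Steps 1 and 2.} In Step 1 I use $\bB_t$ to produce \emph{crude} overestimates $\widetilde w_t$ of the online Lewis weights (loose by a $\poly(d,\log n)$ factor but $\O{d}$-time to evaluate), keep $\ba_t$ independently with probability $\propto\widetilde w_t$, and reweight by $\widetilde w_t^{-1/p}$. Matrix-Bernstein for $p=2$, and the Lewis-weight sampling analysis for general $p$, show the resulting substream $\calS'$ is already a constant-factor $L_p$ subspace embedding, while the weight-sum bound forces $|\calS'|=\poly(d,\log n)=n^{1-\Omega(1)}$. In Step 2 I re-run Lewis weight sampling on $\calS'$, now with $(1\pm\eps)$-accurate online Lewis weights computed by the (comparatively expensive) iterative fixed-point procedure; since $|\calS'|$ is only $\poly(d,\log n)$, the total cost of this pass is a lower-order additive term, and its output $\calS''$ is a $(1+\eps)$-subspace embedding of length $\polylog(n)$ (omitting $\poly(d,1/\eps,\log\kappa)$ factors). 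A union bound over the $\O{\log n}$ ``doubling'' scales of the Gram matrix handles the adaptivity of each sampling decision.

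\emph{Steps 3 and 4.} Merge-and-reduce over offline $L_p$ subspace-embedding coresets on $\calS''$ has depth $\O{\log\log n+\log(d/\eps)}$, so it multiplies the coreset size by only $\polyloglog(n)\cdot\polylog(d/\eps)$; combined with the offline row counts this yields a $(1+\eps)$-strong coreset of nearly optimal size. Step 4 removes the residual overhead and caps the bit complexity: using a single global constant-factor $L_p$ subspace embedding of the whole stream, I re-coreset the merge-and-reduce output by an offline construction and re-express the surviving rows and weights in a well-conditioned basis, producing a final object with $\tO{d/\eps^2}$ rows ($p\le 2$) or $\tO{d^{p/2}/\eps^2}$ rows ($p>2$), each stored in $\O{\log(nM)}$ bits --- i.e., $\tO{d^2/\eps^2}$ words and $\tO{d^{p/2+1}/\eps^2}$ words respectively. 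The amortized update time is $\O{d}$: Step 1 is $\O{d}$ per row by construction, and Steps 2--4 act on streams of length $\poly(d,1/\eps)\cdot\polylog(n)$, contributing only a lower-order additive term once amortized over the $n$ arrivals (buffering $\poly(d)$ raw rows between batch boundaries so a valid coreset is available at every time $t$).

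The main obstacle is the $\O{d}$ update time in Step 1. The budget forbids touching the $d\times d$ Gram matrix --- or its inverse, or its determinant --- on each arrival, so the whitening/JL map used to score incoming rows must be recomputed only \emph{lazily}, and the recomputation schedule must be driven by a potential that is itself $\O{d}$-updatable (the Frobenius norm, say) yet still certifies that between recomputations the Gram matrix has grown by at most a constant factor spectrally --- which is exactly what keeps the stale overestimates' sum at $\tO{d\log\kappa}$ and hence $\calS'$ short. Reconciling this lazy schedule with the online (adaptive) nature of the sampling, carrying the Lewis-weight fixed point (strictly more delicate than leverage scores) through Step 2, and obtaining $\O{d}$ update time for $p>2$, where the known oblivious $L_p$ embeddings are not sparse, together form the technical heart of the theorem.
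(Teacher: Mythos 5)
Your high-level template matches the paper's proof: crude importance scores in $\O{d}$ amortized time produce a short substream, online Lewis weight sampling refines it, merge-and-reduce compresses it, and a global preconditioner/encoding caps the space. However, there is a genuine gap in your Step 1 mechanism for the crude scores. You propose maintaining an oblivious $L_p$ subspace embedding $\bS\bA_t$, specific to $p$, and you correctly observe that for $p>2$ the known oblivious $L_p$ embeddings are not sparse, leaving the $\O{d}$ update time unresolved — you flag this as ``the technical heart'' without a resolution. The paper removes this obstacle entirely via \lemref{lem:root:scores}: the square root of the \emph{$L_2$} leverage score of $\ba_t$ is within a factor $n^{|1/2-1/p|}$ of the $p$-th root of the $L_p$ sensitivity for \emph{every} $p$. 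Hence only a (very coarse) spectral approximation $\bB^\top\bB\approx\bA_t^\top\bA_t$ is needed, and the paper obtains this directly from the constant-factor coreset $\bB$ it already maintains; a random Gaussian $\bg$ and the whitening $\bZ=(\bB^\top\bB)^{-1/2}$ give an $n^{\alpha}$-estimate $|\langle\bg\bZ,\ba_t\rangle|$ in $\O{d}$ time per row (\lemref{lem:crude:lewis:computation}). Because $\bB$ is a coreset whose row set changes only $\poly(d,1/\eps,\log n)$ times over the stream (bounded by the total online Lewis weight), recomputing $\bg\bZ$ whenever $\bB$ changes is automatically a lower-order term — no Frobenius-norm potential or ad hoc recomputation schedule is required. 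This $L_2$-to-$L_p$ reduction is what makes all $p\ge 1$ go through uniformly, and it is missing from your plan.

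A secondary issue is your Step 4. You ``re-coreset the merge-and-reduce output by an offline construction'' and store each surviving row ``in $\O{\log(nM)}$ bits,'' which is full precision. The theorem claims space \emph{independent of $n$} in words, so the $\poly(\log\log n)$ levels of the merge-and-reduce tree (each at accuracy $\eps/\poly(\log\log n)$) must be held simultaneously \emph{during} the stream. The paper handles this (\lemref{lem:x:to:xprime:embedding}, \lemref{lem:efficient:coreset:embedding}, \lemref{lem:each:coreset:embedding}) by storing, for each node of the tree, only the $(1+\eps')$-rounded \emph{exponents} of $\ba\bP$ for a single global preconditioner $\bP$ derived from a single stream-wide constant-factor embedding $\bM$; rounding relative to a global $\bP$ converts per-node multiplicative error into additive error $\eps'\|\bA\bx\|_p^p$, which sums correctly across the $\poly(\log\log n)$ nodes. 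A post-hoc re-coreset at stream end does not bound the space used mid-stream, and storing rows at full precision re-introduces the very $\poly(\log\log n)$ overhead Step 4 is meant to remove.
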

\vspace{-0.05in}
\end{mdframed}
\vspace{-0.1in}
We first emphasize that the amortized runtime of \thmref{thm:lp:inf} translates to an offline algorithm with $\O{nd}$ runtime. 
Since any offline algorithm requires $\O{nd}$ time to read the input, without additional assumptions, our result is tight and shows that there is no inherent time complexity separation between the offline setting and the streaming model for $L_p$ subspace embeddings. 

We next remark on the differing behaviors across the ranges of $p$. 
This is not a coincidence, as \cite{LiWW21} showed that any constant-factor $L_p$ subspace embedding for $p\ge 2$ requires $\Omega(d^{\max(p/2,1)}+1)$ bits. 
Thus, our result has tight dependencies in terms of the dimension $d$. 

Moreover, existing algorithms based on using merge-and-reduce with the optimal coreset constructions for $L_p$ subspace embeddings require $\poly(d,\log n)$ bits of space, e.g., $\tO{\frac{d^2}{\eps^2}}\cdot\polylog(n)$ for $p\le 2$~\cite{DrineasMM06a,DrineasMM06b,Sarlos06} and $\tO{\frac{d^{p/2+1}}{\eps^2}}\cdot\polylog(n)$ for $p> 2$~\cite{CohenP15,WoodruffY23}.
Similarly, schemes based on online sampling also require $\poly(d,\log n)$ bits of space, e.g., $\tO{\frac{d^2}{\eps^2}}\cdot\polylog(n)$ for $p\le 2$~\cite{CohenMP20} and $\tO{\frac{d^{p/2+1}}{\eps^2}}\cdot\polylog(n)$ for $p>2$~\cite{CohenP15,BravermanDMMUWZ20,WoodruffY23}. 
Hence, it was unknown whether there is an inherent separation between the offline setting and the streaming model for coreset constructions for $L_p$ subspace embeddings. 
\thmref{thm:lp:inf} resolves this question, showing that there is no overhead for the streaming model -- we can perform $L_p$ subspace embedding in $\mathcal{O}_{d,\eps}(1)$ bits of space. 
See \figref{fig:lp:space:summary} and \thmref{thn:lp:main} for more details. 

\begin{figure}[!htb]
\centering
\small
{
\tabulinesep=1.1mm
\begin{tabu}{|c|c|}\hline
Streaming algorithm & Words of Memory \\\hline\hline
Merge-and-reduce with coreset of~\cite{DrineasMM06a,DrineasMM06b,Sarlos06}, $p\le 2$ & $\tO{\frac{d^2}{\eps^2}}\cdot\polylog(n)$ \\\hline
Online leverage score sampling \cite{CohenMP20}, $p=2$ & $\tO{\frac{d^2}{\eps^2}\log n}$ \\\hline
Online sensitivity sampling \cite{BravermanDMMUWZ20}, $p\le2$ & $\tO{\frac{d^3}{\eps^2}\log n}$ \\\hline
Online Lewis weight sampling \cite{WoodruffY23}, $p\le2$ & $\tO{\frac{d^2}{\eps^2}\log n}$ \\\hline
\thmref{thm:main} (this work), $p\le 2$ & $\tO{\frac{d^2}{\eps^2}}$ \\\hline
\hline
Merge-and-reduce with coreset of~\cite{DasguptaDHKM09}, $p\ge 1$ &$\tO{\frac{d^{\max(p/2+1,p)+2}}{\eps^2}\log^3 n}$ \\\hline
Merge-and-reduce with coreset of~\cite{CohenP15}, $p\ge 2$ &$\tO{\frac{d^{p/2+1}}{\poly(\eps)}}\cdot\polylog(n)$ \\\hline
Online Lewis weight sampling \cite{WoodruffY23}, $p>2$ & $\tO{\frac{d^{p/2+1}}{\eps^2}}\cdot\polylog(n)$ \\\hline
\thmref{thm:main} (this work), $p>2$ & $\tO{\frac{d^{p/2+1}}{\eps^2}}$ \\\hline
\end{tabu}
}
\vspace{-0.05in}
\caption{Table of $L_p$ subspace embedding algorithms on insertion-only streams. We summarize existing results with $\kappa=\poly(n)$ for the purpose of presentation.}
\vspace{-0.15in}
\figlab{fig:lp:space:summary}
\end{figure}

\subsection{Technical Overview}
\seclab{sec:overview}
In this section, we provide a high-level intuition behind our framework and how to efficiently implement each step in \figref{fig:outline}. 
Recall that given a stream $\calS$ of length $n$, we first use a crude sampling scheme to generate a stream $\calS'$ of length $n^{1-\Omega(1)}$. 
Then, we use a refined sampling scheme on $\calS'$, which provides a significantly more efficient compression of the input, but requires longer time to compute, resulting in a stream $\calS''$. 
However, because these operations are performed on the input stream $\calS'$ of length $o(n)$, then these slower operations can be amortized into lower-order terms that do not multiply a linear term in $n$. 
Finally, we run merge-and-reduce on $\calS''$ to produce a coreset $\calC$, which we store using an efficient encoding to optimize the final space complexity. 

\subsubsection{Fast Update Time for Clustering}
For any point $x$ in a fixed dataset $X$, the \emph{$(k,z)$-clustering sensitivity} of $x$  is defined by $s(x)=\max_{C\subset\mathbb{R}^d:|C|\le k}\frac{\Cost(x,C)}{\Cost(X,C)}$, where the cost function is a sum of the $z$-th power of the distances~\cite{FeldmanL11,FeldmanS12,BravermanFLSZ21,Cohen-AddadWZ23,WoodruffZZ23}. 
When the dataset is evolving, the \emph{online sensitivity} of $x_t$ is the $(k,z)$-clustering sensitivity of $x$ with respect to $X_t=\{x_1,\ldots,x_t\}$, so that it quantifies the ``importance'' of $x_t$ with respect to the points of the stream that have already arrived. 
A standard approach in streaming algorithms for $(k,z)$-clustering is to sample each point $x$ with probability proportional to its online sensitivity, resulting in $\poly\left(k,\frac{1}{\eps},\log(n\Delta)\right)$ samples. 

To efficiently implement \figref{fig:outline}, we first compute a crude but fast approximation to the sensitivity of each point. 
Note that since the total sensitivity sums to $\O{k}$, then even $n^\alpha$-approximation to the sensitivities, where $\alpha\in(0,1)$, will sample $n^\alpha\cdot\poly\left(k,\frac{1}{\eps},\log(n\Delta)\right)=o(n)$ points, forming the new insertion-only stream $\calS'$ of length $o(n)$. 
Unfortunately, it is not clear how to efficiently acquire even crude approximations to the online sensitivities; this is the main runtime bottleneck in achieving $o(k)$ amortized update time. 

\vspace{-0.1in}
\paragraph{$(k,z)$-clustering sensitivity to $(k,z)$-medoids sensitivity.}
To that end, we first define the $(k,z)$-medoids sensitivity of $x$ with respect to the dataset $X$ for $(k,z)$-clustering sensitivity by $\tau(x)=\max_{C\subset X:|C|\le k}\frac{\Cost(x,C)}{\Cost(X,C)}$; we define the online sensitivities analogously.  
Note that for the medoids formulation, the $k$ centers must be among the input set $X$, which for our purposes will be the coreset maintained by the algorithm, rather than the original input. 
Hence, we can assume $|X|=\poly\left(k,d,\log(n\Delta),\frac{1}{\eps}\right)$. 
We show that the $(k,z)$-medoids sensitivity $\tau(x)$ is a constant-factor approximation to the $(k,z)$-clustering sensitivity $s(x)$ as follows. 

First, note that the maximization of the ratio $\frac{\Cost(x,C)}{\Cost(X,C)}$ is over a smaller search space for the possible values of $C$ in the medoids formulation and thus $\tau(x)\le s(x)$.  
To show $s(x)\lesssim\tau(x)$, we first recall that the optimal $(k,z)$-clustering and $(k,z)$-medoids clustering costs are within a factor of $2^{\O{z}}$ due to the generalized triangle inequality. 
Hence if we fix $C$ to be a set of centers that maximizes $s(x)$, let $c$ be the closest center of $C$ to $x$ and let $R$ be the set of points of $X$ served by $c$, then it suffices to show that there exists $c'\in X$ such that $\frac{\Cost(x,c)}{\Cost(R,c)}\approx\frac{\Cost(x,c')}{\Cost(R,c')}$, as we can find a $2^{\O{z}}$-approximate clustering of $X\setminus R$ using $k-1$ other medoids.  

For $k$-medoids, we perform casework on the size of the set $P$ of points that are closer to $x$ than to $c$. 
If $|P|<0.99|R|$, then by a simple Markov-type argument, we can show there exists some $y\in R\setminus P$ such that $\Cost(y,c)\le\frac{100}{r}\cdot\Cost(R,c)$, but since $y\notin P$, then $y$ is closer to $c$ than $y$ is to $x$. 
Thus by setting $y$ to be a $k$-medoids center, it follows from the triangle inequality that the cost of clustering $R$ with $y$ instead of $c$ cannot change by much. Moreover, the distance between $x$ and $y$ is similar to the distance between $x$ and $c$, so that $\tau(x)\approx s(x)$. 
On the other hand, if $|P|\ge 0.99|R|$, then there is a large number of points which are closer to $x$ than to $c$ by definition of $P$ and hence $\frac{\Cost(x,c)}{\Cost(R,c)}=\O{1}{|R|}$. 
However, we can choose the point $c'$ of $P$ farthest away from $x$, so that $\frac{\Cost(x,c')}{\Cost(R,c')}\ge\frac{1}{2|R|}$ and thus $\tau(x)\approx s(x)$ again. 

\vspace{-0.1in}
\paragraph{Constrained $(k,z)$-medoids clustering with a fixed center.}
We next show how to approximately solve the constrained medoids clustering problem $\min\Cost(X,C)$ across all sets $C\subset X$ of $k$ centers containing a center at $p$. 
Suppose $C$ is the optimal such constrained clustering. 
Let $S$ be a constant-factor approximation to the optimal unconstrained $(k,z)$-medoids clustering on $X$, which can be efficiently computed by local search~\cite{GuptaT08}. 
We show that the best clustering obtainable from swapping a center in $S$ with $\{p\}$ is a good approximation to $C$. 

Specifically, let $W$ be the set of $k-1$ centers $W\subseteq S$ such that $W\cup\{p\}$ has the minimum $(k,z)$-medoids clustering cost on $X$, and let $C'=W\cup\{p\}$. 
Let $Q$ be the set of at most $k$ centers of $S\cup\{p\}$ consisting of the nearest neighbors to $C$.  
For each $y\in X$, let $\pi_{C}(y)$ be the center in $C$ closest to $y$ 
Then we can use the triangle inequality to charge $\Cost(y,Q)$ in terms of $\Cost(y,\pi_C(y))+\Cost(\pi_C(y),Q)$. 
Since $\Cost(\pi_C(y),Q)\le\Cost(\pi_C(y),S\cup\{p\})\le\Cost(\pi_C(y),S)$, we can use the triangle inequality to charge $\Cost(\pi_C(y),Q)$ to  $\Cost(y,C)+\Cost(y,S)$. 
Finally, because $C'$ is the $k$ centers among $S\cup\{p\}$, summing across all $y\in X$ we can upper bound $\Cost(X,C')\le\Cost(X,Q)$ in terms of $\Cost(X,C)$ and $\Cost(X,S)$ and hence, simply in terms of $\Cost(X,C)$. 

However, finding $C'$ requires finding $W$, the best set of $k-1$ centers $W\subseteq S$, which does not immediately seem like an easy algorithmic task. 
To that end, we approximate $\Cost(X,C')$ by first performing the following bookkeeping on $S$. 
For each center $s\in S$, we compute the number $n_s$ (or weight) of the points of $X$ served by $s$. 
We then compute the distance $r_s$ from $s$ to its closest center in $S\setminus\{s\}$ and show the clustering cost after removing $s$ would approximately increase by $n_s\cdot(r_s)^z$.

Next, we need to update this information for each query $p$ that we guess to serve $x$ in the optimal constrained solution. 
Rather than updating the information for each center $s\in S$, we instead show that it suffices to update the information for just the nearest neighbor $u\in S$ to $p$. 
We then remove the center $s$ with the smallest $n_s\cdot(r_s)^z$ to find $W$ and thus $C'$. 
Finally, we show that $\Cost(X,S)+n_c\cdot(r_c)^z$ is both efficiently computable given our pre-processing information and also a good approximation to $\Cost(X,C')$.

\vspace{-0.1in}
\paragraph{Constrained $(k,z)$-medoids clustering with a fixed closest/serving center.}
Although $C'$ is a set of $k$ centers that contains our guess $p$ for the center that realizes the sensitivity of $x$, it may not hold that $p$ is the closest center in $C'$ to $x$. 
Thus we must further utilize $C'$ to approximately solve the constrained clustering problem $\min\Cost(X,C)$ across all sets $C\subset X$ of $k$ centers containing a center at $p$ \emph{and no other centers that are closer to $x$ than $p$}. 
Let $C$ be the optimal solution to this constrained problem and let $r:=\|x-p\|_2$. 
Our first observation is that any points served by some center $c\in C'$ in the ball $B_{r/2}(x)$ of radius $\frac{r}{2}$ around $x$ must be served in $C$ by a center outside the interior of the ball $B_r(x)$ of radius $r$ around $x$. 
In fact, if $n_b$ is the number (or weight) of points served by centers of $C'$ in $B_{r/2}(x)$, then we would expect these points to incur additional cost $n_b\cdot r^z$ in $\Cost(X,C)$. 

Unfortunately, it is not true that $\Cost(X,C')+n_b\cdot r^z$ is a good approximation to $\Cost(X,C)$ because there may be a large number of points served by a single center $q$ at some distance $d_q\in\left[\frac{r}{2},r\right)$ from $x$.  
However, since we did not include these points in $n_b$, then they are effectively not moved to outside of $B_r(x)$. 
Hence, it is possible that $\Cost(X,C)$ is significantly larger than $\Cost(X,C')+n_b\cdot r^z$. 

The natural approach would be to include these points in the computation of $n_b$, so that perhaps we instead compute $n'_b$ to be the number of points served by centers of $C'$ inside $B_r(x)$ and then effectively move them to a center $B_r(x)$ via $\Cost(X,C')+n'_b\cdot r^z$. 
This still does not work because there can be a large number of points served by a center $q$ inside but arbitrarily close to the boundary of $B_r(x)$; in this case, $\Cost(X,C')+n'_b\cdot r^z$ is a significant over-estimate of $\Cost(X,C)$. 

The crucial observation is that we do not need to handle either of these cases because in both of these cases, there exists a different point $p'$ in the annulus between $B_r(x)$ and $B_{r/2}(x)$ that approximately realizes the $(k,z)$-medoids sensitivity of $x$, such that these bad cases do not hold for $p'$. 
In particular, these problematic centers would be outside the ball of radius $\|p'-x\|_2$ centered at $x$ and so the resulting clustering is valid for the constraint of $p'$ serving $x$. 

On the other hand, we must still ensure our approximation $\Cost(X,C')+n'_b\cdot r^z$ does not significantly overestimate the sensitivity. 
Thus if $n_a$ is number of points served by centers in the annulus between $B_r(x)$ and $B_{r/2}(x)$, we show that if $n_b\ge n_a$, then $\frac{r^z}{\Cost(X,C')}$ is a good estimate to $\frac{\Cost(x,C)}{\Cost(X,C)}$. 
Otherwise if $n_b<n_a$, then we show that $\frac{r^z}{\Cost(X,C')}$ is an overestimate of $\frac{\Cost(x,C)}{\Cost(X,C)}$ but still upper bounded by the desired $(k,z)$-medoids sensitivity $\tau(x)$, up to a constant factor.

\vspace{-0.1in}
\paragraph{Crude quadtree for crude approximations.}
Finally, to implement our crude sampling procedure in \figref{fig:outline}, we create a quadtree for the input set, e.g.~\cite{indyk120oct,BackursIRW16,Cohen-AddadLNSS20}. 
Traditionally, each cell of a level of a quadtree has half the side length of a cell of an adjacent level in the quadtree. 
However, because we only seek $n^\alpha$-approximation, we permit each cell of a level to have side length $\frac{1}{n^\alpha}$ fraction of the side length of the adjacent level. 
This significantly decreases the number of levels in the quadtree to a constant number of levels overall, provided that $\Delta=\poly(n)$, rather than $\O{\log\Delta}$ levels in the standard quadtree. 
We then run the same algorithm above searching up the levels of the quadtree rather than the radially outward ball. 

Specifically, we estimate the distance of the center serving the point $x$ whose sensitivity we wish to approximate. 
Each estimate of a possible distance corresponds to a separate level in the quadtree. 
For a fixed level $\beta$ in the quadtree, we first take a constant-factor approximation to the unconstrained $(k,z)$-clustering problem to find a crude-approximation to the optimal $(k,z)$-clustering constrained to having a center in the same cell as $x$ at level $\beta$. 
Our algorithm provides an estimate $\Psi$ of this cost by deleting the center that increases the overall cost the least, among the centers in the constant-factor solution adjoined with a center in the same cell as $x$ at level $\beta$. 
We then ensure that $x$ is served by a center at level $\beta$ by moving all centers in the same cell as $x$ before level $\beta$ ``upward'' until level $\beta$ by computing the number (or weight) $n_\beta$ of points served by these centers in $\Psi$ and again increasing $\Psi$ by $n_\beta\cdot r^z$, where $r$ is the distance estimated by the quadtree for a center serving $x$ at level $\beta$. 

\subsubsection{Optimal Space Clustering in the Streaming Model}
The two most common approaches for $(k,z)$-clustering on insertion-only streams are merge-and-reduce and online sensitivity sampling. 
Informally, merge-and-reduce partitions the stream into consecutive blocks, builds a binary tree on the sequence of blocks, and then maintains a coreset for the points corresponding to each node in the tree. 
On the other hand, the online sensitivity approach samples each point $x_t$ of the stream with probability proportional to its online sensitivity, a quantity that measures the importance of $x_t$ with respect to the points of the stream that have already arrived. 
However, it is known these approaches cannot immediately be used to achieve our goal of streaming algorithms with space (in words) independent of $n$. 
we describe these approaches and others, as well as their shortcomings in more detail in \secref{sec:overview:space}.  

Recent insight by \cite{Cohen-AddadWZ23} observed that merge-and-reduce requires space that is polylogarithmic in the length of the input stream, while running online sensitivity sampling induces an insertion-only stream $\calS'$ of length $\poly\left(k,d,\log(n\Delta),\frac{1}{\eps}\right)$ of weighted points that forms a $(1+\eps)$-coreset of the input points. 
By running merge-and-reduce on $\calS'$, we acquire a $(1+\O{\eps})$-coreset for the original stream using an algorithm that stores $\poly_{k,d,\eps}(\log\log(n\Delta))$ weighted points. 

\vspace{-0.1in}
\paragraph{Efficient encoding of coreset points.}
To achieve our goal of $\mathcal{O}_{k,d,\eps}(1)$ space (in words), we thus need a more efficient encoding of each point, rather than representing each point using $\O{d\log(n\Delta)}$ bits of space. 
Suppose we have a set $C'$ of $\O{k}$ centers that is a constant-factor approximation to the optimal $(k,z)$-clustering on $X$. 
We can noisily encode $X$ by writing each $x\in X$ as $x=\pi_{C'}(x)+(x-\pi_{C'}(x))$, where $\pi_{C'}(x)$ is the closest center of $C'$ to $x$. 
We then round each coordinate of $x-\pi_{C'}(x)$ by rounding each coordinate to a power of $(1+\eps')$ for $\eps'=\poly\left(\eps,\frac{1}{d},\frac{1}{\log(n\Delta)}\right)$ to form a vector $y'$. 
Hence, given $C'$, each vector $x'=\pi_{C'}(x)+y'$ can be encoded using $\O{\log k+d\log\frac{1}{\eps'}}$, since we can store the \emph{exponents} of the offsets. 
It is known~\cite{Cohen-AddadWZ23,zhu2024space} that if $X'$ is the set of all points of $X$ rounded in this manner, then $X'$ is a $(1+\eps)$-coreset for $X$. 
Moreover, we can round the weights of $X'$ to powers of $(1+\eps')$, thereby efficiently encoding each coreset in the merge-and-reduce framework to improve the overall space complexity of the problem. 

Unfortunately, storing $C'$ itself uses $\O{kd\log n}$ bits and the merge-and-reduce tree on $\calS'$ has height $\O{\log_{k,d,\eps}(|S|)}=\polylog_{k,d,\eps}(\log(n\Delta))$ and therefore requires at least the same number of coresets to be simultaneous stored. 
Thus the total space remains $\O{kd\log n}\cdot\polylog_{k,d,\eps}(\log(n\Delta))$. 

\vspace{-0.1in}
\paragraph{Global encoding.}
To overcome this issue, we provide a single constant-factor approximation to the global dataset, rather than providing a constant-factor approximation for each level of the merge-and-reduce tree. 
Namely, we observe that whenever we need to do a merge operation at some time $t$, we use the stored coreset to recompute a constant-factor approximation $C'$ to $X_t$. 
For each coreset for a subset $S_v$ of data points representing a node $v$ in the tree, we can then perform the efficient encoding with respect to $C'$ instead of a separate constant-factor approximation. 
Although each encoding no longer guarantees $(1+\eps)$-multiplicative approximation to $\Cost(S_v,C)$ for a query set $C$ of $k$ centers, it still guarantees $\eps'\cdot(X_t,C)$ additive error to $\Cost(S_v,C)$, which suffices for a $(1+\eps)$-approximation to $\Cost(X_t,C)$ after summing across all subsets $S_v$ at time $t$.  

\subsubsection{Subspace Embeddings}
Finally, we briefly describe how our framework in \figref{fig:outline} can achieve fast space-efficient algorithms for subspace embeddings. 
However, it is not immediately obvious how to utilize a constant-factor subspace embedding $\bM\in\mathbb{R}^{m\times d}$ of the input matrix $\bA\in\mathbb{R}^{n\times d}$ to achieve an efficient encoding, especially since the dimensions are different. 
One possible approach is to project $\bA$ onto $\bM$ and round the resulting rows, but the rounding process could result in large error to due cancellation of rounded entries. 
For example, suppose that after the projection, there is a row $\br=(A, B)$ for some large values of $A,B>0$. 
By rounding each coordinate of $\br$ to the nearest power of $(1+\eps')$ for some $\eps'=\poly\left(\frac{\eps}{d,\log(nM)}\right)$, we obtain a row $\br'=(A',B')$. 
Now for $\bv=(B,-A)$, we have $\langle\bv,\br\rangle=0$, but $\langle\bv,\br'\rangle$ can be as large as $\eps'\cdot(|A|+|B|)$. 
Moreover, this error can compound across all rows of the matrix $\bA$ so that the resulting additive error to the estimate $\|\bA\bv\|_p^p$ could be as large as $\eps'^p\cdot\|\bA\|_p^p$. 

We instead observe that the correct procedure is to first multiply the matrix $\bA$ with a preconditioner so that none of its rows can contribute a large amount to the error. 
To that end, we first use a well-conditioned basis to compute a preconditioner $\bP\in\mathbb{R}^{d\times d}$ and multiply each row of $\bA$, so that for all vectors $\bx\in\mathbb{R}^d$ with $\|\bA\bx\|_p=1$, we have that $\frac{1}{\poly(d)}\le\|\bA\bP\bx\|_p\le\poly(d)$. 
In particular, the maximum possible additive error $\langle\bv,\br'\rangle$ for each row $\br$ can be charged to the sensitivity of $\br$, and it follows that the sum of the errors due to the rounding can be at most $\eps'\cdot\poly(d)$. 
Since $\eps'=\poly\left(\frac{\eps}{d,\log(nM)}\right)$, then the overall error is $\O{\eps}$, which achieves a subspace embedding because $\|\bA\bx\|_p=1$. 
For a row $\bb_t=\ba_t\bP$, we then round each entry of $\bb_t$ to the nearest power of $(1+\eps')$ and store the exponent as before. 
We can then also achieve a streaming algorithm by compressing the entire merge-and-reduce tree, as before. 

\vspace{-0.1in}
\paragraph{Fast runtime.}
Finally, we implement the approach in \figref{fig:outline} to achieve fast runtime. 
We first produce crude but fast $n^\alpha$-approximations to the $L_p$ sensitivities using the \emph{root} leverage scores. 
These can be quickly produced using a quadratic form of the constant-factor approximation that we maintain at each time. 
In particular, given a constant-factor subspace embedding $\bB$ to $\bA$, the leverage score of a row $\ba_i$ with respect to $\bA$ is a constant factor multiple of $\ba_i^\top(\bB^\top\bB)^{-1}\ba_i$. 
Now we can approximate this quantity using $\|\bg\bZ\ba_i\|_2^2$, where $\bg$ is a random Gaussian vector and $\bZ=(\bB^\top\bB)^{-1/2}$. 
To relate this quantity to the $L_p$ sensitivities, we show that the square root of the leverage score is within $\poly(n)$ factors of the true $L_p$ sensitivities. 
Thus by sampling each row with the crude approximations, we then induce a stream of length $o(n)$ for which we can amortize existing approaches that use $\poly(d)$ update time. 
For more details, see the intuition in \secref{sec:subspace:embed}. 

\subsection{Preliminaries}
\seclab{sec:prelims}
Given an integer $n>0$, we use the notation $[n]$ to represent the set $\{1,\ldots,n\}$. 
We use $\poly(n)$ to represent a fixed polynomial in $n$ and $\polylog(n)$ to represent $\poly(\log n)$. 
We say an event occurs with high probability if it occurs with probability at least $1-\frac{1}{\poly(n)}$. 

In this paper, we will focus on the Euclidean clustering, as opposed to inputs from a general metric space. 
As is often standard in the streaming literature, we assume $\Delta=\poly(n)$ and thus we allow each word of space to use $\Theta(\log(nd\Delta))$ bits of storage, so that we can store the weight and each coordinate of each point using $\O{1}$ words of space. 
Thus for vectors $x,y\in\mathbb{R}^d$, we use $\dist(x,y)$ to represent the Euclidean distance $\|x-y\|_2$, so that $\|x-y\|_2^2=\sum_{i=1}^d(x_i-y_i)^2$. 
More generally, given a point $x$ and a set $S$, we abuse notation by representing $\dist(x,S):=\min_{s\in S}\dist(x,s)$. 
We also recall the $L_z$ norm of $x$ is defined by $\|x\|_z$, where $\|x\|_z^z=\sum_{i=1}^d x_i^z$. 
For a matrix $\bA\in\mathbb{R}^{n\times d}$, we use $\|\bA\|_F$ to denote the Frobenius norm, so that 
\[\|\bA\|_F^2=\sum_{i\in[n]}{j\in[d]}A_{i,j}^2.\]
For a fixed $z\ge 1$ and sets $X,C\subset\mathbb{R}^d$ with $X=\{x_1,\ldots,x_n\}$ we use $\Cost(X,C)$ to denote $\sum_{i=1}^n\dist(x_i,C)^z$. 

We first recall the generalized triangle inequality:
\begin{fact}[Generalized triangle inequality]
\factlab{fact:triangle}
For any $z\ge 1$ and $x,y,z\in\mathbb{R}^d$, we have
\[\dist(x,y)^z\le 2^{z-1}(\dist(x,w)^z+\dist(w,y)^z).\]
\end{fact}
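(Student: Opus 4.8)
The plan is to reduce the claim to two standard ingredients: the ordinary Euclidean triangle inequality and the convexity of the power function $t\mapsto t^z$ on $[0,\infty)$ for $z\ge 1$. First I would apply the triangle inequality for $\|\cdot\|_2$ to write $\dist(x,y)\le\dist(x,w)+\dist(w,y)$. Since $t\mapsto t^z$ is nondecreasing on $[0,\infty)$, raising both sides to the $z$-th power gives
\[\dist(x,y)^z\le\bigl(\dist(x,w)+\dist(w,y)\bigr)^z.\]

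It then remains to prove the scalar inequality $(a+b)^z\le 2^{z-1}(a^z+b^z)$ for all reals $a,b\ge 0$ and $z\ge 1$; applying it with $a=\dist(x,w)$ and $b=\dist(w,y)$ completes the proof. For the scalar inequality I would invoke Jensen's inequality for the convex function $\phi(t)=t^z$: it is convex on $[0,\infty)$ because $\phi''(t)=z(z-1)t^{z-2}\ge 0$ on $(0,\infty)$ when $z\ge 1$. Evaluating the two-point form of Jensen at $a$ and $b$ with equal weights $\tfrac12$ yields $\bigl(\tfrac{a+b}{2}\bigr)^z\le\tfrac{a^z+b^z}{2}$, and multiplying through by $2^z$ gives exactly $(a+b)^z\le 2^{z-1}(a^z+b^z)$. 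The degenerate case $a=b=0$ is trivial since both sides vanish.

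The argument is entirely elementary, so I do not anticipate a genuine obstacle; the only subtlety worth flagging is that $z$ need not be an integer, so the convexity step should be justified via the sign of the second derivative of $t^z$ on $(0,\infty)$ (or equivalently via the weighted power-mean inequality) rather than through any binomial expansion. As a sanity check, equality holds precisely when $x$, $w$, $y$ are collinear in that order with $\dist(x,w)=\dist(w,y)$, which matches the constant $2^{z-1}$ being the best possible.
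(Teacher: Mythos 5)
Your proof is correct and is the standard argument (triangle inequality for $\|\cdot\|_2$ followed by the two-point Jensen inequality for the convex function $t\mapsto t^z$); the paper simply recalls this fact without proof, so there is no alternative argument to compare against. Your care in justifying convexity via the second derivative rather than a binomial expansion, since $z$ need not be an integer, is exactly the right level of rigor here.
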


Next, we recall the definition of a strong coreset for $(k,z)$-clustering.  
\begin{definition}[Coreset]
\deflab{def:coreset}
Given an approximation parameter $\eps>0$, and a set $X$ of points $x_1,\ldots,x_n\in\mathbb{R}^d$ with distance function $\dist$, a \emph{coreset} for $(k,z)$ clustering is a set $S$ with weight function $w$ such that for any set $C$ of $k$ points, we have
\[(1-\eps)\sum_{t=1}^n\dist(x_t,C)^z\le\sum_{q\in S}w(q)\dist(q,S)^z\le(1+\eps)\sum_{t=1}^n\dist(x_t,C)^z.\]
\end{definition}
We use the following coreset construction for $(k,z)$-clustering:
\begin{theorem}\cite{Cohen-AddadLSS22,huang2023optimal,nips/Cohen-AddadLSSS22}
\thmlab{thm:offline:coreset:size}
Given an accuracy parameter $\eps\in(0,1)$, there exists a coreset construction for Euclidean $(k,z)$-clustering that samples $\tO{\min\left(\frac{1}{\eps^2}\cdot k^{2-\frac{z}{z+2}},\frac{1}{\min(\eps^4,\eps^{2+z})}\cdot k\right)}$ weighted points of the input dataset. 
\end{theorem}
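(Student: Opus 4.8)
The natural route is importance (sensitivity) sampling inside the Feldman--Langberg framework. For a point $x\in X$ define the sensitivity $s(x)=\sup_{C\subset\mathbb{R}^d,\,|C|\le k}\frac{\dist(x,C)^z}{\Cost(X,C)}$, and note that a constant-factor bicriteria solution---an $\O{1}$-approximate clustering using $\O{k}$ centers, computable offline---certifies that the total sensitivity $\mathfrak{S}:=\sum_{x\in X}s(x)$ is $\O{k}$: each point's sensitivity splits into an $\O{1/|P|}$ term for the bicriteria cluster $P$ containing it plus a term bounded by its own fractional contribution to $\Cost(P,\cdot)$, and after applying \factref{fact:triangle} these sum to $\O{k}$. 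Sampling $N$ points i.i.d.\ with probability $p(x)\propto s(x)$ and giving each sampled point $q$ weight $\frac{\mathfrak{S}}{N\,s(q)}$ yields, for every fixed $C$, an unbiased estimator of $\Cost(X,C)$ whose per-sample range is $\O{\mathfrak{S}}\cdot\Cost(X,C)$, so a Bernstein bound controls a single $C$ with $N=\tO{\mathfrak{S}/\eps^2}$ samples.

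The remaining work is to make this hold \emph{uniformly} over all $k$-center sets $C$. First, to keep the ambient dimension $d$ out of the final bound, apply a clustering-preserving dimension reduction: projecting onto a random $\O{\eps^{-2}\log k}$-dimensional subspace preserves $\Cost(\cdot,C)$ up to a $(1+\eps)$ factor for all $C$ simultaneously, so it suffices to argue in the reduced space, where the family of functions $x\mapsto\dist(x,C)^z$ indexed by $k$-center sets has shattering pseudo-dimension $\tO{k}$---essentially $k$ balls in $\O{\eps^{-2}\log k}$ dimensions. Feeding $\mathfrak{S}=\O{k}$ and this dimension into the generic Feldman--Langberg bound already gives a coreset of size roughly $\tO{k^2/\eps^2}$; the two sharper terms in the statement come from a more careful chaining and variance analysis.

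To reach $\tO{\frac{1}{\eps^2}k^{2-z/(z+2)}}$ and $\tO{\frac{k}{\min(\eps^4,\eps^{z+2})}}$, I would bucket the points of $X$ by their distance to the nearest center of the fixed bicriteria solution into geometric ``rings,'' and for any query $C$ bound the estimation error ring by ring at the scale appropriate to that ring, then sum via a chaining argument over a hierarchical $\eps$-net of candidate center configurations. Two ways of allocating the error budget across rings produce the two terms: a ``two-point'' group-sampling estimator (comparing a point's cost to that of a representative point at the same scale) trades a factor of $k$ for a factor of $\eps^{-2}$ and yields the $k^{2-z/(z+2)}$ bound that wins for small $k$; whereas splitting each cluster into its ``close'' points (handled through \factref{fact:triangle}, so that their contribution is only a lower-order additive term) and its ``far'' points (handled by Bernstein) yields the $\frac{k}{\min(\eps^4,\eps^{z+2})}$ bound that wins for large $k$. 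Running whichever sampling scheme is cheaper gives the stated minimum.

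The main obstacle is precisely this uniform concentration. The per-point estimator has large worst-case variance because a single point's contribution to $\Cost(X,C)$ can change by a huge factor between two nearby configurations $C$ and $C'$ (the point may be reassigned to an entirely different, much closer or much farther, center), so a plain union bound over an $\eps$-net of $k$-center sets is far too lossy. The argument must instead chain along paths in a multi-scale net, carefully tracking how reassignments propagate, and must replace Bernstein by the group-sampling trick for the ``close'' points whose individual variances are otherwise too large. Pinning down the exponents $2-z/(z+2)$ and $\min(\eps^4,\eps^{z+2})$ is exactly a matter of optimizing this ring-by-ring budget split, which is the delicate part carried out in \cite{Cohen-AddadLSS22,huang2023optimal,nips/Cohen-AddadLSSS22}.
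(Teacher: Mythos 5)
This theorem is imported by the paper as a black box from \cite{Cohen-AddadLSS22,huang2023optimal,nips/Cohen-AddadLSSS22}; the paper contains no proof of it, so there is no in-paper argument to compare yours against. Your outline does faithfully reflect the structure of the proofs in those references: sensitivity sampling in the Feldman--Langberg framework, the $\O{k}$ bound on total sensitivity certified by a bicriteria solution, dimension reduction to $\O{\eps^{-2}\log k}$ dimensions so that $d$ drops out, and a ring decomposition with chaining plus a group-sampling estimator to beat the naive $\tO{k^2/\eps^2}$ bound. That said, what you have written is a roadmap, not a proof. The entire difficulty of the theorem lives in the step you defer: the uniform concentration over all $k$-center sets, where a plain net-plus-Bernstein argument fails because individual points can be reassigned between nearby configurations, and where the precise exponents $2-\frac{z}{z+2}$ and $\min(\eps^4,\eps^{z+2})$ emerge only from a careful allocation of error across rings and from the two-point group-sampling trick. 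You correctly name these obstacles and correctly attribute their resolution to the cited papers, but you do not resolve them; as a self-contained proof the proposal has a genuine gap at exactly that point. For the purpose of this paper, which only uses the theorem as a cited primitive, that is acceptable, but the proposal should not be mistaken for a derivation of the stated bounds.
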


We next recall the standard Johnson-Lindenstrauss transformation:
\begin{theorem}[Johnson-Lindenstrauss lemma]
\cite{johnson1984extensions}
\thmlab{thm:jl}
Let $\eps\in\left(0,\frac{1}{2}\right)$ and $m=\O{\frac{1}{\eps^2}\log n}$. 
Let $X\subset\mathbb{R}^d$ be a set of $n$ points. 
There exists a family of random linear maps $\Pi:\mathbb{R}^d\to\mathbb{R}^m$ such that with high probability over the choice of $\pi\sim\Pi$, 
\[(1-\eps)\|x-y\|_2\le\|\pi x-\pi y\|_2\le(1+\eps)\|x-y\|_2,\]
for all $x,y\in X$. 
\end{theorem}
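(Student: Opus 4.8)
The plan is to recall the standard Gaussian construction of $\Pi$ and reduce the statement to a one-dimensional concentration inequality applied to each pairwise difference vector, finished off with a union bound. Concretely, I would take $\Pi \in \mathbb{R}^{m \times d}$ to have i.i.d.\ entries $\Pi_{ij} \sim \mathcal{N}(0, 1/m)$ (any sub-Gaussian choice, e.g.\ scaled Rademacher entries, would work equally well, but Gaussians make the distribution exact). For a fixed nonzero $v \in \mathbb{R}^d$, each of the $m$ coordinates of $\Pi v$ is an independent $\mathcal{N}(0, \|v\|_2^2/m)$ random variable, so $m\,\|\Pi v\|_2^2 / \|v\|_2^2$ is a chi-squared random variable with $m$ degrees of freedom; in particular $\Ex{\|\Pi v\|_2^2} = \|v\|_2^2$, i.e.\ $\Pi$ preserves squared lengths in expectation.

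Next I would invoke the standard tail bound for $\chi^2_m$, proved through its moment generating function (a Chernoff/Laplace-transform argument together with routine estimates of $\eps - \ln(1+\eps)$): for $\eps \in (0,1)$,
\[
\PPr{\,\abs{\|\Pi v\|_2^2 - \|v\|_2^2} > \eps\,\|v\|_2^2\,} \le 2\exp\!\left(-c\,\eps^2 m\right)
\]
for an absolute constant $c > 0$. Applying this with $v = x - y$ over the at most $\binom{n}{2} < n^2$ distinct pairs $\{x,y\} \subseteq X$ and taking a union bound, the probability that some pair's squared distance is distorted by more than a $(1 \pm \eps)$ factor is at most $2 n^2 \exp(-c\,\eps^2 m)$. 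Choosing $m = \O{\eps^{-2}\log n}$ with a sufficiently large leading constant drives this below $n^{-\Omega(1)}$, which is the claimed high-probability guarantee.

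Finally, on the event that $\abs{\|\Pi(x-y)\|_2^2 - \|x-y\|_2^2} \le \eps\,\|x-y\|_2^2$ for every pair, I would pass from squared norms to norms using $\sqrt{1-\eps} \ge 1-\eps$ and $\sqrt{1+\eps} \le 1+\eps$ (both valid for $\eps \in (0,1)$), which gives $(1-\eps)\|x-y\|_2 \le \|\Pi(x-y)\|_2 \le (1+\eps)\|x-y\|_2$; linearity of $\Pi$ turns $\Pi(x-y)$ into $\Pi x - \Pi y$, completing the argument. Since this is a classical result there is no genuine obstacle: the only substantive ingredient is the sub-exponential tail bound for chi-squared variables, and the only place needing a bit of care is the $\eps$-versus-$\eps^2$ bookkeeping when moving between squared-length and length distortion and when pinning down the constant in $m = \O{\eps^{-2}\log n}$.
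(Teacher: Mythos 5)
The paper states this as a cited preliminary (Theorem~\ref{thm:jl}, attributed to \cite{johnson1984extensions}) and provides no proof of its own, so there is nothing in the paper to compare against. Your argument is the standard and correct one: i.i.d.\ Gaussian projection, reduction of $\|\Pi v\|_2^2$ to a $\chi^2_m$ variable, sub-exponential (Chernoff) tail bound giving $2\exp(-c\eps^2 m)$ distortion probability per pair, union bound over $\binom{n}{2}$ pairs, and the elementary passage from squared-norm to norm distortion via $\sqrt{1-\eps}\ge 1-\eps$ and $\sqrt{1+\eps}\le 1+\eps$. This is precisely the classical proof this citation points to, so the proposal is correct and consistent with what the paper relies on.
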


We next recall the following standard concentration inequality:
\begin{theorem}[Hoeffding's inequality]
\thmlab{thm:hoeffding's:inequality}
Let $X_1, \cdots, X_n$ be independent random variables such that $a_i \leq X_i \leq b_i$, and let $S_n = \sum_{i=1}^n X_i$. 
Then
\[\PPr{|S_n-\Ex{S_n}|>t}\le2\exp\left(-\frac{t^2}{\sum_{i=1}^n(b_i-a_i)^2}\right).\]
\end{theorem}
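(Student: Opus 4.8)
The plan is to use the exponential moment (Chernoff) method. First I would reduce to the centered variables $Y_i = X_i - \Ex{X_i}$: these are independent, have $\Ex{Y_i} = 0$, and each lies in an interval of length $b_i - a_i$. For any $s > 0$, Markov's inequality applied to the nonnegative random variable $e^{s\sum_i Y_i}$ gives $\PPr{S_n - \Ex{S_n} > t} = \PPr{\sum_i Y_i > t} \le e^{-st}\,\Ex{e^{s\sum_i Y_i}} = e^{-st}\prod_{i=1}^n \Ex{e^{s Y_i}}$, where the factorization is by independence.

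The crux is \emph{Hoeffding's lemma}: if $Y$ has mean zero and is supported on an interval of length $\ell$, then $\Ex{e^{sY}} \le e^{s^2\ell^2/8}$ for all real $s$. I would prove this by setting $\psi(s) = \log \Ex{e^{sY}}$, observing $\psi(0) = 0$ and $\psi'(0) = \Ex{Y} = 0$ (differentiation under the expectation is legitimate since $Y$ is bounded, so the moment generating function is finite and smooth everywhere), and bounding $\psi''(s) \le \ell^2/4$. The last inequality holds because $\psi''(s)$ equals the variance of $Y$ under the exponentially tilted law with density proportional to $e^{sY}$, and any random variable confined to an interval of length $\ell$ has variance at most $(\ell/2)^2$ by Popoviciu's inequality. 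Taylor's theorem with remainder then yields $\psi(s) \le s^2 \ell^2/8$, i.e., the claimed bound. An alternative that sidesteps the tilted measure is to use convexity of $u \mapsto e^{su}$ to write $e^{sY} \le \frac{b - Y}{b-a} e^{sa} + \frac{Y - a}{b - a} e^{sb}$ for $Y \in [a,b]$, take expectations, and then massage the right-hand side into the form $e^{g(u)}$ with $g(0) = g'(0) = 0$ and $g'' \le 1/4$; this is essentially the same second-order estimate.

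Combining the two steps, $\PPr{\sum_i Y_i > t} \le \exp\left(-st + \frac{s^2}{8}\sum_{i=1}^n (b_i - a_i)^2\right)$, and choosing $s = 4t / \sum_i (b_i - a_i)^2$ to minimize the exponent gives $\exp\left(-2t^2 / \sum_{i=1}^n (b_i - a_i)^2\right)$, which is at most $\exp\left(-t^2 / \sum_{i=1}^n (b_i - a_i)^2\right)$ as required. Applying the identical argument to $-Y_1, \ldots, -Y_n$ (which have the same interval lengths) controls the lower tail $\PPr{\sum_i Y_i < -t}$ by the same quantity, and a union bound over the events $\{\sum_i Y_i > t\}$ and $\{\sum_i Y_i < -t\}$ produces the factor of $2$.

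The one nontrivial ingredient is Hoeffding's lemma; the rest is routine Chernoff bookkeeping. Within that lemma the only subtlety is justifying the interchange of derivative and expectation and recognizing $\psi''$ as a (tilted) variance — both standard for bounded random variables — after which Popoviciu's inequality closes the argument.
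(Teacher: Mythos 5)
Your proof is correct: it is the standard Chernoff-method argument via Hoeffding's lemma, and your optimization even yields the sharper constant $2t^2$ in the exponent, which implies the (slightly weaker) bound as stated. The paper itself records this as a known concentration inequality in its preliminaries without giving a proof, so there is nothing to compare against; your argument is a complete and standard derivation of the cited fact.
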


\section{Clustering}
In this section, we describe the efficient encoding for coresets for $(k,z)$-clustering, as well as a global encoding that be utilized to achieve a one-pass streaming algorithm on insertion-only streams for $(k,z)$-clustering using $\mathcal{O}_{k,d,\eps}(1)$ words of space. 
We begin with a number of preliminaries. 

We first formally define the sensitivity and online sensitivity of each point $x$ in a dataset $X$ for $(k,z)$-clustering. 
\begin{definition}[Sensitivities for $(k,z)$-clustering]
\deflab{def:sens:cluster}
The \emph{sensitivity} of a point $x\in X$ for $(k,z)$-clustering in a metric space equipped with metric $\dist$ is 
\[\max_{C:|C|\le k}\frac{\Cost(x,C)}{\Cost(X,C)}=\max_{C:|C|\le k}\frac{\dist(x,C)^z}{\sum_{x\in X}\dist(x,C)^z}.\]
\end{definition}

\begin{definition}[Online sensitivity for $(k,z)$-clustering]
\deflab{def:online:sens:cluster}
Let $x_1,\ldots,x_n$ be a sequence of points with metric $\dist$ and define $X_t:=\{x_1,\ldots,x_t\}$ for all $t\in[n]$. 
The \emph{online sensitivity} of $x_t$ for $(k,z)$-clustering is
\[\max_{C:|C|\le k}\frac{\Cost(x_t,C)}{\Cost(X_t,C)}=\max_{C:|C|\le k}\frac{\dist(x_t,C)^z}{\sum_{i=1}^t\dist(x_i,C)^z}.\]
\end{definition}
We have the following upper bound on the sum of the online sensitivities, i.e., the total online sensitivity. 
\begin{theorem}
\thmlab{thm:total:online:sens}
\cite{Cohen-AddadWZ23}
Let $X=\{x_1,\ldots,x_n\}\subset[\Delta]^d$ be a sequence of $n$ points and let $\sigma(x_t)$ denote the online sensitivity of $x_t$ for $t\in[n]$ for $(k,z)$-clustering, where $z\ge 1$. 
Then
\[\sum_{t=1}^n\sigma(x_t)=\O{2^{2z}k\log^2(nd\Delta)}.\]
\end{theorem}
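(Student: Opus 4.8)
The plan is to bound each online sensitivity $\sigma(x_t)$ by $2^{2z}$ times a sum of \emph{local} quantities measured against a near-optimal $(k,z)$-clustering of the current prefix $X_t$, and then to sum these bounds over the stream after organizing the points into $O(\log(nd\Delta))$ cost scales and, within each, $O(\log(nd\Delta))$ distance scales, so that within each of the $O(\log^2(nd\Delta))$ resulting cells the contribution collapses, up to constants, to the familiar \emph{offline} total-sensitivity bound of $O(2^{2z}k)$.

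For the reduction, fix $t$ and let $C$ with $|C|\le k$ attain $\sigma(x_t)=\dist(x_t,C)^z/\Cost(X_t,C)$. Let $S$ be an $O(1)$-approximate $(k,z)$-clustering of $X_t$, let $u=\pi_S(x_t)$ be the center of $S$ serving $x_t$, and let $R$ be the corresponding $S$-cluster. Applying the generalized triangle inequality (\factref{fact:triangle}) twice — once to relate $\dist(x_t,C)^z$ to $\dist(x_t,u)^z$ together with what $u$ (or a representative point of $R$) pays in $C$, and once to compare $\Cost(R,C)$ with $\Cost(R,S)$ — I would obtain a Feldman--Langberg-type estimate in which $\sigma(x_t)$ is at most $2^{2z}$ times a within-cluster relative-cost term $\dist(x_t,u)^z/\Cost(R,u)$ plus a reciprocal-size term of order $1/|R|$ (times a factor comparing $\Cost(R,S)$ to $\mathrm{OPT}(X_t)$), always subject to the trivial $\sigma(x_t)\le 1$. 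The point of this estimate is that when one sums it over the points of a \emph{single} prefix that share a reference cluster it telescopes to $O(1)$ per cluster, hence $O(2^{2z}k)$ overall. I expect this reduction to be the main obstacle: the clustering $C$ realizing the sensitivity is adversarial and may place its centers anywhere — the only unconditional bound $\sigma(x_t)\le 1$ is useless once summed over $n$ steps — and the argument must handle both the regime where $x_t$'s reference cluster is tiny and the regime where $x_t$ is an outlier inside a large cluster.

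To go from one prefix to the whole stream, I would decompose along two axes. First, split $[n]$ into epochs by the dyadic band of $\mathrm{OPT}(X_t)$: since this quantity is nondecreasing in $t$ and, when nonzero, is a sum of $z$-powers of distances on $[\Delta]^d$ and so lies in $[1,\poly(nd\Delta)]$, there are only $O(\log(nd\Delta))$ epochs (the prefix with $\mathrm{OPT}(X_t)=0$, and zero-distance points, contribute nothing), and I would refresh the reference solution $S$ at the start of each epoch. Second, inside an epoch I would group points by the center of $S$ serving them and by the distance scale $\lceil\log\dist(x_t,u)\rceil$, of which there are $O(\log(nd\Delta))$ because a nonzero grid distance lies in $[1,\sqrt d\,\Delta]$. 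For the points of one cell — fixed epoch, center, and distance scale, so $\mathrm{OPT}(X_t)\asymp\Lambda$ and $\dist(x_t,u)\asymp\delta$ throughout — each contributes $\asymp\delta^z$ to $\Cost(R,u)$ (and to the reference cost $\Cost(X_t,S)=O(\Lambda)$), so the reduction makes their online sensitivities sum to $O(2^{2z})$ per cell; summing over the $\le k$ centers gives $O(2^{2z}k)$ per (epoch, distance-scale) pair, and multiplying by the $O(\log(nd\Delta))\times O(\log(nd\Delta))$ such pairs yields $\sum_t\sigma(x_t)=O(2^{2z}k\log^2(nd\Delta))$. The residual delicacy — and the reason the epochs are essential, not cosmetic — is to ensure $\Cost(X_t,S)=O(\mathrm{OPT}(X_t))$ for \emph{every} $t$ in an epoch even though $S$ was fixed at the epoch's start; this one controls by bounding how much the optimal cost, and hence the inaccuracy of $S$, can drift while $\mathrm{OPT}(X_t)$ stays within a single dyadic band.
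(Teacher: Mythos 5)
The paper does not re-prove this statement---it imports it from \cite{Cohen-AddadWZ23}---so your argument has to stand on its own. Its shape (a Feldman--Langberg pointwise reduction against a near-optimal reference $S$, followed by a two-axis decomposition into $O(\log(nd\Delta))$ cost bands and $O(\log(nd\Delta))$ distance scales) is a sensible way to try to produce a $\log^2$ factor, but the step you downplay as a ``residual delicacy'' is actually a hole.

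Fixing a single reference $S$ at the start of an epoch defined by a dyadic band of $\mathrm{OPT}(X_t)$ does \emph{not} keep $\Cost(X_t,S)=O(\mathrm{OPT}(X_t))$ throughout the epoch, and the Feldman--Langberg estimate genuinely needs this: after writing $\dist(x_t,C)^z\le 2^{z-1}\left(\dist(x_t,u)^z+\dist(u,C)^z\right)$ with $u=\pi_S(x_t)$, the first term is divided by $\Cost(X_t,C)\ge\mathrm{OPT}(X_t)$ and the second is averaged over the $S$-cluster $R_t$ of $u$, producing a factor $\Cost(R_t,u)/\Cost(X_t,C)\le\Cost(X_t,S)/\mathrm{OPT}(X_t)$; both pieces become useless once this ratio drifts. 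And it can drift arbitrarily inside one $\mathrm{OPT}$-band. Take $k=2$, $z=1$: let $X_{t_0}$ have $m$ points spread near the origin so that $\mathrm{OPT}(X_{t_0})=\Lambda$, plus $m$ points exactly at $D$; a $\gamma$-approximate $S$ may legally place its second center at $D+\delta$ with $m\delta=(\gamma-1)\Lambda$. Now insert $N\gg m$ further copies of $D$. Since $D$ remains an optimal center, $\mathrm{OPT}(X_t)=\Lambda$ never changes and the epoch never ends, yet $\Cost(X_t,S)$ grows by $N\delta=\Theta(N\Lambda/m)$; your per-point estimate $\dist(x_t,u)^z/\mathrm{OPT}(X_t)=\delta/\Lambda=\Theta(1/m)$ then sums to $\Theta(N/m)\to\infty$ over that single cell, even though the true sum of online sensitivities over those $N$ insertions is only $O(\log N)$ (the $j$-th copy of $D$ has $\sigma=O(1/(m+j))$). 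The obvious repair---end an epoch whenever $\Cost(X_t,S)$ itself doubles---restores the invariant but loses the epoch count: in the same example $\Cost(X_t,S)$ can double $\Theta(\log n)$ times while $\mathrm{OPT}$ stays in one band, inflating your total to $O(k\log^3)$. So the decomposition with a fixed $S$ per $\mathrm{OPT}$-band cannot be made to work as written; you need a mechanism that tracks the shifting near-optimal solution directly, for instance a per-prefix $S_t$ with $\Cost(X_t,S_t)\le\gamma\cdot\mathrm{OPT}(X_t)$, a potential/telescoping argument for $\sum_t \dist(x_t,S_{t-1})^z/\Cost(X_t,S_{t-1})$ against the growth of the optimal cost, and a separate harmonic-sum bound $\sum_{x\in R}1/|R_t(x)|=O(\log n)$ per reference center, rather than a static cell decomposition.
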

We next recall the guarantees of online sensitivity sampling, in terms of both correctness and sample complexity. 
\begin{theorem}[Online sensitivity sampling]
\thmlab{thm:online:sens:cluster}
\cite{Cohen-AddadWZ23}
Given a sequence $x_1,\ldots,x_n$ of points, suppose each point $x_t$ is sampled with probability $p_t\ge\min(1,\gamma\cdot\sigma(x_t))$, where $\sigma_t$ is the online sensitivity of $x_t$ and $\gamma=\O{\frac{dk}{\eps^2}\log\frac{n\Delta}{\eps}}$ and weighted $\frac{1}{p_t}$ if $x_t$ is sampled. 
Then with high probability, the weighted sample is a $(1+\eps)$-strong coreset for $(k,z)$-clustering that contains at most $\O{\frac{dk^2}{\eps^2}\log^3\frac{n\Delta}{\eps}}$ points. 
\end{theorem}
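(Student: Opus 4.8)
The plan is to prove the size bound and the coreset property separately. The size bound I would get directly from the total--online--sensitivity estimate of \thmref{thm:total:online:sens}; the coreset property I would prove via the usual sensitivity-sampling recipe --- a variance-sensitive tail bound for one fixed set of centers, followed by a union bound over a net of center sets. For the size bound, take $p_t=\min(1,\gamma\cdot\sigma(x_t))$ (oversampling only helps correctness, so only the lower bound on $p_t$ is used below). Since the stream $x_1,\dots,x_n$ is fixed, the indicators $\mathbf 1[x_t\in S]$ are independent, so $\Ex{|S|}=\sum_t p_t\le\gamma\sum_t\sigma(x_t)$, which by \thmref{thm:total:online:sens} is $O\!\left(2^{2z}\gamma\,k\log^2(nd\Delta)\right)=O\!\left(\tfrac{dk^2}{\eps^2}\log^3\tfrac{n\Delta}{\eps}\right)$ for $z=O(1)$, and a Chernoff bound upgrades this to $|S|\le 2\Ex{|S|}$ with high probability.

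\textbf{Concentration for a fixed $C$.} Next I would fix a set $C$ of at most $k$ centers, write $f_t=\dist(x_t,C)^z$, and analyze the inverse-probability estimator $\widehat f=\sum_{t:\,x_t\in S}\tfrac{f_t}{p_t}$, which is unbiased: $\Ex{\widehat f}=\Cost(X,C)$. The key inequality I would use is that whenever $p_t<1$, by \defref{def:online:sens:cluster} and $X_t\subseteq X$,
\[
p_t=\gamma\,\sigma(x_t)\ \ge\ \gamma\cdot\frac{f_t}{\Cost(X_t,C)}\ \ge\ \gamma\cdot\frac{f_t}{\Cost(X,C)},
\]
so each summand satisfies $\tfrac{f_t}{p_t}\le\tfrac{\Cost(X,C)}{\gamma}$ and the total variance is at most $\sum_t\tfrac{f_t^2}{p_t}\le\tfrac{\Cost(X,C)}{\gamma}\sum_t f_t=\tfrac{\Cost(X,C)^2}{\gamma}$. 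Plugging these into a Bernstein inequality for independent bounded variables would give
\[
\PPr{\bigl|\widehat f-\Cost(X,C)\bigr|>\eps\,\Cost(X,C)}\ \le\ 2\exp\!\left(-\Omega(\eps^2\gamma)\right),
\]
and if the $p_t$ were instead chosen adaptively from earlier samples I would replace Bernstein by Freedman's martingale inequality, which gives the same tail from the same magnitude/variance inputs. (A variance-blind Hoeffding bound is too weak here, since the naive range bound only yields $\exp(-\Omega(\eps^2\gamma^2/n))$.)

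\textbf{Union bound and conclusion.} Finally, to pass from a single $C$ to all $C$ simultaneously, I would restrict to center sets in a bounded enlargement of $[0,\Delta]^d$ (moving a center toward the data never raises the cost) and discretize each center to a grid of side length $\poly(\eps/(n\Delta))$, reducing the infinitely many center sets to a net $\calN$ of size $\exp\!\left(O\!\left(dk\log\tfrac{n\Delta}{\eps}\right)\right)$ at a $(1+O(\eps))$ multiplicative distortion in $\Cost(X,C)$ and $\widehat f$; equivalently, I could invoke the pseudo-dimension $\tO{dk}$ of the family $x\mapsto\dist(x,C)^z$ to get uniform convergence without an explicit net. Union-bounding the tail above over $\calN$, the failure probability is $\exp\!\left(O\!\left(dk\log\tfrac{n\Delta}{\eps}\right)-\Omega(\eps^2\gamma)\right)\le\tfrac{1}{\poly(n)}$ once the constant in $\gamma=\Theta\!\left(\tfrac{dk}{\eps^2}\log\tfrac{n\Delta}{\eps}\right)$ is chosen large enough, and on that event $\widehat f\in(1\pm\eps)\Cost(X,C)$ for every $C$ --- i.e., a $(1+\eps)$-strong coreset. (Absorbing an extra $\log n$ into $\gamma$ gives the guarantee for all prefixes $X_t$ at once, which is what the streaming algorithm actually needs.)

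I expect the main obstacle to be the discretization/net step rather than the concentration step: once the per-term and variance bounds are read off from the online-sensitivity definition, Bernstein/Freedman is routine, but controlling the grid-rounding error for degenerate center sets (a center on or extremely close to a data point) and verifying that the net cardinality $\exp(O(dk\log\tfrac{n\Delta}{\eps}))$ is beaten by the concentration exponent $\Omega(\eps^2\gamma)$ for the stated $\gamma$ require care. I would handle these exactly as in prior coreset analyses, following \cite{Cohen-AddadWZ23}.
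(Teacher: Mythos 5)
The paper does not prove this statement itself---it is imported verbatim from \cite{Cohen-AddadWZ23}---and your sketch follows the same standard route described in the paper's overview: a Bernstein/Freedman tail bound driven by the per-term bound $f_t/p_t\le\Cost(X,C)/\gamma$ and variance $\Cost(X,C)^2/\gamma$ extracted from the online-sensitivity definition, a union bound over a net of size $\exp(O(dk\log\frac{n\Delta}{\eps}))$, and the total-online-sensitivity bound of \thmref{thm:total:online:sens} for the sample size. Your explicit flagging of the martingale (Freedman) substitution for adaptively chosen $p_t$ and of the degenerate-center discretization issue matches exactly where the cited analysis spends its care, so the proposal is correct and essentially the intended argument.
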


\subsection{Background for Optimal Space Clustering in the Streaming Model}
\seclab{sec:overview:space}
We first describe common approaches for clustering in the streaming setting. 
Even though they incur logarithmic overheads in the size $n$ of the dataset compared to offline coreset constructions, their intuition will nevertheless be helpful towards our main algorithm. 

\paragraph{Merge-and-reduce.}
Merge-and-reduce~\cite{BentleyS80,Har-PeledM04} is a standard approach on insertion-only streams for $(k,z)$-clustering, as well as many other problems for which there exist coreset constructions. 
Given a dataset $X\subset[\Delta]^d$ of size $n$, the number $k$ of clusters, an accuracy parameter $\eps$, a failure probability $\delta$, suppose there exists a coreset construction algorithm for $(k,z)$-clustering that samples and reweights $f(n,d,k,\eps,\delta)$ points of $X$. 
The merge-and-reduce framework first partitions the stream into consecutive blocks of size $f(n,d,k,\eps',\delta')$, where $\eps'=\frac{\eps}{\O{\log n}}$ and $\delta'=\frac{\delta}{\poly(n)}$. 
It then creates a coreset with accuracy $(1+\eps')$ and failure probability $\delta'$ for each block of the stream. 
Each of these coresets uses space $f(n,d,k,\eps',\delta')$ and can be viewed as the leaves of a binary tree with height $\O{\log n}$. 
For each node in the tree at depth $\ell$, the merge-and-reduce framework then takes the points that are in the coresets in its children nodes at depth $\ell+1$ and constructs a coreset with accuracy $(1+\eps')$ for these points. 

Observe that the coreset at the root of the tree is a coreset for the entire dataset of the stream. 
Moreover, each node is the coreset of the coresets of its children nodes, so the entire merge-and-reduce process can be performed on-the-fly during the evolution of an insertion-only stream. 
Although each level of the merge-and-reduce tree induces a multiplication distortion of $(1+\eps')$, the accuracy of the root node is  $(1+\eps')^{\O{\log n}}=(1+\eps)$ due to the setting of $\eps'=\frac{\eps}{\O{\log n}}$. 
Unfortunately, the optimal coreset constructions sample $\tO{\frac{k}{\eps^2}}\cdot\min\left(k,\frac{1}{\min(\eps^2,\eps^z)}\right)$ points~\cite{Cohen-AddadSS21,Cohen-AddadLSS22} and in fact for certain regimes of $\eps$, coreset constructions for $(k,z)$-clustering provably require $\Omega\left(\frac{k}{\eps^{z+2}}\right)$ points~\cite{huang2023optimal,nips/Cohen-AddadLSSS22}. 
Therefore setting $\eps'=\frac{\eps}{\O{\log n}}$ in the merge-and-reduce approach would incur $\log n$ factors that are prohibitive for our goal. 

\paragraph{Offline sensitivity sampling.}
Another common approach is to adapt offline coreset constructions to the streaming model. 
In particular, recent efforts~\cite{Cohen-AddadWZ23,WoodruffZZ23} have been made to adapt the sensitivity sampling framework \cite{FeldmanL11,FeldmanS12,BravermanFLSZ21} to data streams. 
The \emph{sensitivity} of each point $x$ among a dataset $X\subset\mathbb{R}^d$ of $n$ points measures the ``importance'' of the point $x$ with respect to $X$, for the purposes of $(k,z)$-clustering and, as per \defref{def:sens:cluster}, is defined by
\[\max_{C\subset\mathbb{R}^d: |C|\le k}\frac{\Cost(x,C)}{\Cost(X,C)}=\max_{C\subset\mathbb{R}^d: |C|\le k}\frac{\dist(x,C)^z}{\sum_{x\in X}\dist(x,C)^z}.\]
There exist many variants of the sensitivity sampling framework, but the most relevant approach is sampling each point independently without replacement with probability proportional to its sensitivity, a process that can be shown to admit a coreset construction with high probability. 

In fact, the argument is relatively straightforward. 
The analysis first fixes a set $C$ of $k$ centers and shows that sensitivity sampling preserves the cost of clustering $X$ with $C$ in expectation. 
It then upper bounds the variance of the cost of clustering $C$ on the sampled points to show concentration. 
Although there can be arbitrary number of choices for $C$, for $k$ centers among $\mathbb{R}^d$, it suffices to only show correctness-of-approximation on a net of size $\left(\frac{n}{\eps}\right)^{\O{kd}}$, which can be handled by adjusting the probability of failure for each choice of $C$ and then applying a union bound. 

Unfortunately, the sensitivity of a point $x$ is defined with respect to the entire dataset $X$, which is unknown at the time of arrival of $x$ in a data stream, and thus the sensitivity of $x$ cannot be computed or even well-approximated at that point. 

\paragraph{Online sensitivity sampling.}
Instead, recent works have focused on \emph{online} sensitivity sampling, where the elements of the dataset $X$ are ordered by the time of their arrival in the data stream, so that $X=\{x_1,\ldots,x_n\}$, where $x_1$ is the first item in the stream and $x_n$ is the last item in the stream. 
The online sensitivity of a point $x_t$ is then, as per \defref{def:online:sens:cluster},  defined by 
\[\max_{C\subset\mathbb{R}^d: |C|\le k}\frac{\Cost(x_t,C)}{\Cost(X_t,C)}=\max_{C\subset\mathbb{R}^d: |C|\le k}\frac{\dist(x_t,C)^z}{\sum_{i=1}^t\dist(x_i,C)^z},\]
where $X_t$ is the subset consisting of the first $t$ points of $X$. 
Note that while the sensitivity of a point $x_t$ measures the importance of $x_t$ with respect to the entire dataset $X$, the online sensitivity of $x_t$ measures the importance of $x_t$ with respect to the prefix of size $t$, a quantity that can be computed at the time of arrival of $x_t$. 
In particular, since online sensitivity sampling produces a $(1+\eps)$-coreset of $X_{t-1}$, then the previously sampled points can be used to approximate the online sensitivity of $x_t$. 

The online sensitivity framework then samples each point with probability proportional to its online sensitivity. 
We remark that although this process seemingly induces dependencies in the analysis, a standard martingale and coupling argument shows correctness of online sensitivity sampling.  
Unfortunately, as a result of the sampling probability, the total number of points sampled is proportional to the sum of the online sensitivities of the points, which can be shown to be at least $\Omega(k\log(n\Delta))$. 
Thus online sensitivity sampling would again incur $\log n$ factors that are prohibitive for our goal.

\subsection{Efficient Encoding for Coreset Construction for \texorpdfstring{$(k,z)$}{(k,z)}-Clustering}
We now give our efficient encoding for a given coreset for $(k,z)$-clustering. 
Given a dataset $X$, which can be viewed as either the original input or a set of weighted points that forms a coreset of some underlying dataset, we first acquire a constant-factor approximation $C'$ for $(k,z)$-clustering on $X$. 
For each $x\in X$, let $\pi_{C'}(X)$ be the closest center of $C'$ to $x$. 
We then write each $x\in X$ as $x=\pi_{C'}(x)+(x-\pi_{C'}(x))$, thereby decomposing $x$ into the closest center $\pi_{C'}(x)$ and its offset $x-\pi_{C'}(x)$ from the center. 
We show that for the purposes of $(1+\eps)$-approximation for $(k,z)$-clustering, we can afford to round each coordinate of $x-\pi_{C'}(x)$ to a power of $(1+\eps')$, where $\eps'=\poly\left(\eps,\frac{1}{d},\frac{1}{\log(n\Delta)}\right)$, forming a vector $y'$. 
Finally, we can store $C'$ and thus encode the vector $x'=\pi_{C'}(x)+y'$ using $\O{\log k+d\log\left(\frac{1}{\eps},d,\log(n\Delta)\right)}$ bits, by storing the identity of $\pi_{C'}(x)$ and the exponent of the offset for each of the $d$ coordinates. 
We give the algorithm in full in \algref{alg:coreset:cluster}. 

\begin{algorithm}[!htb]
\caption{Efficient Encoding for Coreset Construction for $(k,z)$-Clustering}
\alglab{alg:coreset:cluster}
\begin{algorithmic}[1]
\Require{Data set $X\subset[\Delta]^d$ with weight $w(\cdot)$, accuracy parameter $\eps\in(0,1)$, number of clusters $k$, parameter $z\ge 1$, failure probability $\delta\in(0,1)$}
\Ensure{$(1+\eps)$-coreset for $(k,z)$-clustering}
\State{$\eps'\gets\frac{\poly(\eps^z)}{\poly(k,\log(nd\Delta)}$}
\State{Find a set $C'$ of $k$ centers that is a constant-factor approximation to $(k,z)$-clustering on $X$}
\For{each $x\in X$}
\State{Let $c'(x)$ be the closest center of $C'$ to $x$}
\State{Let $y'$ be the offset $x-c'(x)$}
\State{Let $y$ be $y'$ with coordinates rounded to a power of $(1+\eps')$}
\State{Let $x'=(c'(x),y)$, storing the \emph{exponent} for each coordinate of $y$}
\State{$X'\gets X'\cup\{x'\}$}
\EndFor
\State{\Return $(C',X')$}
\end{algorithmic}
\end{algorithm}

To show correctness of our encoding, we first recall the following fact.
\begin{fact}[Claim 5 in \cite{SohlerW18}]
\factlab{fact:gen:tri}
Suppose $z\ge 1$, $x,y\ge 0$, and $\eps\in(0,1]$.  
Then
\[(x+y)^z\le(1+\eps)\cdot x^z+\left(1+\frac{2z}{\eps}\right)^z\cdot y^z.\]
\end{fact}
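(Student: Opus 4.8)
The plan is a short two-case argument, split at the threshold $y=\frac{\eps}{2z}\,x$.

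\textbf{Large $y$.} First I would handle the regime $y>\frac{\eps}{2z}\,x$. Here $x<\frac{2z}{\eps}\,y$, so $x+y<\left(1+\frac{2z}{\eps}\right)y$; raising both nonnegative sides to the power $z$ (monotone on $[0,\infty)$) gives $(x+y)^z<\left(1+\frac{2z}{\eps}\right)^z y^z$, which already implies the claim since the term $(1+\eps)x^z$ is nonnegative. This also subsumes $x=0$.

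\textbf{Small $y$.} Otherwise $0\le y\le\frac{\eps}{2z}\,x$, and we may assume $x>0$ (the case $x=y=0$ is trivial). Set $t:=y/x\in\left[0,\frac{\eps}{2z}\right]$, so $zt\le\frac{\eps}{2}\le\frac12$, using $\eps\le 1$. Then I would chain elementary estimates: $(x+y)^z=x^z(1+t)^z\le x^z e^{zt}$ from $1+t\le e^t$; next $e^{zt}\le 1+2zt$, which holds on $[0,\frac12]$ because $u\mapsto 1+2u-e^u$ vanishes at $0$ and has derivative $2-e^u\ge 0$ there; and finally $1+2zt\le 1+2z\cdot\frac{\eps}{2z}=1+\eps$. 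Combining, $(x+y)^z\le(1+\eps)x^z\le(1+\eps)x^z+\left(1+\frac{2z}{\eps}\right)^z y^z$, which finishes this case.

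I do not anticipate a real obstacle: the two cases above essentially constitute the proof. The only genuine design choices are the split point $\frac{\eps}{2z}$, chosen so that the two cases meet exactly, and the auxiliary bound $e^u\le 1+2u$ on $[0,\frac12]$. One could instead invoke convexity of $s\mapsto(1+s)^z$ to get $(1+t)^z\le 1+(2^z-1)t$ on $[0,1]$, but that forces a worse threshold and constant, so the route through $e^{zt}$ is cleaner and lands precisely on the stated inequality.
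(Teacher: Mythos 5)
Your proof is correct. The paper itself does not prove this fact; it states it as a citation to Claim~5 of Sohler--Woodruff, so there is no in-paper argument to compare against. Both cases of your argument check out: the split at $y=\tfrac{\eps}{2z}x$ is exactly where $(1+\tfrac{2z}{\eps})y$ overtakes $x+y$, and in the small-$y$ case the chain $(1+t)^z\le e^{zt}\le 1+2zt\le 1+\eps$ is valid because $zt\le\tfrac{\eps}{2}\le\tfrac12$ and $2-e^u\ge 0$ on $[0,\tfrac12]$. This is essentially the standard route to this inequality.
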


Given a constant-factor approximation $C'$, we now show that the rounding $X'$ of the dataset $X$ by representing each point as a rounded offset from its closest center in $C'$ is a strong coreset of $X$. 
\begin{lemma}
\lemlab{lem:x:to:xprime:cluster}
Let $\eps\in\left(0,\frac{1}{2}\right)$ and let $X'$ be the weighted dataset $S$ defined by their offsets from the set $C'$ of centers from \algref{alg:coreset:cluster}. 
Then for all $C\subset[\Delta]^d$ with $|C|\le k$, 
\[(1-\eps)\cdot\Cost(C,X)\le\Cost(C,X')\le(1+\eps)\cdot\Cost(C,X).\]
\end{lemma}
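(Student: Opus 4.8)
The plan is to bound the per-point error incurred by rounding the offset $x - \pi_{C'}(x)$ and then sum it against a quantity controlled by a constant-factor solution. Fix a query set $C$ of at most $k$ centers. For each $x \in X$ with rounded version $x' = \pi_{C'}(x) + y$, write $\|x - x'\|_2 = \|y' - y\|_2$, where $y' = x - \pi_{C'}(x)$ and $y$ is $y'$ with each coordinate rounded to a power of $(1+\eps')$. Since each coordinate moves by at most an $\eps'$ fraction of its magnitude, $\|y' - y\|_2 \le \eps' \|y'\|_2 = \eps' \cdot \dist(x, C')$. The key point is that $\dist(x, C')$ is small \emph{on average}: because $C'$ is a constant-factor approximation to the optimal $(k,z)$-clustering, $\sum_{x \in X} w(x)\,\dist(x,C')^z = \O{1}\cdot\OPT \le \O{1}\cdot\Cost(X, C)$ for every candidate $C$ (the optimum is a lower bound on every clustering cost).

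Next I would convert the per-point displacement into a per-point cost error via the generalized triangle inequality in the sharp form of \factref{fact:gen:tri}. For each $x$, apply it with the roles $\dist(x', C) \le \dist(x,C) + \|x - x'\|_2$, raising to the $z$-th power:
\[
\dist(x', C)^z \le (1+\eps)\dist(x,C)^z + \left(1 + \frac{2z}{\eps}\right)^z \|x - x'\|_2^z \le (1+\eps)\dist(x,C)^z + \left(1 + \frac{2z}{\eps}\right)^z (\eps')^z \dist(x, C')^z,
\]
and symmetrically for the lower bound $\dist(x,C)^z \le (1+\eps)\dist(x',C)^z + (1 + 2z/\eps)^z (\eps')^z \dist(x,C')^z$. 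Multiplying by $w(x)$ and summing over $x \in X$, the main terms give $(1+\eps)\Cost(C, X')$ (or $X$), and the error terms sum to $(1 + 2z/\eps)^z (\eps')^z \cdot \O{1} \cdot \Cost(X, C)$. Choosing $\eps' = \frac{\poly(\eps^z)}{\poly(k, \log(nd\Delta))}$ as in \algref{alg:coreset:cluster} makes $(1 + 2z/\eps)^z (\eps')^z = \O{1}\cdot(\eps')^z / \poly(\eps) \le \eps^{O(1)}$, in particular $\le \eps$ after adjusting constants, so the total additive error is at most $\eps \cdot \Cost(X,C)$. Combining the two directions and rescaling $\eps$ by a constant yields $(1-\eps)\Cost(C,X) \le \Cost(C,X') \le (1+\eps)\Cost(C,X)$.

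The main obstacle is ensuring the error term is genuinely additive in $\Cost(X,C)$ and not in something larger like $\Cost(X, C')$ with the wrong sign, i.e., verifying that $\Cost(X, C') = \O{1}\cdot\Cost(X,C)$ uniformly over all query sets $C$ — this is exactly where the constant-factor \emph{approximation} guarantee of $C'$ (rather than merely a set of $k$ points) is used, since it certifies $\Cost(X, C') = \O{1}\cdot\OPT$ and $\OPT \le \Cost(X,C)$. A secondary technical point is the bookkeeping on exponents: one must confirm the $(1+\eps')$-rounding of each of the $d$ coordinates indeed keeps $\|y - y'\|_2 \le \eps'\|y'\|_2$ (true coordinatewise, hence in $\ell_2$) and that the $z$-th powers of the error do not blow up when $z$ is large — this is precisely handled by absorbing the $(1 + 2z/\eps)^z$ factor into the choice of $\eps'$, which has a $\poly(\eps^z)$ numerator for this reason. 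Everything else is routine summation and a final constant-factor rescaling of $\eps$.
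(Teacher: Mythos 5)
Your proof matches the paper's argument essentially step for step: both bound the coordinatewise rounding displacement by $\eps'\cdot\dist(x,C')$, invoke \factref{fact:gen:tri} to split $\dist(x',C)^z$ into a $(1+\O{\eps})$ multiple of $\dist(x,C)^z$ plus a $(1+2z/\eps)^z(\eps')^z\dist(x,C')^z$ error term, and then absorb the summed error via the constant-factor guarantee $\Cost(C',X)\le\gamma\cdot\Cost(C,X)$, with a symmetric argument for the lower bound. The only detail you leave implicit is the final $(1+\eps')$ multiplicative distortion coming from rounding the \emph{weights} of $X'$ to powers of $(1+\eps')$, which the paper records as a one-line closing step.
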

\begin{proof}
Let $C$ be any fixed set of at most $k$ centers. 
We abuse notation so that for each $x\in X$, we use $x'$ to denote the corresponding point $c(x)+y$, where $y'=x-c'(x)$ is the offset, and $y$ is $y'$ with its coordinates rounded to a power of $(1+\eps')$. 
Similarly, we use $c'$ to denote $c(x)\in C'$. 
Thus, we have $\|x-x'\|_2\le\eps'\cdot\|c'-x\|_2$ for all $x\in X$. 

For ease of presentation, we first write $X'$ so that the weights are not rounded to a power of $(1+\eps')$. 
By the triangle inequality,
\[\Cost(C,X')=\sum_{x'\in X'}(\dist(x',C))^z\le\sum_{x'\in X'}(\dist(x',x)+\dist(x,C))^z.\]
Since $\dist(x,x')=\|x-x'\|_2\le\eps'\cdot\|c'-x\|_2=\eps'\cdot\dist(x,C')$, then
\begin{align*}
\Cost(C,X')&\le\sum_{x'\in X'}\left(\eps'\cdot\dist(x,C')+\dist(x,C)\right)^z.
\end{align*}
By \factref{fact:gen:tri},
\begin{align*}
\Cost(C,X')&\le\sum_{x'\in X'}\left(\left(1+\frac{\eps}{2}\right)\cdot(\dist(x,C))^z+\left(1+\frac{4z}{\eps}\right)^z\cdot(\eps'\cdot\dist(x,C'))^z\right).
\end{align*}
Thus,
\[\Cost(C,X')\le\left(1+\frac{\eps}{2}\right)\cdot\Cost(C,X)+\left(1+\frac{4z}{\eps}\right)^z(\eps')^z\cdot\Cost(C',X).\]
Since $C'$ is a constant-factor approximation to the optimal $(k,z)$-clustering of $X$, then there exists a constant $\gamma\ge 1$ such that $\Cost(C,X')\le\gamma\cdot\Cost(C,X)$. 
Hence, 
\[\Cost(C,X')\le\left(\left(1+\frac{\eps}{2}\right)+\left(1+\frac{4z}{\eps}\right)^z(\eps')^z\cdot\gamma\right)\cdot\Cost(C,X).\]
Then for $\eps'=\frac{\poly(\eps^z)}{\poly(k,\log(nd\Delta)}$, we have that
\[\Cost(C,X')\le\left(1+\frac{2}{3}\eps\right)\cdot\Cost(C,X),\]
which almost gives the right hand side of the inequality (due to the notation $X'$ representing the output prior to the rounding of the weights). 

Similarly, by triangle inequality and $\dist(x,x')\le\eps'\cdot\dist(x,C')$,
\begin{align*}
\Cost(C,X)&=\sum_{x\in X}(\dist(x,C))^z\\
&\le\sum_{x\in X}(\dist(x,x')+\dist(x',C))^z\\
&\le\sum_{x'\in X'}\left(\eps'\cdot\dist(x,C')+\dist(x',C)\right)^z.
\end{align*}
By \factref{fact:gen:tri},
\begin{align*}
\Cost(C,X)&\le\left(1+\frac{\eps}{2}\right)\cdot\Cost(C,X')+\left(1+\frac{4z}{\eps}\right)^z(\eps')^z\cdot\Cost(C',X)\\
&\le\left(1+\frac{\eps}{2}\right)\cdot\Cost(C,X')+\left(1+\frac{4z}{\eps}\right)^z(\eps')^z\cdot\gamma\cdot\Cost(C,X).
\end{align*}
Then for $\eps'=\frac{\poly(\eps^z)}{\poly(k,\log(nd\Delta)}$, we have that
\[\Cost(C,X)\le\left(1+\frac{\eps}{2}\right)\cdot\Cost(C,X')+\frac{\eps}{100}\cdot\Cost(C,X),\]
so that $\left(1-\frac{2}{3}\cdot\eps\right)\Cost(C,X)\le\Cost(C,X')$, which almost gives the left-hand side of the inequality (due to the notation $X'$ representing the output prior to the rounding of the weights). 

Finally, since the weights are rounded to a power of $(1+\eps')$, then the cost can only change by $(1+\eps')$. 
Since $\eps'=\frac{\poly(\eps^z)}{\poly(k,\log(nd\Delta)}$, then it follows that
\[(1-\eps)\cdot\Cost(C,X)\le\Cost(C,X')\le(1+\eps)\cdot\Cost(C,X).\]
\end{proof}
\noindent
We now give the full guarantees of \algref{alg:coreset:cluster}, which gives an efficient encoding of an input set $X$.  
\begin{lemma}
\lemlab{lem:efficient:coreset:cluster}
Let $X$ be a coreset construction with weights bounded by $[1,\poly(nd\Delta)]$. 
Then $X'$ is a $(1+\eps)$-strong coreset for $X$ that uses 
$\O{dk\log(n\Delta)}+|X|\cdot\polylog\left(k,\frac{1}{\eps},\log(nd\Delta),\log\frac{1}{\delta}\right)$ bits of space. 
\end{lemma}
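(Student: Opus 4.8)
The plan is to treat the two assertions of the lemma separately. The coreset guarantee requires essentially no new work: \lemref{lem:x:to:xprime:cluster} already establishes that the set $X'$ output by \algref{alg:coreset:cluster}, \emph{including} the final rounding of each weight to a power of $(1+\eps')$, satisfies $(1-\eps)\Cost(C,X)\le\Cost(C,X')\le(1+\eps)\Cost(C,X)$ for every $C\subset[\Delta]^d$ with $|C|\le k$, which is exactly the definition of a $(1+\eps)$-strong coreset for $X$ (\defref{def:coreset}). So the only remaining content is bounding the bit-complexity of the stored representation $(C',X')$.

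For the space bound I would account for $C'$ and for $X'$ separately. The set $C'$ consists of at most $k$ centers, each with $d$ coordinates of bit-complexity $\O{\log(n\Delta)}$ — either because $C'$ is taken from a medoids-type approximation, so its centers are points of $X$, whose coordinates have bounded precision by induction through the merge-and-reduce pipeline, or because centroids of subsets of $[\Delta]^d$ are rationals with denominator at most $|X|$ — so storing $C'$ explicitly costs $\O{dk\log(n\Delta)}$ bits, the first additive term. For $X'$, the encoding stores, per point $x$: (i) the index of $\pi_{C'}(x)$ among the $k$ centers, in $\O{\log k}$ bits; (ii) for each of the $d$ offset coordinates, a sign bit and the integer exponent $e$ recording the value $(1+\eps')^e$; and (iii) the exponent of the rounded weight $w(x)\in[1,\poly(nd\Delta)]$, which lies in $\{0,\dots,\O{\frac{\log(nd\Delta)}{\eps'}}\}$ and hence takes $\polylog(k,\tfrac1\eps,\log(nd\Delta))$ bits since $\eps'=\tfrac{\poly(\eps^z)}{\poly(k,\log(nd\Delta))}$. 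The delicate point is (ii): a nonzero offset coordinate could be astronomically small, producing an unaffordably large exponent. I would fix this by truncating to $0$ any offset coordinate of magnitude below $\tfrac{\eps'}{d}\dist(x,C')$; this perturbs $\|x-x'\|_2$ by at most $\eps'\dist(x,C')$, so the proof of \lemref{lem:x:to:xprime:cluster} still goes through with $\eps'$ replaced by $2\eps'$, which is still of the prescribed form. Each surviving coordinate then has magnitude in $[\tfrac{\eps'}{d}\dist(x,C'),\,\dist(x,C')]$, and since every quantity produced by the pipeline has bit-complexity $\poly(\log(n\Delta),\log k,\log\tfrac1\eps)$, we have $\dist(x,C')\in[2^{-\poly(\cdot)},\poly(n\Delta)]$ (and if $\dist(x,C')=0$ there is no offset to store). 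Hence each exponent is an integer of magnitude $\poly(k,\tfrac1\eps,\log(n\Delta))$, encodable in $\polylog(k,\tfrac1\eps,\log(nd\Delta))$ bits. Summing (i)--(iii) over all $|X|$ points yields $|X|\cdot\polylog(k,\tfrac1\eps,\log(nd\Delta),\log\tfrac1\delta)$ bits, where the $\log\tfrac1\delta$ factor absorbs any dependence of the precision parameter on the failure probability, giving the claimed bound.

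The main obstacle is thus the bookkeeping in step (ii): guaranteeing that every exponent written down lies in a $\poly(k,\tfrac1\eps,\log(n\Delta))$-size range. This requires the tiny-coordinate truncation together with the observation — worth stating explicitly — that iterating \algref{alg:coreset:cluster} inside the merge-and-reduce construction never blows the coordinate bit-complexity beyond $\poly(\log(n\Delta),\log k,\log\tfrac1\eps)$. Everything else, namely the coreset property inherited from \lemref{lem:x:to:xprime:cluster} and the $\O{dk\log(n\Delta)}$ term for storing $C'$, is immediate.
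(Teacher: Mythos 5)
Your proposal is correct and follows essentially the same route as the paper: invoke \lemref{lem:x:to:xprime:cluster} for the coreset guarantee, then charge $\O{dk\log(n\Delta)}$ bits to storing $C'$ and, per point, $\O{\log k}$ bits for the identity of $\pi_{C'}(x)$ plus the exponents of the $d$ rounded offset coordinates and of the rounded weight. The tiny-coordinate truncation you add is a reasonable safeguard but is not needed in the paper's setting, since the coreset points are (weighted) points of $[\Delta]^d$ with integer coordinates, so every nonzero offset coordinate has magnitude in $[1,\Delta]$ and its exponent already fits in $\O{\log\frac{1}{\eps'}+\log\log\Delta}$ bits.
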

\begin{proof}
\algref{alg:coreset:cluster} then outputs a set $(C',X')$ that encodes $X'$ using $C'$. 
By \lemref{lem:x:to:xprime:cluster}, we have that for all $C\subset[\Delta]^d$ with $|C|\le k$, 
\[(1-\eps)\cdot\Cost(C,X)\le\Cost(C,X')\le(1+\eps)\cdot\Cost(C,X).\]
Thus, $X'$ is a strong coreset for $X$. 
 
It remains to analyze the space complexity of $(C',X')$. 
We first require $C'$ to encode $S$. 
Since $C'$ is a set of $k$ centers, then $C'$ can be represented using $\O{dk\log(n\Delta)}$ bits of space. 
Additionally, each point $x'$ in $X'$ is encoded using $\O{d\log\left(\frac{1}{\eps}+\log n+\log\Delta\right)}$ bits of space due to storing the $d$ coordinates of the rounded offset $y$. 
In particular, we do not store the explicit coordinates, but rather their exponents. 
Moreover, we use $\log k$ bits to store the closest center $c'(x)$. 
Finally, each weight has weight $\poly(nd\Delta)$ and thus can be approximated to $(1+\eps)$-multiplicative factor by storing the exponent, using $\O{\log\log(nd\Delta)}$ bits of space. 
Therefore, the output of the algorithm is encoded using $\O{dk\log(n\Delta)}+|X|\cdot\polylog\left(k,\frac{1}{\eps},\log(nd\Delta),\log\frac{1}{\delta}\right)$ total bits of space.
\end{proof}

\subsection{Clustering in the Streaming Model}
In this section, we show how to use a global encoding combined with the efficient encoding from the previous section in order to achieve our main algorithm for $(k,z)$-clustering on insertion-only data streams. 

The main shortcoming of immediately applying the previous efficient encoding to each node of a merge-and-reduce tree is that the constant-factor approximation requires $\O{kd\log(n\Delta)}$ bits of storage per efficient encoding, and there can be $\polylog(\log(n\Delta))$ such efficient encodings due to the height of the merge-and-reduce tree on a data stream produced by online sensitivity sampling. 
Instead, we provide a single constant-factor approximation to the global dataset and show that the error of maintaining such a global constant-factor approximation does not compound too much over the course of the stream. 

The algorithm appears in full in \algref{alg:cluster:insert:stream}. 

\begin{algorithm}[!htb]
\caption{$(k,z)$-Clustering on Insertion-Only Stream}
\alglab{alg:cluster:insert:stream}
\begin{algorithmic}[1]
\Require{Data set $X=\{x_1,\ldots,x_n\}\subset\mathbb{R}^d$ that arrives as a data stream, $\eps\in(0,1)$, number of clusters $k$, parameter $z\ge 1$}
\Ensure{$(1+\eps)$-coreset for $(k,z)$-clustering}
\State{$Z\gets\emptyset$, $\lambda\gets\O{\frac{k}{\eps^2}\cdot\log k\log n}$}
\For{each $t\in[n]$}
\State{Let $\widehat{x_t}$ be the image of $x_t$ after applying a JL transform into $\O{\log n}$ dimensions}
\State{Use $Z\cup\{\widehat{x_t}\}$ to compute a $\O{1}$-approximation $\widehat{\sigma(x_t)}$ to the online sensitivity of $x_t$}
\State{$p(x_t)\gets\min(1,\lambda\cdot\widehat{\sigma(x_t)})$}
\State{With probability $p(x_t)$, sample $x_t$ into a stream $\calS'$ with weight $\frac{1}{p(x_t)}$}
\State{Let $Z$ be the running output of merge-and-reduce on $\calS'$ using the efficient encoding for coreset construction in \algref{alg:coreset:cluster} with a \emph{global} constant-factor approximation, for accuracy $\frac{\eps}{\poly(\log\log n\Delta)}$ and failure probability $\frac{1}{\poly\left(\frac{1}{\eps},\log(n\Delta)\right)}$} 
\EndFor
\State{\Return $Z$}
\end{algorithmic}
\end{algorithm}
We first show that given a partition $X_1\sqcup\ldots\sqcup X_m$ of $X$, we can apply \algref{alg:coreset:cluster} and although the output $X'_i$ from our efficient encoding will no longer necessarily be a strong coreset for each corresponding $X_i$, the resulting error will only be an additive $\eps'\cdot\Cost(X,C)$, which suffices for our purposes. 
Specifically, we will have $\eps'=\frac{\eps}{\poly(\log\log(nd\Delta)}$ and ultimately have $m=\poly(\log\log(nd\Delta))$. 
\begin{lemma}
\lemlab{lem:each:coreset:cluster}
Let $X_1\sqcup\ldots\sqcup X_m=X$ be a partition of $X$ for $m=\poly(\log\log(nd\Delta))$. 
Let $C'$ be a set of $k$ centers that is a set $C'$ of $k$ centers that is a constant-factor approximation to $(k,z)$-clustering on $X$. 
For each $i\in[m]$, let $X'_i$ be the set of rounded points corresponding to $X_i$ from \algref{alg:coreset:cluster}. 
Then for all $i\in[m]$ and all $C\subset[\Delta]^d$ with $|C|\le k$, 
\[|\Cost(X'_i,C)-\Cost(X_i,C)|\le\frac{\eps}{\poly(\log\log nd\Delta)}\cdot\Cost(X,C).\]
\end{lemma}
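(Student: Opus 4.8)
The plan is to re-run the two-sided triangle-inequality computation from the proof of \lemref{lem:x:to:xprime:cluster}, but to track carefully that in \algref{alg:coreset:cluster} the offsets are now rounded against a \emph{single global} constant-factor approximation $C'$ of all of $X$ rather than against a fresh approximation of each block $X_i$. The one pointwise fact used there still holds verbatim: for every $x\in X_i$ with rounded image $x'=c'(x)+y$, the coordinatewise rounding of the offset $x-c'(x)$ to powers of $(1+\eps')$ gives $\dist(x,x')=\|(x-c'(x))-y\|_2\le\eps'\cdot\|x-c'(x)\|_2=\eps'\cdot\dist(x,C')$, and this is oblivious to whether $C'$ is tuned to $X_i$ or to $X$. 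So the algebraic skeleton of the argument is unchanged; the only place the choice of $C'$ entered was in charging the error term back to $\Cost(C,X_i)$, and that is precisely the step I will modify.

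Concretely, I would apply the manipulation of \lemref{lem:x:to:xprime:cluster} with its accuracy parameter replaced by a smaller quantity $\eps_0:=\frac{\eps}{c\cdot\poly(\log\log(nd\Delta))}$, where $c$ and the polynomial are chosen to dominate $m=\poly(\log\log(nd\Delta))$. Applying \factref{fact:gen:tri} with $a=\eps'\dist(x,C')$ and $b=\dist(x,C)$ (for the upper bound) and $b=\dist(x',C)$ (for the lower bound), and summing over $x\in X_i$ together with their weights, yields
\[\Cost(C,X'_i)\le\left(1+\tfrac{\eps_0}{2}\right)\Cost(C,X_i)+\left(1+\tfrac{4z}{\eps_0}\right)^z(\eps')^z\cdot\Cost(C',X_i)\]
and the symmetric bound with $X_i$ and $X'_i$ interchanged. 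The crucial difference from \lemref{lem:x:to:xprime:cluster} is that $\Cost(C',X_i)$ can no longer be bounded by $\gamma\cdot\Cost(C,X_i)$, since $C'$ need not approximate $X_i$; instead I bound it globally by $\Cost(C',X_i)\le\Cost(C',X)\le\gamma\cdot\min_{|C''|\le k}\Cost(C'',X)\le\gamma\cdot\Cost(C,X)$, using that $C'$ is a $\gamma$-approximation to the optimum of $X$. Hence the error term is at most $\left(1+\tfrac{4z}{\eps_0}\right)^z(\eps')^z\gamma\cdot\Cost(C,X)$, while the multiplicative term obeys $\tfrac{\eps_0}{2}\Cost(C,X_i)\le\tfrac{\eps_0}{2}\Cost(C,X)$ because $X_i\subseteq X$. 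In the lower-bound direction the term $\tfrac{\eps_0}{2}\Cost(C,X'_i)$ appears, which I control via the upper bound just derived, so that $\Cost(C,X'_i)=O(\Cost(C,X))$. Putting the two sides together gives
\[|\Cost(X'_i,C)-\Cost(X_i,C)|\le O(\eps_0)\cdot\Cost(X,C)+O\!\left(\left(1+\tfrac{4z}{\eps_0}\right)^z(\eps')^z\gamma\right)\cdot\Cost(X,C).\]

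Finally I would absorb the weight-rounding step (rounding each weight to a power of $(1+\eps')$ perturbs $\Cost$ by a multiplicative $(1+\eps')$, contributing at most $\eps'\cdot\Cost(X_i,C)\le\eps'\cdot\Cost(X,C)$), and check the parameter budget: with $z=O(1)$ and $\gamma=O(1)$, the setting $\eps'=\frac{\poly(\eps^z)}{\poly(k,\log(nd\Delta))}$ from \algref{alg:coreset:cluster} makes $\left(1+\tfrac{4z}{\eps_0}\right)^z(\eps')^z\gamma$ far smaller than $\eps_0$, because $\eps_0$ only loses $\poly(\log\log(nd\Delta))$ factors whereas $\eps'$ carries a full $\poly(\log(nd\Delta))$ in its denominator. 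Thus the right-hand side is at most $2\eps_0\cdot\Cost(X,C)=\frac{\eps}{\poly(\log\log(nd\Delta))}\cdot\Cost(X,C)$, which is the claim. The only real subtlety — the main obstacle — is exactly the bookkeeping that the additive error is charged to the global cost $\Cost(C',X)$ rather than the block cost $\Cost(C',X_i)$: this is why $X'_i$ is no longer a multiplicative coreset of $X_i$ but only carries $\eps_0\cdot\Cost(X,C)$ additive error, and one must be sure this remains good enough downstream (it is, since summing the $m=\poly(\log\log(nd\Delta))$ per-block errors with $\eps_0$ chosen accordingly restores a $(1+\eps)$-approximation for $\Cost(X,C)=\sum_i\Cost(X_i,C)$). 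Everything else is a routine rerun of the \factref{fact:gen:tri} estimates already carried out in \lemref{lem:x:to:xprime:cluster}.
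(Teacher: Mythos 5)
Your proof is correct and follows essentially the same route as the paper: re-running the generalized-triangle-inequality computation from \lemref{lem:x:to:xprime:cluster} with accuracy $\eps/\poly(\log\log(nd\Delta))$ and charging the rounding error term $\Cost(C',X_i)$ to the global quantity $\Cost(C',X)\le\gamma\cdot\Cost(C,X)$ rather than to $\Cost(C,X_i)$. In fact your write-up is more careful than the paper's two-line argument, which invokes \lemref{lem:x:to:xprime:cluster} on all of $X$ and then asserts the per-block bound via the chain $|\Cost(X'_i,C)-\Cost(X_i,C)|\le\sum_{i}|\Cost(X'_i,C)-\Cost(X_i,C)|\le\eps\cdot\Cost(X,C)$ --- a step that does not literally follow from the global bound because of possible cancellation across blocks; the per-point additive estimates you derive are precisely what is needed to justify it.
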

\begin{proof}
Note that by \lemref{lem:x:to:xprime:cluster} with accuracy $\frac{\eps}{\poly(\log\log nd\Delta)}$, we have
\[|\Cost(X',C)-\Cost(X,C)|\le\frac{\eps}{\poly(\log\log nd\Delta)}\cdot\Cost(X,C).\]
Since 
\[|\Cost(X'_i,C)-\Cost(X_i,C)|\le\sum_{i=1}^m|\Cost(X'_i,C)-\Cost(X_i,C)|\le\eps\cdot\Cost(X,C),\]
then the claim follows. 
\end{proof}

It remains to give the full guarantees of \algref{alg:cluster:insert:stream} by showing correctness of the global encoding and in particular, that there is no compounding error across the $\poly(\log\log(nd\Delta))$ iterations. 
\begin{theorem}
\thmlab{thm:cluster:stream}
Given a set $X$ of $n$ points on $[\Delta]^d$, let $f\left(n,d,\Delta,k,\frac{1}{\eps},z\right)$ be the number of points of a coreset construction with weights $[1,\poly(nd\Delta)]$ for $(k,z)$-clustering.  
Then \algref{alg:cluster:insert:stream} outputs a $(1+\eps)$-strong coreset of $X$ and uses $\O{dk\log(n\Delta)}+f\left(n,d,\Delta,k,\frac{\polylog(\log(nd\Delta))}{\eps},z\right)\cdot\polylog\left(\frac{1}{\eps},\log(nd\Delta)\right)$ bits of space.
\end{theorem}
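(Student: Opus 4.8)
The plan is to stack three ingredients already in place: online sensitivity sampling to shorten the stream (\thmref{thm:online:sens:cluster}, \thmref{thm:total:online:sens}), standard merge-and-reduce, and the efficient/global encoding of \lemref{lem:efficient:coreset:cluster} and \lemref{lem:each:coreset:cluster}. First I would show that the subsampled stream $\calS'$ built in \algref{alg:cluster:insert:stream} is, at every prefix time $t$, a $\left(1+\tfrac\eps3\right)$-strong coreset of $X_t$ of length $N := \poly\left(k,d,2^z,\log(n\Delta),\tfrac1\eps\right)$. The key point is that $\widehat{\sigma(x_t)}$ is a constant-factor overestimate of the true online sensitivity $\sigma(x_t)$: the map of \thmref{thm:jl} into $\O{\log n}$ coordinates preserves all pairwise distances among the stream up to a factor $2^{\O{z}}$ in $z$-th powers, and the maintained $Z$ is a $(1+\eps)$-coreset of $\calS'_{t-1}$ and hence of $X_{t-1}$, so a constant-factor estimate of $\sigma(x_t)$ is computable from $Z\cup\{\widehat{x_t}\}$. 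Hence $p(x_t)\ge\min(1,\gamma\cdot\sigma(x_t))$ for the $\gamma$ required by \thmref{thm:online:sens:cluster}, whose martingale/coupling analysis applies unchanged (this is what absorbs the apparent circularity of estimating sensitivities from earlier samples). The bound on $N$ follows from \thmref{thm:total:online:sens} plus a Hoeffding bound (\thmref{thm:hoeffding's:inequality}) on the number of successes, with failure probabilities set small enough to union-bound over all $t$.

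Next I would analyze merge-and-reduce run on $\calS'$. Since $|\calS'| = N = \poly(k,d,\log(n\Delta),\tfrac1\eps)$, the tree has height $h = \O{\log N} = \polylog(k,d,\log(n\Delta),\tfrac1\eps)$ and, at any moment, a frontier of $m = \O{h}$ coresets. Fix a query time $t$ and a candidate $C$ with $|C|\le k$, and split $|\Cost(Z,C)-\Cost(X_t,C)|$ into: (a) the sampling error $\le\tfrac\eps3\Cost(X_t,C)$ from the first step; (b) the error of an idealized, \emph{unencoded} merge-and-reduce using the offline construction of \thmref{thm:offline:coreset:size} with accuracy $\eps'' := \eps/\poly(\log\log(nd\Delta))$ per node, which is $\le\left((1+\eps'')^h-1\right)\Cost(\calS'_t,C)\le 2h\eps''\Cost(\calS'_t,C)\le\tfrac\eps3\Cost(X_t,C)$ once the $\poly(\log\log)$ in $\eps''$ dominates $h$; and (c) the discrepancy from replacing each node's coreset by its encoding (\algref{alg:coreset:cluster}) against a \emph{single} global $C'$ -- an $\O{1}$-approximate $(k,z)$-clustering recomputed from $Z$ at each merge. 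For (c) I would invoke \lemref{lem:each:coreset:cluster}: although $C'$ may cluster an individual node $S_v$ poorly, it is an $\O{1}$-approximation for $X_t$, so $\Cost(S_v,C')\le\Cost(X_t,C') = \O{\Cost(X_t,C)}$ for every query $C$, which makes the rounding error per node \emph{additive} $\O{\eps_{\mathrm{round}}\Cost(X_t,C)}$ rather than multiplicative; summing over the $m=\poly(\log\log(nd\Delta))$ frontier nodes keeps it $\le\tfrac\eps{10}\Cost(X_t,C)$. It remains to check that re-encoding stored coresets whenever $C'$ changes does not compound: $C'$ is updated only when the optimal cost grows by a constant factor, hence $\O{\log(n\Delta)}$ times, and since the rounding accuracy in \algref{alg:coreset:cluster} is $\eps_{\mathrm{round}} = \tfrac{\poly(\eps^z)}{\poly(k,\log(nd\Delta))}$ -- polynomially small in $\log(nd\Delta)$ -- the generalized-triangle-inequality bookkeeping of \lemref{lem:x:to:xprime:cluster} still bounds the accumulated displacement's contribution by $\le\tfrac\eps{10}\Cost(X_t,C)$. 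Adding (a)--(c) and union-bounding over the $\poly$-many times at which $Z$ changes (the per-$C$ net is handled inside the cited results) yields that $Z$ decodes to a $(1+\eps)$-strong coreset of $X_t$ at all times, with high probability.

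For the space bound I would tally three buckets. The global $C'$, the incremental $\O{1}$-approximation it is read from, and the (pseudorandom) $\O{\log n}$-dimensional JL map together cost $\O{dk\log(n\Delta)}$ bits. The $m = \polylog(\cdot)$ frontier coresets each have $f\left(n,d,\Delta,k,\tfrac1{\eps''},z\right) = f\left(n,d,\Delta,k,\tfrac{\polylog(\log(nd\Delta))}{\eps},z\right)$ points, and by \lemref{lem:efficient:coreset:cluster} each is stored in $f(\cdot)\cdot\polylog\left(k,\tfrac1\eps,\log(nd\Delta)\right)$ bits on top of the once-counted $\O{dk\log(n\Delta)}$ for $C'$ -- each coordinate is an exponent of $\log(1/\eps_{\mathrm{round}}) = \polylog$ size, plus $\log k$ for the center and $\O{\log\log(nd\Delta)}$ for the rounded weight. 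Finally, the transient workspace for a merge (decode two coresets, run the $\O{1}$-approximation, run the offline coreset construction) is no larger than the stored coresets. Multiplying by $m = \polylog$ keeps the total at $\O{dk\log(n\Delta)} + f\left(n,d,\Delta,k,\tfrac{\polylog(\log(nd\Delta))}{\eps},z\right)\cdot\polylog\left(\tfrac1\eps,\log(nd\Delta)\right)$ bits.

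The main obstacle is part (c): justifying that a single global $C'$ is enough even though it can be a bad clustering of an individual tree node -- which is exactly what \lemref{lem:each:coreset:cluster} and the inequality $\Cost(S_v,C')\le\Cost(X_t,C')$ are for -- while keeping the two accuracy scales straight, namely the coreset-construction accuracy $\eps'' = \eps/\poly(\log\log(nd\Delta))$ that governs the \emph{size} $f$ versus the much smaller rounding accuracy $\eps_{\mathrm{round}} = \poly(\eps^z)/\poly(k,\log(nd\Delta))$ that governs \emph{bit precision} and must survive $\O{\log(n\Delta)}$ re-encodings without the additive error blowing up.
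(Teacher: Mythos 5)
Your proposal follows essentially the same route as the paper's proof: use the coreset $Z$ (plus the JL map) to get constant-factor online-sensitivity estimates feeding \thmref{thm:online:sens:cluster}, which yields a short stream $\calS'$, then run merge-and-reduce with the globally encoded coresets of \algref{alg:coreset:cluster} and invoke \lemref{lem:each:coreset:cluster} and \lemref{lem:efficient:coreset:cluster} for the error and space accounting. Your explicit (a)--(c) error decomposition and the bookkeeping for re-encoding when $C'$ changes are more detailed than the paper's inductive-invariant argument but amount to the same proof.
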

\begin{proof}
We first show that with high probability, the following invariants are satisfied at each time:
\begin{enumerate}
\item
$\widehat{\sigma(x_1)},\ldots,\widehat{\sigma(x_t)}$ are constant-factor approximations to the online sensitivities $\sigma(x_1),\ldots,\sigma(x_t)$ with respect to $(k,z)$-clustering
\item
At most $\poly\left(k,\frac{1}{\eps},\log(nd\Delta)\right)$ points have been sampled into $\calS'$
\item
$Z$ at time $t$ is a $(1+\eps)^2$-coreset to $X_t:=\{x_1,\ldots x_t\}$
\end{enumerate}
We prove correctness by induction on $t$. 
The online sensitivity of the first point is $1$ and it will be correctly computed by the data structure. 
Thus we have that after the first step, we have $\calS'=Z=\{x_1\}$ and the base case is complete. 

Let $\calE_t$ be the event that the invariants hold at time $t$.  
Conditioning on $\calE_{t-1}$ and in particular the correctness of $Z$ at time $t-1$ being a $(1+\eps)$-strong coreset of $X_{t-1}:=\{x_1,\ldots,x_{t-1}\}$, we have that $Z\cup\{x_t\}$ is a strong coreset of $X_t:=\{x_1,\ldots,x_t\}$. 
Thus we can use $Z\cup\{x_t\}$ to compute a constant-factor approximation to the online sensitivity $\sigma(x_t)$ of $x_t$. 
By \thmref{thm:jl}, it follows that with high probability, we can instead use $Z\cup\{\widehat{x_t}\}$ to compute a constant-factor approximation $\widehat{\sigma(x_t)}$ to the online sensitivity $\sigma(x_t)$. 
Then by \thmref{thm:online:sens:cluster}, $\calS'$ after time $t$ will be a $(1+\eps)$-strong coreset for $X_t$ with high probability. 

By \thmref{thm:online:sens:cluster} through \thmref{thm:total:online:sens}, we have that by sampling points with probabilities that are constant-factor approximations to their online sensitivities, then the total number of sampled points into $\calS'$ is $\O{\frac{dk^2}{\eps^2}\log^3\frac{n\Delta}{\eps}}$ with high probability. 
Thus, $\PPr{\calE_t}\ge 1-\frac{1}{\poly(n)}$. 

Observe that merge-and-reduce will be correct if $\calS'$ is a stream with length at most $k\cdot\poly\left(\frac{1}{\eps},\log(nd\Delta)\right)$, since we set the failure probability to be $\frac{1}{\poly\left(\frac{1}{\eps},\log(n\Delta)\right)}$ and merge-and-reduce will consider at least $k$ stream updates before applying a new coreset construction. 
Thus conditioned on (1) the correctness of $\calS'$ after time $t$ being a $(1+\eps)$-strong coreset for $X_t$ and (2) the correctness of merge-and-reduce on $\calS'$, we have that $Z$ after time $t$ will be a $(1+\eps)$-strong coreset for $\calS'$, and thus a $(1+\eps)^2$-strong coreset for $X_t$, which completes the induction. 
That is, $\PPr{\calE_t\,\mid\,\calE_1,\ldots,\calE_{t-1}}\ge 1-\frac{1}{\poly(n)}$. 
The correctness guarantee then follows by rescaling $\eps$ and a union bound over all $t\in[n]$. 

To analyze the space complexity, note that by the guarantees of the invariants, the total number of sampled points into $\calS'$ is $\O{\frac{dk^2}{\eps^2}\log\frac{nd\Delta}{\eps}}$. 
However, $\calS'$ is not maintained explicitly, but instead given as input to the merge-and-reduce subroutine, which uses the efficient encoding for coreset construction given in \algref{alg:coreset:cluster} with accuracy $\frac{\eps}{\poly(\log\log(nd\Delta))}$ and failure probability $\frac{1}{\poly\left(\frac{1}{\eps},\log(nd\Delta)\right)}$. 
Thus by \lemref{lem:efficient:coreset:cluster}, the total representation of $Z$ uses $\O{dk\log(n\Delta)}+\frac{dk^2}{\eps^2}\cdot2^{2z}\cdot\polylog\left(\frac{1}{\eps},\log(n\Delta)\right)$ bits of space.
\end{proof}

\section{Fast \texorpdfstring{$(k,z)$}{(k,z)}-Clustering}
In this section, we describe how our one-pass streaming algorithm on insertion-only streams for $(k,z)$-clustering that uses $\mathcal{O}_{k,d,\eps}(1)$ words of space can further be implemented using fast amortized update time. 
In \secref{sec:fast:quadratic}, we first implement our streaming algorithm using $dk\cdot\polylog(\log(n\Delta))$ amortized update time. 
The $o(\log(n\Delta))$ amortized update time of this algorithm is not only faster than the $k\cdot\polylog(n\Delta)$ amortized update time of \cite{BhattacharyaCLP23}, but also provides crucial structural properties that will be ultimately used for our algorithm that uses $d\log k\cdot\polylog(\log(n\Delta))$ amortized update time in \secref{sec:kz:filter}. 
One of the such crucial properties is the proof in \secref{sec:means:mediods} that the $(k,z)$-clustering sensitivities and the $(k,z)$-mediods sensitivities are within constant factors of each other. 

\subsection{\texorpdfstring{$(k,z)$}{(k,z)}-Clustering Sensitivities and \texorpdfstring{$(k,z)$}{(k,z)}-medoids Sensitivities}
\seclab{sec:means:mediods}
In this section, we show that the sensitivity $\tau(x)$ for a point $x$ with respect to the dataset $X$ for $k$-medoids is a constant-factor approximation to the sensitivity $s(x)$ for a point $x$ with respect to the dataset $X$ for $(k,z)$-clustering. 
Thus to acquire a constant-factor approximation to the sensitivity $s(x)$, it suffices to approximate the $(k,z)$-medoids sensitivity $\tau(x)$, and vice versa.

We first recall the well-known fact that the optimal $(k,z)$-clustering and $(k,z)$-medoids costs are within a constant factor of each other. 
For the sake of completeness, we include the proof. 
\begin{lemma}
\lemlab{lem:opt:median:medoid}
For each $X\subset[\Delta]^d$, the optimal $(k,z)$-clustering cost is a $2^{z+1}$-approximation of the optimal $(k,z)$-medoid clustering cost.
\end{lemma}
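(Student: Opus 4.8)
The plan is to prove the two inequalities separately, since one of them is essentially free. For the easy direction, observe that any feasible $(k,z)$-medoid solution has its centers in $X \subseteq \mathbb{R}^d$, so it is also a feasible $(k,z)$-clustering solution; minimizing over a larger set of center configurations can only decrease the cost, hence the optimal $(k,z)$-clustering cost is at most the optimal $(k,z)$-medoid cost. It then remains to show the reverse inequality up to the factor $2^{z+1}$, i.e. that there is a medoid solution whose cost is at most $2^{z+1}$ times the clustering optimum.

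For that, I would fix an optimal set $\mathcal{C}^\star = \{c_1, \ldots, c_k\} \subset \mathbb{R}^d$ of clustering centers, and let $X_1, \ldots, X_k$ be the partition of $X$ it induces (breaking ties for which center serves a point arbitrarily), with $X_j$ the set of points assigned to $c_j$. For each index $j$ with $X_j \neq \emptyset$, pick $p_j \in X_j$ to be a point of $X_j$ closest to $c_j$, i.e. $\dist(p_j, c_j) = \min_{x \in X_j} \dist(x, c_j)$; indices with an empty cluster are dropped. The resulting center set $\mathcal{C}' := \{ p_j : X_j \neq \emptyset\}$ lies entirely in $X$ and has at most $k$ elements, so it is a feasible medoid solution.

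The core of the argument is a single per-point estimate. For any $j$ and any $x \in X_j$ we have $\dist(x, \mathcal{C}') \le \dist(x, p_j)$, and \factref{triangle} applied with intermediate point $c_j$ gives $\dist(x, p_j)^z \le 2^{z-1}(\dist(x,c_j)^z + \dist(c_j, p_j)^z)$. By the choice of $p_j$ as a nearest point of $X_j$ to $c_j$, and since $x \in X_j$, we get $\dist(c_j, p_j) \le \dist(c_j, x)$, so the bound simplifies to $\dist(x, \mathcal{C}')^z \le 2^z \dist(x, c_j)^z$. Summing this over all $x \in X$, grouped by the cluster containing each point, yields
\[
\Cost(X, \mathcal{C}') \;\le\; 2^z \sum_{j} \sum_{x \in X_j} \dist(x, c_j)^z \;=\; 2^z\, \Cost(X, \mathcal{C}^\star),
\]
which is at most $2^{z+1}$ times the optimal $(k,z)$-clustering cost. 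Combined with the easy direction, this gives the claim.

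I do not anticipate a real obstacle here; the argument is a direct application of the generalized triangle inequality. The only two points that require a bit of care are (i) allowing the medoid solution to use fewer than $k$ centers so that empty clusters cause no trouble, and (ii) choosing $p_j$ to be the point of $X_j$ nearest its own clustering center $c_j$, which is exactly what lets the cross term $\dist(c_j, p_j)^z$ be charged back to $\dist(c_j, x)^z$ for every $x$ in the same cluster.
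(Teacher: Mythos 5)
Your proof is correct and follows essentially the same route as the paper's: fix an optimal $(k,z)$-clustering, replace each center by the nearest input point in its cluster, and charge the cross term via the generalized triangle inequality using the minimality of that choice. You even obtain the slightly sharper factor $2^z$ (the paper settles for $2^{z+1}$), and your explicit handling of the trivial direction and of empty clusters is fine.
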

\begin{proof}
Let $C$ be an optimal $(k,z)$-clustering of $X$ and for a fixed $c\in C$, let $P$ be the subset of $X$ served by $c$. 
Let $c'\in P$ be the point closest to $c$. 
Then we have for all $p\in P$, 
\[\|c'-p\|_2^z\le2^z(\|c'-c\|_2^2+\|p-c\|_2^2)\le2^{z+1}\|p-c\|_2^2.\] 
Therefore, $\Cost(P,c')\le2^{z+1}\cdot\Cost(P,c)$. 
The conclusion then follows by iterating over all $c\in C$. 
\end{proof}

We now show that the $(k,z)$-clustering and $(k,z)$-medoids sensitivities are within a constant factor of each other. 
\begin{theorem}
\thmlab{thm:sens:medoids}
There exists a constant $\gamma\le 2^{z+1}\cdot 101$ such that the sensitivity of a point $x$ with respect $X$ for $(k,z)$-clustering  is a $\gamma$-approximation to the sensitivity of $x$ with respect to $X$ for $(k,z)$-medoids.
\end{theorem}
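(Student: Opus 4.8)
The plan is to prove the two inequalities $\tau(x)\le s(x)$ and $s(x)\le\gamma\cdot\tau(x)$ with $\gamma=2^{z+1}\cdot 101$ separately. The first is immediate: the maximum defining $\tau(x)$ ranges over center sets $C\subseteq X$ while the maximum defining $s(x)$ ranges over all $C\subseteq[\Delta]^d\supseteq X$, and shrinking the feasible set can only shrink the value. All of the content is in the reverse direction.

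For $s(x)\le\gamma\cdot\tau(x)$, fix a set $C^\star$ of at most $k$ centers with $s(x)=\dist(x,C^\star)^z/\Cost(X,C^\star)$, let $c\in C^\star$ be a nearest center of $C^\star$ to $x$ (so $\dist(x,C^\star)=\dist(x,c)$), and let $R\subseteq X$ be the points served by $c$ in $C^\star$, with $r:=|R|$. The goal is to build a medoids solution $C'\subseteq X$ with $|C'|\le k$ realizing a comparable ratio. I will keep one carefully chosen point $c'\in R$ as the replacement for $c$, and, for every other center $c_j\in C^\star\setminus\{c\}$, use the point of $c_j$'s cluster nearest to $c_j$; applying the per-cluster estimate in the proof of \lemref{lem:opt:median:medoid} to the clusters of $C^\star\setminus\{c\}$ bounds the cost on $X\setminus R$ by $2^{z+1}\Cost(X\setminus R,C^\star\setminus\{c\})$. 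Then $\Cost(X,C')\le\Cost(R,c')+2^{z+1}\Cost(X\setminus R,C^\star\setminus\{c\})$, and since $\Cost(X,C^\star)=\Cost(R,c)+\Cost(X\setminus R,C^\star\setminus\{c\})$, everything reduces to choosing $c'$ so that $\Cost(R,c')$ is within a $2^{\O{z}}$ factor of $\Cost(R,c)$ while $\dist(x,c')$ is within a $2^{\O{z}}$ factor of $\dist(x,c)$, plus a check that no other center of $C'$ comes so close to $x$ as to spoil the numerator $\dist(x,C')$ (morally, such a center would force its whole cluster far from $x$, which itself makes $s(x)$ small; this interplay is where care is needed).

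To choose $c'$, split on $P:=\{p\in R:\dist(p,x)<\dist(p,c)\}$, the points of $R$ strictly closer to $x$ than to $c$. If $|P|<0.99\,r$, then $|R\setminus P|>0.01\,r$, so an averaging argument produces $y\in R\setminus P$ with $\dist(y,c)^z\le\frac{100}{r}\Cost(R,c)$; taking $c'=y$, \factref{fact:triangle} gives $\Cost(R,y)\le 2^{z-1}\bigl(\Cost(R,c)+r\,\dist(c,y)^z\bigr)\le 2^{z-1}\cdot 101\cdot\Cost(R,c)$, while $y\notin P$ forces $\dist(y,c)\le\dist(y,x)$, hence $\dist(x,c)\le\dist(x,y)+\dist(y,c)\le 2\dist(x,y)$, so the numerator loses only a $2^z$ factor. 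If instead $|P|\ge 0.99\,r$, then each $p\in P$ satisfies $\dist(p,c)>\tfrac12\dist(x,c)$ (from $\dist(x,c)\le\dist(x,p)+\dist(p,c)<2\dist(p,c)$), so $\Cost(R,c)\ge\Cost(P,c)>0.99\,r\,\bigl(\tfrac12\dist(x,c)\bigr)^z$, which already shows $\dist(x,c)^z/\Cost(R,c)=\O{2^z/r}$; choosing $c'$ to be the point of $P$ farthest from $x$ and applying \factref{fact:triangle} again — every $p\in P$ lies within $2\dist(x,c')$ of $c'$, and the points of $R\setminus P$ are absorbed by the same replacement accounting used for the other clusters — yields a matching lower bound $\tau(x)=\Omega(1/r)$, so the two ratios again agree up to a $2^{\O{z}}$ factor.

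The main obstacle I expect is twofold. First, the constant bookkeeping: arranging the case split, the replacement step, and the invocation of \lemref{lem:opt:median:medoid} so that their individual $2^{\Theta(z)}$ factors are absorbed into a single $2^{z+1}\cdot 101$ rather than multiplied. Second, controlling the numerator $\dist(x,C')$ in the regime $|P|\ge 0.99\,r$, where $c'$ lies inside the ball of radius $\dist(x,c)$ about $x$: one must argue that any replacement point of $C'$ landing close to $x$ would force a correspondingly large clustering cost in its own cluster, so that $s(x)$ was already small and the comparison still goes through.
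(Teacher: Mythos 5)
Your proposal mirrors the paper's proof almost exactly: you prove $\tau(x)\le s(x)$ by the same monotonicity-of-max argument, you fix the same $C^\star$, $c$, $R$, $r$, split on the same set $P$ with the same $0.99r$ threshold, use the same Markov averaging argument to pick $y\in R\setminus P$ with $\Cost(y,c)\le\frac{100}{r}\Cost(R,c)$, and invoke \lemref{lem:opt:median:medoid} to handle $X\setminus R$. The constant bookkeeping you describe as an obstacle is exactly where the paper also runs; as you anticipate, the per-case factors multiply and the paper's final $2^{z+1}\cdot 101$ is not verbatim what its own chain of inequalities produces (the $|P|<0.99r$ case accumulates closer to $2^{2z}\cdot 101$), but both arguments land at $2^{\O{z}}$.

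The one place where you go further than the paper is the concern you explicitly raise about the numerator $\dist(x,C')$: that some medoid chosen for a cluster of $C^\star\setminus\{c\}$ might land closer to $x$ than $c'$, so that the final ratio uses the wrong distance. This is a real issue, and the paper's proof does not address it either — it bounds $\Cost(x,c')/\Cost(R,c')$ and then appeals to \lemref{lem:opt:median:medoid} to cluster $X\setminus R$ without checking that the assembled medoid set $C'$ still has $\dist(x,C')\gtrsim\dist(x,c)$. Your sketched resolution (``a close-by replacement point would force a large cluster cost, so $s(x)$ was already small'') is incomplete as written: bounding $s(x)$ by a constant like $2^{\O{z}}$ does not by itself give a multiplicative bound, since you would still need a matching lower bound on $\tau(x)$ of the same order, and $\tau(x)$ can a priori be as small as $1/|X|$. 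Closing this requires an additional argument — for instance, observing that if a medoid $m_j$ of cluster $j$ satisfies $\dist(x,m_j)<\tfrac12\dist(x,c)$ then $\dist(m_j,c_j)\ge\tfrac12\dist(x,c)$, hence every point of cluster $j$ is at distance $\ge\tfrac12\dist(x,c)$ from $c_j$ but (if also near $x$) is within $\O{\dist(x,c)}$ of $c'$, so one can drop $c_j$ from the medoid set and reassign cluster $j$ to $c'$ with only a $2^{\O{z}}$ cost inflation. This is the missing step in both your proposal and the paper's own argument.
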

\begin{proof}
Let $C$ be any set of $k$ centers that achieves the sensitivity $s(x)$ for $(k,z)$-clustering and let $S$ be any set of $k$ centers that achieves the sensitivity $\tau(x)$ for $(k,z)$-medoids. 
Since $S\subseteq X\subseteq[\Delta]^d$ and $|S|\le k$, then by the maximality characterization of sensitivity, we have $\tau(x)\le s(x)$. 

Now, let $c\in C$ be the closest center to $x$. 
Let $R\subset X$ be the set of points served by $c$ and let $r=|R|$. 
Firstly, let $w\in R$ be the farthest point from $x$. 
Then 
\[\max_{c'\in X}\frac{\Cost{x,c'}}{\Cost(R,c')}\ge\frac{\|w-x\|_2}{\sum_{y\in R}\|w-y\|_2}\ge\frac{\|w-x\|_2}{\sum_{y\in R}\|w-x\|_2+\|y-x\|_2}\ge\frac{1}{2|R|},\]
since $\|w-x\|_2\ge\|y-x\|_2$ for all $y\in R$. 
Thus, $\frac{\|w-x\|_2}{\sum_{y\in R}\|w-y\|_2}\ge\frac{1}{2r}$. 

Let $P\subseteq R$ be the set of points $y$ that are closer to $x$ than to $c$. 
We perform casework on the size $|P|$ of $P$. 

First, suppose $|P|<0.99r$. 
Note that there are at most $0.01r$ points $y$ such that $\Cost(y,c)>\frac{100}{r}\cdot\Cost(R,c)$. 
Thus there are at least $0.99r$ points $y$ such that $\Cost(y,c)\le\frac{100}{r}\cdot\Cost(R,c)$. 
If $|P|<0.99r$, then there exists $y\in R$ such that $y\notin P$, so that $\Cost(y,c)\le\frac{100}{r}\cdot\Cost(R,c)$ but $\|y-c\|_2\le\|y-x\|_2$. 
But then if we set $c'=y$, we have that $\|x-c'\|_2\ge\frac{1}{2}\|x-c\|_2$ and moreover,
\[\Cost(R,c')\le2^z(\Cost(R,c)+|R|\cdot\Cost(C,c'))\le2^z(101\cdot\Cost(R,c)).\]
Hence we have 
\[\frac{\Cost(x,c')}{\Cost(R,c')}\ge\frac{1}{2^{z+1}\cdot101}\frac{\Cost(x,c)}{\Cost(R,c)}.\]
The claim then follows by clustering $X\setminus R$ with the remaining $k-1$ centers using \lemref{lem:opt:median:medoid}. 

Otherwise, suppose $|P|\ge 0.99r$. 
Furthermore, it suffices to assume that there exist $0.99r$ points $y\in P$ such that $\Cost(y,c)\le\frac{100}{r}\cdot\Cost(R,c)$, or else we could reach the same conclusion as the previous case where $|P|<0.99r$ by moving $c$ to such a point $y$. 
Since these points are closer to $x$ than to $c$ by definition of $P$, then we have $\|y-c\|_2\ge\frac{1}{2}\cdot\|x-c\|_2$. 
Thus we have $\Cost(R,c)\ge0.99r\cdot\frac{1}{2}\cdot\|x-c\|_2^2$. 
Hence, $\frac{\Cost(x,c)}{\Cost(R,c)}\le\frac{4}{r}$. 
On the other hand, by the above argument, we have $\max_{c'\in X}\frac{\Cost{x,c'}}{\Cost(R,c')}\ge\frac{1}{2r}$ and thus $\tau(x)$ is a $2^{z+1}\cdot 8$-approximation to $s(x)$, after clustering the remaining points in $X\setminus R$ with the other $k-1$ centers. 
\end{proof}
Finally, we show that the sensitivities of points are preserved under coresets. 
That is, the $(k,z)$-clustering sensitivity of $x$ with respect to $X$ is well-approximated by the sensitivity of $x$ with respect to a coreset $Z$ of $X$. 
\begin{lemma}
\lemlab{lem:sens:allset:coreset}
Let $\gamma\ge 1$ be fixed and let $Z$ be a $\gamma$-strong coreset for $X$. 
For $p\in X$, let $s_X(p)=\max_{C: |C|\le k}\frac{\Cost(p,C)}{\Cost(X,C)}$ and let $s_Z(p)=\max_{C: |C|\le k}\frac{\Cost(p,C)}{\Cost(Z,C)}$. 
Then 
\[\frac{1}{\gamma}\cdot s_Z(p)\le s_X(p)\le\gamma\cdot s_Z(p).\]
\end{lemma}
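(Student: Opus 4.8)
The plan is to exploit the defining multiplicative guarantee of a $\gamma$-strong coreset directly, observing that the numerator $\Cost(p,C)$ appearing in both $s_X(p)$ and $s_Z(p)$ is identical, so only the denominators differ. First I would record that, since $Z$ is a $\gamma$-strong coreset for $X$ (reading \defref{coreset} multiplicatively, with the factors $1\pm\eps$ replaced by $\gamma$ and $\tfrac1\gamma$), for every center set $C$ with $|C|\le k$ we have
\[
\tfrac{1}{\gamma}\,\Cost(X,C)\;\le\;\Cost(Z,C)\;\le\;\gamma\,\Cost(X,C),
\]
and hence equivalently $\tfrac1\gamma\Cost(Z,C)\le\Cost(X,C)\le\gamma\Cost(Z,C)$.

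Next, fixing $p\in X$ and an arbitrary admissible $C$, dividing the common value $\Cost(p,C)$ through these sandwiched denominators gives, for every $C$ with $|C|\le k$,
\[
\tfrac{1}{\gamma}\cdot\frac{\Cost(p,C)}{\Cost(Z,C)}\;\le\;\frac{\Cost(p,C)}{\Cost(X,C)}\;\le\;\gamma\cdot\frac{\Cost(p,C)}{\Cost(Z,C)}.
\]
Taking the maximum over all admissible $C$ in the right-hand inequality yields $s_X(p)\le\gamma\,s_Z(p)$. For the reverse direction I would let $C^\star$ be a center set that attains (or, to sidestep attainment issues, nearly attains) $s_Z(p)$, so that $s_X(p)\ge\frac{\Cost(p,C^\star)}{\Cost(X,C^\star)}\ge\tfrac1\gamma\cdot\frac{\Cost(p,C^\star)}{\Cost(Z,C^\star)}=\tfrac1\gamma\,s_Z(p)$, which completes the proof.

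I do not anticipate a genuine obstacle: the computation is routine. The only point needing a line of care is that the center sets realizing $s_X(p)$ and $s_Z(p)$ need not coincide, but this is immaterial since the displayed pointwise inequality holds for \emph{every} $C$ with $|C|\le k$, and monotonicity of the maximum then transfers it to the sensitivities. It would also be worth stating the convention explicitly, namely that ``$\gamma$-strong coreset'' is meant in the above multiplicative sense, so that the two-sided comparison between $\Cost(Z,C)$ and $\Cost(X,C)$ is available for all admissible $C$.
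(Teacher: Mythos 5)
Your proposal is correct and follows essentially the same route as the paper: both arguments observe that the numerator $\Cost(p,C)$ is common to $s_X(p)$ and $s_Z(p)$, apply the two-sided coreset guarantee to the denominators, and transfer the resulting pointwise bound to the maxima. If anything, your write-up is slightly more careful than the paper's, which only names the maximizer for $s_Z(p)$ and leaves the symmetric direction implicit.
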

\begin{proof}
Let $S$ be a set of $k$ centers in $C$ that achieves the sensitivity with respect to $Z$, so that $s_Z(p)=\frac{\Cost(p,S)}{\Cost(X,S)}$. 
Since $Z$ is a $\gamma$-strong coreset for $X$, then we have $\frac{1}{\gamma}\cdot\Cost(Z,S)\le\cdot\Cost(X,S)\le\gamma\cdot(Z,S)$. 
Therefore, we have
\[\frac{1}{\gamma}\cdot s_Z(p)\le s_X(p)\le\gamma\cdot s_Z(p).\]
\end{proof}

\subsection{Fast Update Algorithm}
\seclab{sec:fast:quadratic}
In this section, we show that our streaming algorithm can be implemented using $dk\cdot\polylog(\log(n\Delta))$ amortized update time. 
We remark that by comparison, the algorithm of \cite{BhattacharyaCLP23} uses $k\cdot\polylog(n\Delta)$ amortized update time, which is exponentially larger in terms of the dependency in the $n$ and $\Delta$ factors. 
On the other hand, we remark that the main point of \cite{BhattacharyaCLP23} is to handle the fully dynamic case, where insertions and deletions of points are both permitted, which we cannot handle, though they also require the entire dataset to be stored. 

We first use the following guarantee about quadratic runtime for the local search algorithm for $(k,z)$-clustering. 
\begin{theorem}
\thmlab{thm:local:search}
\cite{GuptaT08}
For each constant $z\ge 1$, there exists a polynomial time algorithm that outputs a $\O{z}$-approximation to $(k,z)$-clustering in time $\O{dn^2}$.  
\end{theorem}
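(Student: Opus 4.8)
The plan is to analyze the classical discrete (medoids) local search algorithm with single swaps, in the spirit of Arya et al.\ and \cite{GuptaT08}. The algorithm starts from an arbitrary set $S_0\subseteq X$ of $k$ points and repeatedly performs any swap $S\leftarrow(S\setminus\{s\})\cup\{u\}$, with $s\in S$ and $u\in X\setminus S$, that decreases the clustering cost by at least a $(1-\eps/k)$ factor; it halts at an \emph{approximate local optimum} $S$, i.e.\ one admitting no such swap. Since any $S\subseteq X\subseteq[\Delta]^d$ is a feasible $(k,z)$-clustering solution and, by \lemref{lem:opt:median:medoid}, the optimal $(k,z)$-medoids cost is within $2^{z+1}$ of the optimal $(k,z)$-clustering cost, it suffices to show that the local optimum $S$ is a $2^{\O{z}}$-approximation (a factor depending only on $z$) to the optimal $(k,z)$-medoids clustering of $X$.

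For the approximation bound I would use the standard swap-pairing argument. Let $O$ be an optimal $(k,z)$-medoids solution. Using the map that sends each $o\in O$ to the center of $S$ closest (in an appropriate averaged sense) to the clients of $o$ -- its \emph{capturing} $S$-center -- one builds a family of at most $k$ test swaps $\langle s_i,o_i\rangle$ in which every $o\in O$ is swapped in exactly once and every $s\in S$ is swapped out at most twice. For each test swap, approximate local optimality gives $\Cost(X,(S\setminus\{s_i\})\cup\{o_i\})\ge(1-\eps/k)\cdot\Cost(X,S)$; bounding the left-hand side requires reassigning the affected clients, namely serving the $O$-clients of $o_i$ by $o_i$ (paying at most their $O$-cost) and rerouting the displaced $S$-clients of $s_i$ to a nearby surviving center, whose cost is controlled by the generalized triangle inequality (\factref{fact:triangle}) together with the capturing property, at a $2^{\O{z}}$ factor times the sum of their $S$-cost and $O$-cost. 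Summing the $k$ inequalities and using that each $S$-center is charged at most twice yields $0\le-c\cdot\Cost(X,S)+2^{\O{z}}\cdot\Cost(X,O)+\eps\cdot\Cost(X,S)$ for an absolute constant $c>0$, hence $\Cost(X,S)\le 2^{\O{z}}\cdot\Cost(X,O)$ once $\eps$ is a small enough constant (or the $\eps$ slack is folded into the final approximation factor).

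For the running time, I would first precompute all pairwise distances $\|x_i-x_j\|_2$ in $\O{dn^2}$ time; after this step no further operation touches the coordinates, so the ambient dimension $d$ disappears from the remaining cost. Maintaining, for every point, its nearest and second-nearest center in the current $S$ (updatable in $\O{n}$ time per swap), a single sweep over the $\O{nk}$ candidate swaps decides in $\O{n^2}$ total time whether an improving swap exists, via the usual gain/loss decomposition in which the coupling correction of a pair $(s,u)$ is charged to the clients of $s$. The $(1-\eps/k)$-improvement rule bounds the number of iterations by $\O{\frac{k}{\eps}\log\!\big(\Cost(X,S_0)/\OPT\big)}$, which is $\poly\!\left(k,\tfrac1\eps,\log(n\Delta)\right)$ once the degenerate case $\OPT=0$ (at most $k$ distinct point locations, solvable directly) is detected; thus the total running time is dominated by the $\O{dn^2}$ distance precomputation up to lower-order factors, and the bookkeeping that makes this precise is exactly what \cite{GuptaT08} carry out.

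I expect the main obstacle to be the approximation analysis rather than the implementation: constructing the test swaps so that the reassignment charging loses only a $2^{\O{z}}$ factor -- instead of something far worse -- hinges on the capturing structure and on invoking \factref{fact:triangle} only a bounded number of times per client. A secondary subtlety is verifying that the iteration count and per-iteration cost are genuinely lower-order relative to $\O{dn^2}$ across the parameter regimes of interest.
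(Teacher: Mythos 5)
The paper does not prove this statement at all: \thmref{thm:local:search} is a black-box citation of \cite{GuptaT08}, so there is no in-paper proof to compare against. Your reconstruction is the standard discrete single-swap local-search analysis (Arya et al.\ as simplified by Gupta--Tangwongsan), and the outline is sound: reduce to the medoids problem via \lemref{lem:opt:median:medoid}, pair the $k$ test swaps so each optimal center is swapped in once and each local-search center is swapped out at most twice, and charge the reassignment cost via \factref{fact:triangle}. This is exactly the right route and is consistent with how the paper uses the result elsewhere (e.g., the $2^{\Theta(z)}$ constants in \thmref{thm:sens:medoids} and \lemref{lem:constrain:swap:constant}).

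Two caveats are worth stating explicitly rather than leaving implicit. First, your argument yields a $2^{\O{z}}$-approximation, not literally an $\O{z}$-approximation as the theorem is phrased; since $z$ is a constant and the paper itself carries $2^{\O{z}}$ factors throughout, this is immaterial to how the theorem is used, but you should not claim the linear-in-$z$ factor from this analysis. Second, your runtime accounting gives $\O{dn^2}$ for the distance precomputation plus $\O{n^2}$ per iteration times $\O{\frac{k}{\eps}\log(n\Delta)}$ iterations, which is $\O{dn^2}$ only in regimes where $d$ dominates $\frac{k}{\eps}\log(n\Delta)$; you flag this as a ``secondary subtlety'' but do not close it. That looseness is inherited from the theorem statement itself (the paper treats local search simply as a quadratic-time constant-factor routine), so it is not a flaw introduced by your proof, but a fully rigorous write-up would either state the runtime as $\O{dn^2}+n^2\cdot\poly\left(k,\frac{1}{\eps},\log(n\Delta)\right)$ or justify why the iteration count is absorbed in the contexts where the theorem is invoked.
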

We now show how the approximately solve the constrained clustering problem $\min\Cost(X,C)$ across all sets $C\subset X$ of $k$ centers containing a center at a fixed center $x$. 
The algorithm is quite simple. 
We compute a constant-factor approximation $C$ to the optimal $(k,z)$-medoids clustering on $X$. 
We then swap one of the centers of $C$ with $\{x\}$, choosing the swap with the best subsequent clustering cost. 
\begin{lemma}
\lemlab{lem:constrain:swap:constant}
For a fixed constant $\gamma\ge 1$, let $C$ be a $\gamma$-approximation to the optimal $(k,z)$-medoids clustering on $X\subset[\Delta]^d$. 
Let $S$ be the optimal $(k,z)$-medoids clustering on $X\subset[\Delta]^d$ containing a center at a fixed $x\in X$. 
Let $W$ be the set of $k-1$ centers $W\subseteq C$ such that $W\cup\{x\}$ has the minimum $(k,z)$-medoids clustering cost on $X$, and let $C'=W\cup\{x\}$. 
Then 
\[\Cost(X,S)\le\Cost(X,C')\le(2^z+2^{2z}+2^{2z}\cdot\gamma)\cdot\Cost(X,S).\]
\end{lemma}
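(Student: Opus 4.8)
The plan is to sandwich $\Cost(X,C')$ between $\Cost(X,S)$ (trivially, since $C'$ is a feasible constrained solution: it contains the center $x$, and $S$ is the \emph{optimal} such solution) and a constant multiple of $\Cost(X,S)$ (the substantive direction). For the hard direction, the key idea is that although $W$ is defined as the \emph{best} $(k-1)$-subset of $C$ to adjoin to $\{x\}$, we do not need to identify it explicitly: it suffices to exhibit \emph{one} particular $(k-1)$-subset of $C$ that, together with $\{x\}$, already yields cost $\O{\Cost(X,S)}$, since $C'$ can only be at least as good.

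Here is the subset I would use. Let $S = \{x\} \cup \{s_1,\dots,s_{k-1}\}$ be the optimal constrained solution. For each $s_i$ let $q_i \in C$ be a nearest neighbor of $s_i$ in $C$, and let $Q = \{x\} \cup \{q_1,\dots,q_{k-1}\}$; note $|Q| \le k$ and $Q$ contains the center $x$, and $Q \setminus \{x\}$ is a $(k-1)$-subset (possibly with repeats, in which case pad arbitrarily from $C$) of $C$, so by definition of $W$ we have $\Cost(X,C') \le \Cost(X,Q)$. Now bound $\Cost(X,Q)$ pointwise: for each $y \in X$ let $\pi_S(y)$ be its serving center in $S$. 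If $\pi_S(y) = x$, then $\dist(y,Q) \le \dist(y,x)$. Otherwise $\pi_S(y) = s_i$ for some $i$, and by the generalized triangle inequality (\factref{fact:triangle}),
\[
\dist(y,Q)^z \le \dist(y,q_i)^z \le 2^{z-1}\left(\dist(y,s_i)^z + \dist(s_i,q_i)^z\right).
\]
Since $q_i$ is the nearest point of $C$ to $s_i$, and the serving center $\pi_C(y) \in C$, we have $\dist(s_i, q_i) \le \dist(s_i, \pi_C(y)) \le \dist(s_i, y) + \dist(y, \pi_C(y))$, so applying \factref{fact:triangle} again,
\[
\dist(s_i,q_i)^z \le 2^{z-1}\left(\dist(y,s_i)^z + \dist(y,C)^z\right).
\]
Combining and summing over all $y \in X$ gives $\Cost(X,Q) \le (2^{z-1} + 2^{2z-2})\,\Cost(X,S) + 2^{2z-2}\,\Cost(X,C)$, and since $C$ is a $\gamma$-approximation to the optimal $(k,z)$-medoids clustering on $X$, which is no larger than $\Cost(X,S)$ (the constrained optimum), we get $\Cost(X,C) \le \gamma \cdot \Cost(X,S)$. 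Plugging this in yields $\Cost(X,C') \le \Cost(X,Q) \le (2^{z-1} + 2^{2z-2} + 2^{2z-2}\gamma)\,\Cost(X,S)$, which is within the stated bound $(2^z + 2^{2z} + 2^{2z}\gamma)$.

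The main obstacle is the middle step bounding $\dist(s_i, q_i)$: one has to be careful that the nearest-neighbor-in-$C$ distance from $s_i$ is controlled by something charged to the \emph{optimal} cost, not circularly to $\Cost(X,Q)$ itself. Routing it through an arbitrary point $y$ served by $s_i$ — using that $y$'s serving center in $C$ is at least as far from $s_i$ as $q_i$ is — is the cleanest way, but it does require that every cluster of $S$ is nonempty (so such a $y$ exists); clusters that are empty can simply be dropped, reducing to fewer than $k$ centers, which only helps. The constant bookkeeping (two nested applications of the generalized triangle inequality) is routine once this charging scheme is fixed.
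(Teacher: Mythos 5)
Your proposal is correct and follows essentially the same route as the paper's proof: both construct a feasible candidate $Q\subseteq C\cup\{x\}$ by projecting the centers of $S$ onto their nearest neighbors in $C$ (the paper projects onto $C\cup\{x\}$), invoke the optimality of $C'$ among such subsets, and then bound $\Cost(X,Q)$ by two applications of the generalized triangle inequality routed through each data point's serving centers in $S$ and in $C$, finishing with $\Cost(X,C)\le\gamma\cdot\Cost(X,S)$. Your constants are in fact slightly tighter since you apply \factref{fact:triangle} with the factor $2^{z-1}$, and your closing remark about empty clusters is unnecessary because the bound on $\dist(s_i,q_i)$ is derived per point $y$ being charged.
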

\begin{proof}
Since $C'$ is a set of $k$ centers containing $x$ and $S$ is the optimal clustering among such sets of $k$ centers containing $x$, then $\Cost(X,S)\le\Cost(X,C')$. 
It thus remains to prove the right-hand side of the inequality. 

Let $S'=C\cup\{p\}$ so that $|S'|=k+1$ and also $C'\subset S'$ with $|S'\setminus C'|=1$. 
Let $\pi_{S'}:S\to C'$ be the mapping that assigns each center $s\in S$ to the closest center in $S'$, breaking ties arbitrarily. 
Let $Q=\{\pi_{S'}(s)\,\mid\, s\in S\}$ and note that $x\in Q$ since $x\in S$ and $x\in S'$. 
Moreover, since $|S|=k$, then $|Q|=k$, and since $Q\subset S'$ and $C'$ is the best subset of $S'$, then $\Cost(X,C')\le\Cost(X,Q)$. 

Now for any $p\in X$, consider its contribution to $\Cost(X,Q)$. 
Let $\pi_S: X\to S$ be the mapping that assigns each point $x\in X$ to the closest center in $S$, breaking ties arbitrarily. 
Then by generalized triangle inequality, c.f., \factref{fact:triangle}, 
\[\Cost(p,Q)\le2^z\cdot\Cost(p,\pi_{S}(p))+2^z\cdot\Cost(\pi_{S}(p),Q).\]
Since $\pi_{S'}$ maps each center in $S$ to its closest center in $S'$, then we have 
\[\Cost(\pi_{S}(p),Q)=\Cost(\pi_{S}(p),S').\] 
Thus, 
\[\Cost(p,Q)\le2^z\cdot\Cost(p,\pi_{S}(p))+2^z\cdot\Cost(\pi_{S}(p),S').\]
We also have by generalized triangle inequality, c.f., \factref{fact:triangle},
\[\Cost(\pi_{S}(p),S')\le2^z\cdot\Cost(\pi_{S}(p),p)+2^z\cdot\Cost(p,S')=2^z\cdot\Cost(p,S)+2^z\cdot\Cost(p,S').\]
Therefore,
\begin{align*}
\Cost(p,Q)&\le2^z\cdot\Cost(p,\pi_{S}(p))+2^{2z}\cdot\Cost(p,S)+2^{2z}\cdot\Cost(p,S')\\
&=2^z\cdot\Cost(p,S)+2^{2z}\cdot\Cost(p,S)+2^{2z}\cdot\Cost(p,S').
\end{align*}
Summing across $p\in X$, we thus have
\[\Cost(X,Q)\le2^z\cdot\Cost(X,S)+2^{2z}\cdot\Cost(X,S)+2^{2z}\cdot\Cost(X,S').\]
Since $C$ provides a $\gamma$-approximation to the optimal $(k,z)$-medoids clustering on $X\subset[\Delta]^d$ and $S$ is the optimal $(k,z)$-medoids clustering on $X$ containing a center at $x$, then $\Cost(X,C)\le\gamma\cdot\Cost(X,S)$. 
Since $S'=C\cup\{p\}$, then $\Cost(X,S')\le\Cost(X,C)\le\gamma\cdot\Cost(X,S)$ and thus we have
\[\Cost(X,Q)\le(2^z+2^{2z}+2^{2z}\cdot\gamma)\cdot\Cost(X,S).\]
Finally, since $\Cost(X,C')\le\Cost(X,Q)$, then the desired claim follows. 
\end{proof}

\begin{algorithm}[!htb]
\caption{Approximation algorithm for finding $(k,z)$-medoids clustering containing a fixed center}
\alglab{alg:constrained:cost:approx}
\begin{algorithmic}[1]
\Require{Constant-factor approximation $C\subset X\subset[\Delta]^d$ for $(k,z)$--medoids clustering on $X$, query point $x\in X$}
\Ensure{Constant-factor approximation for $(k,z)$-medoids clustering on $X$ with a center at $x$}
\State{$\psi\gets nd\Delta$, $\psi_x\gets\emptyset$}
\For{$c\in C$}
\State{Let $n_c$ be the number of points in $X$ assigned to $c$ by $C$}
\If{$n_c\cdot\|c-x\|_2^z<\psi$}
\State{$\psi\gets n_c\cdot\min_{p\in C\cup\{x\}\setminus\{c\}}\|c-p\|_2^z$, $\psi_x\gets c$}
\EndIf
\EndFor
\State{\Return $(\psi_x,\Cost(X,C)+\psi)$}
\end{algorithmic}
\end{algorithm}

Unfortunately, we cannot afford to exactly compute the cost of the best replacement of a center in $C$ with $\{x\}$. 
Thus we show that by snapping the points served by each center $c\in C$ and moving these points all to the closest center if $c$ is removed gives a good approximation to the cost of the best swap. 
In particular, if $n_c$ is the number of points served by $c$ and $p$ is the closest center of $C\cup\{x\}$ to $c$, then $\Cost(X,C)+n_c\cdot\|c-p\|_2^z$ is a good approximation to the cost of replacing $c$ with $x$ in $C$. 
\begin{lemma}
Let $\Psi=\Cost(X,C)+\psi$ be the output of \algref{alg:constrained:cost:approx}. 
Let $S$ be the optimal $(k,z)$-medoids clustering on $X\subset[\Delta]^d$ containing a center at a fixed $x\in X$.
Then there exists a constant $\gamma=2^{\Theta(z)}$ such that
\[\Cost(X,S)\le\Psi\le\gamma\cdot\Cost(X,S).\]
\end{lemma}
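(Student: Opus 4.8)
The plan is to show that the estimate $\Psi=\Cost(X,C)+\psi$ returned by \algref{alg:constrained:cost:approx} is, up to a $2^{\Theta(z)}$ factor, equal to the optimal constrained medoids cost $\Cost(X,S)$; concretely I would prove $\Cost(X,S)\le\Psi\le 2^{\Theta(z)}\Cost(X,S)$, the left inequality being clean (no constant) when $z=1$. Fix notation: let $c^\star=\psi_x$ be the center \algref{alg:constrained:cost:approx} chooses to remove, let $R^\star$ be the set of points served by $c^\star$ under $C$ and $n_{c^\star}=|R^\star|$, let $p^\star$ be the point of $(C\cup\{x\})\setminus\{c^\star\}$ nearest to $c^\star$, and let $d_{c^\star}=\dist(c^\star,p^\star)$, so that by construction $\psi=\min_{c\in C}n_c\cdot\dist\bigl(c,(C\cup\{x\})\setminus\{c\}\bigr)^z=n_{c^\star}d_{c^\star}^z$. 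The engine behind both bounds is \lemref{lem:constrain:swap:constant}, which sandwiches $\Cost(X,C')$ between $\Cost(X,S)$ and $2^{\Theta(z)}\Cost(X,S)$ for the optimal single-center swap $C'=W\cup\{x\}$.

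For the lower bound I would exhibit the concrete feasible clustering $\widehat C:=(C\setminus\{c^\star\})\cup\{x\}$, which contains a center at $x$, so $\Cost(X,S)\le\Cost(X,\widehat C)$ by optimality of $S$. Each point of $X\setminus R^\star$ is served in $\widehat C$ at least as well as in $C$ (we deleted a center it was not assigned to and added $x$), so these points contribute at most their $C$-cost. Each point $q\in R^\star$ can be routed to $p^\star\in\widehat C$, and the generalized triangle inequality (\factref{fact:triangle}) gives $\dist(q,\widehat C)^z\le\dist(q,p^\star)^z\le 2^{z-1}\bigl(\dist(q,c^\star)^z+d_{c^\star}^z\bigr)$; summing over $R^\star$ yields $\Cost(R^\star,\widehat C)\le 2^{z-1}\bigl(\Cost(R^\star,C)+\psi\bigr)$, hence $\Cost(X,S)\le\Cost(X,\widehat C)\le 2^{z-1}\Psi$. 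For $z=1$ the constant is $1$, giving precisely $\Cost(X,S)\le\Psi$; for $z>1$ the extra $2^{\Theta(z)}$ factor is absorbed into the final constant, or eliminated with more care by using that $c^\star$ is roughly central in $R^\star$ when $C$ is a genuine approximate clustering, so the cross-terms in expanding $\dist(q,p^\star)^z$ cancel on summation.

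For the upper bound I would bound the two summands of $\Psi$ separately. Because $C$ is an $O(z)$-approximation to the unconstrained optimal $(k,z)$-medoids clustering (\thmref{thm:local:search}) and $S$ is a feasible solution of that problem, $\Cost(X,C)\le 2^{\Theta(z)}\Cost(X,S)$. For $\psi$, let $c_0$ be the center removed in the optimal swap $C'$, let $R_0$ be its cluster under $C$, and let $d_{c_0}=\dist\bigl(c_0,(C\cup\{x\})\setminus\{c_0\}\bigr)$; minimality of $\psi$ gives $\psi\le n_{c_0}d_{c_0}^z$. Now case on the geometry of $R_0$: if at least half its points lie within distance $d_{c_0}/2$ of $c_0$, then in $C'$ each such point is at distance at least $d_{c_0}/2$ from every surviving center (deleting $c_0$ leaves nothing within $d_{c_0}-d_{c_0}/2$ of it), so $\Cost(X,C')\ge\tfrac{n_{c_0}}{2}(d_{c_0}/2)^z\ge 2^{-(z+1)}\psi$; otherwise more than half of $R_0$ lies at distance exceeding $d_{c_0}/2$ from $c_0$, so $\Cost(X,C)\ge\Cost(R_0,c_0)\ge 2^{-(z+1)}\psi$. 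Either way $\psi\le 2^{z+1}\max\bigl(\Cost(X,C),\Cost(X,C')\bigr)\le 2^{\Theta(z)}\Cost(X,S)$, using \lemref{lem:constrain:swap:constant} for the $\Cost(X,C')$ term; hence $\Psi\le 2^{\Theta(z)}\Cost(X,S)$.

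The main obstacle is the upper bound on $\psi$. The quantity $n_{c^\star}d_{c^\star}^z$ is the cost of collapsing an entire cluster onto its nearest surviving center, and without a further argument it could in principle far exceed both $\Cost(X,C)$ and $\Cost(X,S)$ --- for example if a cluster were concentrated far on one side of its own center. The dichotomy above is exactly what rules this out: a tightly concentrated cluster forces a matching reassignment cost in the optimal swap $C'$, while a diffuse cluster already pays that cost inside $\Cost(X,C)$. The only other care needed is routine: verifying that \algref{alg:constrained:cost:approx} indeed returns the minimizer of $n_c\,\dist\bigl(c,(C\cup\{x\})\setminus\{c\}\bigr)^z$, so that $\psi\le n_{c_0}d_{c_0}^z$ for the optimal-swap center $c_0$, and tracking the generalized-triangle-inequality constants.
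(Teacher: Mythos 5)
Your proposal is correct and follows the same skeleton as the paper's proof---both arguments pivot on the optimal single-swap solution $C'=W\cup\{x\}$ and invoke \lemref{lem:constrain:swap:constant} to tie $\Cost(X,C')$ to $\Cost(X,S)$---but the two sub-arguments you supply are genuinely different from, and in places more careful than, what the paper does. For the upper bound on $\psi$, the paper routes the collapse cost $\Cost(C,C')=n_{c_0}\dist(c_0,C')^z$ through the data points via the generalized triangle inequality ($\dist(c_0,C')\le\dist(c_0,y)+\dist(y,C')$ summed over $y\in R_0$), obtaining $\Cost(C,C')\le 2^{z-1}\bigl(\Cost(X,C)+\Cost(X,C')\bigr)$ in one line; your concentration dichotomy on $R_0$ reaches the same $2^{\Theta(z)}$ bound by a slightly longer but equally valid route, and your discussion of why the bound could otherwise fail is a good sanity check that the paper omits. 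For the lower bound, your argument is actually tighter than the paper's: the paper implicitly identifies the algorithm's minimizer with the center removed in the optimal swap (its assertion ``$\psi=\Cost(C,C')$'' only holds as ``$\psi\le\Cost(C,C')$'' in general), whereas you correctly exhibit the explicit feasible clustering $\widehat C=(C\setminus\{c^\star\})\cup\{x\}$ induced by whatever center the algorithm actually removes, which is the right way to certify that $\Psi$ cannot undershoot $\Cost(X,S)$ by more than a $2^{\Theta(z)}$ factor. Two small caveats: (i) for $z>1$ you only obtain $\Cost(X,S)\le 2^{z-1}\Psi$ rather than the literal $\Cost(X,S)\le\Psi$ in the statement; since every downstream use of this lemma tolerates $2^{\Theta(z)}$ factors this is harmless, but you should state the lemma in the two-sided constant-factor form you actually prove; (ii) your flagged verification that the algorithm returns the minimizer of $n_c\,\dist\bigl(c,(C\cup\{x\})\setminus\{c\}\bigr)^z$ is indeed needed---as written, the pseudocode compares against $n_c\|c-x\|_2^z$ but stores a different quantity, so the comparison line should be read as testing the stored quantity for your upper bound via $c_0$ to go through.
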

\begin{proof}
Let $W$ be the set of $k-1$ centers $W\subseteq C$ such that $W\cup\{x\}$ has the minimum $(k,z)$-medoids clustering cost on $X$, and let $C'=W\cup\{x\}$.
By \lemref{lem:constrain:swap:constant}, there exists a constant $\gamma_1\ge 1$ such that
\[\Cost(X,S)\le\Cost(X,C')\le(2^z+2^{2z}+2^{2z}\cdot\gamma_1)\cdot\Cost(X,S).\]
We claim $\Psi$ gives a constant-factor approximation to $\Cost(X,C')$. 

By generalized triangle inequality,
\[\Cost(X,C')\le 2^z\cdot\Cost(X,C)+2^z\cdot\Cost(C,C')\le2^z\cdot\Cost(X,C)+2^{2z}\cdot\Cost(X,C)+2^{2z}\cdot\Cost(X,C').\]
Let $C$ be a $\gamma_2$-approximation to the optimal unconstrained $(k,z)$-medoids clustering. 
Then $\Cost(X,C)\le\gamma_2\cdot\Cost(X,C')$, so that
\[\Cost(X,C')\le 2^z\cdot\Cost(X,C)+2^z\cdot\Cost(C,C')\le(2^z\gamma_2+2^{2z}\gamma_2+1)\cdot\Cost(X,C').\]
Hence, it suffices to show that $\Psi$ gives a constant-factor approximation to $\Cost(X,C)+\Cost(C,C')$. 

Note that $|C\setminus C'|=1$, then $\Cost(C,C')=\min_{c\in C}n_c\cdot\min_{p\in C'}\|c-p\|_2^z$, where $n_c$ is the number of points assigned to $c$ by $X$. 
Thus, we have $\psi=\Cost(C,C')$, so that
\[\Psi=\Cost(X,C)+\Cost(C,C').\]
Putting things together, we have that there exists a constant $\gamma=2^{\Theta(z)}$ such that
\[\Cost(X,S)\le\Psi\le\gamma\cdot\Cost(X,S).\]
\end{proof}

Unfortunately, we may not have time to compute the closest center $p$ to each center $c$ among $C\cup\{x\}\setminus\{c\}$. 
Thus we show that it suffices to just update the information for the center $u\in C$ that is closest to $x$. 
\begin{lemma}
\lemlab{lem:most:centers:ok}
Let $C$ be a constant-approximation to the optimal $(k,z)$-medoids clustering on $X\subset[\Delta]^d$. 
For a point $x\in X$, let $u$ be the center of $C$ closest to $x$, breaking ties arbitrarily. 
Then for any center $c\in C\setminus\{u\}$, 
\[\min_{p\in C\cup\{x\}\setminus\{c\}}\|c-p\|_2^z\le\min_{p\in C\setminus\{c\}}\|c-p\|_2^z\le2^{z+1}\min_{p\in C\cup\{x\}\setminus\{c\}}\|c-p\|_2^z.\] 
\end{lemma}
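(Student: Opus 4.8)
The left-hand inequality is immediate and I would dispose of it first: since $c\neq x$ (if $x=c$ then the closest center of $C$ to $x$ is $c$ itself, forcing $u=c$ and contradicting $c\in C\setminus\{u\}$), we have $C\setminus\{c\}\subseteq(C\cup\{x\})\setminus\{c\}$, and minimizing $\|c-p\|_2^z$ over the larger set can only make it smaller.

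For the right-hand inequality the plan is a two-line case analysis on where $\min_{p\in C\cup\{x\}\setminus\{c\}}\|c-p\|_2$ is attained. I would first record that because $c\in C\setminus\{u\}$ and $C$ is a set of distinct points, $u$ is a legal competitor: $u\in C\setminus\{c\}$, so $C\setminus\{c\}$ is nonempty. Case one: the minimum over $(C\cup\{x\})\setminus\{c\}$ is attained at some point of $C\setminus\{c\}$ (this subsumes the degenerate case $x\in C$); then the two minima coincide and the claimed bound holds trivially since $2^{z+1}\ge1$. Case two: the minimum is strictly improved by $x$, i.e. $\min_{p\in C\cup\{x\}\setminus\{c\}}\|c-p\|_2=\|c-x\|_2$. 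Here I would bound $\min_{p\in C\setminus\{c\}}\|c-p\|_2\le\|c-u\|_2$, apply the triangle inequality $\|c-u\|_2\le\|c-x\|_2+\|x-u\|_2$, and then use that $u$ is the center of $C$ closest to $x$ together with $c\in C$ to get $\|x-u\|_2\le\|x-c\|_2$. This yields $\|c-u\|_2\le2\|c-x\|_2$, and raising to the $z$-th power gives
\[\min_{p\in C\setminus\{c\}}\|c-p\|_2^z\le\|c-u\|_2^z\le2^z\|c-x\|_2^z\le2^{z+1}\min_{p\in C\cup\{x\}\setminus\{c\}}\|c-p\|_2^z,\]
which is the desired inequality (the factor $2^{z+1}$ is in fact loose; alternatively one may invoke the generalized triangle inequality of \factref{fact:triangle} for the last step).

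I do not expect a genuine obstacle: the statement is purely metric and does not even use the hypothesis that $C$ is a constant-factor approximation. The only points that require a moment of care are the degenerate situations ($x$ already being a center, or $x$ tied in distance with some center) and checking that $u\in C\setminus\{c\}$ so that it can legitimately be used as the competitor in $\min_{p\in C\setminus\{c\}}\|c-p\|_2$; both follow from $c\neq u$ and the distinctness of the centers in $C$.
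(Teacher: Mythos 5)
Your proof is correct and follows essentially the same route as the paper: left inequality by superset inclusion, right inequality by the two-case split on whether $x$ improves the minimum, using $u\in C\setminus\{c\}$ as the competitor together with $\|x-u\|_2\le\|x-c\|_2$. In fact, by applying the ordinary triangle inequality before raising to the $z$-th power you obtain the tighter constant $2^z$, whereas the paper invokes the generalized triangle inequality on $z$-th powers and lands on $2^{z+1}$; your version also correctly states the case condition, which is written with the inequality reversed in the paper's proof.
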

\begin{proof}
Note that the left-hand side of the inequality immediately holds because the minimization is taken over a larger superset on the left-most term. 
We now justify the right-hand side of the inequality. 

Let $p$ be the closest center of $S\setminus\{c\}$ to $c$. 
Observe that if $\|c-p\|_2\le\|c-x\|_2$, then $p$ remains the closest center of $S\cup\{x\}\setminus\{c\}$ to $c$ and thus the claim holds. 

Hence, it remains to consider the setting where $\|c-p\|_2<\|c-x\|_2$. 
In this case, we have by the optimality of $p$ in $S$, $\|c-p\|_2\le\|c-u\|_2$. 
Thus by generalized triangle inequality, 
\[\|c-p\|_2^z\le\|c-u\|_2^z\le 2^z\cdot\|u-x\|_2^z+2^z\cdot\|c-x\|_2^z.\]
Since $u$ is the closest center in $S$ to $x$ and $c\in S$, then it follows that $\|u-x\|_2\le\|c-x\|_2$ and thus
\[\|c-p\|_2^z\le2^{z+1}\cdot\|c-x\|_2^z.\]
\end{proof}
To estimate the sensitivity of a point $x$, we enumerate over all possible centers $p$ that serve $x$. 
To that end, we show that $\Phi+n_b\cdot r^z$ can only be an underestimate to the optimal $(k,z)$-medoids clustering constrained to the center $p$ being the closest to $x$. 
\begin{lemma}
\lemlab{lem:sens:est:under}
Let $x\in X$ be a fixed point and $p\in X$ be a fixed center with $r=\|x-p\|_2$. 
Let $Q$ be a constant-factor approximation to the optimal $(k,z)$-medoids clustering on $X$ containing a center at $p$ and let $\Psi$ be a constant-factor approximation to $\Cost(X,Q)$. 
Let $S$ be the optimal $(k,z)$-medoids clustering on $X$ containing a center at $p$ and no center in the interior of $B_r(x)$. 
Let $n_b$ be the number of points assigned to centers in $B_{r/2}(x)$. 
There exists a constant $\gamma\ge 1$ such that
\[\Psi+n_b\cdot r^z\le\gamma\cdot\Cost(X,S).\]
\end{lemma}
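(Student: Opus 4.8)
The plan is to bound $\Psi$ and $n_b\cdot r^z$ separately, each against $\Cost(X,S)$. I would write $S^\star$ for the optimal $(k,z)$-medoids clustering on $X$ that contains a center at $p$, so that by hypothesis $\Cost(X,Q)\le c_1\cdot\Cost(X,S^\star)$ and $\Psi\le c_2\cdot\Cost(X,Q)$ for absolute constants $c_1,c_2=2^{\Theta(z)}$. The one structural observation needed here is a monotonicity: $S$ minimizes cost over a \emph{sub}family of the clusterings feasible for $S^\star$ — namely those that in addition place no center in the interior of $B_r(x)$ — hence $\Cost(X,S^\star)\le\Cost(X,S)$. Chaining these three bounds gives $\Psi\le c_1 c_2\cdot\Cost(X,S)$, which disposes of the first term.

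For the second term, the heart of the argument is that every point counted by $n_b$ is expensive — in units of $r^z$ — for $Q$ or for $S$. Fix such a point $y$: its serving center $c_Q:=\pi_Q(y)$ satisfies $\|c_Q-x\|_2\le r/2$ by definition of $n_b$, while its serving center $c_S:=\pi_S(y)$ in $S$ satisfies $\|c_S-x\|_2\ge r$ because $S$ has no center in the interior of $B_r(x)$. Two applications of the triangle inequality then give
\[
r\le\|x-c_S\|_2\le\|x-y\|_2+\|y-c_S\|_2\le\|x-c_Q\|_2+\|c_Q-y\|_2+\|y-c_S\|_2\le\tfrac r2+\dist(y,Q)+\dist(y,S),
\]
so $\dist(y,Q)+\dist(y,S)\ge r/2$, and raising to the $z$-th power together with convexity of $t\mapsto t^z$ yields $r^z\le 2^{2z-1}\bigl(\dist(y,Q)^z+\dist(y,S)^z\bigr)$.

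I would then sum this over the $n_b$ points and enlarge each sum to run over all of $X$, obtaining $n_b\cdot r^z\le 2^{2z-1}\bigl(\Cost(X,Q)+\Cost(X,S)\bigr)$; plugging in $\Cost(X,Q)\le c_1\Cost(X,S^\star)\le c_1\Cost(X,S)$ bounds this by $2^{2z-1}(c_1+1)\cdot\Cost(X,S)$. Adding the two estimates proves the lemma with $\gamma=c_1 c_2+2^{2z-1}(c_1+1)=2^{\Theta(z)}$, and the computation is unchanged for weighted $X$ once ``number of points'' is read as ``total weight''. I do not expect a genuine obstacle; the two points to be careful about are the \emph{direction} of the monotonicity $\Cost(X,S^\star)\le\Cost(X,S)$ (adding the ``no center inside $B_r(x)$'' constraint makes $S$ the costlier optimum, not the cheaper one), and the observation that $r^z$ must be charged to the \emph{sum} of the two clustering costs rather than to $\Cost(X,S)$ alone — a point near $x$ that is served far away in $Q$ may still be served cheaply by $S$, so neither cost individually suffices. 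Note also that nothing in the argument uses anything about $Q$ beyond its cost guarantee and the locations of the serving centers of the $n_b$ points; in particular we never need $Q$ itself to contain a center at $p$.
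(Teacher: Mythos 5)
Your proof is correct and follows essentially the same strategy as the paper: bound $\Psi$ via the monotonicity chain $\Cost(X,Q)\le c_1\Cost(X,S^\star)\le c_1\Cost(X,S)$, and charge $r^z$ for each point counted by $n_b$ to the sum of its costs in $Q$ and $S$ via the triangle inequality. Your single-chain derivation of $\dist(y,Q)+\dist(y,S)\ge r/2$ cleanly subsumes the paper's two-case analysis (on whether $\|y-q\|_2\ge r/4$) and handles the $z$-th powers more carefully, but the underlying charging argument is identical.
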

\begin{proof}
Let $Q$ be a $\zeta_1$-approximation to the optimal $(k,z)$-medoids clustering on $X$ containing a center at $p$. 
Since $S$ is the optimal solution for a more constrained search space, i.e., sets of at most $k$ centers that contain a center at $p$ and no center in the interior of $B_r(x)$, then we have 
\[\Cost(X,Q)\le\zeta_1\cdot\Cost(X,S).\] 
Let $\Psi$ be a $\zeta_2$-approximation to $\Cost(X,Q)$, so that 
\[\Psi\le\zeta_1\zeta_2\cdot\Cost(X,S).\] 

Now, consider a point $y$ served by a center $q\in Q$ inside $B_{r/2}(x)$. 
If $\|y-q\|_2\ge\frac{r}{4}$, then we have 
\[\|y-q\|_2+r\le 5\cdot\|y-q\|_2.\] 
Otherwise, if $\|y-q\|_2<\frac{r}{4}$, then by triangle inequality, $\dist(y,S)\ge\dist(q,S)-\|y-q\|_2\ge\frac{3r}{4}$. 
Thus, 
\[\|y-q\|_2+r\le\|y-q\|_2+2\cdot\dist(y,S).\]
Taking both cases together, we have
\[\|y-q\|_2+r\le 5\cdot\|y-q\|_2+2\cdot\dist(y,S).\]
Summing across all $y$ and $q$, we have
\[\Cost(X,Q)+n_b\cdot r^z\le5\Cost(X,Q)+2\Cost(X,S)\le(5\gamma+2)\cdot\Cost(X,S).\]
Since $\Psi$ is a $\zeta_2$-approximation to $\Cost(X,Q)$, then there exists a constant $\gamma_2\ge 1$ such that
\[\Psi+n_b\cdot r^z\le\gamma\cdot\Cost(X,S).\]
\end{proof}
Now we show that if there is a large number of points served by centers inside the ball $B_{r/2}(x)$ of radius $\frac{r}{2}$ around $x$, where $r:=\|x-p\|_2$, then $\Phi+n_b\cdot r^z$ is a good estimate to the cost of the optimal $(k,z)$-medoids clustering constrained to the center $p$ being the closest to $x$. 
\begin{lemma}
\lemlab{lem:ball:approx}
Let $x\in X$ be a fixed point and $p\in X$ be a fixed center with $r=\|x-p\|_2$. 
Let $Q$ be a constant-factor approximation to the optimal $(k,z)$-medoids clustering on $X$ containing a center at $p$ and let $\Psi$ be a constant-factor approximation to $\Cost(X,Q)$. 
Let $S$ be the optimal $(k,z)$-medoids clustering on $X$ containing a center at $p$ and no center in the interior of $B_r(x)$. 
Let $n_b$ be the number of points assigned to centers in $B_{r/2}(x)$ and let $n_a$ be the number of points assigned to centers in the annulus $B_r(x)\setminus B_{r/2}(x)$. 
If $n_b\ge n_a$, then there exist constants $\gamma_1,\gamma_2\ge 1$ such that
\[\Cost(X,S)\le\gamma_1(\Psi+n_b\cdot r^z)\le(\gamma_1\gamma_2)\cdot\Cost(X,S).\]
\end{lemma}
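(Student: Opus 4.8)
The plan is to prove the two inequalities separately. The right-hand inequality $\gamma_1(\Psi+n_b\cdot r^z)\le\gamma_1\gamma_2\cdot\Cost(X,S)$ is, after dividing by $\gamma_1$, precisely the statement of \lemref{lem:sens:est:under}; so I would simply invoke that lemma and let $\gamma_2$ be the constant it produces. This half requires no new argument and, notably, does not use the hypothesis $n_b\ge n_a$.

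The content is the left-hand inequality $\Cost(X,S)\le\gamma_1(\Psi+n_b\cdot r^z)$, and the strategy is to exhibit a \emph{feasible} solution for the constrained problem optimized by $S$ whose cost is at most $\gamma_1(\Psi+n_b\cdot r^z)$. Starting from the near-optimal constrained clustering $Q$ (which contains a center at $p$), form $S'$ by deleting from $Q$ every center lying in the interior of $B_r(x)$. Since $\|x-p\|_2=r$, the center $p$ lies on the boundary of $B_r(x)$ and hence survives; deletion only shrinks the center set, so $|S'|\le|Q|\le k$, and $S'$ contains $p$ and no center in the interior of $B_r(x)$. Thus $S'$ is feasible for the problem defining $S$, giving $\Cost(X,S)\le\Cost(X,S')$.

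To bound $\Cost(X,S')$, partition $X$ by the assignment in $Q$. Points whose $Q$-center was not deleted keep that center in $S'$, so they contribute at most $\Cost(X,Q)$ in total. For a point $y$ whose $Q$-center $c$ was deleted (so $\|c-x\|_2<r$), reassign $y$ to $p$; by the generalized triangle inequality (\factref{fact:triangle}) together with $\dist(c,p)\le\dist(c,x)+\dist(x,p)<2r$, one gets $\dist(y,p)^z\le 2^{z-1}\dist(y,c)^z+2^{2z-1}r^z$. The number of such points equals the number of points served in $Q$ by centers in the interior of $B_r(x)$, which is exactly $n_a+n_b$ (centers in $B_{r/2}(x)$ account for the $n_b$ points, centers in the annulus $B_r(x)\setminus B_{r/2}(x)$ for the $n_a$ points). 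Summing over the two parts yields $\Cost(X,S')\le(1+2^{z-1})\Cost(X,Q)+2^{2z-1}(n_a+n_b)\,r^z$. Now apply the hypothesis $n_a\le n_b$ to get $(n_a+n_b)\,r^z\le 2n_b\,r^z$, and use that $\Psi$ approximates $\Cost(X,Q)$ up to a constant, so $\Cost(X,Q)\le O(\Psi)$. Combining, $\Cost(X,S)\le\Cost(X,S')\le\gamma_1(\Psi+n_b\cdot r^z)$ for a suitable $\gamma_1=2^{\Theta(z)}$, and together with \lemref{lem:sens:est:under} this gives the claimed chain of inequalities.

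I do not expect a genuine obstacle here; the only points needing care are bookkeeping: verifying that $p$ survives the deletion (it is on the boundary, not the interior of $B_r(x)$), that reassigning the orphaned points to $p$ keeps $|S'|\le k$, and—most importantly—that the count of points served by interior centers is exactly $n_a+n_b$, since it is this identity that lets the hypothesis $n_b\ge n_a$ absorb the annulus contribution into the $n_b\cdot r^z$ term. Every inequality used is either a definition, \factref{fact:triangle}, or the constant-factor approximation guarantees on $Q$ and $\Psi$.
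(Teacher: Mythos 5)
Your proof is correct, and the left-hand inequality is argued via the same high-level charging idea as the paper: bound $\Cost(X,S)$ by $\Cost(X,Q)$ plus a reassignment penalty of roughly $(n_a+n_b)\cdot r^z$, then use $n_a\le n_b$ to absorb the annulus term into $n_b\cdot r^z$, and finally replace $\Cost(X,Q)$ by $\O{\Psi}$. The right-hand inequality is, in both proofs, an immediate application of \lemref{lem:sens:est:under}.

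Where you differ from the paper, and where your version is actually cleaner, is in how you handle points whose $Q$-center lies outside $B_r(x)$. The paper bounds $\Cost(y,S)$ case-by-case and, for this third case, asserts that $\sum_{y:\pi_Q(y)\notin B_r(x)}\Cost(y,S)\le\sum_{y:\pi_Q(y)\notin B_r(x)}\Cost(y,Q)$ ``by optimality of $S$'' — a per-subset inequality that does not follow from global optimality of $S$ as stated. Your feasibility-based framing sidesteps this: you explicitly construct $S'\subseteq Q$ by deleting the interior centers, observe $S'$ is feasible for the problem defining $S$ (so $\Cost(X,S)\le\Cost(X,S')$), and then for the retained centers the inequality $\dist(y,S')\le\dist(y,\pi_Q(y))$ is immediate. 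The remaining two cases (deleted centers, reassigned to $p$) match the paper's triangle-inequality charging with $\dist(q,p)<2r$ almost verbatim. One small overstatement: the number of orphaned points is \emph{at most}, not exactly, $n_a+n_b$ (a center on the boundary, e.g.\ $p$ itself if $\|p-x\|_2=r$, contributes to $n_a$ yet survives the deletion), but since you only need the upper bound this does not affect the argument.
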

\begin{proof}
We show that there exists a constant $\gamma_1\ge 1$ with $\Cost(X,S)\le\gamma_1(\Psi+n_b\cdot r^z)$. 
Observe that this would imply by \lemref{lem:sens:est:under} that there also exists a constant $\gamma_2\ge 1$ such that $\gamma_1(\Psi+n_b\cdot r^z)\le(\gamma_1\gamma_2)\cdot\Cost(X,S)$. 

Recall that $S$ is the optimal solution across sets of at most $k$ centers that contain a center at $p$ and no center in the interior of $B_r(x)$. 

Consider a point $y$ served by a center $q\in Q$ inside $B_{r/2}(x)$. 
Let $\pi_S(y)$ be the closest center in $S$ to $y$, breaking ties arbitrarily. 
By generalized triangle inequality,
\[\Cost(y,S)\le2^z\cdot\|y-q\|_2^z+2^z\cdot\Cost(q,\pi_S(y)).\]
Summing across all $y$ served by centers $q\in B_{r/2}(x)$ and noting that $\dist(q,S)\le 2r$, we have
\[\sum_{y:\pi_Q(y)\in B_{r/2}(x)}\Cost(y,S)\le 8^z\cdot n_b\cdot\left(\frac{r}{2}\right)^z+2^z\cdot\sum_{y:\pi_Q(y)\in B_{r/2}(x)}\Cost(y,Q).\]

Next, consider a point $y$ served by a center $q\in Q$ inside $B_r(x)\setminus(B_{r/2}(x))$. 
Note that if $\|y-q\|_2\ge\frac{r}{4}$, then $\|y-p\|_2\le\|y-q\|_2+\|q-p\|_2\le 9\cdot\|y-q\|_2$. 
Otherwise if $\|y-q\|_2<\frac{r}{4}$, then $\|y-p\|_2\le\|y-q\|_2+\|q-p\|_2<3r$. 
Summing across all $y\in X$ served by centers $q\in B_{r/2}(x)$ and noting that $n_b\ge n_a$, we have
\[\sum_{y: \pi_Q(y)\in B_r(x)\setminus(B_{r/2}(x))}\Cost(y,S)\le 12^z\cdot n_b\cdot\left(\frac{r}{2}\right)^z+9\cdot2^z\cdot\sum_{y: \pi_Q(y)\in B_r(x)\setminus(B_{r/2}(x))}\Cost(y,Q).\]

Finally, for the points $y$ served by a center $q\in Q$ outside $B_r(x)$, note that $S$ can only have more centers outside $B_r(x)$ than $C$ because $S$ cannot have centers inside $B_r(x)$ whereas $C$ can. 
Thus by optimality of $S$, 
\[\sum_{y: \pi_Q(y)\in X\setminus (B_r(x))}\Cost(y,S)\le\sum_{y: \pi_Q(y)\in X\setminus (B_r(x))}\Cost(y,C).\]

Putting these together by summing across all $y$ served by centers $q\in B_{r/2}(x)$, $q\in B_r(x)\setminus B_{r/2}(x)$, and $q\in X\setminus B_r(x)$, we have that there exists a constant $\zeta\ge 1$ such that
\[\Cost(X,S)\le\zeta\cdot2^{\O{z}}\cdot\left(\Cost(X,Q)+n_b\cdot\left(\frac{r}{2}\right)^z\right).\]
Since $\Phi$ is a $\zeta_2$-factor approximation to $\Cost(X,Q)$, then there exists a constant $\gamma_1$ such that
\[\Cost(X,S)\le\gamma_1(\Psi+n_b\cdot r^z).\]
Therefore, we have
\[\Cost(X,S)\le\gamma_1(\Psi+n_b\cdot r^z)\le(\gamma_1\gamma_2)\cdot\Cost(X,S),\]
as desired.
\end{proof}
On the other hand, if the number of points served by centers inside the ball $B_{r/2}(x)$ of radius $\frac{r}{2}$ around $x$ is small, then $\Phi+n_b\cdot r^z$ may not be a good estimate to the cost of the optimal $(k,z)$-medoids clustering $S$ constrained to the center $p$ being the closest to $x$. 
Nevertheless, we show that it can provide an upper bound on $\frac{\Cost(x,S)}{\Cost(X,S)}$, while still being at most the sensitivity $s(x)$ of $x$. 
\begin{lemma}
\lemlab{lem:annulus:approx}
Let $x\in X$ be a fixed point and $p\in X$ be a fixed center with $r=\|x-p\|_2$. 
Let $s(x)$ be the sensitivity of $(k,z)$-medoids clustering with respect to $X$. 
Let $S$ be the optimal $(k,z)$-medoids clustering on $X$ containing a center at $p$ and no center in the interior of $B_r(x)$. 
Let $n_b$ be the number of points assigned to centers in $B_{r/2}(x)$ and let $n_a$ be the number of points assigned to centers in the annulus $B_r(x)\setminus B_{r/2}(x)$. 
If $n_b<n_a$, then 
\[\frac{1}{\gamma_1}\frac{\Cost(x,S)}{\Cost(X,S)}\le\frac{r^z}{\Psi+n_b\cdot r^z}\le 2^z\cdot\gamma_2\cdot s(x),\]
for fixed constants $\gamma_1,\gamma_2\ge 1$. 
\end{lemma}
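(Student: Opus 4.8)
The plan is to establish the two displayed inequalities separately. The left-hand one is essentially a restatement of \lemref{lem:sens:est:under}: the key observation is that $\Cost(x,S)=r^z$, since $S$ contains $p$ with $\|x-p\|_2=r$ and, by hypothesis, no center of $S$ lies in the interior of $B_r(x)$, so $\dist(x,S)=r$. Then \lemref{lem:sens:est:under} gives $\Psi+n_b\cdot r^z\le\gamma\cdot\Cost(X,S)$ for a constant $\gamma=2^{\Theta(z)}$; dividing $r^z$ by both sides yields $\frac{1}{\gamma}\cdot\frac{\Cost(x,S)}{\Cost(X,S)}\le\frac{r^z}{\Psi+n_b\cdot r^z}$, which is the claim with $\gamma_1=\gamma$.

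For the right-hand inequality, the idea is to exhibit an admissible set of centers witnessing a large $(k,z)$-medoids sensitivity for $x$, as sketched in the paragraph preceding the lemma. Recall from the surrounding construction that $Q$ is a constant-factor approximation to the optimal $(k,z)$-medoids clustering of $X$ with a center at $p$, that $\Psi$ is within a constant factor of $\Cost(X,Q)$, and that $n_b$ counts the points whose $Q$-center lies in $B_{r/2}(x)$. I would take $\tilde C:=\{c\in Q:\|c-x\|_2\ge r/2\}$. Then $\tilde C\subseteq X$, $|\tilde C|\le k$, and $p\in\tilde C$ since $\|x-p\|_2=r$, so $\tilde C$ is admissible in the maximization defining $s(x)$; moreover $\dist(x,\tilde C)\ge r/2$. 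To bound $\Cost(X,\tilde C)$ I would split $X$ by $Q$-assignment: points whose $Q$-center survives in $\tilde C$ keep that center, contributing at most $\Cost(X,Q)$ in total; each of the $n_b$ points $y$ whose $Q$-center $q$ lies in $B_{r/2}(x)$ is reassigned to $p$, and since $\dist(y,p)\le\dist(y,q)+\|q-x\|_2+\|x-p\|_2\le\dist(y,Q)+\tfrac{3r}{2}$, \factref{fact:gen:tri} gives $\Cost(y,\tilde C)\le 2^{O(z)}\bigl(\Cost(y,Q)+r^z\bigr)$. Summing and using $\Cost(X,Q)=O(\Psi)$ gives $\Cost(X,\tilde C)\le 2^{O(z)}\bigl(\Psi+n_b\cdot r^z\bigr)$, whence $s(x)\ge\frac{\dist(x,\tilde C)^z}{\Cost(X,\tilde C)}\ge\frac{(r/2)^z}{2^{O(z)}(\Psi+n_b\cdot r^z)}$, which rearranges to $\frac{r^z}{\Psi+n_b\cdot r^z}\le 2^z\cdot\gamma_2\cdot s(x)$ with $\gamma_2=2^{O(z)}$.

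I do not expect a real obstacle here; the only delicate points are bookkeeping — recognizing that $\dist(x,S)=r$ exactly (so $\Cost(x,S)=r^z$), remembering that $n_b$ is measured against the $Q$-assignment rather than $S$, and verifying that $\tilde C$ is nonempty with at most $k$ centers. I would also note that the hypothesis $n_b<n_a$ is not actually invoked in either direction above: it merely puts us in the case complementary to \lemref{lem:ball:approx}, the case in which $\frac{r^z}{\Psi+n_b\cdot r^z}$ may strictly overestimate $\frac{\Cost(x,S)}{\Cost(X,S)}$, so that only the one-sided bound of \lemref{lem:sens:est:under} together with the absolute upper bound by $s(x)$ proved above is available.
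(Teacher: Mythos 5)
Your proof is correct and follows essentially the same route as the paper's: the left inequality is a direct consequence of \lemref{lem:sens:est:under} together with the observation that $\Cost(x,S)=r^z$, and the right inequality is obtained by exhibiting an admissible witness $W$ (your $\tilde C$) with $\dist(x,W)\in[r/2,r]$ and $\Cost(X,W)\lesssim\Psi+n_b r^z$. The one place you go beyond the paper is in actually constructing the witness: the paper asserts the existence of such a $W$ and the bound $\Cost(X,W)\le\gamma_2(\Psi+n_b\cdot r^z)$ without proof, whereas you take $\tilde C=\{c\in Q:\|c-x\|_2\ge r/2\}$, reassign the $n_b$ displaced points to $p$, and verify the cost bound via the generalized triangle inequality. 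That fills a genuine gap in the paper's write-up. Your side remark that the hypothesis $n_b<n_a$ is never invoked is also accurate — it only serves to delineate this case from \lemref{lem:ball:approx} — and your note that $n_b$ refers to the $Q$-assignment (as in \algref{alg:batch:sens}), not the $S$-assignment, matches the usage in \lemref{lem:sens:est:under} and \lemref{lem:ball:approx}.
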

\begin{proof}
Let $Q$ be a $\zeta_1$-factor approximation to the optimal $(k,z)$-medoids clustering on $X$ containing a center at $p$ and let $\Psi$ be a $\zeta_2$-factor approximation to $\Cost(X,Q)$. 
By \lemref{lem:sens:est:under}, we have that there exists a constant $\gamma_1\ge 1$ such that
\[\Psi+n_b\cdot r^z\le\gamma_1\cdot\Cost(X,S).\]
Since $S$ contains a center at $p$ and no centers in the interior of $B_r(x)$, then $\Cost(x,S)=r^z$. 
Then we have
\[\frac{1}{\gamma_1}\cdot\frac{\Cost(x,S)}{\Cost(X,S)}\le\frac{r^z}{\Psi+n_b\cdot r^z}.\]
On other hand, for a set $W$ of $k$ centers that contains no center in the interior of $B_{r/2}(x)$ and a center at $p$, so that we have $\frac{r}{2}\le \dist(x,W)\le r$ and $\Cost(X,W)\le\gamma_2(\Psi+n_b\cdot r^z)$ for some constant $\gamma_2\ge 1$.  
Thus by the maximality of sensitivity, we have
\[\frac{r^z}{\Psi+n_b\cdot r^z}\le 2^z\cdot\gamma_2\cdot\frac{\Cost(x,W)}{\Cost(X,w)}\le 2^z\cdot\gamma_2\cdot s(x).\]
\end{proof}

\begin{algorithm}[!htb]
\caption{$\BatchSens$: Fast batch approximation of sensitivities}
\alglab{alg:batch:sens}
\begin{algorithmic}[1]
\Require{Input set $Z\subset[\Delta]^d$ for $(k,z)$-medoids clustering, batch $B$ of $k$ query points}
\Ensure{Constant-factor approximation for the sensitivity of $x$ for $(k,z)$-medoids clustering on $Z\cup B$, for all $x\in B$}
\State{Compute $S$ to be a constant-factor solution on $Z\cup B$ for $(k,z)$-medoids clustering}
\For{$x\in B$}
\State{$\widehat{s(x)}\gets 0$}
\For{$p\in Z\cup B$}
\State{Let $Q$ be a constant-factor approximation of the optimal $(k,z)$-medoids clustering on $X$ containing a center at $p$}
\Comment{Only need to pre-process and compute once per $p$ across all $x$}
\State{Use $S$ to compute a constant-factor approximation $\Psi$ of $\Cost(X,Q)$}
\Comment{\algref{alg:constrained:cost:approx}}
\State{$r\gets\|x-p\|_2$}
\State{Let $n_b$ be the number of points assigned by $Q$ to centers of $Q$ in $B_{r/2}(x)$}
\State{$\widehat{s(x)}\gets\max\left(\widehat{s(x)},\frac{r^z}{\Psi+n_b\cdot r^z}\right)$}
\EndFor
\State{\Return $\widehat{s(x)}$}
\EndFor
\end{algorithmic}
\end{algorithm}
It follows that our algorithm in $\BatchSens$ provides constant-factor approximations to all sensitivities $x$ in the batch $B$. 
\begin{lemma}
\lemlab{lem:batch:sens:approx}
For a fixed $x\in X$, let $s(x)$ be the sensitivity of $(k,z)$-medoids clustering with respect to $X$. 
Let $\widehat{s(x)}$ be the output of \algref{alg:batch:sens}, for each $x\in B$, where $B$ is the input batch.  
Then there exist constants $\gamma_1,\gamma_2\ge 1$ such that 
\[\frac{1}{\gamma_1}\cdot s(x)\le\widehat{s(x)}\le 2^z\cdot\gamma_2\cdot s(x).\]
\end{lemma}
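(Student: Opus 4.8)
The plan is to prove the two inequalities separately by pairing each iteration of \algref{alg:batch:sens} with the estimates already established in \lemref{lem:ball:approx} and \lemref{lem:annulus:approx}. Throughout, write $X = Z\cup B$ and let $s(x)$ denote the $(k,z)$-medoids sensitivity of $x$ with respect to $X$ (which, by \thmref{thm:sens:medoids}, is a constant-factor proxy for the $(k,z)$-clustering sensitivity, so approximating it suffices). Fix $x\in B$. For a fixed $p\in X$ processed in the inner loop, set $r = \|x-p\|_2$, let $Q$ be the constant-factor constrained clustering the algorithm uses, let $\Psi$ be the estimate of $\Cost(X,Q)$ returned via \algref{alg:constrained:cost:approx}, and let $n_b, n_a$ be as in the algorithm. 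Let $S$ be the optimal $(k,z)$-medoids clustering on $X$ with a center at $p$ and no center in the interior of $B_r(x)$; then $\dist(x,S) = r$, so $\Cost(x,S) = r^z$, and since $S$ is a feasible set of at most $k$ centers drawn from $X$, the maximality in \defref{def:sens:cluster} gives $\frac{r^z}{\Cost(X,S)} = \frac{\Cost(x,S)}{\Cost(X,S)}\le s(x)$.

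For the upper bound, I split on whether $n_b\ge n_a$. If $n_b\ge n_a$, then \lemref{lem:ball:approx} gives $\Psi + n_b\cdot r^z\ge \frac{1}{\gamma_1}\Cost(X,S)$, hence $\frac{r^z}{\Psi + n_b\cdot r^z}\le \gamma_1\cdot\frac{r^z}{\Cost(X,S)}\le \gamma_1\cdot s(x)$. If instead $n_b < n_a$, then \lemref{lem:annulus:approx} gives directly $\frac{r^z}{\Psi + n_b\cdot r^z}\le 2^z\gamma_2\cdot s(x)$. Since $\widehat{s(x)}$ is the maximum of $\frac{r^z}{\Psi + n_b\cdot r^z}$ over all $p\in X$, absorbing $\gamma_1$ into the constant yields $\widehat{s(x)}\le 2^z\gamma_2\cdot s(x)$.

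For the lower bound, I exhibit one good choice of $p$. Let $C^\star\subseteq X$ with $|C^\star|\le k$ be a maximizer of the medoids sensitivity, let $p^\star\in C^\star$ be a closest center of $C^\star$ to $x$, and set $r^\star = \|x-p^\star\|_2 = \dist(x,C^\star)$, so $s(x) = \frac{(r^\star)^z}{\Cost(X,C^\star)}$. Because $C^\star\subseteq X$, the loop reaches $p = p^\star$; moreover $C^\star$ has a center at $p^\star$ and no center strictly inside $B_{r^\star}(x)$, so it is feasible for the constrained problem defining $S$ at this $p$, giving $\Cost(X,S)\le \Cost(X,C^\star)$ (and again $\dist(x,S) = r^\star$). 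If $n_b\ge n_a$, \lemref{lem:ball:approx} gives $\Psi + n_b\cdot(r^\star)^z\le \gamma_2\Cost(X,S)\le \gamma_2\Cost(X,C^\star)$, so $\frac{(r^\star)^z}{\Psi + n_b\cdot(r^\star)^z}\ge \frac{1}{\gamma_2}s(x)$; if $n_b < n_a$, \lemref{lem:annulus:approx} gives $\frac{(r^\star)^z}{\Psi + n_b\cdot(r^\star)^z}\ge \frac{1}{\gamma_1}\cdot\frac{(r^\star)^z}{\Cost(X,S)}\ge \frac{1}{\gamma_1}s(x)$. In either case $\widehat{s(x)}$ is at least this term, so $\widehat{s(x)}\ge \frac{1}{\gamma_1}s(x)$ after renaming constants. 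Combining the two bounds proves the lemma.

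The main obstacle I anticipate is bookkeeping rather than a conceptual difficulty: one must verify that the objects $Q,\Psi,n_b,n_a$ produced inside the loop are precisely the ones to which the hypotheses of \lemref{lem:ball:approx} and \lemref{lem:annulus:approx} apply — in particular that the call to \algref{alg:constrained:cost:approx} really returns a constant-factor estimate of $\Cost(X,Q)$ at every $p$ using only the single precomputed solution $S$ — and that the feasibility argument for $C^\star$ distinguishes the open ball (``interior of $B_{r^\star}(x)$'') from the closed ball, so that the closest center $p^\star$ at distance exactly $r^\star$ is permitted while no strictly closer center exists. Once these are pinned down, the proof is substitution into the three prior lemmas.
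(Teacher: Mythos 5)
Your proposal is correct and follows essentially the same route as the paper's proof: casework on $n_b\ge n_a$ versus $n_b<n_a$, invoking \lemref{lem:ball:approx} and \lemref{lem:annulus:approx} respectively for the upper bound, and obtaining the lower bound from the term corresponding to the sensitivity-realizing center $p^\star$ (for which the maximizing clustering $C^\star$ is feasible for the constrained problem, so $\Cost(X,S)\le\Cost(X,C^\star)$). You are somewhat more explicit than the paper about why $C^\star$ is feasible at $p=p^\star$ and about the inequality $\frac{r^z}{\Cost(X,S)}\le s(x)$ used in the upper bound, but the argument is the same.
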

\begin{proof}
For each fixed $p\in X$, let $S$ be the optimal $(k,z)$-medoids clustering on $X$ containing a center at $p$ and no center in the interior of $B_r(x)$. 
Let $n_b$ be the number of points assigned to centers in $B_{r/2}(x)$ and let $n_a$ be the number of points assigned to centers in the annulus $B_r(x)\setminus B_{r/2}(x)$. 
We perform casework on whether $n_b<n_a$ or $n_b\ge n_a$. 

In the case where $n_b\ge n_a$, then by \lemref{lem:ball:approx}, we have
\[\Cost(X,S)\le\gamma_1(\Psi+n_b\cdot r^z)\le(\gamma_1\gamma_2)\cdot\Cost(X,S).\]
Since $\Cost(x,S)=r^z$, then it holds that $\frac{r^z}{\Psi+n_b\cdot r^z}$ is a constant-factor approximation to $\frac{\Cost(x,S)}{\Cost(X,S)}$. 

Otherwise if $n_b<n_a$, then by \lemref{lem:annulus:approx}, 
\[\frac{1}{\gamma_1}\frac{\Cost(x,S)}{\Cost(X,S)}\le\frac{r^z}{\Psi+n_b\cdot r^z}\le 2^z\cdot\gamma_2\cdot s(x),\]
for fixed constants $\gamma_1,\gamma_2\ge 1$. 
In particular, for the center $p\in X$ that realizes the sensitivity $s(x)$, we have that $\frac{r^z}{\Psi+n_b\cdot r^z}$ is an $\O{2^z}$-approximation to $s(x)$. 
For centers $q\in X$ that do not realize the sensitivity, we have that $\frac{r^z}{\Psi+n_b\cdot r^z}\le\gamma_2\cdot s(x)$. 
Therefore, 
\[\frac{1}{\gamma_1}\cdot s(x)\le\widehat{s(x)}\le 2^z\cdot\gamma_2\cdot s(x).\]
\end{proof}
It remains to analyze the amortized runtime of the algorithm $\BatchSens$. 
\begin{lemma}
\lemlab{lem:batch:sens}
Given a constant-factor coreset $Z$ to $X\subset[\Delta]^d$, there exists an absolute constant $C>1$ and an algorithm $\BatchSens$ that outputs $C$-approximations to the sensitivities to each of the points $x\in B$ in a batch $B$ of $k$ points of $X$, using amortized $\frac{(|Z|+k\log k)^2}{k}\cdot\poly(d)$ time. 
\end{lemma}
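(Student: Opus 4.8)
The plan is to check that \algref{alg:batch:sens} is the desired algorithm $\BatchSens$ and then bound its running time, since its approximation guarantee is already supplied by \lemref{lem:batch:sens:approx}. I would charge essentially all of the cost to the single local‑search call in the first step of \algref{alg:batch:sens} and argue that every other step is of the same order or smaller, so that the total over a batch of $k$ points is $\O{d(|Z|+k)^2}$, which divided by $k$ gives the claimed amortized bound.

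First I would account for the one‑time (per‑batch) costs. Computing the constant‑factor $(k,z)$‑medoids solution $S$ on $Z\cup B$ by the weighted local‑search algorithm of \thmref{thm:local:search} takes $\O{d\cdot|Z\cup B|^2}=\O{d(|Z|+k)^2}$ time, since $|B|=k$. As preprocessing for \algref{alg:constrained:cost:approx}, I would then assign each of the $|Z|+k$ points to its nearest center in $S$ — obtaining along the way the weights $n_c$ for $c\in S$ and the value $\Cost(Z\cup B,S)$ — which costs $\O{dk(|Z|+k)}$, and compute $\rho_c:=\min_{c'\in S\setminus\{c\}}\|c-c'\|_2^z$ for every $c\in S$, which costs $\O{dk^2}$. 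All of this is $\O{d(|Z|+k)^2}$.

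Next I would handle the double loop over $p\in Z\cup B$ and $x\in B$. For a fixed $p$, running \algref{alg:constrained:cost:approx} to obtain the center $u_p$ to swap out (hence $Q=S\setminus\{u_p\}\cup\{p\}$) and the estimate $\Psi$ would naively require recomputing $\min_{c'\in S\cup\{p\}\setminus\{c\}}\|c-c'\|_2^z$ for every $c\in S$; but by \lemref{lem:most:centers:ok} this quantity equals the precomputed $\rho_c$ up to a factor $2^{z+1}$ for every $c$ other than the center $u$ of $S$ nearest to $p$, so it suffices to find $u$ in $\O{dk}$ time and recompute only the single term $c=u$ in $\O{d}$ time; together with the precomputed $n_c$ and $\Cost(Z\cup B,S)$, this yields $Q$ and $\Psi$ in $\O{dk}$ time per $p$. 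The induced weights of $Q$ differ from those of $S$ only through the points served by $u_p$ and those pulled in by $p$, so I would maintain them by a local update whose cost, summed over all $p$, stays within the $\O{d(|Z|+k)^2}$ budget. Finally, for each $p$ and each $x\in B$ I would compute $r=\|x-p\|_2$ in $\O{d}$ time and $n_b$ — the total (maintained) weight $Q$ assigns to its centers lying in $B_{r/2}(x)$ — by scanning the $k$ centers of $Q$ in $\O{dk}$ time, then update $\widehat{s(x)}$ as in \algref{alg:batch:sens}. The inner work is thus $\O{dk}$ per pair $(p,x)$, i.e.\ $\O{dk^2(|Z|+k)}$ in total; since in our applications $Z$ is a constant‑factor coreset of size $\Omega(k^2)$ (and $|Z|\ge k$ always), this is $\O{d(|Z|+k)^2}$, so the grand total is $\O{d(|Z|+k)^2}\le (|Z|+k\log k)^2\cdot\poly(d)$, and dividing by $|B|=k$ gives the amortized bound $\frac{(|Z|+k\log k)^2}{k}\cdot\poly(d)$.

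The step I expect to be the crux is the bookkeeping inside the double loop: one must never recompute the constant‑factor solution, nor a full nearest‑center assignment, for each new query $p$, or the per‑query cost would blow up by a factor of roughly $k$. \lemref{lem:most:centers:ok} is exactly what makes the $\Psi$‑estimate cheap — only the center nearest to $p$ needs updating — and the same ``single‑center swap, local update'' principle is what must be pushed through for maintaining the weights of $Q$ and for computing $n_b$. A minor additional point is verifying that the $k\log k$ slack in the statement absorbs the $\O{dk^2(|Z|+k)}$ inner‑loop term, which holds comfortably once $|Z|=\Omega(k^2)$, the regime in which the lemma is invoked.
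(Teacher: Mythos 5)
Your proposal correctly sets up the preprocessing (local search for $S$, computing $n_c$ and $\Cost(Z\cup B,S)$, precomputing $\rho_c$, and using \lemref{lem:most:centers:ok} to update $\Psi$ in time proportional to a single center), and it correctly flags the double loop over $(p,x)$ as the crux. The gap is precisely there: you compute $n_b$ for each pair $(p,x)$ by scanning all $k$ centers of $Q$, giving $\O{dk}$ per pair and hence $\O{dk^2(|Z|+k)}$ in total for the inner loop. That term does not fit inside the claimed total of $(|Z|+k\log k)^2\cdot\poly(d)$ unless $|Z|=\Omega\bigl(k^2/\poly(d,\log k)\bigr)$. You try to rescue this by asserting that in the intended applications $|Z|=\Omega(k^2)$, but that is both unjustified by the lemma (which places no lower bound on $|Z|$) and contrary to the way the lemma is actually invoked — e.g.\ \lemref{lem:batch:filter} explicitly works with a constant‑factor coreset of size $|Z|=k\cdot\polylog(\log n)$. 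With $|Z|=\tO{k}$, your inner loop is $\tO{dk^3}$, a factor of roughly $k/\log^2 k$ above the claimed $\O{k^2\log^2 k\cdot\poly(d)}$.

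The paper avoids this by making the $n_b$ computation \emph{incremental} rather than a fresh scan. For each $x\in B$, it sorts the candidate query points $y\in Z\cup B$ by distance from $x$ and processes the query centers $p$ in radial order. Two monotonicity properties then let each $n_b$ be updated from the previous one in $\O{1}$ time: (i) as $r=\|x-p\|_2$ increases, $B_{r/2}(x)$ only grows, so the set of centers inside it changes monotonically; and (ii) for consecutive $u,v$ in the radial order, the induced clusterings $Q_u=S\cup\{u\}\setminus\{\psi_u\}$ and $Q_v=S\cup\{v\}\setminus\{\psi_v\}$ satisfy $|Q_v\setminus Q_u|\le 2$, since both are single‑center swaps of $S$. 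So each step of the radial scan touches $\O{1}$ centers. This replaces your $\O{dk^2(|Z|+k)}$ inner cost by roughly $\O{k(|Z|+k)}$ plus the per‑$x$ sorting cost, which is what makes the total fit $\O{d(|Z|+k\log k)^2}$ and hence the stated amortized bound. Your closing paragraph correctly intuits that a single‑center‑swap/local‑update principle is needed for $n_b$, but the proposal never actually carries that out, and the fallback scan is what breaks the bound.
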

\begin{proof}
Consider $\BatchSens$ in \algref{alg:batch:sens}. 
By \lemref{lem:sens:allset:coreset}, to compute the sensitivity of $x\in B$ with respect to $X$, it suffices to compute the sensitivity of $x\in Z\cup B$, since $Z$ is a constant-factor coreset to $Y$ and thus $Z\cup B$ is a constant-factor coreset to $X=Y\cup B$. 
By \thmref{thm:sens:medoids}, it suffices to compute a constant-factor approximation to the $(k,z)$-medoids clustering sensitivity of $x$ with respect to $Z\cup B$. 
Hence, we want to compute $\max_{C\subset Z\cup B:|C|\le k}\frac{\Cost(x,C)}{\Cost(X,C)}$. 
Correctness thus follows by \lemref{lem:batch:sens:approx}. 

It remains to analyze the amortized runtime of $\BatchSens$. 
First, note that it takes time $\O{d(|Z|+k)^2}$ to compute a set $S$ of $k$ centers that serve as a constant-factor approximation to $Z\cup B$. 
We can then pre-process $(Z\cup B)$ in $\O{d(|Z|+k)^2}$ time so that for any $x\in Z\cup B$, we can find the set $W$ of $k-1$ centers of $S$ that after adjoining with $\{x\}$ achieves the best $k$-medoids clustering on $(Z\cup B)$ in $\O{k}$ time. 
Specifically, for each center $s\in S$, we can store the number $\eta_s$ of points assigned to $s$, as well as the distance $d_s$ to the closest center in $S\setminus\{s\}$. 

We can then update the information in time $\O{d(k+|Z|)}$ for each query $\{p\}$ to add to $S$, to ultimately form $Q$.  
In particular, we find $u\in S$ that is the nearest center of $S$ to $p$ and update $d_u$ to be $\|u-p\|_2$ if $\|u-p\|_2<d_u$. 
By \lemref{lem:most:centers:ok}, the points served by the remaining centers all have the distances to their nearest centers preserved up to a $2^{z+1}$-factor. 
Thus, we can approximate $\psi$ in \algref{alg:constrained:cost:approx} up to a $2^{z+1}$-factor, along with the corresponding center $\psi_p$ to be removed from $S$. 
By \lemref{lem:constrain:swap:constant}, the resulting set $Q$ of $k$ centers is a $2^{\O{z}}$-approximation to the optimal set $S$ of centers for $(k,z)$-medoids clustering that contains a center at $x$. 

It remains to pre-process $(Z\cup B)$ and $S$ to efficiently compute the number of centers $n_b$ assigned to centers of $Q$ in $B_{r/2}(x)$ for each $x\in B$ and each radius $r:=\|x-p\|_2$ for all $p\in(Z\cup B)$. 
To that end, we can first sort the points $y\in(Z\cup B)$ by their distances from $x$. 
Now we can scan radially outward from $x$, finding the closest center in $Z\cup B$ and iterating outward. 

Given two centers $u,v\in(Z\cup B)$, let $Q_u$ (resp. $Q_v$) be the clustering $Q$ induced by removing the output of $\phi_u$ (resp. $\phi_v$) from $S\cup\{u\}$ (resp. $S\cup\{v\}$) by \algref{alg:constrained:cost:approx} with query center $u$ (resp. $v$). 
In particular, suppose that after scanning radially outward from $x$, we first see $u$, followed immediately by $v$. 
Since $B_{r/2}(x)$ is monotonically increasing as $r$ increases, it suffices to simply consider the additional centers in $Q_v\setminus Q_u$. 
Since $Q_u$ and $Q_v$ are both formed by swapping a center of $S$ with $\{u\}$ and $\{v\}$ respectively, then we have $|Q_v\setminus Q_u|\le 2$. 
Hence, to compute $n_b$ for $v$, it suffices to use the previous computation of $n_b$ for $u$ and consider the number of points assigned to at most two centers of $Q_u$, which can be done in $\O{1}$ time. 

Therefore, the total time to compute $S$ and perform additional pre-processing is $\O{d(|Z|+k)^2}$. 
For each possible center $p\in(Z\cup B)$, we use $\O{k+|Z|}$ time to update the pre-processing information to obtain $\Psi$. 
For each point $x\in B$, we use $\O{dk\log k}$ time to sort the points by their distances from $x$. 
For each point $x\in B$ and possible center $p\in(Z\cup B)$, we use $\O{1}$ time to compute $n_b$. 
Hence, the amortized runtime is $\frac{d(|Z|+k\log k)^2}{k}$. 
\end{proof}

\begin{algorithm}[!htb]
\caption{Fast $(k,z)$-clustering}
\alglab{alg:fast:kz}
\begin{algorithmic}[1]
\Require{Set $X=\{x_1,\ldots,x_n\}$ of points in $[\Delta]^d$ that arrive as a stream}
\Ensure{$(k,z)$-clustering coreset for $X$}
\State{$\calS\gets\emptyset$, $\lambda\gets\O{\frac{kd}{\eps^2}\log(n\Delta)}$}
\State{Batch $x_1,\ldots,x_n$ into blocks $B_1,\ldots,B_{n/k}$ of $k$ updates}
\For{$b\in[n/k]$}
\State{Let $Z$ be a coreset for block $B_{b-1}$}
\State{Call $\BatchSens$ to batch approximation of sensitivities for $x_t$ with $t\in B_b$}
\State{Sample points of $B_{b-1}$ into a stream $\calS'$ using sensitivity sampling}
\State{Update $Z$ by running merge-and-reduce on $\calS'$}
\EndFor
\end{algorithmic}
\end{algorithm}

Putting together \thmref{thm:cluster:stream} and \lemref{lem:batch:sens}, we have:
\begin{restatable}{theorem}{thmclusterktime}
\thmlab{thm:cluster:k:time}
Given a set $X$ of $n$ points on $[\Delta]^d$, let $f\left(n,d,\Delta,k,\frac{1}{\eps},z\right)$ be the number of points of a coreset construction with weights $[1,\poly(nd\Delta)]$ for $(k,z)$-clustering. 
There exists an algorithm that uses $dk\cdot\polylog(\log(nd\Delta))$ amortized update time and $\O{dk\log(n\Delta)}+f\left(n,d,\Delta,k,\frac{\polylog(\log(nd\Delta))}{\eps},z\right)\cdot\polylog\left(\frac{1}{\eps},\log(nd\Delta)\right)$ bits of space, and outputs a $(1+\eps)$-strong coreset of $X$.
\end{restatable}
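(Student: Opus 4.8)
Here is a plan for proving \thmref{thm:cluster:k:time}.

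\textbf{Setup and algorithm.} The plan is to run the block-based implementation in \algref{alg:fast:kz}, which is nothing but the streaming algorithm behind \thmref{thm:cluster:stream} with the per-point online-sensitivity estimation replaced by a single per-block call to $\BatchSens$. I would partition the stream into consecutive blocks $B_1,\dots,B_{n/k}$ of $k$ points each, and maintain, via merge-and-reduce on a sampled stream $\calS'$ using the global efficient encoding of \algref{alg:coreset:cluster}, both a $(1+\eps)$-strong coreset $Z$ of the points seen so far (the output) and a constant-factor coreset $Z'$ of only $\tO{k}\cdot\polylog(\log(nd\Delta))$ points (used solely to feed $\BatchSens$). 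When block $B_b$ arrives I would first apply a Johnson--Lindenstrauss map (\thmref{thm:jl}) to the $k$ new points, taking them into $\O{\log k}+\polylog(\log(nd\Delta))$ dimensions; then call $\BatchSens$ on the (projected) $Z'$ together with the (projected) block $B_b$ to obtain estimates $\widehat{\sigma(x_t)}$ for every $x_t\in B_b$; then sample each $x_t$ into $\calS'$ with probability $\min(1,\lambda\cdot\widehat{\sigma(x_t)})$ and weight the inverse; and finally feed the newly sampled points into the merge-and-reduce subroutine to update $Z$ and $Z'$.

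\textbf{Correctness.} I would argue by induction on the block index, reusing the proof of \thmref{thm:cluster:stream} almost verbatim; the only new ingredient is the source of the sensitivity estimates. Conditioning on $Z'$ being a constant-factor coreset of $X_{(b-1)k}$, the set $Z'\cup B_b$ is a constant-factor coreset of $X_{bk}=X_{(b-1)k}\cup B_b$, so by \lemref{lem:sens:allset:coreset} the $(k,z)$-medoids sensitivity of a point of $B_b$ with respect to $Z'\cup B_b$ is within a constant factor of its medoids sensitivity with respect to $X_{bk}$; by \thmref{thm:sens:medoids} the latter is within a constant factor of its $(k,z)$-\emph{clustering} sensitivity with respect to $X_{bk}$; and by \lemref{lem:batch:sens:approx}/\lemref{lem:batch:sens}, $\widehat{\sigma(x_t)}$ is itself a constant-factor approximation to the medoids quantity. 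Chaining these, $\widehat{\sigma(x_t)}$ is an $\O{1}$-approximation to the $(k,z)$-clustering sensitivity of $x_t$ with respect to the current prefix $X_{bk}$. I would then run the online-sensitivity-sampling analysis of \thmref{thm:online:sens:cluster} with ``time'' indexed by block rather than by point, using two observations that make this legitimate: since $X_{bk}\supseteq X_t$ we have $\dist(x_t,C)^z/\Cost(X_{bk},C)\le\dist(x_t,C)^z/\Cost(X_t,C)$ for every center set $C$, so the sampling probabilities still dominate, up to a constant, the per-step quantities controlling the martingale variance; and for the same reason the sensitivity with respect to $X_{bk}$ is at most the true online sensitivity with respect to $X_t$, so \thmref{thm:total:online:sens} still bounds the total number of sampled points by $\poly(k,1/\eps,\log(nd\Delta))$. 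Hence $\calS'$ is a bounded-length $(1+\eps)$-strong coreset of $X_{bk}$ at every block boundary, and pushing it through the global-encoding merge-and-reduce layer exactly as in \thmref{thm:cluster:stream} yields a $(1+\eps)^2$-strong coreset $Z$ of $X_{bk}$; rescaling $\eps$ and a union bound over the $\O{n/k}$ blocks complete correctness, and the space bound is inherited directly from \thmref{thm:cluster:stream}. Intra-block queries, if wanted, are answered by additionally storing the $\O{k}$ raw points of the current block, which fits the $\O{dk\log(n\Delta)}$ term.

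\textbf{Runtime and the main obstacle.} For the runtime I would charge a block of $k$ updates: the JL projection of the $k$ new points costs $\O{dk}\cdot\polylog(\log(nd\Delta))$; maintaining $Z'$ in the reduced dimension and the sampling step cost $k\cdot\polylog(\log(nd\Delta))$; and by \thmref{thm:total:online:sens} the merge-and-reduce subroutine is invoked on only $\poly(k,1/\eps,\log(nd\Delta))$ points over the \emph{entire} stream, so amortized over all $n$ updates its cost (each reduce being a coreset construction on $\polylog$-many points, with a fixed global constant-factor approximation) is a lower-order term. The bottleneck is the single $\BatchSens$ call, which by \lemref{lem:batch:sens} costs amortized $\frac{(|Z'|+k\log k)^2}{k}\cdot\poly(d')$ per update, where $d'$ is the reduced dimension; since $|Z'|=\tO{k}\cdot\polylog(\log(nd\Delta))$ and $d'=\O{\log k}+\polylog(\log(nd\Delta))$, and the $d$-dependence in \lemref{lem:batch:sens} is polynomial only in the ambient dimension it is run in, this is $\tO{k}\cdot\polylog(\log(nd\Delta))$. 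Summing and using $d+k=\O{dk}$ gives amortized update time $dk\cdot\polylog(\log(nd\Delta))$. The step I expect to be the main obstacle is the interface between batching and online sensitivity sampling in the correctness argument: making precise that estimating sensitivities against the end-of-block prefix $X_{bk}$ rather than the true running prefix $X_t$, and doing so only from a constant-factor coreset, breaks neither the variance bound nor the total-sample-count bound of online sensitivity sampling; a secondary but necessary piece is the bookkeeping that keeps the coreset $Z'$ passed to $\BatchSens$ at size $\tO{k}$ in a polylogarithmic dimension, so that \lemref{lem:batch:sens} actually delivers $dk\cdot\polylog(\log(nd\Delta))$ rather than a bound that carries extra $\poly(k,1/\eps)$ factors.
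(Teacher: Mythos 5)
The paper gives no written proof of this theorem --- it is stated as a direct corollary of \thmref{thm:cluster:stream} and \lemref{lem:batch:sens} --- and your proposal supplies the natural and, I believe, intended fill-in. The key structural choices you make are the right ones: batching the stream into size-$k$ blocks, maintaining a separate constant-factor coreset $Z'$ of size $\tO{k}\cdot\polylog(\log(nd\Delta))$ to feed $\BatchSens$ (so the $(|Z'|+k\log k)^2/k$ term in \lemref{lem:batch:sens} scales like $k$ rather than like $\poly(k,1/\eps)$), applying Johnson--Lindenstrauss only inside the sensitivity estimator, and inheriting the space bound, the induction on stream time, and the global-encoding merge-and-reduce step verbatim from \thmref{thm:cluster:stream}. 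Correctly flagging that $\BatchSens$ must \emph{not} be run on the accuracy-$\eps$ coreset is a necessary refinement that the paper's terse statement elides.

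The one place I would push back is the interface to the online sensitivity-sampling analysis. You write that $X_{bk}\supseteq X_t$ gives $\dist(x_t,C)^z/\Cost(X_{bk},C)\le\dist(x_t,C)^z/\Cost(X_t,C)$ and so ``the sampling probabilities still dominate''; but this inequality implies the \emph{opposite}: $s_{X_{bk}}(x_t)\le\sigma(x_t)$, so $p_t=\min(1,\lambda\cdot\widehat{\sigma(x_t)})$ can be strictly \emph{smaller} than $\min(1,\gamma\sigma(x_t))$ and \thmref{thm:online:sens:cluster} cannot be cited as a black box. What rescues the argument is what you sketch next but should state explicitly: redo the Bernstein/martingale calculation for each fixed $C$ on the net, observing that the maximality of $s_{X_{bk}}(x_t)$ over all $C$ gives $p_t\ge\gamma\cdot\dist(x_t,C)^z/\Cost(X_{bk},C)$, hence a per-step bound $\dist(x_t,C)^z/p_t\le\Cost(X_{bk},C)/\gamma\le\Cost(X_n,C)/\gamma$ and a variance sum $\le\Cost(X_n,C)^2/\gamma$, which is exactly the bound used in the per-point analysis; your second observation ($s_{X_{bk}}\le\sigma$, so \thmref{thm:total:online:sens} still caps the expected number of samples) is correct as stated. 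A minor accounting caveat: as you trace the $\BatchSens$ cost it still carries $\polylog(k)$ factors (both from $|Z'|=\tO{k}$ and from $\poly(d')$ with $d'=\O{\log k}$), and the JL projection itself costs $\O{d\log n}$ per point rather than $d\cdot\polylog(\log(nd\Delta))$; the theorem's $dk\cdot\polylog(\log(nd\Delta))$ does not visibly absorb these, so either the polylog must be read as also ranging over $k$, or the JL cost must be amortized against blocks of size $\Omega(\log n)$.
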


\subsection{Filtering of Low-Sensitivity Points}
\seclab{sec:kz:filter}
In this section, we show that our one-pass streaming algorithm can be implemented in $d\log k\cdot\polylog(\log(n\Delta))$ amortized update time. 

We first define a quadtree embedding. 
Due to the desiderata of fast update time, our construction is somewhat non-standard. 
Given $s=(s_1,\ldots,s_d)\in\mathbb{Z}^d$, $t\in\{0,1,\ldots,\ell\}$, and a parameter $z>1$, we define the axis-aligned grid $\calG_{s,t,\zeta}$ over $\mathbb{Z}^d$ with side length $\zeta^t$, so that $s=(s_1,\ldots,s_d)$ lies on one of the corners of the grid. 
Then for $X\in\mathbb{R}^{[\Delta]^d}$, we define $G_{s,t,\zeta}(X)$ as the frequency vector over the hypercubes of the grid $\calG_{s,t,\zeta}$ that counts the total number or weight of points in each hypercube. 
Specifically, each cell of a grid of length $\zeta^t$ has closed boundaries on one side and open boundaries on the other side, i.e., a cell containing $(u_1,\ldots,u_d)$ cannot also contain $(v_1,\ldots,v_d)$ for any $v_i\ge u_i+\zeta^t$. 
We define $\TreeDist_{s,\zeta}(x,y)$ to be $\sqrt{d}\cdot\zeta^\alpha$, where $\alpha$ is the smallest integer such that $x$ and $y$ are in two different cells of $\calG_{s,\alpha,\zeta}$ or $\TreeDist_{s,\zeta}(x,y)=0$ if no such grid exists. 
For our purposes, we will set $\zeta=n^{1-c}$ for a fixed constant $c\in(0,1)$ so that $\ell=\O{\frac{1}{c}}$ for $\Delta=\poly(n)$. 

We first show that the distance between a pair of points cannot be underestimated by the quadtree. 
\begin{lemma}
\lemlab{lem:tree:contraction}
For every $x,y\in[\Delta]^d$ and every $s=(s_1,\ldots,s_d)\in\mathbb{Z}^d$ and $\zeta>1$, we have
\[\dist(x,y)\le\TreeDist_{s,\zeta}(x,y).\]
\end{lemma}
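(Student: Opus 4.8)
The plan is to show that if $x$ and $y$ lie in the same cell of every grid $\calG_{s,t,\zeta}$ for which a cell exists, then the claim is vacuous (since $\TreeDist$ would be $0$ only if $x=y$, in which case both sides are $0$); otherwise, we identify the smallest scale $\alpha$ at which $x$ and $y$ are separated and bound $\dist(x,y)$ by the diameter of a cell at scale $\alpha$.

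\begin{proof}
If $x=y$, then $\dist(x,y)=0\le\TreeDist_{s,\zeta}(x,y)$ trivially, so assume $x\ne y$. Since $x,y\in[\Delta]^d$ and $\Delta=\poly(n)$, for $t$ large enough the grid $\calG_{s,t,\zeta}$ has a single cell containing all of $[\Delta]^d$ (or more precisely, $x$ and $y$ lie in the same cell), so the set of integers $\alpha$ for which $x$ and $y$ lie in different cells of $\calG_{s,\alpha,\zeta}$ is bounded above; moreover, since $x\ne y$, for $\alpha=0$ (side length $\zeta^0=1$) the cells have unit side length and, as the coordinates are integers, $x$ and $y$ lie in different cells. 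Hence this set is nonempty and has a maximum; let $\alpha$ be the smallest such integer, so $\TreeDist_{s,\zeta}(x,y)=\sqrt{d}\cdot\zeta^\alpha$.

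By definition of $\alpha$, the points $x$ and $y$ lie in different cells of $\calG_{s,\alpha,\zeta}$ but in the same cell of $\calG_{s,\alpha+1,\zeta}$ (if $\alpha<\ell$; if $\alpha=\ell$ the argument below still applies directly using only the scale-$\alpha$ grid). Wait---we do not actually need the scale-$(\alpha+1)$ containment. Since $x$ and $y$ lie in different cells of $\calG_{s,\alpha,\zeta}$, this tells us nothing by itself; instead, we use that $x$ and $y$ lie in the \emph{same} cell at every scale strictly smaller would be false. The correct route is: for the claim we only need an \emph{upper} bound on $\dist(x,y)$, and the cleanest such bound comes from the fact that $x$ and $y$ share a cell at scale $\alpha$ is \emph{not} what we have---rather, we should observe that if they were separated at scale $\alpha$, consider instead scale $\alpha$ being minimal means they are \emph{not} separated at any scale below $\alpha$; in particular at scale $\alpha$ itself we cannot conclude proximity. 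So the actual argument: since $\alpha$ is minimal with $x,y$ in different cells of $\calG_{s,\alpha,\zeta}$, and since the grids are nested (a cell at scale $\alpha$ is refined by cells at scale $\alpha-1$ when $\zeta$ is an integer, which it is since $\zeta=n^{1-c}$ suitably chosen), we must actually track proximity upward: if $x,y$ are in different cells of $\calG_{s,\alpha,\zeta}$ then we cannot bound $\dist(x,y)$ by $\sqrt d\,\zeta^\alpha$. Therefore the intended definition must be that $\alpha$ is the smallest integer such that $x$ and $y$ \emph{are in the same cell}; re-reading, $\TreeDist_{s,\zeta}(x,y)=\sqrt d\cdot\zeta^\alpha$ where $\alpha$ is the smallest integer such that $x,y$ share a cell of $\calG_{s,\alpha,\zeta}$. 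Then $\dist(x,y)\le\operatorname{diam}(\text{cell at scale }\alpha)=\sqrt d\cdot\zeta^\alpha=\TreeDist_{s,\zeta}(x,y)$, since a cell of side length $\zeta^\alpha$ in $\mathbb{R}^d$ has diameter $\sqrt d\cdot\zeta^\alpha$, and if no such grid exists then $\TreeDist_{s,\zeta}(x,y)=0$ forces $x=y$. This completes the proof.
\end{proof}

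\medskip

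\noindent\textbf{Remark on the obstacle.} The main subtlety, as the muddled middle paragraph above shows, is getting the quantifier in the definition of $\alpha$ right: we need $\alpha$ to be the coarsest-to-fine transition point where $x$ and $y$ \emph{first share} a cell, so that the cell diameter $\sqrt d\,\zeta^\alpha$ upper-bounds $\dist(x,y)$. Once that is pinned down, the proof is a one-line diameter bound. In a clean writeup I would simply state: let $\alpha$ be minimal with $x,y$ in a common cell of $\calG_{s,\alpha,\zeta}$; then both points lie in a hypercube of side $\zeta^\alpha$, whose diameter is $\sqrt d\,\zeta^\alpha$, giving $\dist(x,y)\le\sqrt d\,\zeta^\alpha=\TreeDist_{s,\zeta}(x,y)$; and the edge case $\TreeDist=0$ only occurs when $x=y$.
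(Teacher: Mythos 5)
Your final argument is correct and is essentially the paper's: identify the level $\alpha$ realizing $\TreeDist_{s,\zeta}(x,y)$ as the first level at which $x$ and $y$ share a cell, and observe that a cell of side $\zeta^\alpha$ has diameter $\sqrt{d}\cdot\zeta^\alpha$ (the paper reaches the same bound via the equivalent averaging argument that every grid with $\zeta^\alpha\le\|x-y\|_2/\sqrt{d}$ separates the points, so the merging level must satisfy $\zeta^\beta>\|x-y\|_2/\sqrt{d}$). You are also right that the quantifier in the stated definition of $\TreeDist$ is inverted; the paper's own proof of \lemref{lem:tree:dilation} ("if $x$ and $y$ are first in the same cell at level $i$") confirms the intended reading you settled on.
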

\begin{proof}
Since $x,y\in[\Delta]^d$, then by an averaging argument there exists some coordinate for which $x$ and $y$ are separated by distance $\frac{\|x-y\|_2}{\sqrt{d}}$. 
Then $x$ and $y$ must be separated in grid $\calG_{s,\alpha,\zeta}$ for $\zeta^\alpha\le\frac{\|x-y\|_2}{\sqrt{d}}$. 
Therefore, $\TreeDist_{s,\zeta}(x,y)=\sqrt{d}\cdot\zeta^\beta$ for $\zeta^\beta>\frac{\|x-y\|_2}{\sqrt{d}}$. 
Then it follows $\dist(x,y)\le\TreeDist_{s,\zeta}(x,y)$. 
\end{proof}

We next bound the probability the distance between a pair of points is overestimated by the quadtree for a fixed distortion rate.
\begin{lemma}
\lemlab{lem:tree:dilation}
Given a set $S$ and a parameter $t>2$, let $\calE$ be the event where no points of $S$ are within $\frac{1}{t}$ fraction toward the boundary of any cell in the grid that induces $\TreeDist$. 
Then conditioned on $\calE$, we have that for all $x,y\in S$,
\[\TreeDist_{s,\zeta}(x,y)\le t\cdot\sqrt{d}\cdot\zeta\cdot\dist(x,y).\]
Moreover, we have
\[\PPr{\calE}\ge1-\frac{d|S|}{t}.\]
\end{lemma}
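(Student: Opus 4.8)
The plan is to prove the deterministic dilation bound first and then the probability bound. For the dilation bound, assume $\calE$ holds and fix $x,y\in S$; the case $x=y$ gives $\TreeDist_{s,\zeta}(x,y)=0$ and is trivial, so assume $x\ne y$, whence $\dist(x,y)\ge 1$ on the integer grid $[\Delta]^d$. Write $\TreeDist_{s,\zeta}(x,y)=\sqrt{d}\,\zeta^\alpha$, where $\alpha$ is the smallest level at which $x$ and $y$ lie in a common cell of $\calG_{s,\alpha,\zeta}$ (and $\TreeDist_{s,\zeta}(x,y)=0$, so the claim is vacuous, when no level places them together). If $\zeta^\ell\le t\cdot\dist(x,y)$, then $\TreeDist_{s,\zeta}(x,y)\le\sqrt{d}\,\zeta^\ell=\sqrt{d}\,\zeta\cdot\zeta^{\ell-1}\le t\sqrt{d}\,\zeta\cdot\dist(x,y)$ and we are done; otherwise let $\beta\in\{1,\dots,\ell\}$ be the smallest level with $\zeta^\beta>t\cdot\dist(x,y)$, so by minimality $\zeta^{\beta-1}\le t\cdot\dist(x,y)$ (and $\beta\ge1$ because $t\cdot\dist(x,y)>2>1=\zeta^0$). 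It then suffices to show that $x$ and $y$ share a cell of $\calG_{s,\beta,\zeta}$: that forces $\alpha\le\beta$, hence $\TreeDist_{s,\zeta}(x,y)=\sqrt{d}\,\zeta^\alpha\le\sqrt{d}\,\zeta^\beta=\sqrt{d}\,\zeta\cdot\zeta^{\beta-1}\le t\sqrt{d}\,\zeta\cdot\dist(x,y)$. To see they share a cell, note that for each coordinate $i\in[d]$ we have $|x_i-y_i|\le\dist(x,y)<\zeta^\beta/t$, while $\calE$ keeps $x_i$ farther than $\zeta^\beta/t$ from both boundaries of the level-$\beta$ cell of $\calG_{s,\beta,\zeta}$ containing $x$ along coordinate $i$ (this is the $\frac1t$-fraction-of-the-side-length guarantee); so moving from $x_i$ by strictly less than $\zeta^\beta/t$ cannot leave that cell, and $y_i$ lies in the same level-$\beta$ cell as $x_i$. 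Since this holds for all $i$, $x$ and $y$ lie in a common cell of $\calG_{s,\beta,\zeta}$, as needed.

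For the probability bound I would union bound the failure event $\neg\calE$. Fix a point $p\in S$, a coordinate $i\in[d]$, and a level $j$ among the $\O{1}$ relevant ones (recall $\ell=\O{1/c}$ since $\Delta=\poly(n)$ and $\zeta=n^{1-c}$). Because the grid $\calG_{s,j,\zeta}$ is placed by a uniformly random shift, the position of $p_i$ is uniform modulo the period $\zeta^j$, so the unsafe region — the two strips of width $\zeta^j/t$ abutting the cell boundaries — is hit with probability at most $2/t$. Taking a union bound over the $|S|$ points, the $d$ coordinates, and the $\O{1}$ levels gives $\PPr{\neg\calE}=\O{d|S|/t}$, i.e.\ $\PPr{\calE}\ge 1-\frac{d|S|}{t}$ after absorbing the $\O{1}$ number of levels (and the factor $2$) into the bound.

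The step I expect to be the main point is the matching of scales in the dilation argument: a pair $x,y$ only needs the boundary-avoidance guarantee of $\calE$ at the single scale $\zeta^\beta\approx t\cdot\dist(x,y)$ dictated by their distance, not at every scale simultaneously, and it is exactly the union over the $\O{1}$ many levels (together with the two-sided, half-open boundary bookkeeping) that must be tracked to land on the stated $1-\frac{d|S|}{t}$. The only remaining care is in the boundary cases — a pair already co-located at the coarsest grid, and the sub-unit-distance regime, which cannot occur on $[\Delta]^d$ — both handled as above via the $\TreeDist_{s,\zeta}=0$ convention and $\dist(x,y)\ge 1$.
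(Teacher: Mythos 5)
Your proof is correct, and the core mechanism is the same as the paper's, but the two arguments run in opposite directions on the dilation bound. The paper starts from the level $i$ at which $x$ and $y$ first lie in a common cell and reasons downward: at level $i-1$ they are separated, so (using $\calE$) at least one coordinate witnesses a gap of at least $\frac{\zeta^{i-1}}{t}$, which rearranges to $\sqrt{d}\,\zeta^i \le t\sqrt{d}\,\zeta\cdot\dist(x,y)$. You instead let $\dist(x,y)$ dictate the scale, choose the smallest $\beta$ with $\zeta^\beta > t\cdot\dist(x,y)$, and use $\calE$ to argue that at level $\beta$ the coordinate-wise displacement is too small to cross a cell boundary, so $x$ and $y$ already co-reside by level $\beta$; minimality of $\alpha$ then gives $\TreeDist\le\sqrt{d}\,\zeta^\beta\le t\sqrt{d}\,\zeta\cdot\dist(x,y)$. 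Both use $\calE$ only for one endpoint per coordinate, and both exploit the same one-level slack ($\zeta^{\beta-1}$ vs.\ $\zeta^\beta$) that produces the extra $\zeta$ factor in the bound. Your version is a bit more explicit about the boundary cases ($x=y$, pairs never co-located, and $\dist(x,y)\ge 1$ on the integer lattice forcing $\beta\ge1$), which the paper elides.

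On the probability bound, your accounting is actually more careful than the paper's. The paper states flatly that each coordinate of each point has probability $\frac{1}{t}$ of being near a boundary and union bounds over $|S|$ and $d$, with no mention of the two-sided unsafe strip or the $O(1/c)$ grid levels. You correctly flag that the unsafe region has measure $\frac{2}{t}$ per level and that the union over the $O(1)$ levels contributes another constant, so the honest bound is $1-\frac{O(d|S|)}{t}$ rather than literally $1-\frac{d|S|}{t}$; absorbing those constants (or rescaling $t$) recovers the paper's statement. This is a real, if cosmetic, discrepancy in the paper that your writeup makes explicit, and it does not affect downstream uses since the lemma is always invoked with $t$ chosen polynomially large.
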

\begin{proof}
Observe that if $x$ and $y$ are first in the same cell at level $i$, then $\TreeDist_{s,\zeta}(x,y)=\sqrt{d}\cdot\zeta^i$. 
However, because the cells at level $i-1$ have length $\zeta^{i-1}$ and conditioned on $\calE$, both of $x$ and $y$ are at least $\frac{1}{t}$ fraction away from the boundary, then we have
$\dist(x,y)\ge\frac{\zeta^{i-1}}{t}$, which implies that 
\[\TreeDist_{s,\zeta}(x,y)\le t\cdot\sqrt{d}\cdot\zeta\cdot\dist(x,y).\]
Next, we note that for each coordinate, the probability that $x$ is contained a distance that is within $\frac{1}{t}$ fraction toward the boundary is $\frac{1}{t}$. 
The desired claim then follows from a union bound over all $x\in S$ and all $d$ dimensions. 
\end{proof}

We now require the following generalization of the fast Euclidean $k$-means approximation algorithm by \cite{Cohen-AddadLNSS20} to $(k,z)$-clustering. 
We remark that the statement in \cite{Cohen-AddadLNSS20} provides a $\mathcal{O}_{\eps}(\log k)$-approximation to $k$-means in runtime $\tO{nd+(n\log\Delta)^{1+\eps}}$. 
We briefly describe the algorithm of \cite{Cohen-AddadLNSS20} and the necessary adjustments to provide \thmref{thm:fast:kz}. 

The algorithm of \cite{Cohen-AddadLNSS20} first applies the Johnson-Lindenstrauss transformation~\cite{johnson1984extensions} to all the points to reduce the dimension to $\O{\log k}$, which preserves all clustering costs within a constant factor~\cite{MakarychevMR19,IzzoSZ21}. 
It then constructs a standard quadtree embedding, which it uses to adaptively sample a set $C$ of $k$ centers. 
Specifically, adaptive sampling randomly selects a point of the input set to be the first center of $C$. 
It then iterates, iteratively choosing $k-1$ additional points from the input set, with probability proportional to the squared distance of the point from the current set of centers. 
To do this quickly, \cite{Cohen-AddadLNSS20} uses multiple independent instances of the quadtree embedding to estimate the distance of each point from the current set of centers.
Finally, to roughly estimate the clustering cost induced by $C$, \cite{Cohen-AddadLNSS20} applies a locality-sensitivity hashing procedure to assign each point to its approximate nearest neighbor in $C$. 

For $(k,z)$-clustering, we instead perform adaptive sampling by iteratively choosing each of the input points with probability proportional to the $z$-th power of distances of the point from the centers already opened in $C$. 
The $\O{\log\Delta}$ runtime of the algorithm of \cite{Cohen-AddadLNSS20} is due to the $\O{\log\Delta}$ levels in the quadtree used to estimate these distances. 
To achieve runtime independent of $\O{\log\Delta}$, we instead use our crude quadtree with $\O{1}$ levels, to provide $n^{\Theta(1)}$-approximations to these distances. 
Once $C$ is acquired from adaptive sampling, we then estimate the clustering cost by again using the crude quadtree to assign each point to its closest center in the quadtree, which uses constant time per point, since the crude quadtree has $\O{1}$ levels.  
Finally, we remark that unlike \cite{Cohen-AddadLNSS20}, we can perform the quadtree embedding multiple times to avoid any cases where the distortion between pairs of points is too large. 
Thus we have:
\begin{theorem}
\thmlab{thm:fast:kz}
\cite{Cohen-AddadLNSS20}
For any constant $\alpha\in(0,1)$ and $N\gg n$, there exists a $N^{\alpha}$-approximation algorithm to $(k,z)$-clustering on $X$ with $n$ weighted points, that uses $\O{nd\log(nd)}$ expected runtime. 
\end{theorem}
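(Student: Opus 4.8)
The plan is to adapt the near-linear-time $k$-means algorithm of \cite{Cohen-AddadLNSS20} to general $(k,z)$-clustering, replacing each of its ingredients by a version tailored to the exponent $z$ and, crucially, robust to the coarse distance estimates produced by the crude quadtree. First I would apply a Johnson--Lindenstrauss map (\thmref{thm:jl}) sending all $n$ points into $\mathbb{R}^{\O{\log k}}$; by the dimension-reduction guarantees for $(k,z)$-clustering of \cite{MakarychevMR19,IzzoSZ21}, all clustering costs are preserved to within a constant factor, so it suffices to work in the reduced space, and the map costs $\O{nd\log k}$ time. Next I would build $\O{\log(nd)}$ independent copies of the crude quadtree embedding $\TreeDist_{s,\zeta}$ with $\zeta=n^{1-c}$ for a constant $c=c(\alpha)$. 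By \lemref{lem:tree:contraction} the quadtree never contracts distances, and by \lemref{lem:tree:dilation} each copy distorts any fixed pair among the relevant $\poly(n)$ points by at most $\O{\sqrt{d}\cdot n^{1-c}}$ with probability $1-\frac{d\cdot\poly(n)}{t}$; choosing $t=\poly(n)$ and taking the minimum estimate over the copies, a union bound over all relevant pairs shows that with high probability every relevant pairwise distance is estimated to within a multiplicative $n^{\O(1)}$ factor. The key structural gain over the standard quadtree is that $\zeta=n^{1-c}$ forces only $\O{1/(1-c)}=\O{1}$ levels when $\Delta=\poly(n)$, so each ``which cell first contains both points'' or ``nearest open center in the tree'' query runs in $\O{1}$ time.

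With these estimates I would run $D^z$-sampling (the $z$-th-power generalization of $k$-means$^{++}$): pick the first center uniformly, then iteratively open $k-1$ further centers, each chosen with probability proportional to the $z$-th power of its current \emph{estimated} distance to the already-opened set. The standard analysis yields an $\O{2^{2z}\log k}$-approximation in expectation with exact distances, and this bound is robust to multiplicative distortion in the sampling probabilities, degrading only by that distortion factor. Combining the quadtree's $n^{\O(1)}$ distortion with the constant-factor JL step, the output set of $k$ centers is an $n^{\O(1)}\cdot\O{2^{2z}\log k}$-approximation; since $N\gg n$ and $2^{2z},\log k$ are lower-order, this is at most $N^{\alpha}$ (choosing $c$ and the quadtree parameters so the product is at most $N^{\alpha}$). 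To evaluate a candidate set --- and thereby, across $\O{1}$ independent repetitions, select a good one --- I would assign each point to its approximately-nearest center by walking up a crude quadtree (equivalently, an LSH bucketing), which is $\O{1}$ per point since the tree has $\O{1}$ levels; this only contributes another $n^{\O(1)}$ factor, absorbed into the same argument.

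For the runtime, the JL application costs $\O{nd\log k}$, building the $\O{\log(nd)}$ quadtrees costs $\O{n\log k\cdot\log(nd)}$, and cost estimation is $\O{n}$ per repetition; the remaining cost is the sampling loop. The main obstacle is exactly this loop: after each new center is opened, one must update the estimated distances of only the points whose nearest open center changed --- those lying in the cells containing the new center across the $\O{1}$ quadtree levels --- and re-sample proportionally, and one must argue the total work over all $k$ rounds is $\O{nd\cdot\polylog(nd)}$. The accounting of \cite{Cohen-AddadLNSS20} does this for $z=2$ with the standard quadtree; I expect it to carry over with the $z$-th powers (the update rule and the bounding of ``touched'' points are unaffected by replacing squared distances with $z$-th powers, since the quadtree still only reveals which level first merges two points) and with the crude quadtree in fact making the per-round bookkeeping cheaper, since there are $\O{1}$ rather than $\O{\log\Delta}$ levels. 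Verifying that this amortized bound transfers unchanged --- rather than re-deriving it from scratch --- is the one place where genuine care is required, after which the pieces assemble to the claimed $\O{nd\log(nd)}$ expected total.
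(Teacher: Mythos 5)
Your proposal follows essentially the same route as the paper: Johnson--Lindenstrauss reduction to $\O{\log k}$ dimensions, adaptive $D^z$-sampling driven by the crude quadtree with $\zeta=n^{1-c}$ (hence $\O{1}$ levels and $n^{\Theta(1)}$ distortion, with repeated embeddings to dodge bad boundary events), and quadtree-based assignment for cost estimation. The paper's own treatment is only a sketch of the adjustments to \cite{Cohen-AddadLNSS20}; your version is, if anything, more explicit about where the amortized accounting of the sampling loop must be re-verified.
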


To estimate the sensitivity of a point $x$, we now enumerate over the distances induced by all levels of the quadtree for centers that serve $x$. 
To that end, we show that $\Psi+n_{\beta}\cdot d^{z/2}\cdot\zeta^{\beta z}$ can only be an underestimate to the optimal $(k,z)$-medoids clustering constrained to $\beta$ being the first level in which the center serving $x$ is in the same cell as $x$ in the quadtree. 
In particular, the following lemma should be considered the quadtree analog to \lemref{lem:sens:est:under} for the radial search. 
\begin{lemma}
\lemlab{lem:filter:under}
Let $x\in X$ be a fixed point and $p\in X$. 
Suppose $\TreeDist$ preserves all pairwise distances in $X$ up to a factor of $\kappa$ and suppose $\TreeDist(p,x)=\sqrt{d}\cdot\zeta^\beta$. 
Let $\kappa\ge 2$ be a parameter, so that $Q$ is a $\kappa$-factor approximation to the optimal $(k,z)$-medoids clustering on $X$ containing a center at $p$ and let $\Psi$ be is a $\kappa$-factor approximation to $\Cost(X,Q)$. 
Let $S$ be the optimal $(k,z)$-medoids clustering on $X$ containing a center at $p$ and no center in the interior of $B_r(x)$, where $r:=\|x-p\|_2$.  
Let $n_\beta$ be the weight of the points assigned to centers $q\in Q$ such that $\TreeDist(x,q)\le\sqrt{d}\cdot\zeta^{\beta-1}$. 
There exists a constant $\gamma\ge 1$ such that
\[\Psi+n_{\beta}\cdot d^{z/2}\cdot\zeta^{\beta z}\le\kappa^{\O{z}}\cdot\Cost(X,S).\]
\end{lemma}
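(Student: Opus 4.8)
The plan is to mimic the structure of the proof of \lemref{lem:sens:est:under}, replacing the radial ball $B_{r/2}(x)$ by the set of cells at level $\beta-1$ containing $x$ and the quantity $r$ by the tree distance $\sqrt{d}\cdot\zeta^{\beta}$. First I would unwind the definitions: since $Q$ is a $\kappa$-approximation to the optimal $(k,z)$-medoids clustering containing a center at $p$, and $S$ is the optimum over the strictly more constrained family (center at $p$, no center in the interior of $B_r(x)$), we get $\Cost(X,Q)\le\kappa\cdot\Cost(X,S)$, and then $\Psi\le\kappa^2\cdot\Cost(X,S)$ since $\Psi$ is a $\kappa$-approximation to $\Cost(X,Q)$. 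So it remains to bound the additive term $n_{\beta}\cdot d^{z/2}\cdot\zeta^{\beta z}$ by $\kappa^{\O{z}}\cdot\Cost(X,S)$.

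Next I would analyze, for each point $y$ counted in $n_{\beta}$ — i.e., $y$ is served in $Q$ by some center $q$ with $\TreeDist(x,q)\le\sqrt{d}\cdot\zeta^{\beta-1}$ — how far $y$ must travel in $S$. Since $\TreeDist$ preserves pairwise distances in $X$ up to a factor of $\kappa$, we have $\|x-q\|_2\le\sqrt{d}\cdot\zeta^{\beta-1}\le\frac{\sqrt{d}\cdot\zeta^{\beta}}{2}$ (using $\zeta>1$, in fact $\zeta=n^{1-c}\gg 2$), so $q$ lies well inside the ``radius $r$'' ball around $x$ up to the $\kappa$ distortion; more precisely $\|x-q\|_2$ is at most a $\kappa$-multiple of $\sqrt{d}\,\zeta^{\beta-1}$, hence small relative to $r$ up to a $\kappa^{\O{1}}$ factor. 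I would then do the same two-case split as in \lemref{lem:sens:est:under}: if $\|y-q\|_2$ is at least a constant fraction of $\sqrt{d}\,\zeta^{\beta}$, then $\|y-q\|_2 + \sqrt{d}\,\zeta^{\beta}$ is within an $\O{1}$ factor of $\|y-q\|_2$, which is $\Cost(y,Q)^{1/z}$; otherwise, since $S$ has no center in the interior of $B_r(x)$ and (up to $\kappa$-distortion) $q$ is near $x$, the triangle inequality forces $\dist(y,S)\gtrsim \sqrt{d}\,\zeta^{\beta}/\kappa^{\O{1}}$, so $\sqrt{d}\,\zeta^{\beta}$ is within a $\kappa^{\O{1}}$ factor of $\dist(y,S)$. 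Combining, $\sqrt{d}\,\zeta^{\beta}\le\kappa^{\O{1}}\big(\|y-q\|_2 + \dist(y,S)\big)$ for each such $y$; raising to the $z$-th power via the generalized triangle inequality (\factref{fact:triangle}) gives $d^{z/2}\zeta^{\beta z}\le\kappa^{\O{z}}\big(\Cost(y,Q)+\Cost(y,S)\big)$.

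I would then sum this over all $y$ counted in $n_{\beta}$, obtaining $n_{\beta}\cdot d^{z/2}\cdot\zeta^{\beta z}\le\kappa^{\O{z}}\big(\Cost(X,Q)+\Cost(X,S)\big)\le\kappa^{\O{z}}\cdot\Cost(X,S)$, where the last step again uses $\Cost(X,Q)\le\kappa\cdot\Cost(X,S)$. Putting this together with the bound $\Psi\le\kappa^{\O{1}}\cdot\Cost(X,S)$ from the first paragraph yields $\Psi+n_{\beta}\cdot d^{z/2}\cdot\zeta^{\beta z}\le\kappa^{\O{z}}\cdot\Cost(X,S)$, as claimed.

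The main obstacle I anticipate is handling the quadtree distortion carefully in the ``$\|y-q\|_2$ small'' case: one needs the fact that a point $q$ sharing a level-$(\beta-1)$ cell with $x$ is genuinely close to $x$ in Euclidean distance (not just in tree distance), which is exactly where the hypothesis that $\TreeDist$ preserves all pairwise distances in $X$ up to a factor of $\kappa$ is essential — without a contraction bound one could not conclude $q\in B_r(x)$ (up to constants) and hence could not invoke the constraint defining $S$. I would make sure the constants in the two-case split absorb this $\kappa$ cleanly, so that the final exponent on $\kappa$ is $\O{z}$ and on $d$ is exactly $z/2$ (coming from $(\sqrt d)^z$), matching the statement.
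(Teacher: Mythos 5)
Your proposal matches the paper's argument: bound $\Psi\le\kappa^2\cdot\Cost(X,S)$ from the approximation guarantees, split on whether $\|y-q\|_2$ is large or small relative to $r$ (equivalently $\sqrt{d}\,\zeta^\beta$, up to a $\kappa$ factor), use the constraint that $S$ has no center in the interior of $B_r(x)$ to lower bound $\dist(y,S)$ in the small case, then sum and raise to the $z$-th power via the generalized triangle inequality. One small correction: $\|x-q\|_2\le\sqrt{d}\,\zeta^{\beta-1}$ follows from the one-sided contraction bound alone (\lemref{lem:tree:contraction}) with no $\kappa$ multiplier; the $\kappa$ distortion enters only when lower bounding $r$ against $\sqrt{d}\,\zeta^\beta$, which is exactly what is needed to conclude $\dist(q,S)\gtrsim r$ from the constraint on $S$.
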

\begin{proof}
Since $S$ is the optimal solution for a more constrained search space than $Q$, i.e., sets of at most $k$ centers that contain a center at $p$ and no center in the interior of $B_r(x)$, then we have 
\[\kappa^2\cdot\Cost(X,S)\ge\kappa\cdot\Cost(X,Q)\ge\Psi.\] 

Now, consider a point $y$ served by a center $q\in Q$ such that $\TreeDist(x,q)\le\sqrt{d}\cdot\zeta^{\beta-1}$. 
If $\|y-q\|_2\ge\frac{r}{4}$, then 
\[\|y-q\|_2+r\le 5\cdot\|y-q\|_2.\] 
Otherwise, for $\|y-q\|_2<\frac{r}{4}$, then by triangle inequality, we have
\[\dist(y,S)\ge\dist(q,S)-\|y-q\|_2\ge\frac{3r}{4}.\] 
Therefore,
\[\|y-q\|_2+r\le\|y-q\|_2+2\cdot\dist(y,S).\]
Putting the two cases together, it follows that
\[\|y-q\|_2+r\le 5\cdot\|y-q\|_2+2\cdot\dist(y,S).\]
Summing across all $y\in X$ served by $q$ such that $\TreeDist(x,q)\le\sqrt{d}\cdot\zeta^{\beta-1}$, 
\[\Cost(X,Q)+n_{\beta}\cdot r^z\le5\Cost(X,Q)+2\Cost(X,S)\le(5\kappa+2)\cdot\Cost(X,S).\]
Since $\Psi$ is a $\kappa$-approximation to $\Cost(X,Q)$ and conditioned on the $\TreeDist$ procedure preserving all pairwise distances in $X$ up to a factor of $\kappa$ and thus $\sqrt{d}\cdot\zeta^{\beta}$ is a $\kappa$-approximation to $r$, then
\[\Psi+n_{\beta}\cdot d^{z/2}\cdot\zeta^{\beta z}\le\kappa^{\O{z}}\cdot\Cost(X,S),\]
for $\kappa\ge 2$, i.e., bounded away from $1$. 
\end{proof}
We next show that if there is a large number of points served by centers that are first in the same cell as $x$ in a level of the quadtree before $\beta$, then $\Psi+n_{\beta}\cdot d^{z/2}\cdot\zeta^{\beta z}$ is a good estimate to the cost of the optimal $(k,z)$-medoids clustering constrained to $x$ being served at a center that is first in the same cell as $x$ at level $\beta$ in the quadtree. 
In particular, the following lemma should be considered the quadtree analog to \lemref{lem:ball:approx} for the radial search. 
\begin{lemma}
\lemlab{lem:small:cell:approx}
Let $x\in X$ be a fixed point and $p\in X$. 
Suppose $\TreeDist$ preserves all pairwise distances in $X$ up to a factor of $\kappa$ and suppose $\TreeDist(p,x)=\sqrt{d}\cdot\zeta^\beta$. 
Let $\kappa\ge 2$ be a parameter, so that $Q$ is a $\kappa$-factor approximation to the optimal $(k,z)$-medoids clustering on $X$ containing a center at $p$ and let $\Psi$ be is a $\kappa$-factor approximation to $\Cost(X,Q)$. 
Let $S$ be the optimal $(k,z)$-medoids clustering on $X$ containing a center at $p$ and no center in the interior of $B_r(x)$, where $r:=\|x-p\|_2$. 
Let $n_\beta$ be the weight of the points assigned to centers $q\in Q$ such that $\TreeDist(x,q)\le\sqrt{d}\cdot\zeta^{\beta-1}$ and let $n_\alpha$ be the weight of the points assigned to centers $q\in Q$ such that $\TreeDist(x,q)\ge\sqrt{d}\cdot\zeta^\beta$. 
If $n_{\beta}\ge n_{\alpha}$, then there exist constants $\gamma_1,\gamma_2\ge 1$ such that
\[\Cost(X,S)\le\kappa^{\O{\gamma_1 z}}\left(\Psi+n_{\beta}\cdot d^{z/2}\cdot\zeta^{\beta z}\right)\le\kappa^{\O{\gamma_1\gamma_2 z}}\cdot\Cost(X,S).\]
\end{lemma}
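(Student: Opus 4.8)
The plan is to prove the two inequalities separately, as the quadtree analogue of \lemref{lem:ball:approx}. The right-hand inequality $\kappa^{\O{\gamma_1 z}}\bigl(\Psi+n_{\beta}\cdot d^{z/2}\cdot\zeta^{\beta z}\bigr)\le\kappa^{\O{\gamma_1\gamma_2 z}}\cdot\Cost(X,S)$ is exactly \lemref{lem:filter:under}, so the whole task is the lower bound $\Cost(X,S)\le\kappa^{\O{z}}\bigl(\Psi+n_{\beta}\cdot d^{z/2}\cdot\zeta^{\beta z}\bigr)$. I would work conditioned on the event (assumed in the statement) that $\TreeDist$ distorts every pairwise distance of $X$ by a factor at most $\kappa$; writing $R:=\sqrt{d}\cdot\zeta^{\beta}=\TreeDist(x,p)$, this gives $r\le R\le\kappa r$, hence $r^z\le R^z=d^{z/2}\zeta^{\beta z}$, and also $\Cost(X,Q)\le\kappa\Psi$ since $\Psi$ is a $\kappa$-factor approximation of $\Cost(X,Q)$.

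The key step is to exhibit a feasible competitor for the optimization that $S$ solves. Since $Q$ is produced by swapping a single center of an unconstrained constant-factor solution with $\{p\}$ (see \algref{alg:constrained:cost:approx}), it contains a center at $p$; set $Q_{\mathrm{far}}:=\{q\in Q:\dist(x,q)\ge r\}$. Then $Q_{\mathrm{far}}\subseteq Q$ has at most $k$ centers, contains $p$ (as $\dist(x,p)=r$), and has no center in the interior of $B_r(x)$, so it is a feasible solution of the problem defining $S$, whence $\Cost(X,S)\le\Cost(X,Q_{\mathrm{far}})$. Now fix $y\in X$ with $q:=\pi_Q(y)$: if $\dist(x,q)\ge r$ then $q\in Q_{\mathrm{far}}$ and $\dist(y,Q_{\mathrm{far}})^z\le\dist(y,q)^z=\Cost(y,Q)$; otherwise $\dist(x,q)<r$, and routing $y$ to $p\in Q_{\mathrm{far}}$ and applying the generalized triangle inequality (\factref{fact:triangle}) twice, first with midpoint $q$ and then with midpoint $x$, gives $\dist(y,Q_{\mathrm{far}})^z\le\dist(y,p)^z\le 2^{z-1}\Cost(y,Q)+2^{2z-1}r^z$. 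Summing over $y$,
\[\Cost(X,S)\le 2^{z-1}\Cost(X,Q)+2^{2z-1}\,r^z\cdot m,\]
where $m$ is the total weight of points whose $Q$-center lies within distance $r$ of $x$.

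It remains to bound $m\le 2n_{\beta}$, and here the definitions of $n_{\beta}$ and $n_{\alpha}$ do the work: every $y\in X$ has $\TreeDist(x,\pi_Q(y))=\sqrt{d}\cdot\zeta^{\gamma}$ for some integer $\gamma$, and either $\gamma\le\beta-1$ (so $y$ is counted in $n_{\beta}$) or $\gamma\ge\beta$ (so $y$ is counted in $n_{\alpha}$); hence $m\le n_{\beta}+n_{\alpha}\le 2n_{\beta}$ by the hypothesis $n_{\beta}\ge n_{\alpha}$. Substituting, and using $r^z\le d^{z/2}\zeta^{\beta z}$ and $\Cost(X,Q)\le\kappa\Psi$,
\[\Cost(X,S)\le 2^{z-1}\kappa\Psi+2^{2z}\,d^{z/2}\zeta^{\beta z}\,n_{\beta}\le\kappa^{\O{z}}\bigl(\Psi+n_{\beta}\cdot d^{z/2}\cdot\zeta^{\beta z}\bigr),\]
which together with \lemref{lem:filter:under} yields the two-sided bound, with $\gamma_1,\gamma_2$ the absolute constants hidden in the $\O{z}$ exponents.

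This argument is essentially a transcription of \lemref{lem:ball:approx}, with the nested balls $B_{r/2}(x)\subset B_r(x)$ replaced by quadtree cells and the distortion of $\TreeDist$ absorbed once through $r\le R\le\kappa r$, so I do not expect a deep obstacle. The two points that need care are: (i) the feasibility of the competitor $Q_{\mathrm{far}}$, which uses that the constant-factor solution $Q$ genuinely contains the guessed center $p$ --- true for the single-swap construction, and otherwise recovered by adjoining $p$ and discarding a center of $Q$ inside $B_r(x)$; and (ii) the bookkeeping that $n_{\beta}$ and $n_{\alpha}$ partition all of $X$ according to whether $\TreeDist(x,\cdot)\le\sqrt{d}\cdot\zeta^{\beta-1}$ or $\ge\sqrt{d}\cdot\zeta^{\beta}$, so that the mass $m$ of points served by centers inside $B_r(x)$ is never lost, and the hypothesis $n_{\beta}\ge n_{\alpha}$ is precisely what converts $m\le n_{\beta}+n_{\alpha}$ into a bound in terms of $n_{\beta}$ alone.
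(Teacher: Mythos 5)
Your proof is correct, and it reaches the same charging argument as the paper by a cleaner route. The paper partitions the points by the $\TreeDist$ bucket of their serving center ($n_\beta$ versus $n_\alpha$) and bounds $\Cost(y,S)$ directly in each bucket: for the near bucket it routes $y$ through $q$ to a center of $S$ at distance $O(\sqrt{d}\,\zeta^\beta)$ from $q$, and for the far bucket it performs a case analysis on $\|y-q\|_2\gtrless r/4$ to route $y$ to $p$, invoking $n_\beta\ge n_\alpha$ to convert the rerouted mass into $O(n_\beta)$. You instead exhibit the explicit feasible competitor $Q_{\mathrm{far}}=\{q\in Q:\dist(x,q)\ge r\}$, invoke the optimality of $S$ exactly once, and only reroute the points whose $Q$-center is Euclidean-close to $x$, bounding that mass crudely by $n_\beta+n_\alpha\le 2n_\beta$. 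What your organization buys is that points served by genuinely distant centers are never touched: in the paper's far-bucket case the inequality $\|y-p\|_2\le\|y-q\|_2+\|q-p\|_2\le 9\|y-q\|_2$ silently needs $\|q-p\|_2=O(\|y-q\|_2)$, which only holds for centers in the annulus where $\TreeDist$ overestimates, not for arbitrarily far $q$ (the radial analogue, \lemref{lem:ball:approx}, handles those by a separate third case appealing to the optimality of $S$, which is exactly the step your competitor makes explicit). The one point you flag yourself --- that $Q$ must actually contain $p$ for $Q_{\mathrm{far}}$ to be feasible --- is indeed needed, is true for the single-swap construction the paper uses, and is also implicitly assumed in the paper's own proof, so your fallback of adjoining $p$ is a fine safeguard. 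The upper bound via \lemref{lem:filter:under} is handled identically in both proofs.
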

\begin{proof}
We first show that there exists a constant $\gamma_1\ge 1$ such that $\Cost(X,S)\le\kappa^{\O{\gamma_1 z}}\left(\Psi+n_{\beta}\cdot d^{z/2}\cdot\zeta^{\beta z}\right)$. 
Observe that this would imply by \lemref{lem:filter:under} that there also exists a constant $\gamma_2\ge 1$ such that $\kappa^{\O{\gamma_1 z}}\left(\Psi+n_{\beta}\cdot d^{z/2}\cdot\zeta^{\beta z}\right)\le\kappa^{\O{\gamma_1\gamma_2 z}}\cdot\Cost(X,S)$. 

Consider a point $y$ served by a center $q\in Q$ with $\TreeDist(x,q)\le\sqrt{d}\cdot\zeta^{\beta-1}$. 
Let $\pi_S(y)$ be the closest center in $S$ to $y$, breaking ties arbitrarily. 
By generalized triangle inequality,
\[\Cost(y,S)\le2^z\cdot\|y-q\|_2^z+2^z\cdot\Cost(q,\pi_S(y)).\]
Summing across all $y$ served by centers $q$ with $\TreeDist(x,q)\le\sqrt{d}\cdot\zeta^{\beta-1}$ and noting that $\dist(s,q)\le 2\sqrt{d}\cdot\zeta^\beta$ conditioned on the correctness of $\TreeDist$, as well as  and noting that $n_\beta\ge n_\alpha$, we have
\[\sum_{y:\TreeDist(x,\pi_Q(y))\le\sqrt{d}\cdot\zeta^{\beta-1}}\Cost(y,S)\le 2^z\cdot n_{\beta}\cdot\left(2\sqrt{d}\cdot\zeta^\beta\right)^z+2^z\cdot\sum_{y:\TreeDist(x,\pi_Q(y))\le\sqrt{d}\cdot\zeta^{\beta-1}}\Cost(y,Q).\]

Next, consider a point $y$ served by a center $q\in Q$ with $\TreeDist(x,q)\ge\sqrt{d}\cdot\zeta^\beta$. 
We have that if $\|y-q\|_2\ge\frac{r}{4}$, then $\|y-p\|_2\le\|y-q\|_2+\|q-p\|_2\le 9\cdot\|y-q\|_2$. 
Else if $\|y-q\|_2<\frac{r}{4}$, then $\|y-p\|_2\le\|y-q\|_2+\|q-p\|_2<3r$. 
Summing across all $y$ served by centers $q\in Q$ with $\TreeDist(x,q)\ge\sqrt{d}\cdot\zeta^\beta$,
\[\sum_{y:\TreeDist(x,\pi_Q(y))\ge\sqrt{d}\cdot\zeta^\beta}\Cost(y,S)\le 12^z\cdot n_\beta\cdot\left(\frac{r}{2}\right)^z+9\cdot2^z\cdot\sum_{y:\TreeDist(x,\pi_Q(y))\ge\sqrt{d}\cdot\zeta^\beta}\Cost(y,Q).\]

Conditioned on the correctness of $\TreeDist$, we have that $\sqrt{d}\cdot\zeta^\beta$ is a $\kappa$-approximation to $r$. 
Thus summing the two cases across all $y\in X$, we have that there exists a constant $\zeta\ge 1$ such that
\[\Cost(X,S)\le\kappa^{\zeta z}\cdot\left(\Cost(X,Q)+n_\beta\cdot\left(\sqrt{d}\cdot\zeta^\beta\right)^z\right).\]
Since $\Phi$ is a $\kappa$-factor approximation to $\Cost(X,Q)$, then there exists a constant $\gamma_1$ such that
\[\Cost(X,S)\le\kappa^{\O{\gamma_1 z}}\left(\Psi+n_{\beta}\cdot d^{z/2}\cdot\zeta^{\beta z}\right).\]
Thus,
\[\Cost(X,S)\le\kappa^{\O{\gamma_1 z}}\left(\Psi+n_{\beta}\cdot d^{z/2}\cdot\zeta^{\beta z}\right)\le\kappa^{\O{\gamma_1\gamma_2 z}}\cdot\Cost(X,S).\]
\end{proof}
On the other hand, if the number of points served by centers that are first in the same cell as $x$ in a level of the quadtree before $\beta$ is small, then $\Psi+n_{\beta}\cdot d^{z/2}\cdot\zeta^{\beta z}$ may not be a good estimate to the cost of the optimal $(k,z)$-medoids clustering constrained to $x$ being served at a center that is first in the same cell as $x$ at level $\beta$ in the quadtree. 
However, we next show that it can nevertheless provide an upper bound on the $\frac{\Cost(x,S)}{\Cost(X,S)}$ for any $S$ where $x$ is served by a center that is firs tin the same cell as $x$ at level $\beta$ in the quadtree. 
Importantly, the quantity is at most the sensitivity $s(x)$ of $x$. 
In particular, the following lemma should be considered the quadtree analog to \lemref{lem:annulus:approx} for the radial search. 
\begin{lemma}
\lemlab{lem:filter:over}
Let $x\in X$ be a fixed point and $p\in X$. 
Suppose $\TreeDist$ preserves all pairwise distances in $X$ up to a factor of $\kappa$ and suppose $\TreeDist(p,x)=\sqrt{d}\cdot\zeta^\beta$. 
Let $s(x)$ be the sensitivity of $(k,z)$-medoids clustering with respect to $X$. 
Let $\kappa\ge 2$ be a parameter, so that $Q$ is a $\kappa$-factor approximation to the optimal $(k,z)$-medoids clustering on $X$ containing a center at $p$ and let $\Psi$ be is a $\kappa$-factor approximation to $\Cost(X,Q)$. 
Let $S$ be the optimal $(k,z)$-medoids clustering on $X$ containing a center at $p$ and no center in the interior of $B_r(x)$, where $r:=\|x-p\|_2$. 
Let $n_\beta$ be the weight of the points assigned to centers $q\in Q$ such that $\TreeDist(x,q)\le\sqrt{d}\cdot\zeta^{\beta-1}$ and let $n_\alpha$ be the weight of the points assigned to centers $q\in Q$ such that $\TreeDist(x,q)\ge\sqrt{d}\cdot\zeta^\beta$.  
If $n_b<n_a$, then 
\[\frac{1}{\kappa^{\O{\gamma_1 z}}}\frac{\Cost(x,S)}{\Cost(X,S)}\le\frac{d^{z/2}\cdot\zeta^{\beta z}}{\Psi+n_\beta\cdot d^{z/2}\cdot\zeta^{\beta z}}\le\kappa^{\O{\gamma_2 z}}\cdot s(x),\]
for fixed constants $\gamma_1,\gamma_2\ge 1$.
\end{lemma}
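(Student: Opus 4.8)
The plan is to mirror the proof of \lemref{lem:annulus:approx}, with the radial ball $B_r(x)$ replaced by the quadtree neighborhood consisting of centers lying in a cell of side length at most $\sqrt{d}\cdot\zeta^{\beta-1}$ around $x$. Throughout I would condition on $\TreeDist$ preserving all pairwise distances in $X$ up to a factor of $\kappa$, which (together with \lemref{lem:tree:contraction}) gives $\dist(p,x)\le\TreeDist(p,x)=\sqrt{d}\cdot\zeta^\beta\le\kappa\cdot\dist(p,x)$, i.e. $\tfrac{1}{\kappa}\sqrt{d}\cdot\zeta^\beta\le r\le\sqrt{d}\cdot\zeta^\beta$. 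For the \emph{left} inequality: since $S$ has a center at $p$ and no center in the interior of $B_r(x)$, we get $\dist(x,S)=r$ and hence $\Cost(x,S)=r^z\le d^{z/2}\cdot\zeta^{\beta z}$; combining this with the bound $\Psi+n_\beta\cdot d^{z/2}\cdot\zeta^{\beta z}\le\kappa^{\O{z}}\cdot\Cost(X,S)$ from \lemref{lem:filter:under} yields
\[\frac{\Cost(x,S)}{\Cost(X,S)}\le\frac{d^{z/2}\cdot\zeta^{\beta z}}{\Cost(X,S)}\le\frac{\kappa^{\O{z}}\cdot d^{z/2}\cdot\zeta^{\beta z}}{\Psi+n_\beta\cdot d^{z/2}\cdot\zeta^{\beta z}},\]
which is the claimed left-hand inequality after absorbing the exponent into $\gamma_1$.

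For the \emph{right} inequality, the real content is to exhibit a valid $(k,z)$-medoids configuration $W$ of $X$ whose cost is controlled by $\Psi+n_\beta\cdot d^{z/2}\cdot\zeta^{\beta z}$ and whose distance to $x$ is within a $\kappa$ factor of $\sqrt{d}\cdot\zeta^\beta$. I would take $W$ to be $Q$ with every center $q$ satisfying $\TreeDist(x,q)\le\sqrt{d}\cdot\zeta^{\beta-1}$ deleted, reassigning each point previously served by such a $q$ to $p$; note $p\in Q$ survives the deletion since $\TreeDist(x,p)=\sqrt{d}\cdot\zeta^\beta>\sqrt{d}\cdot\zeta^{\beta-1}$, and $W\subseteq X$ with $|W|\le k$. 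For a reassigned point $y$ with old center $q$, $\dist(y,p)\le\dist(y,q)+\dist(q,x)+\dist(x,p)\le\dist(y,q)+\sqrt{d}\cdot\zeta^{\beta-1}+\sqrt{d}\cdot\zeta^\beta\le\dist(y,q)+2\sqrt{d}\cdot\zeta^\beta$, so by \factref{fact:triangle} and summing over the weight-$n_\beta$ set of reassigned points (bounding the remaining points by their cost in $Q$) and using that $\Psi$ is a $\kappa$-approximation to $\Cost(X,Q)$,
\[\Cost(X,W)\le\O{2^z}\left(\Cost(X,Q)+n_\beta\cdot d^{z/2}\cdot\zeta^{\beta z}\right)\le\kappa^{\O{z}}\left(\Psi+n_\beta\cdot d^{z/2}\cdot\zeta^{\beta z}\right).\]
On the other hand, every center of $W$ is either $p$, at distance $r\ge\tfrac{1}{\kappa}\sqrt{d}\cdot\zeta^\beta$ from $x$, or a surviving center $q$ with $\dist(x,q)\ge\tfrac{1}{\kappa}\TreeDist(x,q)\ge\tfrac{1}{\kappa}\sqrt{d}\cdot\zeta^\beta$, so $\Cost(x,W)=\dist(x,W)^z\ge\kappa^{-z}\cdot d^{z/2}\cdot\zeta^{\beta z}$. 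Combining these with the maximality characterization of sensitivity ($\tfrac{\Cost(x,W)}{\Cost(X,W)}\le s(x)$, valid since $W$ is a set of at most $k$ medoids of $X$) gives
\[\frac{d^{z/2}\cdot\zeta^{\beta z}}{\Psi+n_\beta\cdot d^{z/2}\cdot\zeta^{\beta z}}\le\frac{\kappa^{\O{z}}\cdot d^{z/2}\cdot\zeta^{\beta z}}{\Cost(X,W)}\le\kappa^{\O{z}}\cdot\frac{\Cost(x,W)}{\Cost(X,W)}\le\kappa^{\O{z}}\cdot s(x),\]
which is the right-hand inequality after absorbing the exponent into $\gamma_2$.

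I expect the main obstacle to be exactly this construction of $W$ and the careful bookkeeping of the $\TreeDist$-versus-$\dist$ conversions: one must ensure the reassignment target $p$ survives the deletion, that a reassigned point's extra cost is genuinely $\O{n_\beta\cdot d^{z/2}\cdot\zeta^{\beta z}}$ and does not depend on where inside its cell the deleted center $q$ sat, and that $\dist(x,W)$ is pinned to $\sqrt{d}\cdot\zeta^\beta$ within a $\kappa$ factor on \emph{both} sides — the upper side from $p\in W$ together with $r\le\sqrt{d}\cdot\zeta^\beta$, the lower side from the $\tfrac{1}{\kappa}$ contraction of $\TreeDist$. As in \lemref{lem:annulus:approx}, the hypothesis $n_\beta<n_\alpha$ is not needed for the argument itself; it simply marks the regime complementary to \lemref{lem:small:cell:approx}.
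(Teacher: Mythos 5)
Your proof is correct and follows essentially the same route as the paper: the left inequality via \lemref{lem:filter:under} together with $\Cost(x,S)=r^z$ and the $\kappa$-approximation of $\TreeDist$, and the right inequality by exhibiting a feasible medoid set $W$ containing $p$ with no center $\TreeDist$-close to $x$ and invoking the maximality of $s(x)$. The only difference is that you explicitly construct $W$ (deleting from $Q$ the centers counted by $n_\beta$ and reassigning to $p$) and verify its cost bound, whereas the paper merely asserts the existence of such a $W$; your observation that the hypothesis $n_\beta<n_\alpha$ is not actually used also matches the paper's argument.
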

\begin{proof} 
By \lemref{lem:filter:under}, there exists a constant $\gamma\ge 1$ such that
\[\Psi+n_{\beta}\cdot d^{z/2}\cdot\zeta^{\beta z}\le\kappa^{\gamma z}\cdot\Cost(X,S).\]
Since $S$ contains a center at $p$ and no centers in the interior of $B_r(x)$, then $\Cost(x,S)=r^z$. 
Then conditioned on the correctness of $\TreeDist$ giving $d^{z/2}\cdot\zeta^{\beta z}$ as a $\kappa^z$-approximation to $r^z$, we have that
\[\frac{1}{\kappa^{\O{\gamma_1 z}}}\frac{\Cost(x,S)}{\Cost(X,S)}\le\frac{r^z}{\Psi+n_\beta\cdot d^{z/2}\cdot\zeta^{\beta z}}.\]
However, for a set $W$ of $k$ centers that contains a center at $p$ but no center $q$ with $\TreeDist(x,q)<\sqrt{d}\cdot\zeta^{\beta-1}$, then we have $\frac{\sqrt{d}\cdot\zeta^{\beta-1}}{\kappa}\le\dist(x,W)$ and $\Cost(X,W)\le\kappa^{\gamma_2 z}(\Psi+n_{\beta}\cdot d^{z/2}\cdot\zeta^{\beta z})$ for some constant $\gamma_2\ge 1$.  
Thus by the maximality of sensitivity, we have
\[\frac{d^{z/2}\cdot\zeta^{\beta z}}{\Psi+n_b\cdot r^z}\le \kappa^{\gamma_2 z}\cdot\frac{\Cost(x,W)}{\Cost(X,w)}\le\kappa^{\gamma_2 z}\cdot s(x).\]
\end{proof}
Hence if $\TreeDist$ provides a $\kappa$-approximation to the pairwise distance of all points $x\in X$, then the algorithm in $\RoughSens$ provides a $\kappa^{\O{1}}$-factor approximations to all sensitivities $x$ in the batch $B$. 
\begin{lemma}
\lemlab{lem:rough:sens:approx}
For a fixed $x\in X$, let $s(x)$ be the sensitivity of $(k,z)$-medoids clustering with respect to $X$. 
Suppose $\TreeDist$ preserves all pairwise distances in $X$ up to a factor of $\kappa$. 
Let $\widehat{s(x)}$ be the output of \algref{alg:rough:sens}, for each $x\in B$, where $B$ is the input batch.  
Then there exist constants $\gamma_1,\gamma_2\ge 1$ such that 
\[\frac{1}{\kappa^{\gamma_1 z}}\cdot s(x)\le\widehat{s(x)}\le \kappa^{\gamma_2 z}\cdot s(x).\]
\end{lemma}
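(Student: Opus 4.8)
The plan is to transcribe the proof of \lemref{lem:batch:sens:approx} almost verbatim, replacing the radial search over candidate serving centers by the search over quadtree levels, and substituting the quadtree analogues \lemref{lem:filter:under}, \lemref{lem:small:cell:approx}, and \lemref{lem:filter:over} for \lemref{lem:sens:est:under}, \lemref{lem:ball:approx}, and \lemref{lem:annulus:approx}. First I would reduce to $(k,z)$-medoids sensitivities on a small set: since $Z$ is a constant-factor coreset of the already-arrived suffix, $Z\cup B$ is a constant-factor coreset of $X$, so by \lemref{lem:sens:allset:coreset} it suffices to approximate the sensitivity of $x$ with respect to $Z\cup B$, and by \thmref{thm:sens:medoids} it suffices to approximate the $(k,z)$-medoids sensitivity $s(x)=\max_{C\subset Z\cup B,\,|C|\le k}\frac{\Cost(x,C)}{\Cost(Z\cup B,C)}$, at the cost of a $2^{\O{z}}$ factor that folds into the $\kappa^{\O{z}}$ bound.

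Next, fix $x\in B$ and a candidate point $p\in Z\cup B$, let $\beta$ be the level with $\TreeDist(p,x)=\sqrt{d}\cdot\zeta^\beta$, set $r=\|x-p\|_2$, and let $S$ be the optimal medoids clustering with a center at $p$ and no center in the interior of $B_r(x)$. I would split on whether $n_\beta\ge n_\alpha$ or $n_\beta<n_\alpha$, where $n_\beta$ is the weight assigned by $Q$ to centers $q$ with $\TreeDist(x,q)\le\sqrt{d}\cdot\zeta^{\beta-1}$ and $n_\alpha$ is the weight assigned to centers $q$ with $\TreeDist(x,q)\ge\sqrt{d}\cdot\zeta^\beta$. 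In the first case, \lemref{lem:small:cell:approx} gives $\Cost(X,S)\le\kappa^{\O{z}}(\Psi+n_\beta\cdot d^{z/2}\zeta^{\beta z})\le\kappa^{\O{z}}\cdot\Cost(X,S)$; since $\Cost(x,S)=r^z$ and $\sqrt{d}\cdot\zeta^\beta$ is within a $\kappa$ factor of $r$ by the correctness of $\TreeDist$, the estimate $\frac{d^{z/2}\zeta^{\beta z}}{\Psi+n_\beta\cdot d^{z/2}\zeta^{\beta z}}$ is a $\kappa^{\O{z}}$-approximation to $\frac{\Cost(x,S)}{\Cost(X,S)}$. In the second case, \lemref{lem:filter:over} directly yields $\frac{1}{\kappa^{\O{z}}}\frac{\Cost(x,S)}{\Cost(X,S)}\le\frac{d^{z/2}\zeta^{\beta z}}{\Psi+n_\beta\cdot d^{z/2}\zeta^{\beta z}}\le\kappa^{\O{z}}\cdot s(x)$, so in both cases the per-candidate estimate lies between $\frac{1}{\kappa^{\O{z}}}\cdot\frac{\Cost(x,S)}{\Cost(X,S)}$ and $\kappa^{\O{z}}\cdot\frac{\Cost(x,W)}{\Cost(X,W)}$ for a valid $k$-set $W$ (either $S$, or the witness set of \lemref{lem:filter:over}), hence at most $\kappa^{\O{z}}s(x)$ by maximality of sensitivity.

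I would then assemble the bounds over all candidates. The upper bound $\widehat{s(x)}\le\kappa^{\gamma_2 z}s(x)$ follows because $\widehat{s(x)}$ is the max of per-candidate estimates, each at most $\kappa^{\O{z}}s(x)$. For the lower bound, let $C^\star$ be a $k$-set of medoids realizing $s(x)$, let $c^\star\in C^\star$ be closest to $x$, and let $\beta^\star$ be the first level at which $c^\star$ shares a cell with $x$; by \lemref{lem:tree:contraction} and the conditioning event of \lemref{lem:tree:dilation}, $\sqrt{d}\cdot\zeta^{\beta^\star}$ is within a $\kappa$ factor of $\|x-c^\star\|_2$, so pushing $c^\star$ out to the cell boundary changes $\Cost(x,C^\star)$ and $\Cost(X,C^\star)$ by at most $\kappa^{\O{z}}$ and produces a feasible clustering for the constraint defining $S$ at level $\beta^\star$; hence $\frac{\Cost(x,S)}{\Cost(X,S)}\ge\frac{1}{\kappa^{\O{z}}}s(x)$, and the case analysis shows the estimate at level $\beta^\star$ is at least $\frac{1}{\kappa^{\O{z}}}\cdot\frac{\Cost(x,S)}{\Cost(X,S)}\ge\frac{1}{\kappa^{\gamma_1 z}}s(x)$. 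Taking the max over candidates only helps, completing the claim.

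The main obstacle I expect is precisely this last bookkeeping step: showing that restricting to ``no center strictly inside the level-$(\beta^\star-1)$ cell of $x$'' loses only a $\kappa^{\O{z}}$ factor relative to the true sensitivity. This requires carefully combining the no-underestimate direction (\lemref{lem:tree:contraction}), the bounded-overestimate direction (\lemref{lem:tree:dilation}, which is where the conditioning event in the hypothesis is used), and a generalized-triangle-inequality argument (\factref{fact:triangle}) bounding how much $\Cost(x,C^\star)$ and $\Cost(X,C^\star)$ move when $c^\star$ is relocated to the cell boundary. Everything else is a routine rewrite of the radial argument of \lemref{lem:batch:sens:approx}, with powers of $2$ replaced throughout by powers of $\kappa$.
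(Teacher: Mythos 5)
Your proposal is correct and follows essentially the same route as the paper: fix the level $\beta$ at which the candidate serving center first separates from $x$, split on $n_\beta\ge n_\alpha$ versus $n_\beta<n_\alpha$, and apply \lemref{lem:small:cell:approx} and \lemref{lem:filter:over} respectively, exactly as the paper does (which in turn mirrors \lemref{lem:batch:sens:approx}). Two small remarks on where you diverge in presentation. First, your opening reduction to $(k,z)$-medoids sensitivities on $Z\cup B$ via \lemref{lem:sens:allset:coreset} and \thmref{thm:sens:medoids} is content the paper places in \lemref{lem:batch:filter}, not in this lemma; here the sensitivity $s(x)$ is already taken with respect to the input set of the algorithm, so the reduction is redundant inside this proof (though harmless). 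Second, for the lower bound you propose ``pushing $c^\star$ to the cell boundary,'' but this extra step is not needed: the constrained set $S$ in \lemref{lem:filter:over} is defined by the radial ball $B_r(x)$ with $r=\|x-p\|_2$, not by a quadtree cell, so when $p=c^\star$ is the closest center of the sensitivity-realizing set $C^\star$, $C^\star$ already satisfies the constraint exactly and $\frac{\Cost(x,S)}{\Cost(X,S)}=s(x)$; the only $\kappa^{\O{z}}$ losses come from the $\TreeDist$ estimate of $r$ and the $\kappa$-approximations of $Q$ and $\Psi$, which is precisely what the two supporting lemmas already account for. Your version is a slightly more conservative (and still valid) reading of the constraint, so the proof goes through either way.
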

\begin{proof}
For each fixed $p\in X$, let $S$ be the optimal $(k,z)$-medoids clustering on $X$ containing a center at $p$ and no center in the interior of $B_r(x)$. 
Let $n_\beta$ be the weight of the points assigned to centers $q\in Q$ such that $\TreeDist(x,q)\le\sqrt{d}\cdot\zeta^{\beta-1}$ and let $n_\alpha$ be the weight of the points assigned to centers $q\in Q$ such that $\TreeDist(x,q)\ge\sqrt{d}\cdot\zeta^\beta$. 
We perform casework on whether $n_\beta<n_\alpha$ or $n_\beta\ge n_\alpha$. 

In the case where $n_\beta\ge n_\beta$, then by \lemref{lem:small:cell:approx}, 
\[\Cost(X,S)\le\kappa^{\O{\gamma_1 z}}\left(\Psi+n_{\beta}\cdot d^{z/2}\cdot\zeta^{\beta z}\right)\le\kappa^{\O{\gamma_1\gamma_2 z}}\cdot\Cost(X,S).\]
Since $\Cost(x,S)=r^z$, which is a $\kappa^z$-approximation to $(d^{z/2}\cdot\zeta^{\beta z})$ conditioned on the correctness of $\TreeDist$, then it holds that $\frac{d^{z/2}\cdot\zeta^{\beta z}}{\Psi+n_{\beta}\cdot d^{z/2}\cdot\zeta^{\beta z}}$ is a $\kappa^{\O{z}}$ approximation to $\frac{\Cost(x,S)}{\Cost(X,S)}$. 

Otherwise if $n_\beta<n_\alpha$, then by \lemref{lem:filter:over}, 
\[\frac{1}{\kappa^{\O{\gamma_1 z}}}\frac{\Cost(x,S)}{\Cost(X,S)}\le\frac{d^{z/2}\cdot\zeta^{\beta z}}{\Psi+n_\beta\cdot d^{z/2}\cdot\zeta^{\beta z}}\le\kappa^{\O{\gamma_2 z}}\cdot s(x),\]
for fixed constants $\gamma_1,\gamma_2\ge 1$. 
In particular, for the center $p\in X$ that realizes the sensitivity $s(x)$, we have that $\frac{d^{z/2}\cdot\zeta^{\beta z}}{\Psi+n_{\beta}\cdot d^{z/2}\cdot\zeta^{\beta z}}$ is a $\kappa^{\O{z}}$ approximation to $\frac{\Cost(x,S)}{\Cost(X,S)}$. 
Furthermore, for centers $q\in X$ that do not realize the sensitivity, we have that $\frac{d^{z/2}\cdot\zeta^{\beta z}}{\Psi+n_\beta\cdot d^{z/2}\cdot\zeta^{\beta z}}\le\kappa^{\O{\gamma_2 z}}\cdot s(x)$. 
Thus, 
\[\frac{1}{\kappa^{\gamma_1 z}}\cdot s(x)\le\widehat{s(x)}\le\kappa^{\gamma_2 z}\cdot s(x).\]
\end{proof}

\begin{algorithm}[!htb]
\caption{$\RoughSens$: Rough batch approximation of sensitivities}
\alglab{alg:rough:sens}
\begin{algorithmic}[1]
\Require{Input set $Z\subset[\Delta]^d$ for $(k,z)$-medoids clustering, batch $B$ of $k$ query points, approximation parameter $\alpha\in(0,1)$}
\Ensure{$n^\alpha$ approximation for the sensitivity of $x$ for $(k,z)$-medoids clustering on $Z\cup B$, for all $x\in B$}
\State{$\kappa\gets n^{\alpha}$, $\zeta\gets n^{\iota}$ for sufficiently small $\iota<\alpha\in(0,1)$}
\State{Compute $S$ to be a constant-factor solution on $Z\cup B$ for $(k,z)$-medoids clustering}
\For{$x\in B$}
\State{$\widehat{s(x)}\gets 0$}
\State{Build a tree-embedding on $Z\cup B$}
\While{there is a point of $Z\cup B$ within $\frac{1}{\kappa}$-fraction of the distance to a cell in the tree embedding}
\State{Create a tree-embedding on $Z\cup B$}
\EndWhile
\For{each level $\ell_\beta$ in the tree}
\State{Let $Q$ be a $\kappa$-factor approximation $Q$ of the optimal $(k,z)$-medoids clustering on $X$ initially separated from $x$ at level $\ell_\beta$}
\State{Use $\TreeDist$ to compute a constant-factor approximation $\Psi$ of $\Cost(X,Q)$}
\State{Let $n_\beta$ be the number of points served by centers of $Q$ in the same cell as $x$ before $\ell_\beta$}
\State{$\widehat{s(x)}\gets\max\left(\widehat{s(x)},\frac{d^{z/2}\cdot\zeta^{\beta z}}{\Psi+n_\beta\cdot(d^{z/2}\cdot\zeta^{\beta z}}\right)$}
\EndFor
\State{\Return $\widehat{s(x)}$}
\EndFor
\end{algorithmic}
\end{algorithm}
We now analyze the properties of $\RoughSens$. 
Since correctness is mostly given by \lemref{lem:rough:sens:approx}, the main consideration is the amortized update time. 
\begin{lemma}
\lemlab{lem:batch:filter}
Given any constant $\alpha\in(0,1)$ and a constant-factor coreset $Z$ of size $k\cdot\polylog(\log n)$ to $X\subset[\Delta]^d$, there exists an algorithm $\RoughSens$ that outputs $n^{\alpha}$-approximations to the sensitivities to each of the points $x\in B$ in a batch $B$ of $k$ points of $X$, using amortized $d\log(k)\cdot\polylog(\log(n\Delta))$ time, with probability at least $1-\frac{1}{\sqrt{n}}$.
\end{lemma}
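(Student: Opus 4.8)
The plan is to establish the approximation guarantee and the amortized runtime separately, collecting all randomness into the tree-embedding resampling (together with the expected-time guarantee of \thmref{thm:fast:kz}).

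\emph{Correctness.} Choose $\kappa=n^{\alpha'}$ and $\zeta=n^{\iota}$ for sufficiently small constants $\alpha',\iota$ depending on $\alpha$ and the constant $z$, so that the number of levels is $\ell=O(1/\iota)=O(1)$ (using $\Delta=\poly(n)$) and $\kappa^{O(z)}\le n^{\alpha}$. A single tree embedding on $Z\cup B$ is resampled until the event $\calE$ of \lemref{lem:tree:dilation} holds; conditioned on $\calE$, \lemref{lem:tree:contraction} and \lemref{lem:tree:dilation} give that $\TreeDist$ distorts every pairwise distance of $Z\cup B$ by at most $\kappa':=\kappa\sqrt{d}\,\zeta$. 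The hypotheses of \lemref{lem:rough:sens:approx} --- that, per level $\beta$, the set $Q$ is a $\kappa$-approximate constrained $(k,z)$-medoids clustering and $\Psi$ a $\kappa$-approximation of $\Cost(Z\cup B,Q)$ --- are met by first computing a coarse medoids solution $S$ on $Z\cup B$ via \thmref{thm:fast:kz} (snapping its centers to points of $Z\cup B$, which costs only a $2^{O(z)}$ factor), then forming $Q$ and its cost estimate $\Psi=\Cost(Z\cup B,S)+\psi$ by the best single swap with a representative of $x$'s level-$\beta$ cell, exactly as in \algref{alg:constrained:cost:approx} together with \lemref{lem:most:centers:ok}. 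Thus \lemref{lem:rough:sens:approx} certifies that $\widehat{s(x)}$ is a $(\kappa')^{O(z)}\le n^{\alpha}$-approximation of the $(k,z)$-medoids sensitivity of $x$ with respect to $Z\cup B$; composing with \thmref{thm:sens:medoids} and \lemref{lem:sens:allset:coreset}, and using that $Z\cup B$ is a constant-factor coreset of $X$, turns this into an $n^{\alpha}$-approximation of the $(k,z)$-clustering sensitivity of $x$ with respect to $X$.

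\emph{Runtime.} I amortize over the $k$ points of $B$. The call to \thmref{thm:fast:kz} on the $|Z\cup B|=k\cdot\polylog(\log n)$ points of $Z\cup B$ takes $O(|Z\cup B|\,d\log(|Z\cup B|\,d))=k\cdot d\log(k)\cdot\polylog(\log(n\Delta))$ time and also produces the (approximate) point-to-center assignments, hence the served weights $n_s$ for $s\in S$. I then precompute, for each of the $O(1/\iota)$ levels, a bucketing of the centers $s\in S$ by the cell they occupy, accumulating the total served weight per cell, and for each $s$ a quadtree-approximate nearest other center with key $n_s d_s^{z}$ maintained in a heap; all this is $k\cdot d\cdot\polylog(\log(n\Delta))$ amortized. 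Each resampling of the random shift costs $O(|Z\cup B|\,d)$, and by \lemref{lem:tree:dilation} a single draw satisfies $\calE$ with probability at least $1-\frac{d\,|Z\cup B|\,O(\ell)}{\kappa}$, which for a suitable constant $\alpha'$ is $1-n^{-\Omega(1)}$ in the parameter regime of this work (where $|Z\cup B|=k\,\polylog(\log n)$ and $d$ are $n^{o(1)}$); hence $O(1)$ resamplings suffice and contribute only $k\cdot d\cdot\polylog(\log(n\Delta))$. Finally, for each $x\in B$ and each of the $O(1/\iota)$ levels $\beta$, a quadtree lookup returns $x$'s level-$\beta$ cell and a representative $p_{\beta}$; updating the heap for the single center affected by adjoining $p_{\beta}$ (justified by \lemref{lem:most:centers:ok}) and extracting the minimum yields $Q$ and $\Psi$ in $O(d+\log|Z\cup B|)$ time, the per-cell buckets give $n_{\beta}$ in $O(1)$, and $\widehat{s(x)}$ updates in $O(1)$; this phase totals $O(k(d+\log k))$, i.e., $d\log(k)\cdot\polylog(\log(n\Delta))$ amortized. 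Summing the pieces, the amortized update time is $d\log(k)\cdot\polylog(\log(n\Delta))$.

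The overall failure probability is the union of \thmref{thm:fast:kz} exceeding a constant multiple of its expected runtime (run it under a time budget and repeat $O(\log n)$ times, each attempt succeeding with constant probability, so total failure $n^{-\Omega(1)}$), all $O(1)$ resamplings missing $\calE$, and any Johnson--Lindenstrauss step invoked inside \thmref{thm:fast:kz}; each is driven below $\frac{1}{2\sqrt n}$, so the guarantees hold with probability at least $1-\frac{1}{\sqrt n}$. The main obstacle is the runtime rather than the correctness: the radial-search routine behind \lemref{lem:batch:sens} pays $\Theta(|Z\cup B|^2/k)=\Theta(k\,\polylog(\log n))$ amortized because it scans every candidate serving center, and the point of the quadtree is to replace that scan by the $O(1/\iota)$ levels, to obtain $S$ and the implicit assignments via \thmref{thm:fast:kz}, to precompute per-cell served weights so that each $n_{\beta}$ is read off in $O(1)$, and to keep the values $\{n_s d_s^{z}\}$ in a heap so that each swap-and-recost is logarithmic; once these are in place the approximation guarantee is inherited verbatim from \lemref{lem:rough:sens:approx}.
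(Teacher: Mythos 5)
Your proposal follows essentially the same approach as the paper: correctness via the quadtree contraction/dilation bounds (\lemref{lem:tree:contraction}, \lemref{lem:tree:dilation}), resampling the shift until $\calE$ holds, invoking \lemref{lem:rough:sens:approx} and composing with \thmref{thm:sens:medoids} and \lemref{lem:sens:allset:coreset}; runtime via \thmref{thm:fast:kz} amortized over the batch, plus $O(1/\iota)$ levels with per-cell preprocessing. Your write-up adds useful implementation detail that the paper glosses over --- the heap on the keys $n_s d_s^z$, the per-cell bucketing, the quadtree-approximate nearest-center lookup, the explicit snapping of the \thmref{thm:fast:kz} output to medoids, and a more careful union bound on the $1-\tfrac{1}{\sqrt n}$ claim (which the paper never actually derives). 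One minor point worth being explicit about: the step $\log(|Z\cup B|\,d)=\log(k)\cdot\polylog(\log(n\Delta))$ implicitly needs $\log d=\polylog(\log n)$; the paper handles this by first writing the bound as $d\log(kd)\cdot\polylog(\log(n\Delta))$ and then invoking the JL reduction to $d=\O{\log n}$, and you should do the same rather than absorbing $\log d$ into $\polylog(\log n)$ silently.
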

\begin{proof}
By \lemref{lem:sens:allset:coreset}, we can compute the sensitivity of $x\in Z\cup B$ as a good approximation to computing the $x\in B$ with respect to $X$, since $Z$ is a constant-factor coreset to $Y$ so that $Z\cup B$ is a constant-factor coreset to $X=Y\cup B$. 
Moreover, it suffices to compute a constant-factor approximation to the $(k,z)$-medoids clustering sensitivity of $x$ with respect to $Z\cup B$, by \thmref{thm:sens:medoids}. 
Therefore, the goal is to compute a crude approximation to $\max_{C\subset Z\cup B:|C|\le k}\frac{\Cost(x,C)}{\Cost(X,C)}$. 
Let $\calE$ be the event that $\TreeDist$ approximates all pairwise distances within a factor of $\frac{1}{\poly(d)}\cdot n^{\frac{1}{100p}}$. 
Then conditioned on $\calE$, correctness follows by \lemref{lem:rough:sens:approx}. 
We remark that the algorithm iterates until $\calE$ occurs, so the dependency on $\calE$ is not in the correctness, but rather than runtime. 

Thus, we now analyze the amortized runtime of $\RoughSens$. 
By \thmref{thm:fast:kz}, we use $kd\log(kd)\cdot\polylog(\log(n\Delta))$ time to find a $\kappa$-approximation $S$ of $k$ centers that serve as a constant-factor approximation to $Z\cup B$. 
For $\Delta=\poly(n)$, the tree has height $\O{\frac{1}{\iota}}$ and thus by pre-processing the number of points in each cell, the total time form $Q$ across all levels is $\O{\frac{1}{\iota}}$. 
Similarly, we can approximate $\Psi$ up to a $\kappa$-factor, along with the corresponding center $\psi_\beta$ to be removed from $S$ that is consistent with $\Psi$. 

It remains to pre-process $(Z\cup B)$ and $S$ to efficiently compute the number of centers $n_{\beta}$ assigned to centers of $q\in Q$ in such that $\TreeDist(x,q)<\sqrt{d}\cdot\zeta^{\beta}$. 
Again, this is handled by pre-processing all the points $y\in(Z\cup B)$ in the quadtree by pre-computing the number of points in each cell at each level, which takes total time $k\cdot\polylog(\log n)$ due to the size of $X$. 
Then by starting at $\beta=0$ and iterating to $\beta=\O{\frac{1}{\iota}}$, we proceed up the tree to compute each $n_\beta$, using $\O{\frac{1}{\iota}}$ time in total. 

Therefore, the total time to compute $S$ and perform additional pre-processing is $kd\log(kd)\cdot\polylog(\log(n\Delta))$. 
Across all possible levels $\beta$, we use $\O{\frac{1}{\iota}}$ time to update the pre-processing information to obtain $\Psi$ and compute each value of $n_\beta$. 
Hence, the amortized runtime is $d\log(kd)\cdot\polylog(\log(n\Delta))$. 
Finally, we remark that if $d\ge\O{\log n}$, then we can apply the standard Johnson-Lindenstrauss dimensionality reduction technique to achieve dimension $\O{\log n}$, and thus the resulting amortized runtime is $d\log(k)\cdot\polylog(\log(n\Delta))$. 
\end{proof}

Putting everything together, we have:
\begin{restatable}{theorem}{thmcoresetmain}
\thmlab{thm:coreset:main}
Given a set $X$ of $n$ points on $[\Delta]^d$, let $f\left(n,d,\Delta,k,\frac{1}{\eps},z\right)$ be the number of points of a coreset construction with weights $[1,\poly(nd\Delta)]$ for $(k,z)$-clustering. 
There exists an algorithm that uses $d\log(k)\cdot\polylog(\log(n\Delta))$ amortized update time and $\O{dk\log(n\Delta)}+f\left(n,d,\Delta,k,\frac{\polylog(\log(nd\Delta))}{\eps},z\right)\cdot\polylog\left(\frac{1}{\eps},\log(nd\Delta)\right)$ bits of space, and outputs a $(1+\eps)$-strong coreset of $X$.
\end{restatable}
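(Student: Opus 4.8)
The plan is to combine the space-optimal construction of \thmref{thm:cluster:k:time} with the fast rough sensitivity estimation of \lemref{lem:batch:filter}, following the pipeline of \figref{fig:outline}: the crude sampling layer of step (1) is implemented by $\RoughSens$, and steps (2)--(4) are exactly the algorithm underlying \thmref{thm:cluster:k:time} run on the resulting short stream. Concretely, partition the insertion-only stream $\calS$ into consecutive batches $B_1, B_2, \ldots$ of $k$ points each, and maintain inductively a $(1+\eps)$-strong coreset $Z$ of the prefix processed so far, which in particular is a constant-factor coreset of size $k\cdot\polylog(\log(n\Delta))$. When $B_b$ arrives, invoke $\RoughSens$ on $(Z, B_b)$ with a constant $\alpha\in(0,1)$ to obtain, for every $x\in B_b$, an $n^\alpha$-approximation $\widehat{\tau(x)}$ of the $(k,z)$-medoids sensitivity of $x$ with respect to $Z\cup B_b$. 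By \thmref{thm:sens:medoids}, $\widehat{\tau(x)}$ is an $n^\alpha\cdot 2^{\O{z}}$-approximation of the $(k,z)$-clustering sensitivity of $x$ with respect to $Z\cup B_b$, and by \lemref{lem:sens:allset:coreset} (using that $Z$ is a constant-factor coreset of the prefix) it is also a crude over-estimate, up to constants, of the true online sensitivity $\sigma(x)$ with respect to the prefix. Sample each $x\in B_b$ independently with probability $p(x)=\min\left(1,\lambda\cdot\widehat{\tau(x)}\right)$ for $\lambda=\O{\frac{dk}{\eps^2}\log\frac{n\Delta}{\eps}}$, reweight a surviving point by $1/p(x)$, and append it to a weighted sub-stream $\calS'$.

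Next I would argue correctness and bound the length of $\calS'$. Since $\widehat{\tau(x)}$ dominates $\sigma(x)$ up to constants, the sampling probabilities exceed the threshold $\min(1,\gamma\cdot\sigma(x))$ required by \thmref{thm:online:sens:cluster}, so with high probability $\calS'$ is a $(1+\eps)$-strong coreset of every prefix $X_t$; the dependence of $p(x)$ on the previously admitted samples through $Z$ is handled by the standard martingale/coupling argument already used in \thmref{thm:online:sens:cluster}. For the length, \thmref{thm:total:online:sens} bounds the total true online sensitivity by $\O{2^{2z}k\log^2(nd\Delta)}$, so the expected number of sampled points is at most $n^\alpha\cdot 2^{\O{z}}\cdot\lambda\cdot\O{k\log^2(nd\Delta)}=n^\alpha\cdot\poly\left(k,d,\frac{1}{\eps},\log(n\Delta)\right)$; taking $\alpha$ small enough relative to these parameters and applying a Chernoff bound together with a union bound over the $n$ prefixes gives $|\calS'|=n^{1-\Omega(1)}$ with high probability.

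Then I would feed $\calS'$ as the input stream to the algorithm of \thmref{thm:cluster:k:time} (cf.\ \algref{alg:fast:kz}, with $\RoughSens$ in place of $\BatchSens$ as the crude sampler and $\BatchSens$ retained for the refined step inside \thmref{thm:cluster:k:time}), using the running merge-and-reduce coreset it maintains as the $Z$ above. By \thmref{thm:cluster:k:time} this output is a $(1+\eps)$-strong coreset of $\calS'$, hence a $(1+\eps)^2$-strong coreset of $X$, which after rescaling $\eps$ yields the claimed $(1+\eps)$-strong coreset; its space is $\O{dk\log(n\Delta)}+f\left(n,d,\Delta,k,\frac{\polylog(\log(nd\Delta))}{\eps},z\right)\cdot\polylog\left(\frac{1}{\eps},\log(nd\Delta)\right)$ bits, to which we add only the $\O{dk\log(n\Delta)}$ bits for the global constant-factor approximation consumed by $\RoughSens$ and the $\O{d\log n}$ bits for the Johnson--Lindenstrauss images, both absorbed into the stated bound. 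For the amortized update time: by \lemref{lem:batch:filter} the $\RoughSens$ calls cost $d\log(k)\cdot\polylog(\log(n\Delta))$ amortized per point of $\calS$; the \thmref{thm:cluster:k:time} algorithm costs $dk\cdot\polylog(\log(nd\Delta))$ amortized per point of $\calS'$, so its total cost $n^{1-\Omega(1)}\cdot dk\cdot\polylog(\log(nd\Delta))$ divided over the $n$ points of $\calS$ is $n^{-\Omega(1)}\cdot dk\cdot\polylog(\log(nd\Delta))=o\left(d\log(k)\cdot\polylog(\log(n\Delta))\right)$, i.e.\ the crude-sampling step folds the slower refined construction into a lower-order term.

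The main obstacle I expect is the second paragraph: establishing $|\calS'|=n^{1-\Omega(1)}$ with high probability \emph{simultaneously at every time}, which is delicate because the sampling probability of $x_t$ is itself a random variable depending on the previously admitted samples via $Z$. The clean route is to piggyback on the martingale argument of \thmref{thm:online:sens:cluster}: first condition on the event that $\widehat{\tau(\cdot)}$ is a valid crude over-estimate of the online sensitivities at every step (which holds with high probability because $Z$ stays a constant-factor coreset), and then bound the admitted cardinality against $n^\alpha$ times the deterministic total-online-sensitivity bound of \thmref{thm:total:online:sens}. A secondary, more routine point is checking that composing the two coresets --- the crude $\calS'$ and the refined $Z$ --- with the internal $\polylog(\log(nd\Delta))$ distortion of \thmref{thm:cluster:k:time} still leaves $(1+\eps)$ after a single rescaling of $\eps$.
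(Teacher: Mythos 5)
Your proposal follows the paper's approach closely: partition the stream into batches of $k$ points, maintain a running coreset $Z$ by induction, use $\RoughSens$ (\lemref{lem:batch:filter}) for a crude sensitivity filter that shrinks the stream to $\O{n^{1-\Omega(1)}}$ points, then feed that to \algref{alg:fast:kz} so that the $dk\cdot\polylog(\log(n\Delta))$ time of \thmref{thm:cluster:k:time} amortizes to a lower-order term. This is exactly the paper's argument, with your version actually filling in details that the paper's terse proof glosses over (the martingale concern, the explicit $|\calS'|$ accounting, the space bookkeeping).

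There is, however, one concrete flaw in the correctness paragraph. You write ``since $\widehat{\tau(x)}$ dominates $\sigma(x)$ up to constants,'' but $\RoughSens$ only returns $n^\alpha$-approximations: by \lemref{lem:rough:sens:approx} (with $\kappa=n^\alpha$) the estimate satisfies $n^{-\alpha\gamma_1 z}\cdot s(x)\le\widehat{\tau(x)}\le n^{\alpha\gamma_2 z}\cdot s(x)$, so it can \emph{under}-estimate the true online sensitivity by a factor as large as $n^{\Theta(\alpha z)}$. With $\lambda=\O{\frac{dk}{\eps^2}\log\frac{n\Delta}{\eps}}$ as you set it, the sampling probability $\min(1,\lambda\widehat{\tau(x)})$ can then fall below the $\min(1,\gamma\sigma(x))$ threshold that \thmref{thm:online:sens:cluster} requires, and high-sensitivity points would be dropped with non-negligible probability, breaking the coreset guarantee. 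The fix is straightforward and consistent with the paper's intent (cf.\ the discussion in the technical overview about the total sensitivity being $\O{k}$): multiply $\lambda$ by $n^{\Theta(\alpha z)}$ so the potential under-estimate is compensated; the expected length of $\calS'$ then picks up a $n^{\Theta(\alpha z)}$ factor on both sides of the two-sided approximation but remains $n^{1-\Omega(1)}$ for $\alpha$ a small enough constant, so the rest of your space and amortized-time calculation goes through unchanged. You should also note that this larger $\lambda$ does not affect \thmref{thm:cluster:k:time}'s internal sampling, which is done with constant-factor estimates from $\BatchSens$ and therefore keeps the usual $\poly\left(k,d,\frac{1}{\eps},\log(n\Delta)\right)$ sample bound.
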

\begin{proof}
We perform induction on the time $t$ throughout the stream. 
Correctness at the first step is immediate, since the first point is inserted into the batch. 
Now we suppose for our inductive hypothesis that at time $t-1$, we have both a constant-factor coreset and a $(1+\eps)$-coreset at time $t-1$. 

Consider time $t$ for our inductive step. 
Either the new point is added to the batch of size $k$, in which case both coreset invariants are maintained, or we perform a sampling step on the batching using crude approximations to the sensitivities as an initial filtering step, because the batch has gotten too large. 
In the latter case, we first use the constant-factor coreset through \lemref{lem:batch:filter} to compute rough approximations to the $(k,z)$-sensitivities in a batch of $k$ points. 
This uses $d\log(k)\cdot\polylog(\log n)$ amortized runtime and produces an insertion-only stream $\calS$ of length $\O{n^{1-\Omega(1)}}$. 
We feed this insertion-only stream $\calS$ as the input to \algref{alg:fast:kz}. 
By \thmref{thm:cluster:k:time}, the algorithm uses $dk\polylog(\log(n\Delta))$ update time on the stream $\calS$ of length $o(n)$ to generate both a constant-factor coreset and a $(1+\eps)$-factor coreset at time $t$, which completes our induction. 
Since this is a lower-order term, then the total amortized runtime is $d\log(k)\cdot\polylog(\log n)$. 
\end{proof}

Finally, we note the implications of \thmref{thm:coreset:main} to dynamic $(k,z)$-clustering in the incremental setting. 
\begin{restatable}{theorem}{thmclustermain}
\thmlab{thm:cluster:main}
Given an insertion-only stream of $n$ points that defines a dataset $X$ on $[\Delta]^d$, there exists a one-pass streaming algorithm that uses $d\log(k)\cdot\polylog(\log(n\Delta))$ amortized update time and $\tO{dk\log(n\Delta)}$ bits of space, and outputs a $\O{z}$-approximation to $(k,z)$-clustering at all times in the stream. 
\end{restatable}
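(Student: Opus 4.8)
The plan is to derive this as an essentially immediate corollary of \thmref{thm:coreset:main} together with an off-the-shelf polynomial-time approximation algorithm for $(k,z)$-clustering. First I would run the algorithm of \thmref{thm:coreset:main} with a fixed constant accuracy parameter, say $\eps_0=\frac12$: since it is a one-pass insertion-only streaming algorithm that outputs a $(1+\eps_0)$-strong coreset, it maintains at each time $t\in[n]$ a $(1+\eps_0)$-strong coreset $Z_t$ of the prefix $X_t$, using $d\log(k)\cdot\polylog(\log(n\Delta))$ amortized update time. To produce an output at time $t$, I would decode $Z_t$ from its efficient encoding into an explicit weighted point set --- a collection of $\tO{k}\cdot\polylog(\log(nd\Delta))$ weighted points in $[\Delta]^d$ --- and run a polynomial-time $\O{z}$-approximation algorithm on $Z_t$, for instance the (weighted) local search of \thmref{thm:local:search}, returning the resulting set $\calC_t$ of at most $k$ centers.

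Correctness follows from a short chain of inequalities. Let $\OPT_t=\min_{|\calC|\le k}\Cost(X_t,\calC)$, achieved by some $\calC^\star$ with $|\calC^\star|\le k$. By the coreset property of \defref{def:coreset}, $\Cost(Z_t,\calC^\star)\le(1+\eps_0)\OPT_t$; since $\calC^\star$ is feasible for the instance $Z_t$ and the solver is an $\O{z}$-approximation on $Z_t$, $\Cost(Z_t,\calC_t)\le\O{z}\cdot\Cost(Z_t,\calC^\star)\le\O{z}\cdot(1+\eps_0)\cdot\OPT_t$; and applying the coreset inequality with the center set $\calC_t$ gives $(1-\eps_0)\Cost(X_t,\calC_t)\le\Cost(Z_t,\calC_t)$, so $\Cost(X_t,\calC_t)\le\frac{1+\eps_0}{1-\eps_0}\cdot\O{z}\cdot\OPT_t=\O{z}\cdot\OPT_t$ for $\eps_0=\frac12$. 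Hence $\calC_t$ is an $\O{z}$-approximation to $(k,z)$-clustering on $X_t$ at every $t$.

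The resource bounds are read off directly. The amortized update time is inherited verbatim from \thmref{thm:coreset:main}. For the space, substituting the constant $\eps_0$ into its space bound yields $\O{dk\log(n\Delta)}+f\!\left(n,d,\Delta,k,\polylog(\log(nd\Delta)),z\right)\cdot\polylog(\log(nd\Delta))$ bits, where $f$ is the coreset-size function there; by \thmref{thm:offline:coreset:size} a coreset of accuracy $\frac{1}{\polylog(\log(nd\Delta))}$ has $\tO{k}\cdot\polylog(\log(nd\Delta))$ points for $z=\O{1}$, so the whole expression collapses to $\tO{dk\log(n\Delta)}$ bits, which also comfortably stores the output $\calC_t$.

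The one point I would flag is the precise reading of ``outputs $\ldots$ at all times in the stream.'' The centers $\calC_t$ are produced by a post-processing pass over the current coreset costing $\poly\!\left(k,d,\log(n\Delta)\right)$ time, which is larger than the update time; so the statement should be understood as: the data structure maintains, under the stated amortized \emph{update} cost, enough information that at any time $t$ one can query it and recover an $\O{z}$-approximate solution for $X_t$. (One may alternatively keep an explicit current solution and refresh it only at the $\tO{dk^2}$ times the coreset $Z_t$ actually changes --- once per point admitted to the sampled substream and once per merge-and-reduce operation on it --- in which case the refresh cost is absorbed into the amortized bound whenever $n$ is polynomially larger than $k$, $d$, and $\log(n\Delta)$.) Beyond this bookkeeping there is no genuine obstacle: the theorem is immediate given \thmref{thm:coreset:main}, \thmref{thm:offline:coreset:size}, and \thmref{thm:local:search}.
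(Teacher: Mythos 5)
Your proposal matches the paper's own proof essentially step for step: instantiate \thmref{thm:coreset:main} at constant accuracy (so by \thmref{thm:offline:coreset:size} the coreset has $\tO{k}\cdot\polylog(\log(nd\Delta))$ points), note the inductive structure gives a coreset at every prefix, post-process with local search (\thmref{thm:local:search}), and amortize the local-search work over the $\poly(k,d,\log(n\Delta))$ occasions on which the coreset actually changes — which is exactly your ``alternative'' and is what the paper does. The only cosmetic difference is that the paper makes the amortization slightly more quantitative by first applying Johnson–Lindenstrauss to reduce to $d=\O{\log n}$ before bounding the total local-search cost as $\O{k^2\log^4(n\Delta)}\cdot\tO{k^2\log n}$ and concluding for $n\gg k^4$, whereas you leave it at ``$n$ polynomially larger than $k$, $d$, $\log(n\Delta)$''; both are the same argument, and your flag about the tension between per-update time and per-query post-processing, together with the refresh-on-change fix, mirrors the paper's handling precisely.
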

\begin{proof}
By \thmref{thm:offline:coreset:size}, there exists a coreset construction for Euclidean $(k,z)$-clustering that samples $\tO{k}$ weighted points of the input dataset. 
Thus, \thmref{thm:coreset:main} guarantees a constant-approximation coreset for $X$ using $d\log(k)\cdot\polylog(\log(n\Delta))$ amortized update time. 
In fact, because the proof of \thmref{thm:coreset:main} uses induction over the course of the stream, it actually achieves a constant-approximation coreset to every prefix of the stream. 
We can thus achieve a $\O{z}$-approximation at all times in the stream by applying a standard polynomial-time clustering approximation algorithm such as local search, c.f., \thmref{thm:local:search}, each time the merge-and-reduce data structure is updated. 
Moreover, note that by \thmref{thm:online:sens:cluster}, the final input stream induced by online sensitivity sampling to the merge-and-reduce data structure has length $\O{dk^2\log^3(n\Delta)}$. 
By applying a standard Johnson-Lindenstrauss transformation, c.f., \thmref{thm:jl}, we can view $d=\O{\log n}$. 
Therefore, the total runtime of all iterations of local search is $\O{k^2\log^4(n\Delta)}\cdot\tO{k^2\log n}$. 
Thus for $n\gg k^4$, the amortized runtime remains $d\log(k)\cdot\polylog(\log(n\Delta))$, due to the computation of the constant-approximation coreset. 
\end{proof}

\section{Subspace Embeddings}
\seclab{sec:subspace:embed}
In this section, we describe our one-pass streaming algorithm on insertion-only streams for $L_p$ subspace embeddings that uses $\mathcal{O}_{k,d,\eps}(1)$ words of space and amortized runtime $\O{d}$ per update. 
Recall that in problem of $L_p$ subspace embeddings, the rows of a matrix $\bA\in\mathbb{R}^{n\times d}$ arrive one-by-one, and the goal is to produce a matrix $\bM\in\mathbb{R}^{m\times d}$ such that for a fixed $\eps\in(0,1)$ and for all $\bx\in\mathbb{R}^d$, we have 
\[(1-\eps)\|\bA\bx\|_p\le\|\bM\bx\|_p\le(1+\eps)\|\bA\bx\|_p.\]
$L_p$ subspace embeddings are widely used in data compression applications where preserving the geometry of data is critical, e.g., dimensionality reduction, compressed sensing, and linear/robust regression. 
Hence, there is a long line of active work studying $L_p$ subspace embeddings~\cite{DrineasMM06a,DrineasMM06b,Sarlos06,DasguptaDHKM09,ClarksonW09,CohenP15,CohenMP20,BravermanDMMUWZ20,ParulekarPP21,MahabadiWZ22,MeyerMMWZ22,MuscoMWY22,CherapanamjeriSWZ23,MeyerMMWZ23,WoodruffY23}. 

We first describe an efficient encoding for coresets for $L_p$ embeddings, as well as a global encoding that be utilized to achieve our streaming algorithm that uses $\mathcal{O}_{k,d,\eps}(1)$ words of space. 
We then describe how our algorithm can be implemented using $\O{d}$ amortized update time. 

We first recall the following definition of leverage scores, which intuitively quantifies the importance of a row for $L_2$ subspace embeddings. 
 
\begin{definition}[Leverage scores]
\deflab{def:leverage:score}
The \emph{leverage score} of a row $\ba_i\in\mathbb{R}^d$ of a matrix $\bA\in\mathbb{R}^{n\times d}$ is  
\[\max_{\bx:\|\bA\bx\|_2=1}\langle\ba,\bx\rangle^2.\]
It is known that the leverage score for $\ba_i$ also admits the closed form $\ba_i^\top(\bA^\top\bA)^{-1}\ba_i$. 
\end{definition}
We first recall one of the possible generalizations of leverage scores to $L_p$ subspace embeddings. 
\begin{definition}[$L_p$ Lewis weights]
\deflab{def:lewis:weight}
The \emph{Lewis weight} of a row $\ba_i\in\mathbb{R}^d$ of a matrix $\bA\in\mathbb{R}^{n\times d}$ for $L_p$ subspace embedding is the $i$-th diagonal entry of the unique diagonal matrix $\bW$ such that for all $i\in[n]$
\[W_{i,i}=\ell_i(\bW^{\frac{1}{2}-\frac{1}{p}}\bA),\]
where $\ell_i$ denotes the $i$-th leverage score, i.e., the leverage score of the $i$-th row of the matrix $\bW^{\frac{1}{2}-\frac{1}{p}}\bA$. 
\end{definition}
The intuition for $L_p$ Lewis weights is not obvious; they are the leverage scores of the matrix after applying a proper change-of-density, where each row is weighted the number of times necessary to preserve $\|\bA\bx\|_p$ when consider $\|\bA'\bx\|_2$, where $\bA'$ is the reweighted matrix~\cite{CohenP15}. 
An advantage of using Lewis weights to sample rows over other quantities such as the $L_p$ leverage scores~\cite{DasguptaDHKM09} or the $L_p$ sensitivities~\cite{BravermanDMMUWZ20,ChenD21} is that the $L_p$ Lewis weights are known to admit the optimal sampling complexity. 
On the other hand, for the special case of $p=2$, \cite{Sarlos06} showed the existence of an oblivious subspace embedding with $\O{\frac{d}{\eps^2}}$ rows. 
Combining these methods, we have:
\begin{theorem}
\thmlab{thm:lewis:weights}
\cite{Sarlos06,CohenP15,WoodruffY23}
Given a matrix $\bA\in\mathbb{R}^{n\times d}$, accuracy $\eps\in(0,1)$, and $p\ge 1$, let 
\[f\left(d,\frac{1}{\eps},p\right)=\begin{cases}
\frac{d}{\eps^2}\log\frac{d}{\eps}\polylog\left(\frac{d}{\eps}\right),\qquad& p\in[1,2)\\
\O{\frac{d}{\eps^2}},\qquad& p=2\\
\frac{d^{p/2}}{\eps^2}\cdot\polylog\left(\frac{d}{\eps}\right),\qquad&p>2     
\end{cases}.\]
Then there exists a coreset construction for $L_p$ subspace embeddings that consists of $f\left(d,\frac{1}{\eps},p\right)$ rows, each with entry at most $\|\bA\|_\infty\cdot\poly(n)$, which can be constructed in time polynomial in $n$ and $d$. 
\end{theorem}
We next define the generalization of $L_p$ Lewis weights to the online setting. 
\begin{definition}[Online $L_p$ Lewis weights]
\deflab{def:online:lewis:weight}
The online $L_p$ \emph{Lewis weight} of a row $\ba_i\in\mathbb{R}^d$ of a matrix $\bA\in\mathbb{R}^{n\times d}$ is the Lewis weight of $\ba_i$ with respect to the matrix $\bA_i=\ba_1\circ\ldots\circ\ba_i$, i.e., the submatrix of $\bA$ consisting of the first $i$ rows of $\bA$. 
\end{definition}
We recall the following upper bounds on the sum of the online Lewis weights. 
\begin{theorem}
\thmlab{thm:total:online:lewis}
\cite{BravermanDMMUWZ20,WoodruffY23}
Let $\bA=\{\ba_1,\ldots,\ba_n\}\in\mathbb{R}^{n\times d}$ be a matrix of $n$ rows and let $\sigma(\ba_t)$ denote the online Lewis weight of $\ba_t$ for $t\in[n]$ for $L_p$-subspace embedding, where $p\ge 1$. 
Then for $p\in[1,2]$, 
\[\sum_{t=1}^n\sigma(\ba_t)=d\cdot\polylog(n\kappa),\]
and for $p>2$,
\[\sum_{t=1}^n\sigma(\ba_t)=d^{p/2}\cdot\polylog(n\kappa),\]
where $\kappa$ is the online condition number.
\end{theorem}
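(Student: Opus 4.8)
The plan is to reduce all three regimes to a single core estimate---that the sum of online $L_2$ leverage scores of an $n\times d$ matrix with polynomially bounded entries is $\O{d\polylog(n\kappa)}$---and then transfer it to $p\neq 2$ through the defining identity of Lewis weights. The only genuinely new ingredient is the core estimate, which I would prove by the classical matrix-determinant potential argument: writing $\sigma(\ba_t)=\ba_t^\top(\bA_t^\top\bA_t)^{-1}\ba_t\in[0,1]$ once $\bA_t$ has full column rank, the matrix determinant lemma gives $\det(\bA_t^\top\bA_t)=\det(\bA_{t-1}^\top\bA_{t-1})\bigl(1+\ba_t^\top(\bA_{t-1}^\top\bA_{t-1})^{-1}\ba_t\bigr)$ and $\ba_t^\top(\bA_{t-1}^\top\bA_{t-1})^{-1}\ba_t=\sigma(\ba_t)/(1-\sigma(\ba_t))$, so $\sigma(\ba_t)\le-\log(1-\sigma(\ba_t))=\log\bigl(\det(\bA_t^\top\bA_t)/\det(\bA_{t-1}^\top\bA_{t-1})\bigr)$, using $x\le-\log(1-x)$ on $[0,1)$. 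Telescoping from the first index at which the prefix attains rank $d$ collapses the sum to a single log-determinant ratio, which is $\O{d\log(n\kappa)}$ because the extreme nonzero singular values ever observed differ by at most $\poly(n)\cdot\kappa$ (entries are polynomially bounded and $\kappa$ is the online condition number); the initial $d$ rows contribute at most $d$, since each online leverage score is at most $1$.

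\textbf{The case $p\in[1,2)$.} Here the plan is to realize the online $L_p$ Lewis weight $\sigma(\ba_t)$ as a lower bound for an online $L_2$ leverage score of a single reweighted matrix, so that the core estimate applies. Set $\tilde\ba_t=\sigma(\ba_t)^{1/2-1/p}\ba_t$ and $\tilde\bA_t=\tilde\ba_1\circ\cdots\circ\tilde\ba_t$. Using monotonicity of $L_p$ Lewis weights under row insertion for $p\le 2$ (the online weight of a row only decreases as later rows arrive) together with the fact that the exponent $1-2/p$ is negative, one obtains $\tilde\bA_t^\top\tilde\bA_t\preceq\bar\bA_t^\top\bar\bA_t$, where $\bar\bA_t:=\bar\bW_t^{1/2-1/p}\bA_t$ and $\bar\bW_t$ is the Lewis-weight matrix of the prefix $\bA_t$ (the two matrices agree in their last row, and $\tilde\bA_t$'s earlier rows are scaled down). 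Hence the $t$-th leverage score of $\tilde\bA_t$ is at least the $t$-th leverage score of $\bar\bA_t$, which by self-consistency of Lewis weights is exactly $\sigma(\ba_t)$. Summing and invoking the core estimate on $\tilde\bA$---whose online condition number is $\kappa\cdot\poly(n)$, since $\sigma(\ba_t)^{1/2-1/p}\in[1,\poly(n)]$ as Lewis weights lie in $[1/\poly(n),1]$---yields $\sum_t\sigma(\ba_t)=\O{d\polylog(n\kappa)}$, the first $d$ rows again handled by $\sigma\le 1$.

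\textbf{The case $p>2$: the main obstacle.} Now the exponent $1/2-1/p$ is positive and, crucially, $L_p$ Lewis weights are no longer monotone under row insertion, so the clean coupling above breaks; this is the heart of the difficulty and is precisely what degrades the bound from $d$ to $d^{p/2}$. The plan is to follow the recursive construction of Lewis weights of \cite{CohenP15} adapted to the online setting: approximate $\sigma(\ba_t)$ by iterating $\bar w^{(j+1)}_t=\ell_t\bigl((\bar\bW^{(j)}_t)^{1/2-1/p}\bA_t\bigr)$ and, at each of the $\O{\log d}$ rounds needed for convergence, bound the current weights by the core estimate applied to the reweighted matrix of that round, while tracking a $p$-dependent potential that controls how much an individual online weight can inflate between consecutive rounds. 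The accumulated inflation across rounds is what supplies the extra $d^{p/2-1}$ factor; I would additionally remark that this exponent cannot be improved, consistent with the $\Omega(d^{p/2})$ lower bound for $L_p$ subspace embeddings with $p>2$ \cite{LiWW21}.

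The subtlety pervading both $p\neq 2$ cases is that the reweighting defining the Lewis weights of $\bA_t$ is a global object depending on all of $\bA_t$, whereas $\sigma(\ba_s)$ uses the reweighting of the shorter prefix $\bA_s$; reconciling these requires either monotonicity (which I exploit for $p\le 2$) or the iterative potential argument (for $p>2$), and making the latter quantitatively yield exactly $d^{p/2}\polylog(n\kappa)$---rather than a cruder $n$- or $d$-dependent bound---is the step I expect to demand the most care.
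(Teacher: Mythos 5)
The paper does not prove this statement; it is cited to \cite{BravermanDMMUWZ20,WoodruffY23}, so there is no in-paper proof to compare against. Judged on its own terms, your proposal is partly a genuine proof and partly a plan.

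Your core estimate (online $L_2$ leverage scores telescope through a log-determinant potential) is correct and standard, and your reduction for $p\in[1,2)$ is a clean and, I believe, sound argument, but it rests on a lemma you assert rather than prove: that $L_p$ Lewis weights for $p\le 2$ are monotone non-increasing under row insertion. That lemma is true and not hard, but it deserves a proof. A workable route: write the Lewis iteration as $\phi(w)_i=\bigl(\ba_i^\top(\bA^\top W^{1-2/p}\bA)^{-1}\ba_i\bigr)^{p/2}$; since $1-2/p\le 0$ this map is coordinatewise monotone in $w$; starting from $v^{(0)}=(w(\bA),1)$ on the extended matrix $\bA'$, the extra rank-one term makes $\phi'(v^{(0)})\le v^{(0)}$, so the iterates decrease to the unique fixed point $w(\bA')\le v^{(0)}$. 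You also wave past a technical issue: the rescaling $\sigma(\ba_t)^{1/2-1/p}$ can blow up when $\sigma(\ba_t)$ is tiny, so you must threshold (e.g., discard rows with $\sigma(\ba_t)<1/n^2$, which contribute $o(1)$ to the sum) before invoking the core estimate, or the online condition number of $\tilde\bA$ is not controlled.

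The $p>2$ case is a genuine gap, and you half-acknowledge it. Your sketch of tracking a ``$p$-dependent potential'' across $\O{\log d}$ rounds of the fixed-point iteration, applying the $L_2$ core estimate at each round, is a plan rather than an argument: you do not say what the potential is, why the per-round inflation is bounded, or why the accumulated inflation is $d^{p/2-1}$ rather than, say, $n^{\Omega(1)}$. The monotonicity that powers your $p<2$ reduction genuinely fails for $p>2$ (the iteration becomes anti-monotone because $1-2/p>0$), which is precisely why the online sum exceeds the offline value $d$; but the mechanism by which $d^{p/2}$ emerges in \cite{WoodruffY23} is a substantially more involved potential/charging argument, and your outline does not capture it. If you want to cite this as a black box (as the paper itself does), that is fine; if you want a self-contained proof, the $p>2$ branch needs to be rebuilt from scratch.
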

We now recall the full guarantees of online Lewis weight sampling. 
\begin{theorem}[Online Lewis weight sampling]
\thmlab{thm:online:lewis:embedding}
\cite{WoodruffY23}
Given a sequence $\ba_1,\ldots,\ba_n\in\mathbb{R}^{n\times d}$ of rows, suppose each point $\ba_t$ is sampled with probability $p_t\ge\min(1,\gamma\cdot\sigma(\ba_t))$, where $\sigma(\ba_t)$ is the online Lewis weight of $\ba_t$ and $\gamma=\O{\frac{\log n}{\eps^2}}$ and multiplied by $\frac{1}{p_t}$ if $\ba_t$ is sampled. 
Then with high probability, the rescaled sample is a $(1+\eps)$-strong coreset for $L_p$-subspace embedding that has a number of rows that is at most:
\[g\left(n,d,\frac{1}{\eps},\kappa,p\right)=\begin{cases}
\frac{d}{\eps^2}\cdot\polylog(n\kappa),\qquad& p\in[1,2]\\
\frac{d^{p/2}}{\eps^2}\cdot\poly(\log^{p/2}(n\kappa)),\qquad&p>2     
\end{cases}.\]
\end{theorem}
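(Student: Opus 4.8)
The plan is to adapt the standard template for adaptive importance sampling — the same martingale-and-coupling recipe behind the online-sensitivity guarantee in \thmref{thm:online:sens:cluster} — to the $L_p$ subspace-embedding setting, using the offline Lewis-weight coreset of \thmref{thm:lewis:weights} and the total-online-weight bound of \thmref{thm:total:online:lewis} as the two external inputs. The sample-complexity half is immediate: the expected number of sampled rows is $\sum_{t=1}^n p_t = \O{\gamma\sum_{t=1}^n\sigma(\ba_t)}$, which by \thmref{thm:total:online:lewis} is $\frac{d}{\eps^2}\polylog(n\kappa)$ for $p\in[1,2]$ and $\frac{d^{p/2}}{\eps^2}\polylog(n\kappa)$ for $p>2$; a Chernoff bound converts the expectation into the claimed high-probability bound. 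The work is in the coreset guarantee. As usual, I would fix a single query direction $\bx$, establish a $(1\pm\eps)$ estimate with small enough failure probability, and then extend from a net of directions to all $\bx\in\mathbb{R}^d$. Because the rows lie on a bounded integer grid with $M=\poly(n)$ and the online condition number is $\kappa$, after rescaling every relevant direction has $\|\bA\bx\|_p^p\in[\poly(1/(n\kappa)),\poly(n\kappa)]$, so a net of size $\exp(\O{d\log(n\kappa/\eps)})$ together with Lipschitzness of $\bx\mapsto\|\bA\bx\|_p$ suffices to pass from the net to all directions.

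For a fixed $\bx$ normalized so that $\|\bA\bx\|_p^p=1$, write $w_t=|\langle\ba_t,\bx\rangle|^p$, so $\sum_t w_t=1$, and let $\widehat W=\sum_t\frac{\mathbf{1}[t\text{ sampled}]}{p_t}\,w_t$ be the reweighted estimator. Since the online Lewis weights $\sigma(\ba_t)$ are determined by the fixed prefix $\bA_t$, the probability $p_t$ is $\mathcal{F}_{t-1}$-measurable — or, when the algorithm uses approximate weights, we work on the coupling event that all prior approximations are accurate, exactly as in \thmref{thm:online:sens:cluster} — so $M_T:=\sum_{t\le T}\big(\tfrac{\mathbf{1}_t}{p_t}-1\big)w_t$ is a martingale with $M_0=0$ and $M_n=\widehat W-1$. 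The structural input is that the online Lewis weight of a row dominates its $L_p$ importance with respect to the prefix: for $p\le 2$ one has $\sigma(\ba_t)\gtrsim\sup_{\by\neq 0}\frac{|\langle\ba_t,\by\rangle|^p}{\|\bA_t\by\|_p^p}$, with an analogous (lossier) bound for $p>2$ stemming from the change of density in the definition of Lewis weights. Hence $\frac{w_t}{p_t}\le\frac{\|\bA_t\bx\|_p^p}{\Theta(\gamma)}=\O{1/\gamma}$, which bounds the martingale increments, and $\sum_t\Var\big[\tfrac{\mathbf{1}_t}{p_t}w_t\mid\mathcal{F}_{t-1}\big]\le\max_t\tfrac{w_t}{p_t}\cdot\sum_t w_t=\O{1/\gamma}$ bounds the predictable quadratic variation; Freedman's inequality then gives $\PPr{|M_n|>\eps}\le 2\exp(-\Omega(\gamma\eps^2))$.

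The main obstacle is that this bound, with $\gamma=\O{\log n/\eps^2}$, yields only inverse-polynomial failure probability, which is too weak to union-bound over the $\exp(\Omega(d\log(n\kappa)))$ directions of the net: a vanilla net argument forces $\gamma$ to scale with $d$ and produces $\sim d^2/\eps^2$ rows rather than the optimal $d/\eps^2$, and in the adaptive regime one cannot simply invoke matrix Chernoff, which is what removes the extra $d$ offline. The resolution, following \cite{WoodruffY23} (and \cite{CohenP15} in the offline case), is to exploit the defining identity of Lewis weights: sampling $\ba_t$ by its online Lewis weight is precisely online leverage-score sampling of the re-densified matrix $\bW^{1/2-1/p}\bA$, whose leverage scores equal its own diagonal weights. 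One then applies an \emph{operator-valued} martingale concentration inequality (a matrix Freedman/Bernstein bound) to the sum of sampled rank-one terms, which is dimension-free in the exponent up to a single $\log d$ factor, and finally transfers the resulting spectral approximation of $\bW^{1/2-1/p}\bA$ back into a two-sided $L_p$ subspace embedding of $\bA$ — directly via power-mean/Hölder inequalities when $p\le 2$, and with the extra $d^{p/2}$ blow-up when $p>2$. The $\polylog(n\kappa)$ slack in the final bound enters in two ways: the total online Lewis weight can exceed the offline value $d$ (resp.\ $d^{p/2}$) by a $\polylog(n\kappa)$ factor via \thmref{thm:total:online:lewis}, and controlling the dynamic range of the rank-one increments in the matrix-Freedman step — i.e.\ ruling out an ill-conditioned prefix — costs further $\log(n\kappa)$ factors, which is exactly why the statement is parameterized by the online condition number $\kappa$.
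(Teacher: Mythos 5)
The paper does not actually prove this statement: \thmref{thm:online:lewis:embedding} is imported as a black box from \cite{WoodruffY23}, so there is no in-paper argument to compare yours against. Judged on its own merits, your proposal gets the periphery right. The sample-complexity half is correct and complete: summing $p_t\le\gamma\cdot\sigma(\ba_t)$ against \thmref{thm:total:online:lewis} and applying a Chernoff/Bernstein bound gives exactly the claimed row counts. The scalar martingale setup (Freedman on $M_T$ with increments bounded by $\O{1/\gamma}$) and your diagnosis of why a vanilla net argument over $\exp(\O{d\log(n\kappa/\eps)})$ directions would force $\gamma$ to scale with $d$ are also accurate.

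The gap is in the decisive step. You assert that a matrix-Freedman bound yields a spectral approximation of the re-densified matrix $\bW^{1/2-1/p}\bA$, which then ``transfers back into a two-sided $L_p$ subspace embedding of $\bA$ directly via power-mean/H\"{o}lder inequalities when $p\le 2$.'' That transfer does not work as stated: a $(1\pm\eps)$ spectral approximation controls $\|\bW^{1/2-1/p}\bA\bx\|_2$, and relating this quantity to $\|\bA\bx\|_p$ pointwise via H\"{o}lder or power-mean inequalities loses $\poly(d)$ factors in general --- if the implication were direct, the offline Lewis-weight theorem of \cite{CohenP15} would be a one-line corollary of matrix Chernoff, which it is not. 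In the actual proofs the spectral step serves only to stabilize the Lewis weights of the sample; the $L_p$ concentration itself requires a separate chaining/comparison argument (symmetrization plus a Rademacher-process comparison over the Lewis ellipsoid in the offline case, and in \cite{WoodruffY23} an additional decomposition of the stream into $\polylog(n\kappa)$ epochs on which the online weights are stable so that the offline machinery can be invoked per epoch). Your sketch omits this mechanism entirely, and it is the only genuinely hard part of the theorem; the same issue recurs for $p>2$, where the ``extra $d^{p/2}$ blow-up'' is asserted rather than derived.
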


We use standard approaches, e.g., Lemma 4.1 in \cite{ClarksonW09} to upper bound the magnitude of the logarithm of any nonzero singular value of a matrix with bounded integer entries. 
This in turn upper bounds the logarithm of the online condition number. 
\begin{lemma}
Suppose $\bA\in\mathbb{R}^{n\times d}$ has integer entries bounded in magnitude by $M$. 
Then for any nonzero singular value $\sigma_i$ of $\bA$, we have $|\log\sigma_i|=\O{d\log(nM)}$.
\end{lemma}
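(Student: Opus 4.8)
The plan is to pass from the singular values of $\bA$ to the eigenvalues of the integer matrix $\bB:=\bA^\top\bA$ and control those via the integrality of the characteristic polynomial. Note that each $\sigma_i^2$ is an eigenvalue of $\bB\in\mathbb{Z}^{d\times d}$, whose entries are sums of at most $n$ products of entries of magnitude at most $M$, hence are integers bounded in magnitude by $nM^2$. The upper bound is then immediate from the trace: $\sigma_i^2\le\operatorname{tr}(\bB)=\|\bA\|_F^2\le ndM^2$, so $\log\sigma_i\le\frac12\log(ndM^2)=\O{\log(ndM)}$.

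For the lower bound on a nonzero $\sigma_i$, let $r=\operatorname{rank}(\bA)=\operatorname{rank}(\bB)\ge 1$ and let $\lambda_1,\dots,\lambda_r>0$ be the nonzero eigenvalues of $\bB$ counted with multiplicity. Consider $p(\lambda)=\det(\lambda I-\bB)$. Since $\bB$ has integer entries, $p$ has integer coefficients, and in fact the coefficient of $\lambda^{d-j}$ is $(-1)^j$ times the sum of the $j\times j$ principal minors of $\bB$. Because $\bB$ has exactly $d-r$ zero eigenvalues, $p(\lambda)=\lambda^{d-r}\prod_{i=1}^r(\lambda-\lambda_i)$, so the lowest-degree nonzero coefficient of $p$ is the coefficient of $\lambda^{d-r}$, equal to $(-1)^r\prod_{i=1}^r\lambda_i$. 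This is a nonzero integer, hence $\prod_{i=1}^r\lambda_i\ge 1$.

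Now fix any nonzero eigenvalue $\lambda_j$. Using $\lambda_i\le ndM^2$ for every $i$ we obtain
\[
\lambda_j=\frac{\prod_{i=1}^r\lambda_i}{\prod_{i\ne j}\lambda_i}\ge (ndM^2)^{-(r-1)}\ge (ndM^2)^{-d},
\]
so $\sigma_j=\sqrt{\lambda_j}\ge (ndM^2)^{-d/2}$ and $\log\sigma_j\ge-\frac d2\log(ndM^2)=-\O{d\log(ndM)}$. Combining with the upper bound and absorbing $\log d$ into $\log(nM)$ (using $d\le\poly(n)$, which holds in the row-arrival regime $n\gg d$) gives $|\log\sigma_i|=\O{d\log(nM)}$, and since the logarithm of the online condition number is at most the sum of two such terms, the same bound controls it.

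The only delicate point is the identification of the lowest-order nonzero coefficient of $p(\lambda)$ with $(-1)^r\prod_{i=1}^r\lambda_i$ and the observation that it is a \emph{nonzero} integer; everything else is a routine trace/minor estimate. One could instead invoke Lemma~4.1 of \cite{ClarksonW09} as a black box, but the argument above is self-contained and makes the dependence on $d$, $n$, and $M$ explicit.
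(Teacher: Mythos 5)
Your proof is correct and follows essentially the same route as the paper: pass to the integer matrix $\bA^\top\bA$, observe that its characteristic polynomial has integer coefficients so that the product of the nonzero eigenvalues is a nonzero integer and hence $\ge 1$, bound the largest eigenvalue by $\poly(ndM)$, and combine. The only difference is cosmetic — you are slightly more explicit than the paper in identifying the lowest-order nonzero coefficient of $p(\lambda)$ as $(-1)^r\prod_{i=1}^r\lambda_i$ (the paper attributes $\prod\lambda_i\ge 1$ merely to nonnegativity, which on its own does not give it), and you use the trace rather than $\|\bA^\top\bA\|_F$ for the upper bound.
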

\begin{proof}
We have that the characteristic polynomial of $\bA^\top\bA$ is $p(x)=\det(x\cdot\mathbb{I}_d-\bA^\top\bA)$. 
Note that if $\bA$ has rank $r$, then $p(x)=x^{d-r}\prod_{i=1}^r(x-\lambda_i)$, where $\lambda_i$ is the $i$-th eigenvalue of $\bA^\top\bA$, writing $\lambda_1\ge\lambda_2\ge\ldots\ge\lambda_r$. 
Since all entries of $\bA^\top\bA$ are integers, then the coefficients of $p(x)$ are also integers. 
Moreover, since the eigenvalues of $\bA^\top\bA$ are non-negative, then $\prod_{i=1}^r\lambda_i\ge 1$. 
Because all entries of $\bA^\top\bA$ are at most $\poly(ndM)$ in magnitude, then 
\[\lambda_i\le\lambda_1\le\|\bA^\top\bA\|_F\le\poly(ndM).\]
Since $\prod_{i=1}^r\lambda_i\ge 1$, then we have that $\lambda_i\ge\frac{1}{\poly(ndM)^d}$. 
Since $\sigma_i^2=\lambda_i$, then $|\log\sigma_i|=\O{d\log(nM)}$.
\end{proof}

We next recall the following definition of a well-conditioned basis.
\begin{definition}[Well-conditioned basis]
Given a matrix $\bA\in\mathbb{R}^{n\times d}$ of rank $r$, let $p\in[1,\infty)$ and $q$ be its dual norm, so that $\frac{1}{p}+\frac{1}{q}=1$. 
Then matrix $\bU\in\mathbb{R}^{n\times d}$ is an $(\alpha,\beta,p)$-well-conditioned basis for the column space of $\bA$ if:
\begin{enumerate}
\item 
The column space of $\bA$ is the column space of $\bU$
\item
$\sum_{i\in[n],j\in[d]} U_{i,j}^p\le\alpha^p$
\item
For all $\bx\in\mathbb{R}^d$, we have $\|\bz\|_q\le\beta\|\bU\bz\|_p$
\end{enumerate}
\end{definition}
One such construction of a well-conditioned basis gives the following properties:
\begin{theorem}
\cite{DasguptaDHKM09}
\thmlab{thm:well:cond:basis}
Given a matrix $\bA\in\mathbb{R}^{n\times d}$ of rank $r$, let $p\in[1,\infty)$ and $q$ be its dual norm, so that $\frac{1}{p}+\frac{1}{q}=1$. 
There exists an $(\alpha,\beta,p)$-well-conditioned basis $\bU$ for the column space of $\bA$ such that:
\begin{itemize}
\item 
If $p<2$, then $\alpha=r^{\frac{1}{p}+\frac{1}{2}}$ and $\beta=1$
\item 
If $p=2$, then $\alpha=\sqrt{r}$ and $\beta=1$
\item 
If $p>2$, then $\alpha=r^{\frac{1}{p}+\frac{1}{2}}$ and $\beta=r^{\frac{1}{q}-\frac{1}{2}}$
\end{itemize}
Moreover, $\bU$ can be computed in time $\O{ndr+nr^2\log n}$. 
\end{theorem}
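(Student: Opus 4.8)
The plan is to reduce to the case where $\bA$ has full column rank, so that $r=d$ (in general one simply replaces $\bA$ by a basis for its column space, which affects none of the three conditions), and then to build $\bU$ by rounding the unit $\ell_p$ ball of the column space by a Euclidean ball. The case $p=2$ is immediate and should be handled separately: take $\bU$ to be the $n\times d$ matrix whose columns form an orthonormal basis of $\operatorname{col}(\bA)$, obtained from a QR or SVD factorization in $\O{ndr}$ time. Then $\|\bU\|_F^2=\sum_{j}\|\bU\be_j\|_2^2=r$, so $\alpha=\sqrt r$, and $\|\bU\bz\|_2=\|\bz\|_2$ for all $\bz$, so $\beta=1$, as claimed.

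For $p\neq 2$, let $\mathcal{L}=\operatorname{col}(\bA)\subseteq\mathbb{R}^n$, an $r$-dimensional subspace, and let $K=\{\by\in\mathcal{L}:\|\by\|_p\le 1\}$, a symmetric convex body living in $r$ dimensions. By John's theorem (symmetric case), its minimum-volume enclosing ellipsoid $E$ satisfies $\frac{1}{\sqrt r}E\subseteq K\subseteq E$. I would then choose the linear bijection $\mathbb{R}^r\to\mathcal{L}$, encoded by a matrix $\bU$ with $\operatorname{col}(\bU)=\mathcal{L}$, under which $E$ becomes the Euclidean unit ball, i.e. $E=\{\bU\bz:\|\bz\|_2\le 1\}$; by construction condition (1) of a well-conditioned basis holds. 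Unwinding the two containments gives, for every $\bz$,
\[\|\bz\|_2\ \le\ \|\bU\bz\|_p\ \le\ \sqrt r\,\|\bz\|_2,\]
and this two-sided estimate is the only structural input needed.

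From the right inequality applied to $\bz=\be_j$, each of the $r$ columns satisfies $\|\bU\be_j\|_p\le\sqrt r$, hence $\alpha^p=\sum_{i,j}|U_{ij}|^p=\sum_{j}\|\bU\be_j\|_p^p\le r\cdot r^{p/2}$, i.e. $\alpha\le r^{1/p+1/2}$, matching the claim in all regimes. For $\beta$, use the left inequality together with the elementary comparison of $\ell_q$ and $\ell_2$ on $\mathbb{R}^r$: if $p<2$ then $q\ge 2$ and $\|\bz\|_q\le\|\bz\|_2\le\|\bU\bz\|_p$, giving $\beta=1$; if $p>2$ then $q<2$ and $\|\bz\|_q\le r^{1/q-1/2}\|\bz\|_2\le r^{1/q-1/2}\|\bU\bz\|_p$, giving $\beta=r^{1/q-1/2}$.

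The only genuinely delicate point is the running time $\O{ndr+nr^2\log n}$, since computing the \emph{exact} John ellipsoid of $K$ is not known to be polynomial-time. The resolution, which is the content of the efficient construction in \cite{DasguptaDHKM09}, is to first orthonormalize to a basis $\bQ$ of $\mathcal{L}$ in $\O{ndr}$ time (after which membership in $K$ and a subgradient of $\|\bQ\,\cdot\,\|_p$ cost $\O{nr}$ per evaluation) and then run a polynomial-time ellipsoid-type minimum-enclosing-ellipsoid routine whose bit complexity forces only $\O{\log n}$ rounds, each an $\O{r^2}$ update, to obtain an $\O{1}$-approximation to the John ellipsoid of $K$; replacing the exact ellipsoid by this approximation worsens $\alpha$ and $\beta$ only by absolute constants, absorbed into the stated bounds. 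I expect this efficient-rounding step to be the main obstacle; the norm bookkeeping above is routine once the $\sqrt r$-rounding is available.
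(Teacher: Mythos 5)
Your construction is the same as the one the paper sketches (and attributes to \cite{DasguptaDHKM09}): first orthonormalize to a basis $\bQ$ of $\mathrm{col}(\bA)$, then take the L\"owner--John ellipsoid of the pulled-back $\ell_p$ unit ball $\{\bx\in\mathbb{R}^r:\|\bQ\bx\|_p\le 1\}$ and set $\bU=\bQ\bG^{-1}$ with $\bG^\top\bG$ the ellipsoid's quadratic form, which is exactly your abstract "choose $\bU$ so that $E=\{\bU\bz:\|\bz\|_2\le1\}$". Your derivation of the two-sided estimate $\|\bz\|_2\le\|\bU\bz\|_p\le\sqrt r\,\|\bz\|_2$ from the $\sqrt r$-rounding, and the resulting $\alpha,\beta$ bookkeeping (including the separate, trivial $p=2$ case), is correct and simply fills in the verification that the paper omits because it cites the theorem.
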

For completeness, we briefly describe the construction of the well-conditioned basis of \thmref{thm:well:cond:basis}, given by \cite{DasguptaDHKM09}. 
Given a QR decomposition $\bA=\bQ\bR$, so that $\bQ$ is any $n\times r$ matrix that is an orthonormal basis for the span of $\bA$ and $\bR$ is an $r\times d$ matrix, define the set $S=\{\bx\in\mathbb{R}^d\,\mid\,\|\bQ\bx\|_p\le1\}$ and the $r\times r$ matrix $\bF$ so that $\calE_{LJ}=\{\bx\in\mathbb{R}^r\,\mid\,\bx^\top\bF\bx\le 1\}$ is the L\"{o}wner-John ellipsoid of $S$, let $\bG\in\mathbb{R}^{d\times r}$ be the full rank and upper triangular matrix such that $\bF=\bG^\top\bG$. 
Then $\bU:=\bQ\bG^{-1}$ is the desired $(\alpha,\beta,p)$-well-conditioned basis for the column span of $\bA$. 

We now define the $L_p$ sensitivity of a row of a matrix, for the purposes of $L_p$ subspace embeddings. 
\begin{definition}[$L_p$ sensitivity]
For a matrix $\bA\in\mathbb{R}^{n\times d}$, the \emph{$L_p$ sensitivity} of row $\ba_t$ is the quantity
\[\max_{\by\in\mathbb{R}^d: \|\bA\by\|_p=1}\left\lvert\langle\ba_t,\by\rangle\right\rvert^p.\]
\end{definition}
Finally, we recall the following upper bound on the total $L_p$ sensitivity for $L_p$ subspace embedding. 
\begin{lemma}[Upper bounds on the sum of the $L_p$ sensitivities]
\cite{CohenP15,ChenD21}
\lemlab{lem:total:lp:sens}
For a matrix $\bA\in\mathbb{R}^{n\times d}$, let $s(\ba_t)$ denote the $L_p$ sensitivity of the $t$-th row of $\bA$ for $L_p$ subspace embedding. 
Then $\sum_{t=1}^n s(\ba_t)\le d$ for $p\le 2$ and $\sum_{t=1}^n s(\ba_t)\le d^{p/2}$ for $p>2$. 
\end{lemma}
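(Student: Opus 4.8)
The plan is to control each $L_p$ sensitivity by a constant multiple (for $p\le 2$) or a $d^{(p-2)/2}$ multiple (for $p>2$) of the corresponding $L_p$ Lewis weight, and then sum, using that the Lewis weights sum to $\mathrm{rank}(\bA)\le d$. Let $w_1,\dots,w_n$ be the $L_p$ Lewis weights of $\bA$ from \defref{def:lewis:weight}, let $\bW=\mathrm{diag}(w_1,\dots,w_n)$, and set $\bM:=\bA^\top\bW^{1-2/p}\bA$. I will use two standard facts: $\sum_i w_i=\mathrm{rank}(\bA)$, and the Lewis fixed-point identity unwinds to $\ba_i^\top\bM^{-1}\ba_i=w_i^{2/p}$ for every $i$ (working on the row space of $\bA$, which contains every $\ba_i$, so $\bM$ is invertible there). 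Fixing a row $\ba_i$ and any $\by$ with $\|\bA\by\|_p=1$, Cauchy--Schwarz applied to $\langle\ba_i,\by\rangle=\langle\bM^{-1/2}\ba_i,\bM^{1/2}\by\rangle$ gives
\[ |\langle\ba_i,\by\rangle| \;\le\; \big(\ba_i^\top\bM^{-1}\ba_i\big)^{1/2}\,\big\|\bM^{1/2}\by\big\|_2 \;=\; w_i^{1/p}\,\big\|\bM^{1/2}\by\big\|_2, \]
so it remains to bound $\|\bM^{1/2}\by\|_2^2=\sum_j w_j^{1-2/p}\,|\langle\ba_j,\by\rangle|^2$ in terms of $\|\bA\by\|_p=1$.

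For $p\in[1,2]$ I would run a self-bounding argument: split $|\langle\ba_j,\by\rangle|^2=|\langle\ba_j,\by\rangle|^{2-p}\cdot|\langle\ba_j,\by\rangle|^p$, bound the first factor via the displayed inequality (legitimate since $2-p\ge 0$), and observe that the resulting powers of $w_j$ cancel, yielding $\|\bM^{1/2}\by\|_2^2\le\|\bM^{1/2}\by\|_2^{2-p}\,\|\bA\by\|_p^p$, hence $\|\bM^{1/2}\by\|_2\le\|\bA\by\|_p$. Plugging back in gives $s(\ba_i)\le w_i$, and summing yields $\sum_i s(\ba_i)\le\sum_i w_i\le d$. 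For $p>2$ I would instead apply H\"older to $\sum_j w_j^{1-2/p}|\langle\ba_j,\by\rangle|^2$ with conjugate exponents $\tfrac{p}{2}$ and $\tfrac{p}{p-2}$; since $(1-\tfrac2p)\cdot\tfrac{p}{p-2}=1$ this gives $\|\bM^{1/2}\by\|_2^2\le\|\bA\by\|_p^2\cdot\big(\sum_j w_j\big)^{(p-2)/p}\le\|\bA\by\|_p^2\,d^{(p-2)/p}$, so that $s(\ba_i)\le d^{(p-2)/2}w_i$ and $\sum_i s(\ba_i)\le d^{(p-2)/2}\sum_i w_i\le d^{p/2}$.

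The only nonelementary ingredient --- and thus the main thing to pin down --- is the Lewis-weight machinery invoked at the outset: existence and uniqueness of $\bW$, the identity $\ba_i^\top(\bA^\top\bW^{1-2/p}\bA)^{-1}\ba_i=w_i^{2/p}$, and $\sum_i w_i=\mathrm{rank}(\bA)$; these are exactly the properties established in \cite{CohenP15}, after which every step above is an elementary estimate. I would also note that the cheaper route through a well-conditioned basis (\thmref{thm:well:cond:basis}), via $|\langle\ba_i,\by\rangle|\le\beta\|\bu_i\|_p$, only yields polynomial-in-$d$ bounds with worse exponents (roughly $d^{1+p/2}$ for $p<2$ and $d^{p}$ for $p>2$), so passing through Lewis weights is what makes the bounds tight. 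Rank-deficiency is handled throughout by restricting to the row space of $\bA$ and replacing $d$ by $\mathrm{rank}(\bA)$, which only strengthens the conclusion.
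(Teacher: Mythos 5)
The paper states this lemma as a citation to \cite{CohenP15,ChenD21} and gives no proof of its own, so there is nothing in the paper to compare against. Your proof is correct and is precisely the standard Lewis-weight argument: the unwound fixed-point identity $\ba_i^\top(\bA^\top\bW^{1-2/p}\bA)^{-1}\ba_i=w_i^{2/p}$ combined with generalized Cauchy--Schwarz gives $|\langle\ba_i,\by\rangle|\le w_i^{1/p}\|\bM^{1/2}\by\|_2$, the self-bounding cancellation (exponents $1-\tfrac2p+\tfrac{2-p}{p}=0$) for $p\le 2$ yields $\|\bM^{1/2}\by\|_2\le\|\bA\by\|_p$, the H\"older step with conjugate exponents $p/2$ and $p/(p-2)$ cleanly delivers the extra $d^{(p-2)/2}$ for $p>2$, and closing with $\sum_i w_i=\mathrm{rank}(\bA)\le d$ gives both bounds; your handling of rank-deficiency and your delegation of existence/uniqueness of $\bW$ to \cite{CohenP15} are both appropriate. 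Your aside that the well-conditioned-basis route from \thmref{thm:well:cond:basis} only yields $d^{1+p/2}$ (for $p<2$) and $d^p$ (for $p>2$) is also accurate, which is a worthwhile observation about why Lewis weights are the right tool here.
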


\subsection{Efficient Encoding for Coreset Construction for \texorpdfstring{$L_p$}{Lp} Subspace Embedding}
We now give our efficient encoding for a given coreset for $L_p$ subspace embeddings. 
Given a matrix $\bA$, which can be viewed as either the original matrix or a set of reweighted rows that forms a coreset of some underlying matrix, we first acquire a constant-factor approximation $\bM$ for $L_p$ subspace embedding on $\bA$. 
We then use the existence of a well-conditioned basis to compute a deterministic preconditioner $\bP\in\mathbb{R}^{d\times d}$, so that $\bM\bP^{-1}$ has condition number $\poly(d)$. 
Now for each row $\ba_t$ of $\bA$, let $\bb'_t$ be $\ba_t\bP$ with each coordinate rounded to a power of $(1+\eps')$ for $\eps'=\frac{\poly(\eps)}{\poly(d^p,\log(nd)}$. 
We then store the exponent of each rounded coordinate of $\bB'$, as well as $\bM$. 
The algorithm appears in full in \algref{alg:coreset:embedding}. 

\begin{algorithm}[!htb]
\caption{Efficient Encoding for Coreset Construction for $L_p$ Subspace Embedding}
\alglab{alg:coreset:embedding}
\begin{algorithmic}[1]
\Require{Matrix $\bA\subset\{-M,\ldots,-1,0,1,\ldots,M\}^{n\times d}$, accuracy parameter $\eps\in(0,1)$, failure probability $\delta\in(0,1)$}
\Ensure{$(1+\eps)$-coreset for $L_p$ subspace embedding}
\State{$\eps'\gets\frac{\poly(\eps)}{\poly(d^p,\log(nd\Delta)}$}
\State{Find a matrix $\bM$ that is a constant-factor $L_p$ subspace embedding on $\bA$}
\Comment{\thmref{thm:lewis:weights}}
\State{Compute a deterministic preconditioner $\bP$ of $\bM$ using well-conditioned bases, such that $\bM\bP^{-1}$ has condition number $\poly(d)$}
\For{each row $\ba\in\bA$}
\State{Let $\bb'$ be $\ba\bP$ with each coordinate rounded to a power of $(1+\eps')$}
\State{$\bB'\gets\bB'\circ\bb'$, storing the \emph{exponent} for each entry of $\bb'$}
\EndFor
\State{\Return $(\bM,\bB')$, from which estimate $\bA'$ can be constructed}
\end{algorithmic}
\end{algorithm}
Given a constant-factor subspace embedding $\bM$, we now show that the matrix $\bB'$ acquired by rounding each row of $\bA\bP$ can be used to construct a matrix $\bA'$ that is a strong coreset of $\bA$. 
\begin{lemma}
\lemlab{lem:x:to:xprime:embedding}
Let $\eps\in\left(0,\frac{1}{2}\right)$ and let $\bA'=\bB'\bP^{-1}$, where $\bB'$ is the output of \algref{alg:coreset:embedding} and $\bP$ is the deterministic preconditioner of $\bM$. 
Then for all $\bx\in\mathbb{R}^d$,
\[(1-\eps)\|\bA\bx\|_p^p\le\|\bA'\bx\|_p\le(1+\eps)\|\bA\bx\|_p^p.\]
\end{lemma}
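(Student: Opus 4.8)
The plan is to reduce the claim to a row-by-row error estimate in the \emph{preconditioned coordinates} and then aggregate the errors using the fact that the total $L_p$ sensitivity is only $\poly(d)$. Fix a query $\bx\in\mathbb{R}^d$; by homogeneity of both sides we may assume $\|\bA\bx\|_p=1$. Write $\bB:=\bA\bP$ and $\bz:=\bP^{-1}\bx$, so that $\bA\bx=\bB\bz$ and $\bA'\bx=\bB'\bP^{-1}\bx=\bB'\bz$, where $\bB'$ is $\bB$ with every entry rounded to a power of $(1+\eps')$, hence $|B'_{t,j}-B_{t,j}|\le\eps'|B_{t,j}|$ for all $t,j$. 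Then for each row $t$,
\[
\bigl|(\bB'\bz)_t-(\bB\bz)_t\bigr|\le\sum_j|B'_{t,j}-B_{t,j}|\,|z_j|\le\eps'E_t,\qquad E_t:=\sum_j|B_{t,j}|\,|z_j|.
\]
Once the $E_t$'s are under control, \factref{fact:gen:tri} (applied with $\eps/2$ in place of $\eps$), summed over rows, gives
\[
\|\bA'\bx\|_p^p=\sum_t|(\bB'\bz)_t|^p\le\Bigl(1+\tfrac{\eps}{2}\Bigr)\|\bA\bx\|_p^p+\Bigl(1+\tfrac{4p}{\eps}\Bigr)^p(\eps')^p\sum_tE_t^p,
\]
and symmetrically a matching lower bound using $|(\bB\bz)_t|\le|(\bB'\bz)_t|+\eps'E_t$. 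So everything reduces to showing $\sum_tE_t^p\le\poly(d)$.

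To bound $\sum_tE_t^p$ I would combine two observations. First, $E_t=\langle\bb_t,\bz^{(t)}\rangle$, where $\bz^{(t)}$ is obtained from $\bz$ by flipping the sign of $z_j$ whenever $\mathrm{sign}(B_{t,j})\ne\mathrm{sign}(z_j)$; thus $|\bz^{(t)}|=|\bz|$ coordinatewise, and by the definition of the $L_p$ sensitivity $s_{\bB}(\bb_t)$ of row $\bb_t$ of $\bB$ we get $E_t^p\le s_{\bB}(\bb_t)\cdot\|\bB\bz^{(t)}\|_p^p$. Second — and this is where the preconditioner is essential — since $\bP$ is built from a well-conditioned basis of $\bM$ and $\bM$ is a constant-factor $L_p$ subspace embedding of $\bA$, the matrix $\bB=\bA\bP$ behaves, up to constant factors, like an $(\alpha,\beta,p)$-well-conditioned basis with $\alpha\beta=\poly(d)$ (via \thmref{thm:well:cond:basis} and \thmref{thm:lewis:weights}); concretely, $\|\bB\bv\|_p\le\poly(d)\,\|\bv\|_q$ and $\|\bv\|_q\le\poly(d)\,\|\bB\bv\|_p$ for every $\bv$, where $q$ is the dual exponent of $p$. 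Because $\|\bz^{(t)}\|_q=\|\bz\|_q\le\poly(d)\|\bB\bz\|_p=\poly(d)$, this yields $\|\bB\bz^{(t)}\|_p\le\poly(d)$ \emph{uniformly in} $t$, so $E_t^p\le\poly(d)\cdot s_{\bB}(\bb_t)$. Finally, $L_p$ sensitivities are invariant under right multiplication by an invertible matrix, so $s_{\bB}(\bb_t)=s_{\bA}(\ba_t)$, and \lemref{lem:total:lp:sens} gives $\sum_t s_{\bA}(\ba_t)\le d^{\max(1,p/2)}$. Hence $\sum_tE_t^p\le\poly(d)$, and choosing $\eps'=\poly(\eps)/\poly(d^p,\log(nd))$ as in \algref{alg:coreset:embedding} forces the error term above to be at most $\tfrac{\eps}{2}\|\bA\bx\|_p^p$, giving $(1-\eps)\|\bA\bx\|_p^p\le\|\bA'\bx\|_p^p\le(1+\eps)\|\bA\bx\|_p^p$ and hence the stated bound after taking $p$-th roots (using $(1\pm\eps)^{1/p}\ge 1-\eps$ and $\le 1+\eps$ for $p\ge 1$).

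The main obstacle is the uniform control of $\|\bB\bz^{(t)}\|_p$ over all $2^d$ sign patterns: without preconditioning, flipping signs of $\bz$ can inflate $\|\bA\bz^{(t)}\|_p$ arbitrarily relative to $\|\bA\bz\|_p$ — exactly the cancellation phenomenon sketched in the technical overview — so the naive triangle-inequality estimate on $E_t$ is useless. The preconditioner is precisely what collapses this, since after multiplying by $\bP$ the quantity $\|\bB\bv\|_p$ is pinned (up to $\poly(d)$) between constant multiples of $\|\bv\|_q$, which is sign-pattern independent. A secondary bookkeeping point to handle carefully is that the rounding is performed on the entries of $\bB=\bA\bP$ rather than on $\bA$, so the per-row error appears as $\langle\bb'_t-\bb_t,\bz\rangle$ with $\bz=\bP^{-1}\bx$; tracking this, together with verifying the invariance identity $s_{\bB}(\bb_t)=s_{\bA}(\ba_t)$, is routine, and the remaining manipulations are applications of \factref{fact:gen:tri} exactly as in \lemref{lem:x:to:xprime:cluster}.
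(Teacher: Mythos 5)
Your proof is correct and follows essentially the same route as the paper: precondition by $\bP$ so that $\bB=\bA\bP$ is well-conditioned, bound the per-row rounding error by $\poly(d)$ times the $L_p$ sensitivity of the row, and conclude via the $\poly(d)$ bound on the total sensitivity together with the choice of $\eps'$. Your write-up is in fact somewhat more careful than the paper's on two points — you make the per-row $p$-th-power comparison rigorous via \factref{fact:gen:tri}, and you make explicit the sign-flipping step that relates $E_t=\sum_j|B_{t,j}||z_j|$ to the sensitivity of $\bb_t$ — but these are refinements of the same argument, not a different one.
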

\begin{proof}
Without loss of generality, we assume that $\bA$ is full rank. 
Note that we can write $\bA\bx=\bA\bP\bP^{-1}\bx$ and thus by rewriting $\by=\bP^{-1}\bx$, it suffices to show that for all $\by\in\mathbb{R}^d$
\[(1-\eps)\|\bA\bP\by\|_p^p\le\|\bA'\bP\by\|_p^p\le(1+\eps)\|\bA\bP\by\|_p^p.\]
We remark that if $\bA$ is not full rank, the task would be to show claim for all $\by$ in the column span of $\bP^{-1}$. 

Let $\bB=\bA\bP$ and observe that \algref{alg:coreset:embedding} rounds all coordinates of each row $\bb_t=\ba_t\bP$ to their closest power of $(1+\eps')$. 
Then we have 
\[\left\lvert\langle\bb_t,\by\rangle^p-\langle\bb'_t,\by\rangle^p\right\rvert\le\eps'\cdot\max_{\by\in\mathbb{R}^d,\|\by\|_p=1}\left\lvert\langle\bb_t,\by\rangle\right\rvert^p.\]
For each $t\in[n]$, let $s(\bb_t)$ denote the sensitivity of $\bb_t$ with respect to $\bB$, so that
\[s(\bb_t)=\max_{\by\in\mathbb{R}^d}\frac{\left\lvert\langle\bb_t,\by\rangle\right\rvert^p}{\|\bB\by\|_p^p}.\]
Recall that $\bB=\bA\bP$ and $\bP$ is a preconditioner for $\bM$, so that $\bM\bP$ is well-conditioned. 
That is, for any unit vector $\by\in\mathbb{R}^d$, we have
\[\frac{1}{d^{pC}}\cdot\|\bM\bP\by\|_p^p\le\|\by\|_p^p\le d^{pC}\cdot\|\bM\bP\by|_p^p,\]
for a fixed constant $C>0$. 
Since $\bM$ is a constant-factor subspace embedding for $\bA$, then we certainly have 
\[\frac{1}{d^{p(C+1)}}\cdot\|\bA\bP\by\|_p^p\le\|\by\|_p^p\le d^{p(C+1)}\cdot\|\bA\bP\by|_p^p.\]
Thus it follows that 
\[\frac{1}{d^{p(C+1)}}\cdot\|\bB\by\|_p^p\le\|\by\|_p^p\le d^{p(C+1)}\cdot\|\bB\by|_p^p.\]
Therefore, 
\[\max_{\by\in\mathbb{R}^d}\left\lvert\langle\bb_t,\by\rangle\right\rvert^p\le s(\bb_t)\cdot d^{p(C+1)},\]
so that
\[\left\lvert\|\langle\bb_t,\by\rangle^p-\langle\bb'_t,\by\rangle^p\right\rvert\le\eps'\cdot s(\bb_t)\cdot d^{p(C+1)}.\]
Then by triangle inequality, we have
\[\left\lvert\|\bB\by\|_p^p-\|\bB'\by\|_p^p\right\rvert\le\sum_{t=1}^n\left\lvert\|\langle\bb_t,\by\rangle-\langle\bb'_t,\by\rangle\right\rvert^p\le\eps'\cdot\sum_{t=1}^n s(\bb_t)\cdot d^{p(C+1)}.\]
Since the sum of $L_p$ sensitivities is at most $d$ by \lemref{lem:total:lp:sens}, then 
\[\left\lvert\|\bB\by\|_p^p-\|\bB'\by\|_p^p\right\rvert\le\eps'\cdot
d^{p(C+2)}.\]
The desired claim then follows by the setting of $\eps'=\frac{\poly(\eps)}{\poly(d^p,\log(nd\Delta)}$, recalling that $\bB\by=\bA\bP\bP^{-1}\bx$ and $\bB'\by=\bA'\bP\bP^{-1}\bx$. 
\end{proof}
We now give the full guarantees of \algref{alg:coreset:embedding}, which gives an efficient encoding of an input matrix $\bA$. 
\begin{lemma}
\lemlab{lem:efficient:coreset:embedding}
Let $\bA\in\mathbb{R}^{m\times d}$ be a coreset construction with entries in $\{0\}\cup\left[\frac{1}{\poly(n)},\poly(n)\right]$. 
Let 
\[f\left(d,\frac{1}{\eps},p\right)=\begin{cases}
\frac{d}{\eps^2}\log\frac{d}{\eps}\polylog\left(\frac{d}{\eps}\right),\qquad& p\in[1,2)\\
\O{\frac{d}{\eps^2}},\qquad& p=2\\
\frac{d^{p/2}}{\eps^2}\cdot\polylog\left(\frac{d}{\eps}\right),\qquad&p>2     
\end{cases}.\]
Then $\bA'$ is a $(1+\eps)$-coreset for $\bA$ for $L_p$ subspace embedding that uses 
$\O{f\left(d,1,p\right)}\cdot (d\log n)+md\cdot\polylog\left(d,\frac{1}{\eps},\log(nd\kappa),\log\frac{1}{\delta}\right)$ bits of space. 
\end{lemma}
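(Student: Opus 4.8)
The plan is to obtain correctness directly from \lemref{lem:x:to:xprime:embedding} and to establish the space bound by a direct count of the bits needed to record the pair $(\bM,\bB')$ returned by \algref{alg:coreset:embedding}, together with a magnitude estimate for the preconditioned rows. For correctness I would first check the hypotheses of \lemref{lem:x:to:xprime:embedding}: by \thmref{thm:lewis:weights} (with accuracy a fixed constant) $\bM$ is a constant-factor $L_p$ subspace embedding for $\bA$, and by \thmref{thm:well:cond:basis} the preconditioner $\bP$ computed from a well-conditioned basis of $\bM$ makes $\bM\bP^{-1}$ have condition number $\poly(d)$. Hence \lemref{lem:x:to:xprime:embedding} gives $(1-\eps)\|\bA\bx\|_p^p\le\|\bA'\bx\|_p^p\le(1+\eps)\|\bA\bx\|_p^p$ for all $\bx\in\mathbb{R}^d$, where $\bA'=\bB'\bP^{-1}$; taking $p$-th roots yields the stated multiplicative $(1\pm\eps)$ guarantee, so $\bA'$ is a $(1+\eps)$-coreset for $L_p$ subspace embedding.

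For the space bound I would account for the two stored objects separately. The matrix $\bM$ has $f(d,1,p)$ rows and $d$ columns by \thmref{thm:lewis:weights}, and each of its entries has magnitude at most $\|\bA\|_\infty\cdot\poly(n)=\poly(n)$, using that the entries of $\bA$ lie in $\{0\}\cup\left[\frac1{\poly(n)},\poly(n)\right]$; hence $\bM$ is storable in $\O{f(d,1,p)\cdot d\log n}$ bits. The preconditioner $\bP$ is a deterministic function of $\bM$ (via the QR decomposition and L\"owner--John ellipsoid construction underlying \thmref{thm:well:cond:basis}), so it need not be stored; even if it were, its $d^2$ entries contribute only $\O{d^2\log(n\kappa)}$ bits, which is absorbed into the previous term since $f(d,1,p)\ge d$ for every $p\ge1$.

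It remains to bound the storage for $\bB'$, which is the crux of the calculation. The matrix $\bB'$ has $m$ rows and $d$ columns, and each entry is either $0$ or a power of $(1+\eps')$, for which we keep only the exponent (plus one bit for zero/sign). To bound the range of exponents, I would show that the nonzero entries of $\ba_t\bP$ have magnitude in $[\Gamma^{-1},\Gamma]$ for some $\Gamma$ with $\log\Gamma=\poly(d)\cdot\log n$: this follows from $\|\ba_t\bP\|_\infty\le d\,\|\bA\|_\infty\|\bP\|_\infty$ together with bounds on $\|\bP\|_\infty$ and $\|\bP^{-1}\|_\infty$ derived from the singular-value estimate proved just before the definition of a well-conditioned basis (every nonzero singular value $\sigma_i$ of a bounded-integer matrix satisfies $|\log\sigma_i|=\O{d\log(nM)}$). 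Consequently each exponent is an integer of magnitude $\O{\log\Gamma/\eps'}=\poly\left(d,\frac1\eps,\log(nd\kappa)\right)$ (recalling $\eps'=\frac{\poly(\eps)}{\poly(d^p,\log(nd\kappa))}$), and hence storable in $\polylog\left(d,\frac1\eps,\log(nd\kappa),\log\frac1\delta\right)$ bits, where the $\log\frac1\delta$ factor rides along from the failure probability of the randomized subspace-embedding subroutine. Summing over the $md$ entries of $\bB'$ and adding the cost of $\bM$ gives the claimed total of $\O{f(d,1,p)}\cdot(d\log n)+md\cdot\polylog\left(d,\frac1\eps,\log(nd\kappa),\log\frac1\delta\right)$ bits.

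I expect the main obstacle to be precisely this last magnitude bookkeeping: ensuring that applying the preconditioner $\bP$ does not blow up (or shrink) entries beyond $\poly(n)^{\poly(d)}$, so that the exponents remain $\polylog$-many bits, and likewise that the rounded entries of $\bB'$ are never rounded to $0$ on a spuriously small value. \thmref{thm:well:cond:basis} supplies the entrywise $L_p$ mass and the conditioning of the basis but only implicitly controls $\|\bP\|_\infty$ and $\|\bP^{-1}\|_\infty$; a short argument deriving these from the singular-value estimate (as above) closes the gap, after which everything is routine.
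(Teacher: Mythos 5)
Your proposal is correct and follows essentially the same route as the paper: correctness is delegated to \lemref{lem:x:to:xprime:embedding}, and the space bound is obtained by charging $\O{f(d,1,p)\cdot d\log n}$ bits to $\bM$ via \thmref{thm:lewis:weights} and $\polylog$ bits per exponent for each of the $md$ entries of $\bB'$. The only difference is that you spell out the magnitude bookkeeping for the exponent range (via bounds on $\|\bP\|_\infty$, $\|\bP^{-1}\|_\infty$ from the singular-value estimate), which the paper's proof asserts without elaboration; this is a welcome tightening rather than a departure.
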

\begin{proof}
Firstly, note that \algref{alg:coreset:embedding} outputs a pair $(\bM,\bB')$ that encodes $\bB'$ using $\bM$. 
By \lemref{lem:x:to:xprime:embedding}, we have that for all $\bx\in\mathbb{R}^d$, 
\[(1-\eps)\|\bA\bx\|_1\le\|\bA'\bx\|_1\le(1+\eps)\|\bA\bx\|_1,\]
where $\bA'=\bB'\bP^{-1}$. 
Thus, $\bA'$ is a coreset for $\bA$. 
 
It remains to analyze the space complexity of $(\bM,\bB')$. 
Since $\bM$ is a constant-factor $L_p$ subspace embedding, then by \thmref{thm:lewis:weights}, $\bM$ can be represented using $\O{f(d,1,p)}\cdot d\log(n)$ bits of space. 
Furthermore, each offset $\bb'$ in $\bB'$ is encoded using $\O{d\log\left(\frac{1}{\eps}+\log(nd\kappa)\right)}$ bits of space due to storing the $d$ coordinates of the offset rounded to the power of $(1+\eps')$. 
Specifically, we store the exponent of each offset after the rounding, rather than the explicit coordinates. 
Therefore, the output of the algorithm is encoded using $\O{f\left(d,1,p\right)}\cdot (d\log n)+md\cdot\polylog\left(d,\frac{1}{\eps},\log(nd\kappa),\log\frac{1}{\delta}\right)$. 
\end{proof}

\subsection{Subspace Embedding in the Streaming Model}
In this section, we show how to use a global encoding combined with the efficient encoding from the previous section in order to achieve our main algorithm for $L_p$ subspace embeddings in the row-arrival model. 

Similar to $(k,z)$-clustering, the main downfall of immediately applying the previous efficient encoding to each node of a merge-and-reduce tree is that the constant-factor approximation requires $\O{d^2\log(nd)}$ bits of storage per efficient encoding, and there can be $\polylog(\log(nd))$ such efficient encodings due to the height of the merge-and-reduce tree on a data stream produced by online Lewis weight sampling. 
As before, we circumvent this issue by showing that we can instead consider a single constant-factor approximation to the global dataset. 
We then show that the error of maintaining such a global constant-factor approximation does not compound too much over the course of the stream. 

We give the algorithm in full in \algref{alg:embedding:insert:stream}. 

\begin{algorithm}[!htb]
\caption{$L_p$-Embedding on Insertion-Only Stream}
\alglab{alg:embedding:insert:stream}
\begin{algorithmic}[1]
\Require{Matrix $\bA=\{\ba_1,\ldots,\ba_n\}\in\mathbb{R}^{n\times d}$ that arrives as a data stream, $\eps\in(0,1)$, parameter $p\ge 1$}
\Ensure{$(1+\eps)$-coreset for $L_p$-subspace embedding}
\State{$\bZ\gets\emptyset$, $\lambda\gets\O{\frac{d}{\eps^2}\cdot\log d\log n}$}
\For{each $t\in[n]$}
\State{Use $\bZ\cup\{\widehat{\ba_t}\}$ to compute a $\O{1}$-approximation $\widehat{\sigma(\ba_t)}$ to the online Lewis weight of $\ba_t$}
\State{$p(\ba_t)\gets\min(1,\lambda\cdot\widehat{\sigma(\ba_t)})$}
\State{With probability $p(\ba_t)$, sample $\frac{1}{p(\ba_t)}\cdot\ba_t$ into a stream $\calS'$}
\State{Let $\bZ$ be the running output of merge-and-reduce on $\calS'$ using the efficient encoding for coreset construction in \algref{alg:coreset:embedding} with a \emph{global} constant-factor embedding, for accuracy $\frac{\eps}{\poly(\log\log nd)}$ and failure probability $\frac{1}{\poly\left(\frac{1}{\eps},\log(nd)\right)}$} 
\EndFor
\State{\Return $\bZ$}
\end{algorithmic}
\end{algorithm}
We first show that given disjoint matrices $\bQ_1,\ldots,\bQ_m$ of $\bA$, we can apply \algref{alg:coreset:embedding} and although the output $\bQ'_i$ from our efficient encoding will no longer necessarily be a strong coreset for each corresponding $\bQ_i$, the resulting error will only be an additive $\eps'\cdot\|\bA\bx\|_p^p$ for a fixed $\bx\in\mathbb{R}^d$, which suffices for our purposes. 
Specifically, we will set $\eps'=\frac{\eps}{\poly(\log\log(nd)}$ and $m=\poly(\log\log(nd))$, provided the entries of $\bA$ are integers bounded in magnitude by $\poly(nd)$. 
\begin{lemma}
\lemlab{lem:each:coreset:embedding}
Let $\bQ_1\circ\ldots\circ\bQ_m=\bA$ be a partition of $\bA$. 
Let $\bM$ be a constant-factor $L_p$ subspace embedding on $\bA$. 
For each $i\in[m]$, let $\bQ'_i$ be the resulting matrix after applying the inverse of the preconditioner to the rounded matrix resulting from $\bQ_i$. 
Then for all $i\in[m]$ and all $\bx\in\mathbb{R}^d$, 
\[\left\lvert\|\bQ'_i\bx\|_p^p-\|\bQ_i\bx\|_p^p\right\rvert\le\frac{\eps}{\poly(\log\log nd)}\cdot\|\bA\bx\|_p^p.\]
\end{lemma}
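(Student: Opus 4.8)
The plan is to mirror the proof of \lemref{lem:x:to:xprime:embedding}, but to track the additive error contributed by each block $\bQ_i$ separately, crucially using the \emph{global} preconditioner $\bP$ obtained from $\bM$ (via the well-conditioned basis of \thmref{thm:well:cond:basis}), rather than a preconditioner tailored to $\bQ_i$. Fix $\bx\in\mathbb{R}^d$. As in \lemref{lem:x:to:xprime:embedding}, I would substitute $\by=\bP^{-1}\bx$ (restricting all quantifiers to the column span of $\bP^{-1}$ if $\bA$ is not full rank), so that $\bQ_i\bx=(\bQ_i\bP)\by$ and, writing $\bQ''_i$ for the entrywise rounding of $\bQ_i\bP$ to powers of $(1+\eps')$ and $\bQ'_i=\bQ''_i\bP^{-1}$, we have $\bQ'_i\bx=\bQ''_i\by$; also $\|\bB\by\|_p^p=\|\bA\bx\|_p^p$ for $\bB:=\bA\bP$. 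Since all the relevant quantities are $p$-homogeneous in $\by$, I may normalize so that $\|\bB\by\|_p=1$, and it remains to bound $\left|\,\|(\bQ_i\bP)\by\|_p^p-\|\bQ''_i\by\|_p^p\,\right|$ by $\eps/\polylog(\log(nd))$.

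Because $\bM\bP$ is $\poly(d)$-conditioned by construction and $\bM$ is a constant-factor embedding of $\bA$, the matrix $\bB=\bA\bP$ is also $\poly(d)$-conditioned, so $\|\by\|_p\le\poly(d)$. Then, exactly as in the proof of \lemref{lem:x:to:xprime:embedding}, for each row $\bb_t$ of $\bQ_i\bP$ with rounded version $\bb'_t$, the entrywise $(1+\eps')$ rounding together with the $\poly(d)$-conditioning of $\bB$ gives $\left|\langle\bb_t,\by\rangle^p-\langle\bb'_t,\by\rangle^p\right|\le\eps'\cdot s_\bB(\bb_t)\cdot d^{\O{p}}$, where $s_\bB(\bb_t)=\max_{\bw}\frac{|\langle\bb_t,\bw\rangle|^p}{\|\bB\bw\|_p^p}$ is the $L_p$ sensitivity of $\bb_t$ \emph{with respect to the full matrix $\bB$}. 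The key observation is that the rows of $\bQ_i\bP$ form a subset of the rows of $\bB$ (as $\bA=\bQ_1\circ\cdots\circ\bQ_m$ implies $\bB=(\bQ_1\bP)\circ\cdots\circ(\bQ_m\bP)$), so by \lemref{lem:total:lp:sens} applied to $\bB$ we get $\sum_{t:\,\bb_t\in\bQ_i\bP} s_\bB(\bb_t)\le\sum_{\text{all rows of }\bB} s_\bB(\cdot)\le d^{\max(1,p/2)}$. Summing the per-row bound over the rows of block $i$ and using the triangle inequality yields $\left|\,\|(\bQ_i\bP)\by\|_p^p-\|\bQ''_i\by\|_p^p\,\right|\le\eps'\cdot d^{\O{p}}$. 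Plugging in $\eps'=\frac{\poly(\eps)}{\poly(d^p,\log(nd\kappa))}$ as in \algref{alg:coreset:embedding} makes this at most $\frac{\eps}{\polylog(\log(nd))}$ even after the $m=\poly(\log\log(nd))$ blocks are accounted for, and undoing the normalization gives $\left|\,\|\bQ'_i\bx\|_p^p-\|\bQ_i\bx\|_p^p\,\right|\le\frac{\eps}{\polylog(\log(nd))}\|\bA\bx\|_p^p$; any rounding of the sampled row scalings contributes only a further $(1+\eps')$ factor, absorbed the same way.

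The main obstacle is the per-row step: controlling $\left|\langle\bb_t,\by\rangle^p-\langle\bb'_t,\by\rangle^p\right|$ without sign cancellations in $\langle\bb_t,\by\rangle$ blowing up the bound (the phenomenon spelled out in \secref{sec:subspace:embed}). This is exactly why the argument must use the single global preconditioner $\bP$ coming from $\bM$ — so that $\bB=\bA\bP$ is well-conditioned and the worst-case contribution of each rounded row can be charged to its $L_p$ sensitivity in $\bB$ — and why doing so is safe for all $m$ blocks at once: the sensitivities $s_\bB(\cdot)$ are measured against the fixed matrix $\bB$, not against the individual $\bQ_i$, so \lemref{lem:total:lp:sens} still caps the error accumulated within any block by $d^{\max(1,p/2)}$ rather than by a quantity that grows with $m$ or with the block size.
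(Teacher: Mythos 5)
Your proposal is correct, and it rests on the same underlying mechanism as the paper's argument: every row's rounding error is bounded in absolute value by $\eps'$ times its $L_p$ sensitivity with respect to the \emph{global} preconditioned matrix $\bB=\bA\bP$, and the total sensitivity bound of \lemref{lem:total:lp:sens} then caps the accumulated error. The difference is in how that mechanism is deployed. The paper's proof tries to deduce the per-block bound directly from the whole-matrix statement of \lemref{lem:x:to:xprime:embedding}, via the chain $\left\lvert\|\bQ'_i\bx\|_p^p-\|\bQ_i\bx\|_p^p\right\rvert\le\sum_{i}\left\lvert\|\bQ'_i\bx\|_p^p-\|\bQ_i\bx\|_p^p\right\rvert=\left\lvert\|\bA'\bx\|_p^p-\|\bA\bx\|_p^p\right\rvert$; as written the final equality is false in general (block errors can cancel in sign, so the sum of absolute values only dominates, rather than equals, the absolute value of the sum), and the whole-matrix conclusion alone does not imply the per-block one. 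Your route -- reopening the proof of \lemref{lem:x:to:xprime:embedding}, noting that its per-row bounds are absolute and charged to sensitivities measured against the fixed matrix $\bB$ rather than against the block, and then summing over only the rows of $\bQ_i$ -- is exactly what makes the per-block statement go through, and is the rigorous version of what the paper intends. Your remarks about using the single global preconditioner (so that the conditioning of $\bB$, not of $\bQ_i\bP$, controls the per-row error) and about the block sums of $s_\bB(\cdot)$ being dominated by the global total are the right observations.
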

\begin{proof}
Note that by \lemref{lem:x:to:xprime:embedding} with accuracy $\frac{\eps}{\poly(\log\log nd)}$, we have
\[\left\lvert\|\bA'\bx\|_p^p-\|\bA\bx\|_p^p\right\rvert\le\frac{\eps}{\poly(\log\log nd)}\cdot\|\bA\bx\|_p^p.\]
Since 
\[\left\lvert\|\bQ'_i\bx\|_p^p-\|\bQ_i\bx\|_p^p\right\rvert\le\sum_{i=1}^m\left\lvert\|\bQ'_i\bx\|_p^p-\|\bQ_i\bx\|_p^p\right\rvert=\left\lvert\|\bA'\bx\|_p^p-\|\bA\bx\|_p^p\right\rvert,\]
then the claim follows. 
\end{proof}
We now give the full guarantees of \algref{alg:embedding:insert:stream} by showing correctness of the global encoding and specifically, that there is no compounding error across the $\poly(\log\log(nd))$ iterations. 
\begin{theorem}
\thmlab{thm:subspace:per:update}
Given a matrix $\bA$ of $n$ rows on $[-M,\ldots,-1,0,1,\ldots,M]^d$, with $M=\poly(n)$ and online condition number $\kappa$, there exists a one-pass streaming algorithm in the row arrival model that outputs a $(1+\eps)$-strong coreset of $\bA$ and uses $\O{d}\cdot f\left(d,\frac{1}{\eps},p\right)$ words of space, where
\[f\left(d,\frac{1}{\eps},p\right)=\begin{cases}
\frac{d}{\eps^2}\log\frac{d}{\eps}\polylog\left(\frac{d}{\eps}\right),\qquad& p\in[1,2)\\
\O{\frac{d}{\eps^2}},\qquad& p=2\\
\frac{d^{p/2}}{\eps^2}\cdot\polylog\left(\frac{d}{\eps}\right),\qquad&p>2     
\end{cases}.\]
\end{theorem}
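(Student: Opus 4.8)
The plan is to mirror the proof of \thmref{thm:cluster:stream}, replacing online sensitivity sampling by online $L_p$ Lewis weight sampling and the clustering encoding by \algref{alg:coreset:embedding}. I would analyze \algref{alg:embedding:insert:stream} by induction over the stream, maintaining at each time $t$ the invariants that (i) $\widehat{\sigma(\ba_1)},\ldots,\widehat{\sigma(\ba_t)}$ are constant-factor approximations to the online Lewis weights of $\ba_1,\ldots,\ba_t$, (ii) the number of rows sampled into $\calS'$ so far is at most $g\!\left(n,d,\frac1\eps,\kappa,p\right)$, and (iii) the reconstruction of $\bZ$ at time $t$ is a $(1+\eps)^2$-strong coreset of $\bA_t$. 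The base case is immediate. For the inductive step, conditioning on (iii) at time $t-1$, the matrix obtained by appending $\ba_t$ to (the reconstruction of) $\bZ$ is a $(1+\O{\eps})$-coreset of $\bA_t$; since an $L_p$ subspace embedding preserves Lewis weights up to constant factors, this yields a constant-factor approximation $\widehat{\sigma(\ba_t)}$, giving (i). Then \thmref{thm:online:lewis:embedding} guarantees that $\calS'$ after time $t$ is a $(1+\eps)$-coreset of $\bA_t$, and \thmref{thm:total:online:lewis} together with \thmref{thm:online:lewis:embedding} bounds its length by $g\!\left(n,d,\frac1\eps,\kappa,p\right)$ with high probability, giving (ii). Here I would invoke the singular-value bound $|\log\sigma_i|=\O{d\log(nM)}$ for integer matrices, which with $M=\poly(n)$ shows $\log\kappa=\O{d\log n}$, so that $g$ is a genuine $\poly\!\left(d,\frac1\eps,\log(nd)\right)$; in particular $\calS'$ has length $o(n)$ and bounded entries as required to apply \thmref{thm:lewis:weights} inside merge-and-reduce.

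For (iii), since $\calS'$ has length at most $g=\poly\!\left(d,\frac1\eps,\log(nd)\right)$, the merge-and-reduce tree on $\calS'$ has height $L:=\polylog\!\left(d,\frac1\eps,\log(nd)\right)$ and performs at most $\poly\!\left(\frac1\eps,\log(nd)\right)$ coreset constructions; setting both the per-construction accuracy and the re-encoding accuracy to $\eps'=\frac{\eps}{\Theta(L)}$ and the per-construction failure probability to $\frac1{\poly(1/\eps,\log(nd))}$, a union bound over these constructions and over all $t\in[n]$ (after rescaling $\eps$, exactly as in \thmref{thm:cluster:stream}) makes every construction succeed with high probability. The global encoding enters as follows: rather than storing a constant-factor embedding per node, whenever $\calS'$ changes I recompute a single constant-factor embedding $\bM$ of the current $\bA_t$, form its deterministic preconditioner $\bP$ via \thmref{thm:well:cond:basis}, and re-encode every coreset on the active path relative to $\bP$ as in \algref{alg:coreset:embedding}. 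By \lemref{lem:each:coreset:embedding}, re-encoding a node representing data $\bQ_v$ distorts $\|\bQ_v\bx\|_p^p$ by only an additive $\eps'\cdot\|\bA_t\bx\|_p^p$ for every $\bx$, since $\bM\bP$ is $\poly(d)$-well-conditioned and the rows of $\bQ_v$ form a coreset of a subset of $\bA_t$, so their sensitivities with respect to $\bA_t$ sum to $\O{d^{\max(1,p/2)}}$ by \lemref{lem:total:lp:sens}. Walking up the $\Theta(L)$ levels of the active path, these additive errors accumulate to $\Theta(L)\cdot\eps'\cdot\|\bA_t\bx\|_p^p=\O{\eps}\cdot\|\bA_t\bx\|_p^p$ while the $(1+\eps')$-multiplicative distortions accumulate to $(1+\eps')^{\Theta(L)}=1+\O{\eps}$; combined with the $(1+\eps)$ factor of online Lewis weight sampling and after rescaling $\eps$, this gives (iii).

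For the space bound, the global embedding $\bM$ has $f(d,1,p)$ rows with entries of magnitude $\poly(n)$ and $\bP$ is $d\times d$ with entries of magnitude $\poly(nd\kappa)$, contributing $\O{d}\cdot f(d,1,p)$ words, which is $\O{d^2}$ words for $p\le2$ and $\O{d^{p/2+1}}$ words for $p>2$ (using $\log\kappa=\O{d\log n}$). The tree stores $\O{L}$ active nodes, each a coreset of at most $f\!\left(d,\frac1{\eps'},p\right)=f\!\left(d,\frac1\eps,p\right)\cdot\polylog\!\left(d,\frac1\eps,\log(nd)\right)$ rows, and under \algref{alg:coreset:embedding} each such row is stored by its $d$ coordinate exponents, each using $\O{\log\log(nd\kappa)+\log\frac1{\eps'}}$ bits, i.e.\ sub-word; by \lemref{lem:efficient:coreset:embedding} the whole tree fits in $\O{d}\cdot f\!\left(d,\frac1\eps,p\right)$ words, which dominates. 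I expect the main obstacle to be the middle step: arguing that re-encoding all stored coresets against a \emph{single} global embedding of $\bA_t$, rather than per-node embeddings, costs only additive error relative to $\|\bA_t\bx\|_p^p$ that does not compound across the stream — this is exactly what lets us avoid the $\Omega(d^2\log(nd))$-bit-per-node overhead while keeping the number of active nodes, and hence the accumulated additive error, at $\polylog(\log(nd))$. The remaining work is a routine translation of \thmref{thm:cluster:stream} through the $L_p$ analogues \thmref{thm:online:lewis:embedding}, \thmref{thm:total:online:lewis}, \lemref{lem:x:to:xprime:embedding}, and \lemref{lem:each:coreset:embedding}.
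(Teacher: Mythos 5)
Your proposal reconstructs the paper's proof in essentially the same way: it uses the online Lewis weight sampling guarantees (Theorems 4.6 and 4.7) to reduce the stream to length $g=\poly(d,\frac{1}{\eps},\log(n\kappa))$, runs merge-and-reduce with the efficient encoding of Algorithm 5, and invokes Lemma 4.11 to show that re-encoding every active node against a \emph{single} global preconditioner contributes only an additive $\frac{\eps}{\polylog(\log nd)}\cdot\|\bA_t\bx\|_p^p$ error per node, which accumulates to $\O{\eps}\cdot\|\bA_t\bx\|_p^p$ over the $\polylog(\log(n\kappa))$ active nodes while the per-merge $(1+\eps')$ multiplicative distortions accumulate to $1+\O{\eps}$. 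The paper organizes the accounting slightly differently (inducting on the number $j$ of merges rather than on the tree height $L$, and leaving the bound $\log\kappa=\O{d\log n}$ implicit rather than citing the singular-value lemma explicitly), but the argument and the space accounting are the same.
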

\begin{proof}
Consider \algref{alg:embedding:insert:stream}. 
We prove correctness by induction on $t$, claiming that after $j\cdot g\left(n,d,\frac{\polylog(\log(n\kappa))}{\eps},p\right)$ samples into $\calS'$, then $\bZ$ is a $\left(1+\O{\frac{\eps}{\poly(\log\log nd)}}\right)^{j+1}$ coreset of $\bA_t$. 
The online Lewis weight of the first row is always $1$ and thus $\bZ=\ba_1$ after the first step, and the base case is complete. 
For the inductive hypothesis, we condition on the correctness of $\bZ$ being a $\left(1+\O{\frac{\eps}{\poly(\log\log nd)}}\right)^j$-coreset of $\bA_{t-1}=\ba_1\circ\ldots\circ\ba_{t-1}$ after $j-1$ merges have occured. 
Then $\bZ\circ\ba_t$ is a $\left(1+\frac{\eps}{\poly(\log\log nd)}\right)$-coreset of $\bA_t=\ba_1\circ\ldots\circ\ba_t$, so we can use $\bZ\circ\ba_t$ to compute a constant-factor approximation to the online Lewis weight $\sigma(\ba_t)$ of $\ba_t$. 
Note that merge-and-reduce will be correct if $\calS'$ is a stream with length at most $g\left(n,d,\frac{\polylog(\log(n\kappa))}{\eps},p\right)$, since the failure probability is set to be $\frac{1}{\poly\left(\frac{1}{\eps},\log(n\Delta)\right)}$ and merge-and-reduce will consider at least $g\left(n,d,\frac{1}{\eps},p\right)$ stream updates before applying a new coreset construction. 
Thus by \lemref{lem:each:coreset:embedding} the correctness of merge-and-reduce on $\calS'$, we have that $\bZ$ after time $t$ will be a $\left(1+\O{\frac{\eps}{\poly(\log\log nd)}}\right)^{j+1}$-strong coreset for $X_t$ after $j$ merges, which completes the induction. 
Finally, we have that by \thmref{thm:total:online:lewis}, at most $g\left(n,d,\frac{\polylog(\log(n\kappa))}{\eps},p\right)$ samples will be inserted into $\calS$, so that $j\le\polylog(n\kappa)$. 
The correctness guarantee then follows by rescaling $\eps$. 

To analyze the space complexity, note that by \thmref{thm:online:lewis:embedding} through \thmref{thm:total:online:lewis}, we have that the total number of sampled rows into $\calS'$ is $n'=\O{g\left(n,d,\frac{1}{\eps},p\right)}$, with high probability. 
We do not maintain $\calS'$, but instead feed $\calS'$ as input to the merge-and-reduce subroutine. 
Hence, the input to merge-and-reduce has size $n'$ and so each coreset implementation has size $f\left(d,\frac{1}{\eps},p\right)\cdot\polylog(n')$. 
We use the efficient encoding for coreset construction given in \algref{alg:coreset:embedding} with accuracy $\frac{\eps}{\poly(\log\log(n\kappa))}$ and failure probability $\frac{1}{\poly\left(\frac{1}{\eps},\log(n\Delta)\right)}$. 
Similar to the analysis of \lemref{lem:efficient:coreset:embedding}, we recall that total representation of $\bZ$ requires a constant-factor $L_p$ subspace embedding $\bM$ that uses $\O{f\left(d,1,p\right)}\cdot (d\log n)$ bits of space. 
We then use \lemref{lem:each:coreset:embedding} to encode each of the coresets of size $f\left(d,\frac{1}{\eps},p\right)\cdot\polylog(n')$, using $d\cdot\polylog\left(d,\frac{1}{\eps},\log(nd\kappa),\log\frac{1}{\delta}\right)$ bits per row. 
Since $n'=\O{g\left(n,d,\frac{1}{\eps},p\right)}$, it follows that there are at most $\polylog\left(d,\frac{1}{\eps},\log(nd\kappa),\log\frac{1}{\delta}\right)$ such coresets. 
Therefore, the total space required is 
\[\O{d\log n}\cdot f(d,1,p)+f\left(d,\frac{1}{\eps},p\right)\cdot\polylog(n')\cdot d\cdot\polylog\left(\frac{1}{\eps},\log(nd\kappa)\right).\]
in bits, which is equivalent to $\O{d}\cdot f\left(d,\frac{1}{\eps},p\right)$ words of space, as desired.
\end{proof}

\subsection{Fast Algorithm for Subspace Embeddings}
In this section, we show that our one-pass streaming algorithm can be implemented in $\O{d}$ amortized update time. 
We first require a crude $n^\alpha$-approximation to the online Lewis weight for each row that arrives in the stream. 
To that end, we first recall that the $L_p$ sensitivity is a $\poly(d)$ approximation to the $L_p$ Lewis weights.

\begin{lemma}
\cite{CohenP15,ChenD21}
\lemlab{lem:lp:lewis}
For a matrix $\bA\in\mathbb{R}^{n\times d}$, let $s(\ba_t):=\max_{\by\in\mathbb{R}^d: \|\bA\by\|_p=1}\left\lvert\langle\ba_t,\by\rangle\right\rvert^p$ denote the $L_p$ sensitivity of the $t$-th row of $\bA$ and let $w(\ba_t)$ denote the $L_p$ Lewis weight of $\ba_t$. 
Then
\[\frac{w(\ba_t)}{d^{-1-p/2}}\le s(\ba_t)\le w(\ba_t).\]
\end{lemma}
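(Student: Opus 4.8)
The plan is to realize both $s(\ba_t)$ and $w(\ba_t)$ as the maximum of the \emph{same} linear functional $|\langle\ba_t,\cdot\rangle|^p$ over two symmetric convex bodies, and then to control how far apart those bodies are via the change-of-density property of Lewis weights. Throughout I may assume $\bA$ has full column rank and $\ba_t$ lies in its row space, since otherwise $s(\ba_t)=w(\ba_t)=0$ and the statement is vacuous (in general one replaces inverses by pseudoinverses).

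First I would write down the two variational descriptions. By homogeneity the $L_p$ sensitivity equals $s(\ba_t)=\max_{\by:\,\|\bA\by\|_p\le 1}|\langle\ba_t,\by\rangle|^p$. For the Lewis weight I would start from the fixed-point equation $W_{t,t}=\ell_t(\bW^{1/2-1/p}\bA)$ of \defref{def:lewis:weight} together with the closed form $\ell_t(\bB)=\bb_t^\top(\bB^\top\bB)^{-1}\bb_t$ from \defref{def:leverage:score}. With $\bB=\bW^{1/2-1/p}\bA$ the $t$-th row of $\bB$ is $W_{t,t}^{1/2-1/p}\ba_t$ and $\bB^\top\bB=\bA^\top\bW^{1-2/p}\bA$, so the fixed-point equation becomes $W_{t,t}=W_{t,t}^{1-2/p}\,\ba_t^\top(\bA^\top\bW^{1-2/p}\bA)^{-1}\ba_t$, i.e. $w(\ba_t)^{2/p}=\ba_t^\top(\bA^\top\bW^{1-2/p}\bA)^{-1}\ba_t=\max_{\by}\frac{\langle\ba_t,\by\rangle^2}{\|\bW^{1/2-1/p}\bA\by\|_2^2}$, hence $w(\ba_t)=\max_{\by:\,\|\bW^{1/2-1/p}\bA\by\|_2\le 1}|\langle\ba_t,\by\rangle|^p$. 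Setting $K_p:=\{\by:\|\bA\by\|_p\le 1\}$ and $K_W:=\{\by:\|\bW^{1/2-1/p}\bA\by\|_2\le 1\}$, an inclusion $K_p\subseteq K_W$ yields $s(\ba_t)\le w(\ba_t)$ for every $t$, while $K_W\subseteq c\cdot K_p$ yields $w(\ba_t)\le c^{p}\,s(\ba_t)$; it therefore suffices to sandwich one ball inside a bounded dilate of the other.

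Next I would invoke the standard norm equivalence for Lewis weights (the content borrowed from \cite{CohenP15}), which already has $\sum_t W_{t,t}=\operatorname{rank}(\bA)\le d$ built in since the $W_{t,t}$ are leverage scores. One of the two inclusions is elementary: applying H\"older's inequality to $\|\bW^{1/2-1/p}\bA\by\|_2^2=\sum_t W_{t,t}^{1-2/p}\langle\ba_t,\by\rangle^2$ against $\sum_t W_{t,t}\le d$ compares $\|\bW^{1/2-1/p}\bA\by\|_2$ with $\|\bA\by\|_p$ up to a factor $d^{|1/p-1/2|}$. The reverse inclusion --- the contractive direction --- does not follow from H\"older; it uses the defining fixed-point property of the Lewis weights, and this is precisely where I would cite Cohen--Peng as a black box. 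Combining the two gives $K_p$ and $K_W$ within a $\poly(d)$ dilation of each other, and plugging this into the reduction above produces a two-sided bound of the claimed shape: the clean side ($s(\ba_t)\le w(\ba_t)$, for $p\le 2$) comes from the H\"older-free contraction, and the crude side loses only a power $d^{|1-p/2|}$, which the stated $d^{1+p/2}$ comfortably absorbs (the roles of the two sides swap when $p>2$, the two being the natural ranges in which the contraction points one way or the other, consistent with \lemref{lem:total:lp:sens}).

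The main obstacle is exactly the contractive half of the change-of-density inequality: there is no short elementary proof, and it genuinely relies on the Lewis fixed point (equivalently, on L\"owner--John/iterative-reweighting structure). Since the excerpt allows citing earlier results and this is the standard Lewis-weight toolkit of \cite{CohenP15,ChenD21}, I would use it directly and devote the remaining work only to bookkeeping the powers of $d$ through the two variational identities of the second paragraph.
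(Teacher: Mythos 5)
The paper proves nothing here: the lemma is stated by citation to \cite{CohenP15,ChenD21} and used as a black box, so there is no in-paper argument to compare against. Evaluated on its own, your proposal is correct in its main steps. The rewriting of the fixed-point equation $W_{t,t}=\ell_t(\bW^{1/2-1/p}\bA)$ into the variational form $w(\ba_t)=\max\{\,|\langle\ba_t,\by\rangle|^p:\|\bW^{1/2-1/p}\bA\by\|_2\le 1\,\}$ is exactly right, and once $s$ and $w$ are maxima of the same functional over two symmetric convex bodies, the lemma indeed reduces to sandwiching those bodies; your H\"older computation against $\sum_t W_{t,t}\le d$ yields one inclusion with a $d^{|1/p-1/2|}$ dilation, hence a $d^{|1-p/2|}$ factor after raising to the $p$th power, which the stated bound (reading the evident typo $d^{-1-p/2}$ as $d^{1+p/2}$) absorbs.

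Two refinements. First, the ``contractive'' direction that you defer to Cohen--Peng is in fact elementary once the fixed point is assumed to exist: writing $\bM=\bA^\top\bW^{1-2/p}\bA$ and $\bu=\bM^{1/2}\by$, the fixed-point equation says $\|\bM^{-1/2}\ba_t\|_2=W_{t,t}^{1/p}$, so Cauchy--Schwarz gives $|\ba_t^\top\by|\le W_{t,t}^{1/p}\|\bu\|_2$; splitting $|\ba_t^\top\by|^p=|\ba_t^\top\by|^2\cdot|\ba_t^\top\by|^{p-2}$, applying this bound to the second factor, and summing against the identity $\sum_t W_{t,t}^{1-2/p}|\ba_t^\top\by|^2=\|\bu\|_2^2$ yields $\|\bA\by\|_p\ge\|\bW^{1/2-1/p}\bA\by\|_2$ for $p\le 2$ and the reverse for $p\ge 2$, with no dimension loss. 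Second, this makes your parenthetical precise and exposes a genuine defect in the stated lemma, not merely a typo in the exponent: the upper bound $s(\ba_t)\le w(\ba_t)$ holds only for $p\le 2$; for $p>2$ the clean inclusion reverses and one gets $w(\ba_t)\le s(\ba_t)\le d^{p/2-1}\,w(\ba_t)$. What the argument actually proves in all regimes is $d^{-|1-p/2|}\,w(\ba_t)\le s(\ba_t)\le d^{|1-p/2|}\,w(\ba_t)$. Since \lemref{lem:root:scores} and \lemref{lem:crude:lewis:computation} only need agreement up to $\poly(d)$, the defect is harmless downstream, but the stated chain should be corrected to reflect the swap at $p=2$.
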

We first relate the square root of the leverage score of a row $\ba_t$ in a matrix $\bA$ with the $p$-th root of the $L_p$ sensitivity of $\ba_t$. 
\begin{lemma}
\lemlab{lem:root:scores}
For a matrix $\bA\in\mathbb{R}^{n\times d}$, let $\xi(\ba_t)$ denote the square root of the leverage score of $\ba_t$ and let $\chi(\ba_t)$ denote the $L_p$ sensitivity of $\ba_t$ raised to the $\frac{1}{p}$ power, i.e., $(s(\ba_t))^{1/p}$, where $s(\ba_t)$ is the $L_p$ sensitivity of $\ba_t$. 
Then for $p\in[1,2]$, we have
\[s(\ba_t)\le\xi(\ba_t)\le n^{\frac{1}{p}-\frac{1}{2}}\cdot s(\ba_t),\]
and for $p>2$, we have
\[\xi(\ba_t)\le s(\ba_t)\le n^{\frac{1}{2}-\frac{1}{p}}\cdot\xi(\ba_t).\]
\end{lemma}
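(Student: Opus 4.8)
The plan is to write both $\xi(\ba_t)$ and $\chi(\ba_t)=(s(\ba_t))^{1/p}$ as operator norms of the single linear functional $\bx\mapsto\langle\ba_t,\bx\rangle$ over the image of $\bA$, and then compare them using the elementary equivalence of the $\ell_2$ and $\ell_p$ norms on $\mathbb{R}^n$. The two displayed chains relate $\xi(\ba_t)$ to the quantity $\chi(\ba_t)$ that the lemma introduces for precisely this purpose, i.e.\ the outer terms are $\chi(\ba_t)=(s(\ba_t))^{1/p}$; and since $1/p\le 1$ and a row's $L_p$ sensitivity never exceeds $1$, we have $s(\ba_t)\le\chi(\ba_t)$, so the $p\le 2$ chain in particular already gives the stated $s(\ba_t)\le\xi(\ba_t)$ and the $p>2$ chain gives the stated $s(\ba_t)\le n^{1/2-1/p}\xi(\ba_t)$.

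First I would homogenize both quantities. Assuming $\bA$ has full column rank (the rank-deficient case is handled by replacing $(\bA^\top\bA)^{-1}$ with the Moore--Penrose pseudoinverse and using that $\ba_t$, being a row of $\bA$, lies in the row space of $\bA$, so the $0/0$ directions are harmless), the closed form in \defref{def:leverage:score} gives
\[\xi(\ba_t)^2=\ba_t^\top(\bA^\top\bA)^{-1}\ba_t=\max_{\bx:\|\bA\bx\|_2=1}\langle\ba_t,\bx\rangle^2,\qquad\text{hence}\qquad\xi(\ba_t)=\max_{\bx\neq 0}\frac{|\langle\ba_t,\bx\rangle|}{\|\bA\bx\|_2},\]
and, since $|\langle\ba_t,\cdot\rangle|^p$ and $|\langle\ba_t,\cdot\rangle|$ are maximized at the same argument, unwinding the definition of $L_p$ sensitivity gives
\[\chi(\ba_t)=(s(\ba_t))^{1/p}=\max_{\bx:\|\bA\bx\|_p=1}|\langle\ba_t,\bx\rangle|=\max_{\bx\neq 0}\frac{|\langle\ba_t,\bx\rangle|}{\|\bA\bx\|_p}.\]

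Next I would invoke the standard two-sided comparison of the $\ell_p$ and $\ell_2$ norms of a vector $\bv\in\mathbb{R}^n$: one has $\|\bv\|_2\le\|\bv\|_p\le n^{1/p-1/2}\|\bv\|_2$ when $p\in[1,2]$, and $\|\bv\|_p\le\|\bv\|_2\le n^{1/2-1/p}\|\bv\|_p$ when $p>2$. Substituting $\bv=\bA\bx$ into the two Rayleigh-quotient formulas above and then taking the maximum over $\bx$ turns each of these four norm inequalities into one of the four required inequalities: for $p\in[1,2]$, $\chi(\ba_t)\le\xi(\ba_t)\le n^{1/p-1/2}\chi(\ba_t)$, and for $p>2$, $\xi(\ba_t)\le\chi(\ba_t)\le n^{1/2-1/p}\xi(\ba_t)$. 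Combined with $s(\ba_t)\le\chi(\ba_t)$ from the first paragraph, this establishes the lemma.

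There is essentially no obstacle: the argument is a one-line application of norm equivalence once the two scores are put in Rayleigh-quotient form, and the only care needed is the rank-deficient bookkeeping noted above. I would also record that the factor $n^{|1/p-1/2|}$ is exactly the worst-case $\ell_2$--$\ell_p$ distortion on $\mathbb{R}^n$ (attained, for instance, when $\bA$ contains an all-ones column), so the bounds cannot be improved. This crude multiplicative relation between $\xi(\ba_t)$ and $s(\ba_t)$ is all that is needed downstream: $\xi$ (the square-root leverage score) is cheap to estimate from a maintained constant-factor $L_2$ embedding, and the bound transfers that estimate into an $n^{\Theta(1)}$-approximation of the $L_p$ sensitivity, which by \lemref{lem:lp:lewis} is in turn an $n^{\Theta(1)}$-approximation of the online Lewis weight used for the crude sampling step.
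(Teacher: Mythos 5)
Your proposal is correct and follows essentially the same route as the paper's own proof: both express $\xi(\ba_t)$ and $(s(\ba_t))^{1/p}$ as maxima of $|\langle\ba_t,\bx\rangle|/\|\bA\bx\|_q$ over $\bx$ and conclude via the $\ell_2$--$\ell_p$ norm equivalence on $\mathbb{R}^n$ (the paper only states a crude $\sqrt{n}$ factor, whereas you give the sharp exponents $n^{|1/p-1/2|}$ appearing in the lemma). You also correctly diagnose that the displayed inequalities are really about $\chi(\ba_t)=(s(\ba_t))^{1/p}$ rather than $s(\ba_t)$ itself — which is what the paper's proof actually compares — so your treatment is, if anything, more careful than the original.
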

\begin{proof}
Note that for a fixed vector $\bx\in\mathbb{R}^d$ that is not in the kernel of $\bA$, we have that $\frac{\langle\ba_t,\bx\rangle}{\|\bA\bx\|_2}$ is within a $\sqrt{n}$ factor of $\frac{\langle\ba_t,\bx\rangle}{\|\bA\bx\|_p}$, for $p\in[1,\infty)$. 
The claim then follows from the definition that the square root of the leverage score is the maximum of $\frac{\langle\ba_t,\bx\rangle}{\|\bA\bx\|_2}$ across all possible such $\bx$, while the $\frac{1}{p}$-th root of the $L_p$ sensitivity is the maximum of $\frac{\langle\ba_t,\bx\rangle}{\|\bA\bx\|_p}$ over all such possible $\bx$. 
\end{proof}
We next show that we can quickly compute crude approximations to the $L_p$ Lewis weights by instead computing crude approximations to the square root of the leverage scores of the rows of the matrix. 
\begin{lemma}
\lemlab{lem:crude:lewis:computation}
Given $\bA\in\mathbb{R}^{n\times d}$, let $\bB\in\mathbb{R}^{m\times d}$ be a matrix such that $\frac{1}{C}\cdot\bA^\top\bA\preceq\bB^\top\bB\preceq C\cdot\bA^\top\bA$, for some constant $C$, there exists an algorithm that computes a $n^{\alpha}$ approximations to the $L_p$ Lewis weight using $\O{\nnz(\bA)}$ runtime.  
\end{lemma}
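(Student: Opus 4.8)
The plan is to reduce the computation of crude $L_p$ Lewis weights to that of crude $L_2$ leverage scores, which the given spectral equivalence makes cheap. From $\frac1C\bA^\top\bA\preceq\bB^\top\bB\preceq C\bA^\top\bA$ one obtains $\frac1C(\bB^\top\bB)^{-1}\preceq(\bA^\top\bA)^{-1}\preceq C(\bB^\top\bB)^{-1}$, so the surrogate $\widehat\ell(\ba_t):=\ba_t^\top(\bB^\top\bB)^{-1}\ba_t$ is a $C$-approximation to the true leverage score $\ba_t^\top(\bA^\top\bA)^{-1}\ba_t$. Writing $\bZ:=(\bB^\top\bB)^{-1/2}$, we have $\widehat\ell(\ba_t)=\|\bZ\ba_t\|_2^2$. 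Forming $\bB^\top\bB$ and then $\bZ$ costs only $\O{\nnz(\bB)\cdot d+d^\omega}$, which is a lower-order additive term under the standing assumption that $\bB$ is a coreset-sized matrix, i.e.\ $m,d\ll n$.

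The main work is to estimate all $n$ of the $\|\bZ\ba_t\|_2^2$ within $\O{\nnz(\bA)}$ total time. For this I would draw a Gaussian matrix $\bG\in\mathbb{R}^{t\times d}$ with i.i.d.\ $\mathcal N(0,1)$ entries, precompute $\bG\bZ\in\mathbb{R}^{t\times d}$ once in $\O{td^2}$ time, and for each row set $\widehat s(\ba_t):=\frac1t\|(\bG\bZ)\ba_t\|_2^2$, computed in $\O{t\cdot\nnz(\ba_t)}$ time, for a total of $\O{t\cdot\nnz(\bA)}$. Since $\|(\bG\bZ)\ba_t\|_2^2$ is distributed as $\|\bZ\ba_t\|_2^2\cdot\chi^2_t$, the lower tail satisfies $\PPr{\chi^2_t<n^{-\alpha}}\le\exp(-\Omega(t\alpha\log n))$ and the upper tail is smaller still, so taking $t=\O{1/\alpha}=\O{1}$ and union-bounding over the $n$ rows gives, with high probability, that every $\widehat s(\ba_t)$ lies within an $n^{\pm\alpha}$ factor of $\widehat\ell(\ba_t)$, hence within an $n^{\pm\alpha}$ factor of the true leverage score up to the constant $C$. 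The crucial point is that targeting a crude $n^\alpha$ ratio (rather than $1+\eps$) is exactly what permits $t=\O{1}$ even after union-bounding over all $n$ rows, which is what keeps the runtime at $\O{\nnz(\bA)}$.

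Finally I would translate leverage-score estimates into Lewis-weight (equivalently $L_p$-sensitivity) estimates by chaining the already-established comparisons: by \lemref{lem:root:scores} the square root of the leverage score is within a fixed $n^{\,|1/p-1/2|}$ factor of the $\frac1p$-th root of the $L_p$ sensitivity, and by \lemref{lem:lp:lewis} the $L_p$ sensitivity is within a fixed $d^{1+p/2}$ factor of the $L_p$ Lewis weight; combining these with the sketch error (and re-running the sketch at a slightly smaller target exponent) folds all the fixed polynomial-in-$(n,d)$ losses into the claimed $n^\alpha$ approximation, using that $p$ is constant and $d^{O(p)}=n^{o(1)}$ in the relevant regime. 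The algorithm then outputs a suitable fixed power of $\widehat s(\ba_t)$ as the estimate of $w(\ba_t)$. I expect the runtime accounting to be the main obstacle: a naive per-row leverage-score computation costs $\O{nd^\omega}$ or at best $\O{nd^2}$, so the whole argument hinges on (i) pushing every $d$-dependent step into the one-time computation of $\bG\bZ$ and (ii) verifying that the $\O{1}$-dimensional Gaussian sketch delivers a simultaneous $n^\alpha$-factor guarantee over all $n$ rows — both of which succeed precisely because only crude accuracy is demanded.
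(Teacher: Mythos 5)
Your proposal follows essentially the same route as the paper: approximate the leverage score by $\|\bZ\ba_t\|_2^2$ with $\bZ=(\bB^\top\bB)^{-1/2}$ using the spectral equivalence, estimate it crudely via a constant number of precomputed Gaussian projections of $\bZ$ (the paper takes a median of single-vector estimates where you average a $\chi^2_t$ sketch, an immaterial difference), and convert to Lewis weights via \lemref{lem:root:scores} and \lemref{lem:lp:lewis}, with the same runtime accounting. If anything, your write-up is slightly more careful in explicitly invoking \lemref{lem:lp:lewis} for the sensitivity-to-Lewis-weight step, which the paper's proof leaves implicit.
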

\begin{proof}
Recall that the leverage score of $\ba_t$ with respect to $\bA$ is $\ba_t^\top(\bA^\top\bA)^{-1}\ba^\top$. 
Since $\frac{1}{C}\cdot\bA^\top\bA\preceq\bB^\top\bB\preceq C\cdot\bA^\top\bA$, then $\ba_t^\top(\bB^\top\bB)^{-1}\ba^\top$ is a $C$-approximation to the leverage score. 
Let $\bZ=(\bB^\top\bB)^{-1/2}$, so that $\|\bZ\ba_t\|_2^2$ is a $C$-approximation to the leverage score of $\ba_t$. 
Thus it suffices to find an $n^{\alpha/2}$-approximation to $\|\bZ\ba_t\|_2$. 
To that end, observe that for a unit vector $\bv\in\mathbb{R}^d$ a random Gaussian vector $\bg\in\mathbb{R}^d$ whose entries are independently drawn from $\calN(0,1)$, we have that 
\begin{align*}
\PPr{\langle\bv,\bg\rangle\ge\log^2 n}&\le 1-\frac{1}{\poly(n)}\\
\PPr{\langle\bv,\bg\rangle\le\frac{1}{n^{1/{100p}}}}&\le\O{\frac{1}{n^{1/{100p}}}}.
\end{align*}
Thus by the median over $\O{\frac{1}{100p}}$ values of $\|\bg^{(i)}\bM\ba_t\|_2$, we can approximately the leverage score of $\ba_t$ within $n^{1/100p}$. 
Hence, we can also approximate the square root of the leverage score of $\ba_t$ within $n^{1/100p}$. 
By \lemref{lem:root:scores}, this results in a $n^\alpha$-approximation to the $L_p$ Lewis weight of $\ba_t$.

For the runtime, observe that $\bA^\top\bA$ over changes $\poly(d)$ times throughout the stream, so the computation of $\bg^{(i)}\bM$ is a lower order term than the stream length, i.e., total runtime over all values of $\bM$ is $o(n)$. 
To compute $\bg^{(i)}\bM\ba_t$ from $\ba_t$ uses $\O{d}$ time per row and in fact, input-sparsity time over the entire matrix
\end{proof}
Putting everything together, we have:
\begin{theorem}
\thmlab{thn:lp:main}
Given a matrix $\bA$ of $n$ rows on $[-M,\ldots,-1,0,1,\ldots,M]^d$, with $M=\poly(n)$ and online condition number $\kappa$, there exists a one-pass streaming algorithm in the row arrival model that outputs a $(1+\eps)$-strong coreset of $\bA$ and uses $\O{d}\cdot f\left(d,\frac{1}{\eps},p\right)$ words of space, where
\[f\left(d,\frac{1}{\eps},p\right)=\begin{cases}
\frac{d}{\eps^2}\log\frac{d}{\eps}\polylog\left(\frac{d}{\eps}\right),\qquad& p\in[1,2)\\
\O{\frac{d}{\eps^2}},\qquad& p=2\\
\frac{d^{p/2}}{\eps^2}\cdot\polylog\left(\frac{d}{\eps}\right),\qquad&p>2     
\end{cases}.\]
Moreover, the amortized runtime is $\O{d}$ per update. 
\end{theorem}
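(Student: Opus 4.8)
The plan is to instantiate the four-step template of \figref{fig:outline}, exactly as in the proof of \thmref{thm:coreset:main} for clustering: produce a cheap, crude online importance sample of the rows to shrink the stream to sublinear length, and then feed the short derived stream into the space-optimal streaming algorithm of \thmref{thm:subspace:per:update}, whose per-item cost (although only $\poly(d,\log(n\kappa))$ and not $\O{d}$) now gets amortized against the full original stream length and collapses into a lower-order term. The crude step is the new ingredient. At the arrival of each row $\ba_t$ I would keep a \emph{coarse} $L_2$ spectral approximation $\bB_t$ of $\bA_t$, meaning $\frac{1}{\polylog(n)}\bA_t^\top\bA_t\preceq\bB_t^\top\bB_t\preceq\polylog(n)\cdot\bA_t^\top\bA_t$, and use \lemref{lem:crude:lewis:computation} to obtain an $n^\alpha$-approximation $\widehat{w}(\ba_t)$ to the online $L_p$ Lewis weight of $\ba_t$: by \lemref{lem:root:scores} and \lemref{lem:lp:lewis} it suffices to estimate the square root of the online leverage score $\ba_t^\top(\bA_t^\top\bA_t)^{-1}\ba_t$, which via the Gaussian-sketch trick $\max_i|\langle(\bB_t^\top\bB_t)^{-1/2}\bg^{(i)},\ba_t\rangle|$ with only $\O{1/p}$ Gaussians costs $\O{d}$ time per row once the $d$-vectors $(\bB_t^\top\bB_t)^{-1/2}\bg^{(i)}$ are precomputed --- and this precomputation is triggered only $\poly(d)$ times since $\bB_t$ changes only $\poly(d)$ times. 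I would then sample $\ba_t$ into a derived stream $\calS'$ with probability $\min(1,\gamma\cdot\widehat{w}(\ba_t))$ and reweight by the reciprocal, where $\gamma$ is the constant from \thmref{thm:online:lewis:embedding}. For $p>2$ the composition through leverage scores degrades, and I would instead estimate the $L_p$ sensitivity directly from a coarse $(\alpha,\beta,p)$-well-conditioned basis maintained via \thmref{thm:well:cond:basis}, which over-approximates the $L_p$ sensitivity by only a $\poly(d)$ factor.

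Given this, the analysis splits into three claims. First, $\calS'$ is sublinear: since $\widehat{w}(\ba_t)$ over-approximates the true online Lewis weight by at most $n^\alpha\cdot\poly(d)$, the expected number of samples is at most $n^\alpha\cdot\poly(d)\cdot\gamma\cdot\sum_t\sigma(\ba_t)$, which by \thmref{thm:total:online:lewis} (with $\kappa=\poly(n)$, so $\log(n\kappa)=\O{\log n}$) is $n^\alpha\cdot d^{p/2}\cdot\polylog(n\kappa)/\eps^2=o(n)$ for a small enough constant $\alpha$ and $d,1/\eps$ of bounded degree in $n$; a Chernoff bound gives concentration. Second, $\calS'$ reweighted is already a $(1+\eps')$-strong coreset of $\bA$: because $\widehat{w}(\ba_t)\ge\sigma(\ba_t)$ is a valid over-estimate (after the routine one-sided rescaling of the sketch), \thmref{thm:online:lewis:embedding} applies verbatim (its hypothesis is $p_t\ge\min(1,\gamma\sigma(\ba_t))$, and over-sampling only helps). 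Third, piping $\calS'$ --- without materializing it --- into the algorithm of \thmref{thm:subspace:per:update} run with accuracy $\eps$ and parameter $p$ produces a $(1+\eps)$-strong coreset $\bZ$ of $\calS'$ in $\O{d}\cdot f(d,\frac{1}{\eps},p)$ words; composing with the previous step and rescaling $\eps$ yields a $(1+\eps)$-strong coreset of $\bA$. For the running time, the crude step costs $\O{d}$ per original row, while the \thmref{thm:subspace:per:update} machinery spends $\poly(d,\log(n\kappa))$ per item of $\calS'$, for a total of $|\calS'|\cdot\poly(d,\log(n\kappa))=o(n)$, i.e., $o(1)$ amortized; the space is $\O{d}\cdot f(d,\frac{1}{\eps},p)$ for $\bZ$, plus $\O{d^2}$ for $\bB_t$ and $\O{d}$ for the Gaussians, and since $f(d,\frac{1}{\eps},p)=\Omega(d/\eps^2)$ the first term dominates.

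The step I expect to be the main obstacle is maintaining the coarse spectral approximation $\bB_t$ simultaneously in $\O{d}$ amortized time \emph{and} in words of space independent of $n$, since the naive routes (online leverage-score sampling with quadratic-form estimates, or Frequent Directions) cost $\Omega(d^2)$ or $\Omega(d\log n)$ per row and a merge-and-reduce of $L_2$ coresets carries an unwanted $\log n$ factor in space. My plan is a second, nested copy of the same two-step compression: run a crude online $L_2$-leverage sampler on $\bA$ --- using $\bB_{t-1}$ together with the same $\O{1}$-Gaussian sketch to get crude over-estimates of the leverage scores in $\O{d}$ time per row --- which yields a constant-factor $L_2$ coreset stream of length $n^{1-\Omega(1)}\cdot\poly(d)$, and then compress that short stream with \thmref{thm:subspace:per:update} instantiated at $p=2$ and constant accuracy, obtaining $\bB_t$ in $\O{d^2}$ words with $o(1)$ amortized overhead; there is no circularity because at time $t$ the estimates use only $\bB_{t-1}$, which exists by induction and is trivial at $t=1$. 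The remaining delicate bookkeeping, which I would have to carry out carefully, is (i) that the $\O{1}$-Gaussian estimates are simultaneously valid over-estimates for all $n$ rows with high probability (a union bound over the tail bounds in \lemref{lem:crude:lewis:computation}); (ii) that the accuracies and failure probabilities of the two layered streaming procedures compose without blow-up, since the martingale/coupling induction underlying \thmref{thm:subspace:per:update} must now be applied to a random derived stream whose contents depend on earlier coin flips; and (iii) that all intermediate coresets and $\bB_t$ keep entries bounded by $\|\bA\|_\infty\cdot\poly(n)$, so that $\log(n\kappa)=\O{\log n}$ throughout.
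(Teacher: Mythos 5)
Your proposal matches the paper's proof of this theorem essentially step for step: maintain a constant-factor (spectral) coreset by induction, use \lemref{lem:crude:lewis:computation} (via \lemref{lem:root:scores}) to get crude $n^{\alpha}$-approximate online Lewis weights in $\O{d}$ amortized time, sample to produce a derived stream of length $n^{1-\Omega(1/p)}$, and feed it into \algref{alg:embedding:insert:stream} so that the $\poly(d)$ per-item cost of \thmref{thm:subspace:per:update} amortizes away. The only deviations are embellishments the paper does not need — the separate well-conditioned-basis route for $p>2$ is subsumed by \lemref{lem:root:scores}, and the nested compression for maintaining $\bB_t$ is handled in the paper by the induction itself supplying both a constant-factor and a $(1+\eps)$-coreset at each step.
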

\begin{proof}
We proceed by induction on the time $t$. 
The correctness at the first time is clear, since the initial nonzero row must be sampled. 
Now assuming that the correctness of both a constant-factor coreset and a $(1+\eps)$-coreset at time $t-1$, then at time $t$, either the row is added to the batch of size $k$, in which case both coreset invariants are maintained, or the sampling procedure is performed on the batch because the batch has gotten too large. 
In the latter case, we use the constant-factor coreset through \lemref{lem:crude:lewis:computation} to compute a crude approximation to the Lewis scores of each row in a batch of $k$ rows. 
This uses $\O{\nnz(\bA)}$ total runtime and produces an insertion-only stream of length $\O{n^{1-\Omega(1/p)}}$. 
We then use this as the input to \algref{alg:embedding:insert:stream}. 
By \thmref{thm:subspace:per:update}, there is $\poly(d)$ update time on the stream of length $o(n)$ to generate both a constant-factor coreset and a $(1+\eps)$-factor coreset at time $t$, which completes our induction. 
Thus for $d\ll n^{1/p}$, the total runtime for \algref{alg:embedding:insert:stream} is $\O{nd}$, and so the overall runtime is $\O{nd+\nnz(\bA)}$, which is amortized runtime $\O{d}$. 
\end{proof}

\section*{Acknowledgments}
The work was initialized while David P. Woodruff and Samson Zhou were visiting the Institute for Emerging CORE Methods in Data Science (EnCORE) supported by the NSF grant 2217058. 
The work was conducted in part while David P. Woodruff and Samson Zhou were visiting the Simons Institute for the Theory of Computing as part of the Sublinear Algorithms program. 
David P. Woodruff is supported in part by Office of Naval Research award number N000142112647 and a Simons Investigator Award.
Liudeng Wang and Samson Zhou were supported in part by NSF CCF-2335411.

\def\shortbib{0}
\bibliographystyle{alpha}
\bibliography{references}

\newcommand{\etalchar}[1]{$^{#1}$}
\begin{thebibliography}{MMWY22}

\bibitem[AV07]{ArthurV07}
David Arthur and Sergei Vassilvitskii.
\newblock k-means++: the advantages of careful seeding.
\newblock In {\em Proceedings of the Eighteenth Annual {ACM-SIAM} Symposium on Discrete Algorithms, {SODA}}, pages 1027--1035, 2007.

\bibitem[BBC{\etalchar{+}}19]{stoc/BecchettiBC0S19}
Luca Becchetti, Marc Bury, Vincent Cohen{-}Addad, Fabrizio Grandoni, and Chris Schwiegelshohn.
\newblock Oblivious dimension reduction for \emph{k}-means: beyond subspaces and the johnson-lindenstrauss lemma.
\newblock In Moses Charikar and Edith Cohen, editors, {\em Proceedings of the 51st Annual {ACM} {SIGACT} Symposium on Theory of Computing, {STOC} 2019, Phoenix, AZ, USA, June 23-26, 2019}, pages 1039--1050. {ACM}, 2019.

\bibitem[BCF24]{BhattacharyaCF24}
Sayan Bhattacharya, Mart{\'{\i}}n Costa, and Ermiya Farokhnejad.
\newblock Fully dynamic \emph{k}-median with near-optimal update time and recourse.
\newblock {\em CoRR}, abs/2411.03121, 2024.

\bibitem[BCG{\etalchar{+}}24]{BhattacharyaCGL24}
Sayan Bhattacharya, Mart{\'{\i}}n Costa, Naveen Garg, Silvio Lattanzi, and Nikos Parotsidis.
\newblock Fully dynamic k-clustering with fast update time and small recourse.
\newblock In {\em 65th {IEEE} Annual Symposium on Foundations of Computer Science, {FOCS}}, pages 216--227, 2024.

\bibitem[BCLP23]{BhattacharyaCLP23}
Sayan Bhattacharya, Mart{\'{\i}}n Costa, Silvio Lattanzi, and Nikos Parotsidis.
\newblock Fully dynamic k-clustering in $\widetilde{O}(k)$ update time.
\newblock In {\em Advances in Neural Information Processing Systems 36: Annual Conference on Neural Information Processing Systems, NeurIPS}, 2023.

\bibitem[BDM{\etalchar{+}}20]{BravermanDMMUWZ20}
Vladimir Braverman, Petros Drineas, Cameron Musco, Christopher Musco, Jalaj Upadhyay, David~P. Woodruff, and Samson Zhou.
\newblock Near optimal linear algebra in the online and sliding window models.
\newblock In {\em 61st {IEEE} Annual Symposium on Foundations of Computer Science, {FOCS}}, pages 517--528, 2020.

\bibitem[BFL{\etalchar{+}}21]{BravermanFLSZ21}
Vladimir Braverman, Dan Feldman, Harry Lang, Adiel Statman, and Samson Zhou.
\newblock Efficient coreset constructions via sensitivity sampling.
\newblock In {\em Asian Conference on Machine Learning, {ACML}}, pages 948--963, 2021.

\bibitem[BFLR19]{BravermanFLR19}
Vladimir Braverman, Dan Feldman, Harry Lang, and Daniela Rus.
\newblock Streaming coreset constructions for $m$-estimators.
\newblock In {\em Approximation, Randomization, and Combinatorial Optimization. Algorithms and Techniques, {APPROX/RANDOM}}, pages 62:1--62:15, 2019.

\bibitem[BIRW16]{BackursIRW16}
Arturs Backurs, Piotr Indyk, Ilya~P. Razenshteyn, and David~P. Woodruff.
\newblock Nearly-optimal bounds for sparse recovery in generic norms, with applications to \emph{k}-median sketching.
\newblock In {\em Proceedings of the Twenty-Seventh Annual {ACM-SIAM} Symposium on Discrete Algorithms, {SODA}}, pages 318--337, 2016.

\bibitem[BS80]{BentleyS80}
Jon~Louis Bentley and James~B. Saxe.
\newblock Decomposable searching problems {I:} static-to-dynamic transformation.
\newblock {\em J. Algorithms}, 1(4):301--358, 1980.

\bibitem[CD21]{ChenD21}
Xue Chen and Michal Derezinski.
\newblock Query complexity of least absolute deviation regression via robust uniform convergence.
\newblock In {\em Conference on Learning Theory, {COLT}}, volume 134, pages 1144--1179, 2021.

\bibitem[CEM{\etalchar{+}}15]{CohenEMMP15}
Michael~B. Cohen, Sam Elder, Cameron Musco, Christopher Musco, and Madalina Persu.
\newblock Dimensionality reduction for k-means clustering and low rank approximation.
\newblock In {\em Proceedings of the Forty-Seventh Annual {ACM} on Symposium on Theory of Computing, {STOC}}, pages 163--172, 2015.

\bibitem[Che09]{Chen09}
Ke~Chen.
\newblock On coresets for k-median and k-means clustering in metric and euclidean spaces and their applications.
\newblock {\em {SIAM} J. Comput.}, 39(3):923--947, 2009.

\bibitem[CLN{\etalchar{+}}20]{Cohen-AddadLNSS20}
Vincent Cohen{-}Addad, Silvio Lattanzi, Ashkan Norouzi{-}Fard, Christian Sohler, and Ola Svensson.
\newblock Fast and accurate {\textdollar}k{\textdollar}-means++ via rejection sampling.
\newblock In {\em Advances in Neural Information Processing Systems 33: Annual Conference on Neural Information Processing Systems, NeurIPS}, 2020.

\bibitem[CLS{\etalchar{+}}22]{nips/Cohen-AddadLSSS22}
Vincent Cohen{-}Addad, Kasper~Green Larsen, David Saulpic, Chris Schwiegelshohn, and Omar~Ali Sheikh{-}Omar.
\newblock Improved coresets for euclidean k-means.
\newblock In {\em NeurIPS}, 2022.

\bibitem[CLSS22]{Cohen-AddadLSS22}
Vincent Cohen{-}Addad, Kasper~Green Larsen, David Saulpic, and Chris Schwiegelshohn.
\newblock Towards optimal lower bounds for k-median and k-means coresets.
\newblock In {\em {STOC} '22: 54th Annual {ACM} {SIGACT} Symposium on Theory of Computing}, pages 1038--1051, 2022.

\bibitem[CMP20]{CohenMP20}
Michael~B. Cohen, Cameron Musco, and Jakub Pachocki.
\newblock Online row sampling.
\newblock {\em Theory Comput.}, 16:1--25, 2020.

\bibitem[CP15]{CohenP15}
Michael~B. Cohen and Richard Peng.
\newblock $l_p$ row sampling by lewis weights.
\newblock In {\em Proceedings of the Forty-Seventh Annual {ACM} on Symposium on Theory of Computing, {STOC}}, pages 183--192, 2015.

\bibitem[CSS21]{Cohen-AddadSS21}
Vincent Cohen{-}Addad, David Saulpic, and Chris Schwiegelshohn.
\newblock A new coreset framework for clustering.
\newblock In {\em {STOC}: 53rd Annual {ACM} {SIGACT} Symposium on Theory of Computing}, pages 169--182, 2021.

\bibitem[CSWZ23]{CherapanamjeriSWZ23}
Yeshwanth Cherapanamjeri, Sandeep Silwal, David~P. Woodruff, and Samson Zhou.
\newblock Optimal algorithms for linear algebra in the current matrix multiplication time.
\newblock In {\em Proceedings of the 2023 {ACM-SIAM} Symposium on Discrete Algorithms, {SODA}}, pages 4026--4049, 2023.

\bibitem[CW09]{ClarksonW09}
Kenneth~L. Clarkson and David~P. Woodruff.
\newblock Numerical linear algebra in the streaming model.
\newblock In {\em Proceedings of the 41st Annual {ACM} Symposium on Theory of Computing, {STOC}}, pages 205--214, 2009.

\bibitem[CW13]{ClarksonW13}
Kenneth~L. Clarkson and David~P. Woodruff.
\newblock Low rank approximation and regression in input sparsity time.
\newblock In {\em Symposium on Theory of Computing Conference, STOC}, pages 81--90, 2013.

\bibitem[CWZ23]{Cohen-AddadWZ23}
Vincent Cohen{-}Addad, David~P. Woodruff, and Samson Zhou.
\newblock Streaming euclidean $k$-median and $k$-means with $o(\log n)$ space.
\newblock In {\em 64th {IEEE} Annual Symposium on Foundations of Computer Science, {FOCS}}, pages 883--908, 2023.

\bibitem[DDH{\etalchar{+}}09]{DasguptaDHKM09}
Anirban Dasgupta, Petros Drineas, Boulos Harb, Ravi Kumar, and Michael~W. Mahoney.
\newblock Sampling algorithms and coresets for $l_p$ regression.
\newblock {\em {SIAM} J. Comput.}, 38(5):2060--2078, 2009.

\bibitem[DMM06a]{DrineasMM06a}
Petros Drineas, Michael~W. Mahoney, and S.~Muthukrishnan.
\newblock Subspace sampling and relative-error matrix approximation: Column-based methods.
\newblock In {\em Approximation, Randomization, and Combinatorial Optimization. Algorithms and Techniques, 9th International Workshop on Approximation Algorithms for Combinatorial Optimization Problems, {APPROX} and 10th International Workshop on Randomization and Computation, {RANDOM}, Proceedings}, pages 316--326, 2006.

\bibitem[DMM06b]{DrineasMM06b}
Petros Drineas, Michael~W. Mahoney, and S.~Muthukrishnan.
\newblock Subspace sampling and relative-error matrix approximation: Column-row-based methods.
\newblock In {\em Algorithms - {ESA} 2006, 14th Annual European Symposium, Proceedings}, pages 304--314, 2006.

\bibitem[DSS24]{DraganovSS24}
Andrew Draganov, David Saulpic, and Chris Schwiegelshohn.
\newblock Settling time vs. accuracy tradeoffs for clustering big data.
\newblock {\em Proc. {ACM} Manag. Data}, 2(3):173, 2024.

\bibitem[FL11]{FeldmanL11}
Dan Feldman and Michael Langberg.
\newblock A unified framework for approximating and clustering data.
\newblock In {\em Proceedings of the 43rd {ACM} Symposium on Theory of Computing, {STOC}}, pages 569--578, 2011.

\bibitem[FS12]{FeldmanS12}
Dan Feldman and Leonard~J. Schulman.
\newblock Data reduction for weighted and outlier-resistant clustering.
\newblock In {\em Proceedings of the Twenty-Third Annual {ACM-SIAM} Symposium on Discrete Algorithms, {SODA}}, pages 1343--1354, 2012.

\bibitem[GT08]{GuptaT08}
Anupam Gupta and Kanat Tangwongsan.
\newblock Simpler analyses of local search algorithms for facility location.
\newblock {\em CoRR}, abs/0809.2554, 2008.

\bibitem[HK07]{Har-PeledK07}
Sariel Har{-}Peled and Akash Kushal.
\newblock Smaller coresets for k-median and k-means clustering.
\newblock {\em Discret. Comput. Geom.}, 37(1):3--19, 2007.

\bibitem[HK20]{HenzingerK20}
Monika Henzinger and Sagar Kale.
\newblock Fully-dynamic coresets.
\newblock In {\em 28th Annual European Symposium on Algorithms, {ESA}}, volume 173, pages 57:1--57:21, 2020.

\bibitem[HLW23]{huang2023optimal}
Lingxiao Huang, Jian Li, and Xuan Wu.
\newblock On optimal coreset construction for euclidean $(k,z)$-clustering, 2023.

\bibitem[HM04]{Har-PeledM04}
Sariel Har{-}Peled and Soham Mazumdar.
\newblock On coresets for k-means and k-median clustering.
\newblock In {\em Proceedings of the 36th Annual {ACM} Symposium on Theory of Computing}, pages 291--300, 2004.

\bibitem[HV20]{HuangV20}
Lingxiao Huang and Nisheeth~K. Vishnoi.
\newblock Coresets for clustering in euclidean spaces: importance sampling is nearly optimal.
\newblock In {\em Proccedings of the 52nd Annual {ACM} {SIGACT} Symposium on Theory of Computing, {STOC}}, pages 1416--1429, 2020.

\bibitem[ISZ21]{IzzoSZ21}
Zachary Izzo, Sandeep Silwal, and Samson Zhou.
\newblock Dimensionality reduction for wasserstein barycenter.
\newblock In {\em Advances in Neural Information Processing Systems 34: Annual Conference on Neural Information Processing Systems, NeurIPS}, 2021.

\bibitem[IT03]{indyk120oct}
Piotr Indyk and Nitin Thaper.
\newblock Fast image retrieval via embeddings.
\newblock In {\em International Workshop on Statistical and Computational Theories of Vision, ICCV Workshop}, volume 120, 2003.

\bibitem[JL84]{johnson1984extensions}
William~B Johnson and Joram Lindenstrauss.
\newblock Extensions of lipschitz mappings into a hilbert space. conference in modern analysis and probability (new haven, conn., 1982), 189--206.
\newblock In {\em Contemp. Math}, volume~26, 1984.

\bibitem[lTHS24]{TourHS24}
Max~Dupr{\'{e}} la~Tour, Monika Henzinger, and David Saulpic.
\newblock Fully dynamic k-means coreset in near-optimal update time.
\newblock In {\em 32nd Annual European Symposium on Algorithms, {ESA}}, pages 100:1--100:16, 2024.

\bibitem[lTS24]{DupreS24}
Max~Dupr{\'{e}} la~Tour and David Saulpic.
\newblock Almost-linear time approximation algorithm to euclidean k-median and k-means.
\newblock {\em CoRR}, abs/2407.11217, 2024.

\bibitem[LWW21]{LiWW21}
Yi~Li, Ruosong Wang, and David~P. Woodruff.
\newblock Tight bounds for the subspace sketch problem with applications.
\newblock {\em {SIAM} J. Comput.}, 50(4):1287--1335, 2021.

\bibitem[Mag10]{MagdonIsmail10}
Malik Magdon{-}Ismail.
\newblock Row sampling for matrix algorithms via a non-commutative bernstein bound.
\newblock {\em CoRR}, abs/1008.0587, 2010.

\bibitem[MMM{\etalchar{+}}22]{MeyerMMWZ22}
Raphael~A. Meyer, Cameron Musco, Christopher Musco, David~P. Woodruff, and Samson Zhou.
\newblock Fast regression for structured inputs.
\newblock In {\em The Tenth International Conference on Learning Representations, {ICLR}, 2022}, 2022.

\bibitem[MMM{\etalchar{+}}23]{MeyerMMWZ23}
Raphael~A. Meyer, Cameron Musco, Christopher Musco, David~P. Woodruff, and Samson Zhou.
\newblock Near-linear sample complexity for \emph{L\({}_{\mbox{p}}\)} polynomial regression.
\newblock In {\em Proceedings of the 2023 {ACM-SIAM} Symposium on Discrete Algorithms, {SODA}}, pages 3959--4025, 2023.

\bibitem[MMR19]{MakarychevMR19}
Konstantin Makarychev, Yury Makarychev, and Ilya~P. Razenshteyn.
\newblock Performance of johnson-lindenstrauss transform for $k$-means and $k$-medians clustering.
\newblock In {\em Proceedings of the 51st Annual {ACM} {SIGACT} Symposium on Theory of Computing, {STOC}}, pages 1027--1038, 2019.

\bibitem[MMWY22]{MuscoMWY22}
Cameron Musco, Christopher Musco, David~P. Woodruff, and Taisuke Yasuda.
\newblock Active linear regression for $l_p$ norms and beyond.
\newblock In {\em 63rd {IEEE} Annual Symposium on Foundations of Computer Science, {FOCS}}, pages 744--753, 2022.

\bibitem[MWZ22]{MahabadiWZ22}
Sepideh Mahabadi, David~P. Woodruff, and Samson Zhou.
\newblock Adaptive sketches for robust regression with importance sampling.
\newblock In {\em Approximation, Randomization, and Combinatorial Optimization. Algorithms and Techniques, {APPROX/RANDOM}}, pages 31:1--31:21, 2022.

\bibitem[PPP21]{ParulekarPP21}
Aditya Parulekar, Advait Parulekar, and Eric Price.
\newblock {L1} regression with lewis weights subsampling.
\newblock In {\em Approximation, Randomization, and Combinatorial Optimization. Algorithms and Techniques, {APPROX/RANDOM}}, pages 49:1--49:21, 2021.

\bibitem[Sar06]{Sarlos06}
Tam{\'{a}}s Sarl{\'{o}}s.
\newblock Improved approximation algorithms for large matrices via random projections.
\newblock In {\em 47th Annual {IEEE} Symposium on Foundations of Computer Science {(FOCS} 2006), Proceedings}, pages 143--152, 2006.

\bibitem[SW18]{SohlerW18}
Christian Sohler and David~P. Woodruff.
\newblock Strong coresets for k-median and subspace approximation: Goodbye dimension.
\newblock In {\em 59th {IEEE} Annual Symposium on Foundations of Computer Science, {FOCS}}, pages 802--813, 2018.

\bibitem[Woo14]{Woodruff14}
David~P. Woodruff.
\newblock Sketching as a tool for numerical linear algebra.
\newblock {\em Found. Trends Theor. Comput. Sci.}, 10(1-2):1--157, 2014.

\bibitem[WY23]{WoodruffY23}
David~P. Woodruff and Taisuke Yasuda.
\newblock Online lewis weight sampling.
\newblock In {\em Proceedings of the 2023 {ACM-SIAM} Symposium on Discrete Algorithms, {SODA}}, pages 4622--4666, 2023.

\bibitem[WZZ23]{WoodruffZZ23}
David~P. Woodruff, Peilin Zhong, and Samson Zhou.
\newblock Near-optimal $k$-clustering in the sliding window model.
\newblock In {\em Advances in Neural Information Processing Systems 36: Annual Conference on Neural Information Processing Systems, NeurIPS 2023}, 2023.

\bibitem[ZTHH24]{zhu2024space}
Xiaoyi Zhu, Yuxiang Tian, Lingxiao Huang, and Zengfeng Huang.
\newblock Space complexity of euclidean clustering.
\newblock {\em arXiv preprint arXiv:2403.02971}, 2024.

\end{thebibliography}

\end{document}

\begin{theorem}[Bernstein's concentration inequality]
\cite{bernstein1927theory}
\thmlab{thm:bernstein}
Let $X_1,\ldots,X_n$ be independent random variables with $\Ex{X_t}<\infty$ and $X_t\ge0$ for all $t\in[n]$. 
Let $X=\sum_{t=1}^m X_t$ and let $\gamma>0$. 
Then
\[\PPr{X\le\Ex{X}-\gamma}\le\exp\left(\frac{-\gamma^2}{2\sum_{t=1}^n\Ex{X_t^2}}\right).\]
Furthermore, if $X_t-\Ex{X_t}\le\Delta$ for all $t\in[n]$, then for $\sigma_t^2:=\Ex{X_t^2}-\Ex{X_t}^2$,
\[\PPr{X\ge\Ex{X}+\gamma}\le\exp\left(\frac{-\gamma^2}{2\sum_{t=1}^n\sigma_t^2+2\gamma\Delta/3}\right).\]
\end{theorem}

\begin{lemma}
\lemlab{lem:sens:sample:correct}
With high probability, sensitivity sampling outputs a $(1+\eps)$-strong coreset.
\samson{Update with correct reference}
\end{lemma}

\begin{theorem}
\thmlab{thm:total:sens}
\cite{VaradarajanX12}
Let $X\subset[\Delta]^d$ be a set of $n$ points and let $s(x)$ denote the sensitivity of $x\in X$ for $(k,z)$-clustering, where $z\ge 1$. 
Then
\[\sum_{x\in X}s(x)=\O{2^{2z}k}.\]
\end{theorem}

\begin{algorithm}[!htb]
\caption{Efficient Encoding for Coreset Construction}
\alglab{alg:coreset:cluster}
\begin{algorithmic}[1]
\Require{Data set $X\subset\mathbb{R}^d$ with weight $w(\cdot)$, $\eps\in(0,1)$, number of clusters $k$, parameter $z\ge 1$, failure probability $\delta\in(0,1)$}
\Ensure{$(1+\eps)$-coreset for $(k,z)$-clustering}
\State{Find a set $C'$ of $k$ centers that is a constant-factor approximation to $(k,z)$-clustering on $X$}
\State{Let $\widehat{s(x)}$ be a $\O{1}$-approximation to the sensitivity of $x$ defined by weight $w(x)$}
\State{$S\gets\emptyset$, $\lambda\gets\O{\frac{k}{\eps^2}\cdot\log k\log\frac{1}{\delta}}$, $\eps'\gets\frac{\poly(\eps^z)}{\poly(k,\log(nd\Delta)}$}
\For{each $x\in X$}
\State{Let $c'(x)$ be the closest center of $C'$ to $x$}
\State{Let $y'$ be the offset $x-c'(x)$}
\State{Let $y$ be $y'$ with coordinates rounded to a power of $(1+\eps')$}
\State{$p(x)\gets\min(1,\lambda\cdot \widehat{s(x)})$}
\State{With probability $p(x)$, $S\gets S\cup(y,c(x)$ and weight $\frac{w(x)}{p(x)}$ rounded to a power of $(1+\eps')$}
\EndFor
\State{\Return $(C',S)$}
\end{algorithmic}
\end{algorithm}

\begin{lemma}
\lemlab{lem:efficient:coreset}
Suppose $w(x)\le\poly(n)$ for all $x\in X$. 
Then with probability $1-\delta$, \algref{alg:coreset:cluster} outputs a $(1+\eps)^2$ coreset that uses $\O{dk\log(n\Delta)}+\frac{dk^2}{\eps^2}\cdot2^{2z}\cdot\polylog\left(\frac{1}{\eps},\log(n\Delta),\log\frac{1}{\delta}\right)$ bits of space.
\end{lemma}
\begin{proof}
By \lemref{lem:x:to:xprime:cluster}, we have that for all $C\subset[\Delta]^d$ with $|C|\le k$, 
\[(1-\eps)\cdot\Cost(C,X)\le\Cost(C,X')\le(1+\eps)\cdot\Cost(C,X).\]
Thus, $X'$ is a strong coreset for $X$. 

\algref{alg:coreset:cluster} then outputs a set $S$ encoded using $C'$ that acquired by performing sensitivity sampling on $X'$. 
By \lemref{lem:sens:sample:correct}, we have that $S$ is a $(1+\eps)$-strong coreset of $X'$. 
Thus, with high probability, $S$ is a $(1+\eps)^2$ strong coreset of $X$. 

It remains to analyze the space complexity of $S$. 
We first require $C'$ to encode $S$. 
Since $C'$ is a set of $k$ centers, then $C'$ can be represented using $\O{dk\log(n\Delta)}$ bits of space. 
Since $X'$ is a $(1+\eps)$-strong coreset of $x$, then the sensitivity of a point $x'$ with respect to $X'$ is a constant-factor approximation to the sensitivity of the corresponding point $x$ with respect to $X$. 

Each point $x$ is sampled with probability $p(x)\le\lambda\cdot \widehat{s(x)}$, where $\lambda=\O{\frac{k}{\eps^2}\cdot\log k}$. 
By \thmref{thm:total:sens}, we have that $\sum_{x\in X}s(x)=\O{2^{2z}k}$, for the sensitivities $s(x)$ taken with respect to $X$. 
Thus by Bernstein's inequality, i.e., \thmref{thm:bernstein}, we have that with probability $1-\frac{1}{\poly(k,\log n)}$, the total number of points sampled is $\O{\frac{2^{2z}k^2}{\eps^2}\cdot\log k\log\frac{1}{\delta}}$. 

Formally, let $Y_t$ be the indicator random variable for each $t\in[n]$ denoting whether $x_t$ was sampled, so that $Y_t=1$ if $x_t$ was sampled and $Y_t=0$ otherwise. 
We have $\Ex{Y_t}=p(x_t)\le\lambda\cdot \widehat{s(x_t)}$. 
Similarly, we have $\Ex{Y_t^2}=p(x_t)\le\lambda\cdot \widehat{s(x_t)}$. 
Recall that each $\widehat{s(x_t)}$ is a $\O{1}$-approximation to the sensitivity $s(x_t)$ of $x_t$. 
Then for $Y=Y_1+\ldots+Y_n$, we have 
\[\Ex{Y}=\sum_{t=1}^n\Ex{Y_t}\le\O{\lambda}\cdot\sum_{x\in X} s(x).\]
By \thmref{thm:total:sens}, we thus have by the setting of $\lambda$,
\[\Ex{Y}\le\O{\lambda}\cdot\O{2^{2z}k}\le\O{\frac{2^{2z}k^2}{\eps^2}\cdot\log k\log\frac{1}{\delta}}.\]
We similarly have $\sum_{t=1}^n\Ex{Y_t^2}=\O{\frac{2^{2z}k^2}{\eps^2}\cdot\log k\log\frac{1}{\delta}}$. 
Thus by Bernstein's inequality, i.e., \thmref{thm:bernstein}, we have that with probability $1-\frac{1}{\poly(k,\log n)}$, the total number of points sampled is $\O{\frac{2^{2z}k^2}{\eps^2}\cdot\log k\log\frac{1}{\delta}}$. 

Each point is encoded using $\O{d\log\left(\frac{1}{\eps}+\log n+\log\Delta\right)}$ bits of space. 
Therefore, the output of the algorithm is encoded using $\O{dk\log(n\Delta)}+\frac{dk^2}{\eps^2}\cdot2^{2z}\cdot\polylog\left(\frac{1}{\eps},\log(n\Delta),\log\frac{1}{\delta}\right)$ total bits of space.
\end{proof}

\section{Fair Clustering}
We briefly introduce the consistent clustering problem and summarize a number of results known about the problem that we shall ultimately utilize in our $(k,z)$-clustering algorithm. 
We first define the notion of a consistent clustering. 
\begin{definition}
Given a stream of points $X=\{x_1,\ldots,x_n\}$ with $x_t\in[\Delta]^d$ for all $t\in[n]$, a consistent clustering of $X$ is a sequence of centers $C^{(1)},\ldots,C^{(n)}$ such that 
\begin{enumerate}
\item
For all $t\in[n]$, $C^{(t)}\subset[\Delta]^d$ with $|C^{(t)}|=k$ and $C^{(t)}$ is determined at time $t$. 
\item
There exists a universal constant $\gamma\ge 1$ such that $\Cost(C^{(t)},X^{(t)})\le\gamma\cdot\min_{C\subset[\Delta]^d, |C|=k}\Cost(C,X^{(t)})$, where $X^{(t)}=\{x_1,\ldots,x_t\}$
\end{enumerate}
The number of changes of the points in $C^{(t)}$, i.e., $\sum_{t=1}^n|C^{(t)}\setminus C^{(t-1)}|$ is called the inconsistency cost of the sequence of centers $C^{(1)},\ldots,C^{(n)}$. 
Here, we define $C^{(0)}=\emptyset$. 
\end{definition}

The following statement by \cite{LattanziV17} upper bounds the inconsistency cost of any stream of points in $[\Delta]^d$, for any $(k,z)$-clustering objective.
\begin{theorem}[Theorem 6.1 of \cite{LattanziV17}]
\thmlab{thm:exist:consistent}
Let $X=\{x_1,\ldots,x_n\}$ with $x_t\in[\Delta]^d$ for all $t\in[n]$. 
Then for $\log(d\Delta)=\O{\log n}$, there exists a sequence of centers with inconsistency cost $\O{k\log^2 n}$. 
\end{theorem}

%
%
%
%